\theoremstyle{plain}
\newcounter{thm} \numberwithin{thm}{section}
\newtheorem{lemma}[thm]{Lemma}
\newtheorem{corollary}[thm]{Corollary}
\newtheorem{definition}[thm]{Definition}
\newtheorem{claim}[thm]{Claim}
\theoremstyle{definition}
\newtheorem{remark}[thm]{Remark}
\newtheorem{example}[thm]{Example}
\newtheorem{myTheorem}{Theorem}[section]
\newtheorem{requirement}{\textbf{Requirement}}
\newtheorem{property}{\textbf{Property}}
\newcommand{\PARAgraph}[1]{\subparagraph{#1.}\space}
\newcommand{\Paragraph}[1]{\paragraph{#1.\space}}
\newcommand{\Subparagraph}[1]{\subparagraph{#1.}\space}
\newcommand{\smallSpaceBetweenEquations}{}
\newcommand{\smallestSpaceBetweenEquations}{}
\newcommand{\remove}[1]{}
\newcommand{\N}{\mathbb{N}}
\newcommand{\Z}{\mathbb{Z}}
\newcommand{\la}{\langle}
\newcommand{\ra}{\rangle}
\newcommand{\lea}{\prec_{b}}
\newcommand{\leqa}{\preceq_{b}}
\newcommand{\leqlb}{\preceq_{lb}}
\newcommand{\nled}{\nprec_{lb}}
\newcommand{\pinf}{\mathcal{L}_\mathcal{S}}
\newcommand{\tempinf}{\mathcal{\ell}_\mathcal{S}}
\newcommand{\LE}{\mathsf{LE}}
\newcommand{\nll}{\centernot\ll}
\newcommand{\capacity}{\mathcal{C}}
\newcommand{\msg}{M}
\newcommand{\pairInvariants}{\textit{pairInvar}}
\newcommand{\localInvariants}{\textit{localInvariants}}
\newcommand{\globalInvariants}{\textit{globalInvariants}}
\newcommand{\lb}{\prec_{lb}}
\newcommand{\cP}{\mathcal{P}}
\newcommand{\cS}{\mathcal{S}}
\newcommand{\cT}{\mathcal{T}}
\newcommand{\cX}{\mathcal{X}}
\newcommand{\bigO}{\mathcal{O}}
\newcommand{\getLabel}{getLabel}
\newcommand{\creator}{creator}
\newcommand{\MI}{MAXINT}
\newcommand{\mmi}{(\bmod~\MI)}
\newcommand{\idv}{idV(i)}
\newcommand{\locCurr}{loc.curr}
\newcommand{\locPrev}{loc.prev}
\newcommand{\arrCurr}{arr.curr}
\newcommand{\arrPrev}{arr.prev}
\newcommand{\pivot}{\textit{pivot}}
\newcommand{\oc}{\out.curr}
\newcommand{\labelCheck}{legitPairs}
\newcommand{\revive}{revive}
\newcommand{\out}{output}
\newcommand{\compLbls}{\textit{comparableLabels}}
\newcommand{\sizeOfInvariantsFig}{\normalsize} 
\newcommand{\mirroredLocal}{\textit{mirroredLocalLabels}}
\newcommand{\isStored}{\textit{isStored}}
\newcommand{\existsOverlap}{\textit{existsPivot}}  
\newcommand{\resetLocal}{restartLocal}
\newcommand{\legitArriving}{legitMsg}
\newcommand{\cancelPairLabels}{\textit{cancelPairLabels}}
\newcommand{\lblOrdrd}{\textit{labelsOrdered}}
\newcommand{\causalPrecedence}{\textit{causalPrecedence}}
\newcommand{\newEvents}{newEvents}
\newcommand{\seg}{\sqsubseteq}
\newcommand{\oft}{P_{often}}
\newcommand{\noft}{P_{notOften}}
\newcommand{\isCanceled}{\textit{isCanceled}}
\newcommand{\reqs}{~\ref{req:2act}}
\newcommand{\pgraph}{\mathcal{G}}
\newcommand{\F}{\mathcal{F}}
\newcommand{\FF}{\mathcal{F}_{focused}}
\newcommand{\newLabel}{newLabel}
\newcommand{\T}{\mathcal{T}}
\newcommand{\clone}{clone}
\newcommand{\lB}{labelBookkeeping()}
\newcommand{\lbnl}{labelBookkeeping()\circ\newLabel()}
\newcommand{\pairs}{pairs}
\newcommand{\receiverLocal}{rcvdLocal}
\newcommand{\equalStatic}{equalStatic}
\newcommand{\initializeToLocal}{initToLoc}
\newcommand{\s}{\mathcal{S}}
\definecolor{shadecolor}{rgb}{0.9,0.9,0.9}
\definecolor{heraldBlue}{rgb}{0.0,0.0,0.8}
\definecolor{heraldRed}{rgb}{0.8,0.0,0.0}
\definecolor{heraldGray}{rgb}{0.4,0.4,0.4}
\definecolor{heraldBlack}{rgb}{0.0,0.0,0.0} 
\definecolor{heraldGreen}{rgb}{0.0,0.4,0.0} 
\definecolor{cinnabar}{rgb}{0.89, 0.26, 0.2}
\newcommand{\reviewCommands}[2]{#1}
\newcommand{\IS}[1]{\textcolor{blue}{[\textbf{IS: #1}]}}
\newcommand{\EMS}[1]{\textcolor{heraldGreen}{\small \textbf{[EMS: #1]}}}
\newcommand{\Ver}[1]{\textcolor{heraldBlue}{\small\textsf{[EMS: #1]}}}
\newcommand{\opodis}[1]{\textcolor{cinnabar}{[\textbf{OPODIS: #1}]}}
\newcommand{\IS}[1]{}
\newcommand{\EMS}[1]{}
\newcommand{\Ver}[1]{}
\newcommand{\opodis}[1]{}
\newcommand{\short}[1]{#1}
\newcommand{\modified}[2]{#1}\xspace 
\title{Practically-Self-Stabilizing Vector Clocks\\ in the Absence of Execution Fairness\footnote{Department of Computer Science and Engineering, Chalmers University of Technology, G\"oteborg, Sweden. \texttt{\{iosif,elad\}@chalmers.se}}\\\Large{(technical report)}}
\author{Iosif~Salem \and Elad M.\ Schiller}
\date{}
\begin{document} 

\begin{titlepage}


\maketitle

\begin{abstract}
Vector clock algorithms are basic wait-free building blocks that facilitate causal ordering of
events. 
As wait-free algorithms, they are guaranteed to complete their operations within a finite number
of steps. 
Stabilizing algorithms allow the system to recover after the occurrence of transient faults,
such as soft errors and arbitrary violations of the assumptions according to which the system was designed to behave. 
We present the first, to the best of our knowledge, stabilizing vector clock
algorithm for asynchronous crash-prone message-passing systems that can recover in a \textit{wait-free manner} after the occurrence of transient faults. 
In these settings, it is challenging to demonstrate a finite and \textit{wait-free} 
recovery from (communication and crash failures as well as) \textit{transient faults}, bound the message and storage sizes, deal with the removal of all stale information \textit{without blocking}, and deal with counter overflow events (which occur at different network nodes concurrently). 


We present an algorithm 
that never violates safety in the absence of transient faults and provides bounded time recovery during fair executions that follow the last transient fault.
The novelty is that in the absence of execution fairness, the algorithm guarantees a bound on the number of times in which the system might violate safety (while existing algorithms might block forever due to the presence of both transient faults and crash failures).

Since vector clocks facilitate a number of elementary synchronization building
blocks (without requiring remote replica synchronization) in asynchronous systems,
we believe that our analytical insights are useful for the design of other systems that cannot guarantee execution fairness.
\end{abstract}


\end{titlepage}

 
\section{Introduction}
\Paragraph{Context and Motivation}
Vector clocks allow reasoning about causality among events in distributed systems, for example, when constructing distributed snapshots~\cite{DBLP:books/daglib/0020056}. 
Shapiro et al.~\cite{DBLP:conf/sss/ShapiroPBZ11} showed that vector clocks are building blocks of several conflict-free replicated data types (CRDTs). 
CRDTs are distributed data structures that can be shared among many replicas in asynchronous networks. 
All replica updates occur independently and achieve \emph{strong eventual consistency} without using mechanisms for synchronization~\cite{DBLP:conf/sigmod/Skeen81} or roll-back. 

The industrial use of CRDTs includes globally distributed databases, such as the ones of Redis, Riak, Bet365, SoundCloud, TomTom, Phoenix, and Facebook. 
Some of these databases have around ten million concurrent users, ten thousand messages per second, store 
large volumes of data,
and offer very low latency. 
%
%
However, while both the literature and the users demonstrate that large-scale decentralized systems can benefit from the use of CRDTs in general and vector clocks in particular, the relationship between fault-tolerance and strong eventual consistency has not received sufficient attention. 
Providing higher robustness degrees to CRDTs is nevertheless imperative for ensuring the availability and safety of these systems.

Providing robustness in the presence of unexpected failures, i.e., the ones that are not included the fault model, is challenging, especially in the absence of synchrony, mechanisms for synchronization, or roll-back. 
In such systems, it is difficult to: (A) provide unbounded storage and message size, (B) model all possible failures, and (C) guarantee periods in which all nodes are up and connected. 

The goal of this paper is the design of a highly fault-tolerant 
distributed
algorithm for vector clocks in large-scale asynchronous message passing systems. 
In particular, we propose the first, to the best of our knowledge, practically-self-stabilizing algorithm for vector clocks that: 
(I) uses strictly bounded storage and message size, 
(II) deals with a relevant set of failures (i.e., a fault model) as well as with unexpected failures  (i.e., failures that are not considered by the fault model), and 
(III) the algorithm does not require synchronization guarantees, nor uses mechanisms for synchronization or roll-back \emph{even during the period of recovery from unexpected failures}.

\Paragraph{Fault Model}
We consider asynchronous message-passing systems that are prone to the following failures~\cite{georgiou2011cooperative}:
(a) crash failures of nodes (no recovery after crashing), 
(b) nodes that can crash and then perform an undetectable restart, i.e., resume with the same state as before crashing (without knowing explicitly that a crash has ever occurred), but possibly having lost incoming messages in between, and
(c) packet failures, such as omission, duplication, and reordering. 
In addition to these benign failures, we consider \emph{transient faults}, i.e., any temporary violation of assumptions according to which the system and network were designed to behave, e.g., the corruption of the system state due to soft errors.
We assume that these transient faults arbitrarily change the system state in unpredictable manners (while keeping the program code intact). 
Moreover, since these transient faults are rare, the system model assumes that all transient faults occurred before the start of the system run. 


\Paragraph{Design criteria}
Dijkstra~\cite{DBLP:journals/cacm/Dijkstra74} requires self-stabilizing systems, which may start in an arbitrary state, to return to correct behavior within a bounded period. 
Asynchronous systems (with bounded memory and channel capacity) can indefinitely hide stale information that transient faults introduce unexpectedly.
At any time, this corrupted data can cause the system to violate safety. 
This is true for any system, and in particular, for Dijkstra's self-stabilizing systems~\cite{DBLP:journals/cacm/Dijkstra74}, which are required to remove, within a bounded time, all stale information whenever they appear.
Here, the scheduler acts as an adversary that has a bounded number of opportunities to disrupt the system.
However, this adversary never reveals \emph{when} it will disrupt the system.
Against such unfair adversaries, systems cannot specify when they will be able to remove all stale information and thus they cannot fulfill Dijkstra's requirements.

\emph{Pseudo-self-stabilization}~\cite{DBLP:journals/dc/BurnsGM93} deals with the above inability by bounding the number of times in which the system violates safety. 
We consider the newer criteria of \emph{practically-self-stabilizing systems}~\cite{DBLP:journals/jcss/AlonADDPT15,DBLP:journals/jcss/DolevKS10,DBLP:conf/netys/BlanchardDBD14,DBLP:journals/corr/DolevGMS15} that can address additional challenges. 
For example, any transient fault can cause a bounded counter to reach its maximum value and yet the system might need to increment the counter for an unbounded number of times after that overflow event. 
This challenge is greater when there is no elegant way to maintain an order among the different counter values, say, by wrapping around to zero upon counter overflow. 
Existing attempts to address this challenge use non-blocking resets in the absence of faults, as described in~\cite{DBLP:journals/jpdc/AroraKD06}. 
In case faults occur, the system recovery requires the use of a synchronization mechanism that, at best, blocks the system until the scheduler becomes fair.
We note that this assumption contradicts our fault model as well as the key liveness requirement for recovery after the occurrence of transient faults. 

Without fair scheduling, a system that takes an extraordinary (or even an infinite) number of steps is bound to break any ordering constraint, because unfair schedulers can arbitrarily suspend node operations and defer message arrivals until such violations occur. 
Having practical systems in mind, we consider this number of (sequential) steps to be no more than \emph{practically infinite}~\cite{DBLP:journals/jcss/DolevKS10,DBLP:journals/corr/DolevGMS15}, say, $2^b$ (where $b=64$ or an even a larger integer, as long as a constant number of bits can represent it).  
Practically-self-stabilizing systems~\cite{DBLP:journals/jcss/AlonADDPT15, DBLP:conf/netys/BlanchardDBD14, DBLP:journals/corr/DolevGMS15} require a bounded number of safety violations during any practically infinite period of the system run. 
For such systems, we propose an algorithm for vector clocks that recovers after the occurrence of transient faults (as well as all other failures considered by our fault model) without assuming synchrony or using synchronization mechanisms. 
We refer to the latter as a wait-free recovery from  transient faults.
%
We note that the concept of \textit{practically}-self-stabilizing systems is named by the concept of \textit{practically} infinite executions~\cite{DBLP:journals/jcss/DolevKS10}.



To the end of providing safety (and independently of the practically-self-stabilizing algorithm), the application  can use a synchronization mechanism (similar to~\cite{DBLP:journals/jcss/AlonADDPT15, DBLP:conf/netys/BlanchardDBD14, DBLP:journals/corr/DolevGMS15, DBLP:conf/wdag/JehlVM15}). 
The advantage here is that the application can selectively use synchronization only when needed (without requiring the entire system to be synchronous or blocking after the occurrence of transient faults).



\Paragraph{Vector clocks}
%
Logical and vector clocks~\cite{DBLP:journals/cacm/Lamport78,fidge1987timestamps,VirtTimeGlobStates} capture chronological relationships in decentralized systems without accessing synchronization mechanisms, such as synchronized clocks and phase-based commit protocols~\cite{DBLP:conf/sigmod/Skeen81,DBLP:conf/netys/BlanchardDBD14}.
A common (non-self-stabilizing and unbounded) way for implementing vector clocks is to let the nodes maintain a local copy of the vector $V[]$, such that each of the $N$ system nodes has a component, e.g., $V[i]$ is the component of node $p_i$. 
Upon the occurrence of a local event, $p_i$ \textit{increments} $V_i[i]$, and sends an update message $m=\langle V[] \rangle$. 
Upon $m$'s arrival to node $p_j$, the latter \textit{merges} the events counted in $V[]$ and $m.V[]$ by assigning $V[j] \gets \max(V[j],m.V[j])$ for each component $V[j]$.
%
%
One can define the relation $\leq_C$ as a partial order, where $V$ and $W$ are $N$-size integer vectors and $(V \leq_C W) \iff (\forall x\in \{1,\ldots, N\}, V[x] \leq W[x])$. 
The relation $\leq_C$ is used to show causality between two events by checking if the corresponding vector clocks are comparable in $\leq_C$.

%

We note that there exist approaches for improving the scalability and efficiency of vector clocks that offer bounded size vectors (instead of linear) or approximations~\cite[Section 7]{DBLP:books/daglib/0032304}.
These approaches build on, implement, or provide similar semantics to the standard $N$-size vector definition of a vector clock.
Thus, in this paper we focus on the definition of a vector clock as an $N$-size vector.



\Paragraph{The studied question} 
How can non-failing nodes dependably reason about event causality?
We interpret the provable dependability requirement to imply (1) bounded message size and node storage, (2) fault-tolerance independently of synchrony assumptions or synchronization operations, 
%
%
and (3) the system to be practically-self-stabilizing (without fair scheduling).  

\Paragraph{Related work}
Bounded non-stabilizing solutions exist in the literature~\cite{DBLP:conf/wdag/AlmeidaAB04,DBLP:journals/dc/MalkhiT07}.
Self-stabilizing resettable vector clocks~\cite{DBLP:journals/jpdc/AroraKD06} consider distributed applications that are structured in phases and track causality merely within a bounded number of successive phases. Whenever the system exceeds the number of clock values that can be used in one phase, resettable vector clocks use reset operations that allow the system to move to the next phase and reuse clock values. In the absence of faults as presented in~\cite{DBLP:journals/jpdc/AroraKD06}, the system uses non-blocking resets. Nevertheless, the presence of faults can bring the algorithm in~\cite{DBLP:journals/jpdc/AroraKD06} to use a \textit{blocking} global reset that requires fair scheduling (and no failing nodes). Our solution does not use blocking operations even after an arbitrary corruption of the system state. 


The authors of~\cite{DBLP:journals/jpdc/AroraKD06} also discuss the possibility to use global snapshots for the sake of providing better complexity measures. 
They rule out this approach because it can change the communication patterns (in addition to the use of blocking operations during the recovery period). 
Another concern is how to identify a self-stabilizing snapshot algorithm that can deal with crash failures, e.g.,~\cite[Section 6]{DBLP:journals/jpdc/DelaetDNT10} declared that this is an open problem. 

There are practically-self-stabilizing algorithms for solving agreement~\cite{DBLP:journals/jcss/DolevKS10,DBLP:conf/netys/BlanchardDBD14}, state-machine replication~\cite{DBLP:conf/netys/BlanchardDBD14,DBLP:journals/corr/DolevGMS15}, and shared memory emulation~\cite{DBLP:conf/podc/BonomiDPR15}. 
None of them considers the studied problem. 
They all rely on synchronization mechanisms, e.g., quorum systems.   
Alon et al.~\cite{DBLP:journals/jcss/AlonADDPT15} and Dolev et al.~\cite[Algorithm~2]{DBLP:journals/corr/DolevGMS15} consider practically-self-stabilizing algorithms that handle counter overflow events using labeling schemes. 
Both algorithms use these labeling schemes together with synchronization mechanisms for implementing shared counters. 
We solve a different problem and propose a practically-self-stabilizing algorithm for vector clocks that uses a labeling scheme but does not use any synchronization mechanism. 
  
\Paragraph{Our Contributions}
We present an important building block for dependable large-scale decentralized systems that need to reason about event causality. In particular, we provide a practically-self-stabilizing algorithm for vector clocks that does not require synchrony assumptions or synchronization mechanisms. Concretely, we present, to the best of our knowledge, the first solution that:

\Subparagraph{(i) Deals with a wide range of failures}
The studied asynchronous systems are prone to crash failures (with or without  undetectable restarts) and communication failures, such as packet omission, duplication, and reordering failures. 

\Subparagraph{(ii) Uses bounded storage and message size}
Our solution considers $3N$ integers and two labels~\cite{DBLP:journals/corr/DolevGMS15} per vector, where $N$ is the number of nodes. Each label has $\bigO(N^3)$ bits. Since all counters share the same two labels, we propose elegant techniques for dealing with the challenge of concurrent overflows (Section~\ref{s:pair}). We overcome the difficulties of making sure that no counter increment is ever ``lost'' even though there is an unbounded period in which these increments are associated with up to $N$ different versions of the vector clock. 



\Subparagraph{(iii) Deals with transient faults and unfair scheduling}
Theorem~\ref{thm:reqHold} proves recovery within $\bigO(N^8\capacity)$ safety violations in a practically-infinite execution in a wait-free manner after the occurrence of transient faults, which is our complexity measure for practically-self-stabilizing systems, where $N$ is the number of nodes in the system and $\capacity$ is an upper bound on the channel capacity.

We believe that our approaches for providing items (i)--(iii) are useful for the design of other practically-self-stabilizing systems.

\Paragraph{Paper organization}
In Section~\ref{s:systemSettings} we present the design criteria.
In Section~\ref{s:back} we give an overview of relevant labeling schemes and in Section~\ref{s:interfaceDolev} we present an interface to a labeling scheme.
Then, we present novel techniques (Section~\ref{s:pair}), upon which we base our algorithm (Section~\ref{s:algorithms}) and proofs (Section~\ref{s:proofs}).


\section{System Settings}
\label{s:systemSettings}

The system includes a set of processors $P  = \{p_1, \ldots, p_N\}$, which are computing and communicating entities that we model as finite state-machines. 
%
%
Processor $p_i$ has an identifier, $i$, that is unique in $P$.
Any pair of active processors can communicate directly with each other via their bidirectional communication channels (of bounded capacity per direction, $\capacity \in \N$, which, for example, allows the storage of at most one message). 
That is, the network's topology is a fully-connected graph and each $p_i \in P$ has a buffer of finite capacity $\capacity$ that stores incoming messages from $p_j$, where $p_j \in P\setminus \{p_i\}$.
%
%
Once a buffer is full, the sending processor overwrites the buffer of the receiving processor.
We assume that any $p_i, p_j \in P$ have access to $channel_{i,j}$, which is a self-stabilizing end-to-end message delivery protocol (that is reliable FIFO) that transfers packets from $p_i$ to $p_j$.
Note that~\cite{DBLP:journals/ipl/DolevDPT11, DBLP:conf/sss/DolevHSS12} present a self-stabilizing reliable FIFO message delivery protocol that tolerates packet omissions, reordering, and duplication over non-FIFO channels. 
%

\Paragraph{The interleaving model}
The processor's program is a sequence of {\em (atomic) steps}. 
Each \emph{step} starts with an internal computation and finishes with a single communication operation, i.e., packet $send$ or $receive$. 
We assume the interleaving model, where steps are executed atomically; one step at a time. 
Input events refer to packet receptions or a periodic timer that can, for example, trigger  the processor to broadcast a message. 
Note that the system is asynchronous and the algorithm that each processor is running is oblivious to the timer rate. 
Even though the scheduler can be adversarial, we assume that each processor's local scheduler is fair, i.e., the processor alternates between completing send and receive operations (unless the processor's communication channels are empty).
Note that a message that a processor $p_i$ needs to send to its neighbors takes $N-1$ consecutive steps of $p_i$ (the execution might include steps of other processors in between), since each step can include at most one send (or receive) operation.

The {\em state}, $s_i$, of $p_i \in P$ includes all of $p_i$'s variables as well as the set of all messages in $p_i$'s incoming communication channels. 
Note that $p_i$'s step can change $s_i$ as well as remove a message from $channel_{j,i}$ (upon message arrival) or queue a message in $channel_{i,j}$ (when a message is sent). 
We assume that if $p_i$ sends a message infinitely often to $p_j$, processor $p_j$ receives that message infinitely often, i.e., the communication channels are fair.
The term {\em system state} refers to a tuple of the form $c = (s_1, s_2, \cdots, s_N)$, where each $s_i$ is $p_i$'s state (including messages in transit to $p_i$). 
We define an {\em execution (or run)} $R={c_0,a_0,c_1,a_1,\ldots}$ as an alternating sequence of system states $c_x$ and steps $a_x$, such that each system state $c_{x+1}$, except for the initial system state $c_0$, is obtained from the preceding system state $c_x$ by the execution of step $a_x$.
%

\Paragraph{Active processors, processor crashes, and undetectable restarts}
At any point and without warning, $p_i$ is prone to a crash failure, which causes $p_i$ to either forever stop taking steps (without the possibility of failure detection by any other processor in the system) or to perform an undetectable restart in a subsequent step~\cite{georgiou2011cooperative}.
In case processor $p_i$ performs an undetectable restart, it continues to take steps by having the same state as immediately before crashing, but possibly having lost the messages that other processors sent to $p_i$ between crashing and restarting.
Processors know the set $P$, but have no knowledge about the number or 
the identities of the processors that never crash. 

We assume that transient faults occur only before the starting system state $c_0$, and thus $c_0$ is arbitrary.
Since processors can crash after $c_0$, the executions that we consider are not fair~\cite{DBLP:books/mit/Dolev2000}.
We illustrate the failures that we consider in this paper in Figure~\ref{fig:failureModel}.

\begin{figure*}[t!]
   \centering
   \includegraphics[scale=0.8]{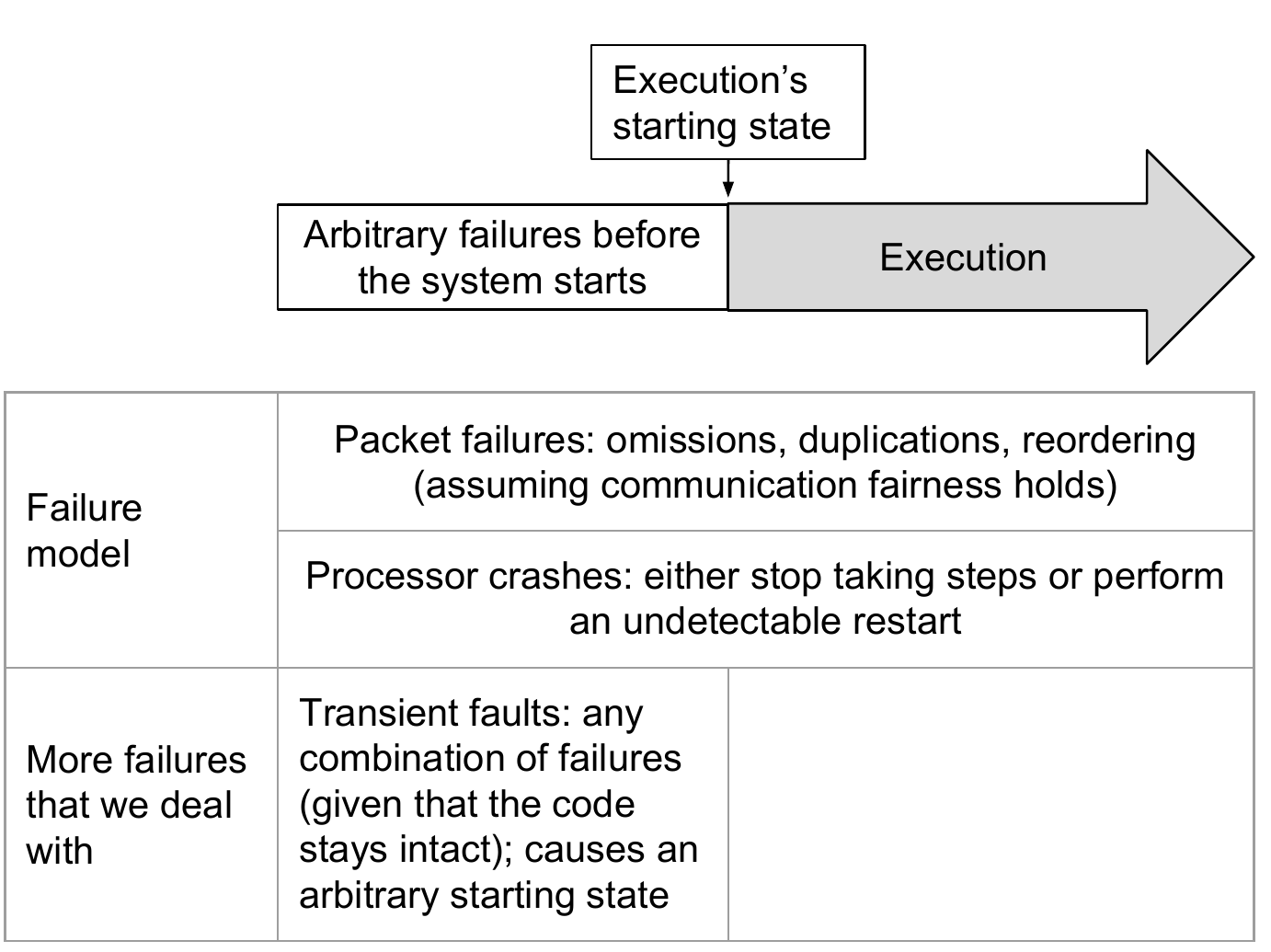}
\caption{Illustration of the failure model and of transient faults.}
\label{fig:failureModel}
\end{figure*}

We say that a processor is active during a finite execution $R'$ if it takes at least one step in $R'$. We say that a processor is active throughout an infinite execution $R$, if it takes an infinite number of steps during $R$. 
Note that the fact that a processor is active during an infinite execution does not give any guarantee on when or how often it takes steps. 
Thus, there might be an arbitrarily long (yet finite)  subexecution $R'$ of $R$, such that a processor is active in $R$ but not in $R'$. 
Therefore, processors that crash and never restart during an infinite execution $R$ are not active throughout $R$.

\Paragraph{Execution operators: concatenation $\circ$ and segment $\seg$} 
%
Suppose that $R'$ is a prefix of an execution $R$, and $R''$ is the remaining suffix of $R$. 
We use the concatenation operator $\circ$ to write that $R=R' \circ R''$, such that $R'$ is a finite execution that starts with the initial system state of $R$ and ends with a step that is immediately followed by the initial state of $R''$.	
We denote by $R' \seg R$ the fact that $R'$ is a subexecution (or segment) of $R$.


\Paragraph{Execution length, practically infinite, and the $\ll$ (significantly less) relation}
To the end of defining the stabilization criteria, we need to compare the number of steps that violate safety in a finite execution $R$ with the length of $R$.
In the following, we define how to compare finite executions and sets of states according to their size.

We say that the length of a finite execution $R=c_0, a_0, c_1, a_1, \ldots, c_{x-1}, a_{x-1}$ is equal to $x$, which we denote by $|R|=x$.
Let $\MI$ be an integer that is considered as a \textit{practically infinite}~\cite{DBLP:journals/jcss/DolevKS10} quantity for a system $\cS$ (e.g., the system's lifetime). 
For example, $\MI$ can refer to $2^b$ (where $b=64$ or larger) sequential system steps (e.g., single send or receive events).
%
In this paper, we use $\ll$  as a formal way of referring to the comparison of, say, $N^c$, for a small integer $c$, and $\MI$, such that $N^c$ is an insignificant number when compared to $\MI$.
Since this comparison of quantities is system-dependent, we give a modular definition of $\ll$ below.

Let $\pinf$ denote a system-dependent quantity that is practically-infinite for a system $\cS$, such that for an integer $z\ll \MI$, we have that $\pinf := z\cdot \MI$.
For a system $\cS$ and $x\in \N$, we denote by $x\ll \pinf$ the fact that $x$ is significantly less than (or insignificant with respect to) $\pinf$.
We say that an execution $R$ is \textit{of $\pinf$-scale}, if there exists an integer 
$y\ll \MI$, such that $|R| = y\cdot \MI$ holds.

\remove{ 
\Paragraph{Execution lengths: from temporary\opodis{different name?} to practically-infinite}
%
We say that the length of a finite execution $R=c_0, a_0, c_1, a_1, \ldots, c_{x-1}, a_{x-1}$ is equal to $x$, which we denote by $|R|=x$.
\EMS{The next sentence does not read well. If I were you, I would make it much more casual explanation. Does this makes sense?} \IS{wrote new text for motivation}
In the following, we give a classification of executions according to their length.
Our goal is to be able to compare the number of steps that violate safety in a finite execution $R$ with the length of $R$, which is necessary for defining the stabilization criteria.

Let $\MI$ be the maximum integer that the system can store.
We use the \textit{significantly less} relation $\ll$, similarly to the asymptotic comparison of functions, in the context that we explain in the following. 
Intuitively, in this paper we use $\ll$  as a formal way of referring to the comparison of, say, $N$ (or $N^c$ for a very small $c$) and $\MI$ (which is a large system-dependent exponential), such that $N$ (or $N^c$) is an insignificant number when compared to $\MI$.
Since this comparison of quantities is system-dependent, we give a modular definition of $\ll$ below.

Let $\pinf$ be a number considered as a \textit{practically infinite} quantity for system $\cS$ (e.g. the system's lifetime). 
For example, consider the case where $\pinf = 2^{c N}$, where $N$ is the number of processors in the system and $c$ is a constant, such that $2^{c N}$ is a huge integer, for example $2^{cN}=2^{64}$ or $2^{cN}=\MI$.
In this example, we consider $X \ll Y \ll \pinf$, if $X$ is a small-degree polynomial of $N$ and $Y$ is a subexponential but superpolynomial quantity on $N$, e.g., $X = N^c$ and $Y = 2^{\sqrt{N}}$, for reasonably large values of $N$, i.e., $N^c \ll 2^{\sqrt{N}} \ll 2^{cN}$.
We note that similar quantitative examples can be given for other functions that asymptotically dominate each other and for sufficiently large system-dependent function inputs.
In this paper $\ll$ quantifies to the given example.

For a system $\cS$ and a relation $\ll$ with respect to $\pinf$, we refer to quantities $X$ and $Y$ for which $X \ll Y \ll \pinf$ holds (e.g., as in the example above) as \textit{temporary}, and respectively, \textit{useful}.
We say that a subexecution $R'\seg R$ is temporary or useful, if $|R'|$ is temporary, or respectively, useful.
Moreover, we say that a subexecution $R'\seg R$ is of $\pinf$-scale, if 
there exists a temporary $x\geq 1$, such that $|R'| = x\cdot\pinf$ (or equivalently $|R'| = y\cdot \MI$, for a temporary $y\geq 1$). 
Note that we also use the terms temporary, useful, and $\pinf$-scale for the size of arbitrary finite subsets of states of an execution.

\EMS{Iosif, how about defining a synonym to temporary and call it $\tempinf$-scale. This way we can say `an $\tempinf$-scale number of times' rather than saying `a temporary number of times'. What are your thoughts? Then maybe we can also say $\Lambda_S$-scale for useful numbers? }\IS{good idea, but too much notation already in the paper}

\EMS{Comments from the reviewers: `The argument about being practically infinite is confusing and somewhat obscure. I believe that all this is trying to say
is that for some very large values, a real system behaves correctly during its lifespan. The argument using $2^{cN}$ is
particularly confusing because $2^N$ is not necessarily a very large number. For example, if $N=10$, then $2^N = 1024$, which
is tiny for practical purposes. I suppose in this case the argument would be to make the constant c very large, but
then we are back to discussing what is large enough.' }
\IS{this is a very good comment and a pitfall for us. I tried to avoid defining what is large enough. I referred to asymptotic relations and system-dependent constants. Please comment on this paragraph!}
} 

\remove{
Let $g: \N \to \N$ be a function, such that $g(N)$ giving number of rounds needed for strong self-stabilization in a fully synchronous system, for a given problem and a given self stabilizing algorithm.
We say that an execution $R'\sqsubseteq$ is \textit{useful}, if and only if $|R'|\geq \pinf/g(N)$.

Let $R$ be an infinite execution, $R'\seg R$, and $A_i(R')$ be the set of steps of $p_i\in P$ during $R'$.
We partition the set of processors in the following three sets.
We say that a processor $p_i$ \textit{takes steps often}, $p_i \in \oft(R)$, if and only if $\max\{|k-j| : a_k \text{ and } a_j \text{ are steps in } A_i(R)\} \ll \pinf$.
We define a $P'$-asynchronous cycle to be the minimum prefix of an execution for which all processors in $P'\subseteq P$ have completed one step.
We define $P(g,R,R')$ as the largest of all the subsets $P'\subset P\setminus \oft$, such that all processors in $P'$ complete $g(N)$ $P'$-asynchronous cycles in an $\pinf$-scale prefix $R'$ of $R$.
We refer to the processors in $P(g,R,R')$ and in $P\setminus (\oft\cup P(g,R,R'))$ as the \textit{regular}, and respectively, the \textit{irregular} processors.
} 

\Paragraph{The design criteria of practically-self-stabilizing systems}
\label{s:practicalStabilization}
We define the system's abstract task $\cT$ by a set of variables (of the processor states) and constraints, which we call the \textit{system requirements}, in a way that defines a desired system behavior, but does not consider necessarily all the implementation details. 
We say that an execution $R$ is a \textit{legal execution} if the requirements of task $\cT$ hold for all the processors that take steps during $R$ (which might be a proper subset of $P$).
We denote the set of legal executions with  $\LE$.
We denote with $f_R$ the number of deviations from the abstract task in an execution $R$, i.e., the number of states in $R$ in which the task requirements do not hold (hence $R\in\LE \iff f_R = 0$).
Note that the definition of $\LE$ allows executions of very small length, but our focus will be on finding maximal subexecutions $R^*\seg R$ for a given $\pinf$-scale execution $R$, such that $R^*\in\LE$.
\modified{Definitions~\ref{def:strongSelfStab},~\ref{def:pseudoSelfStab} and~\ref{def:practSelfStab} specify our stabilization criteria.}{Definitions~\ref{def:strongSelfStab} and~\ref{def:practSelfStab} specify our stabilization criteria.}

\begin{definition}[Strong Self-stabilization]
\label{def:strongSelfStab}
For every infinite execution $R$, there exists a partition $R = R' \circ R''$, such that $|R'| = z(N) \in \N$ and $f_{R''} = 0$, where $z(N)$ is the complexity measure.
\end{definition}

\begin{definition}[Pseudo Self-stabilization]
\label{def:pseudoSelfStab}
For every infinite execution $R$, $f_R = f(R,N) \in \N$, where $f_R$ is the complexity measure.
\end{definition}

\begin{definition}[Practically-self-stabilizing System]
\label{def:practSelfStab}
For every infinite execution $R$, and for every $\pinf$-scale subexecution $R'$ of $R$, $f_{R'} = f(R',N) \ll |R'|$, where $f_{R'}$  is the complexity measure.
\end{definition}

\remove{ 

\noindent\textbf{pr. stab. -- old/clean up}
\IS{add comment about $P\setminus (P_o(R)\cup P_{tmp}(R))$} Let $\cS$ be a system that includes the processors in $P$ and their communication channels, and $\pinf$ (system lifetime) be a large number, which depends on $\cS$.
 \IS{We consider $\pinf$ to be at least as large as a the lifetime of $\cS$ (i.e. \textit{practically-infinite}).} 
For example, this dependency can include the maximum number of \IS{variables}
that exist in any system state (either in processor states or in messages that are in transit), i.e., $( \capacity \cdot |s| \cdot N^2) \ll \pinf$, where $|s|$ is the state size of each processor.
Moreover, we assume that $\MI \geq \pinf$ \IS{is this right?} (e.g., $\pinf = \MI = 2^{64}$). 
%
%
%
We say that execution $R$ is \textit{temporary} when $|R|\ll \pinf$. 
Moreover, $R$ \textit{is of} $\pinf$-scale, if and only if, $|R| \nll \pinf$. 
In the example above, $\pinf/(\capacity \cdot |s|  \cdot N^2)$ is of $\pinf$-scale.
Suppose that for a given sequence of steps, $A=\{a_i\}_{i\in I}$, it holds that for every system state $c \in R$, there is a step $a_i \in A$ that a processor takes within $x \ll \pinf$ steps before or after $c$, where $I\subseteq \N$. 
We then say that $A$'s elements appear \emph{often} in $R$. 
When $A_i=\{a \in R : p_i \textnormal{ takes step } a \}$ is the set of all steps that processor $p_i$ takes in $R$ and $A_i$'s elements appear often in $R$, we say that \emph{$p_i$ takes steps often in $R$}.
%
\IS{modified:} Note that in every $\pinf$-scale execution there exists at least one processor that takes steps often, since otherwise the execution would be temporary.
Also, we do not require that every active processor is taking steps often. 
%
%
\IS{restricted $\LE$ to active processors:}
We define an abstract task $\cT$ by a set of requirements and we say that an execution $R$ is a \textit{legal execution} if the requirements of task $\cT$ hold for all the processors that take steps during $R$.
We denote the set of legal executions with  $\LE$.
%
\remove{We define an abstract task $\cT$ by requirements~\ref{req:correctCounting} and~\ref{req:2act}, and $\LE$ as the set of $\cT$'s legal executions, such that requirements~\ref{req:correctCounting} and~\ref{req:2act} are met throughout any run $R \in \cT$.}
Below, we give a definition of a practically-stabilizing system.

\begin{definition}[Practically-Stabilizing Systems]
\label{def:pss}
Let $\cS$ be a system and $R$ be an $\pinf$-scale execution. 
We say that $\cS$ is practically stabilizing in $R$, if and only if there exists a partition of $R$, $R = R' \circ R^* \circ R''$, such that $|R'| \ll \pinf$~$\land$ $|R^*|\nll \pinf$~$\land$ $R^*\in \LE$.
\end{definition}

\remove{\IS{it's probably better to differentiate from the Alon et al. paper, and not include such comments}
Note that Definition~\ref{def:pss} is equivalent to the definition of practically stabilizing systems of~\cite{DBLP:journals/jcss/AlonADDPT15} for $r\geq \pinf$.}


} 


%


\remove{ 

\noindent\textbf{pr. stab. -- old/clean up}
\IS{add comment about $P\setminus (P_o(R)\cup P_{tmp}(R))$} Let $\cS$ be a system that includes the processors in $P$ and their communication channels, and $\pinf$ (system lifetime) be a large number, which depends on $\cS$.
 \IS{We consider $\pinf$ to be at least as large as a the lifetime of $\cS$ (i.e. \textit{practically-infinite}).} 
For example, this dependency can include the maximum number of \IS{variables}
that exist in any system state (either in processor states or in messages that are in transit), i.e., $( \capacity \cdot |s| \cdot N^2) \ll \pinf$, where $|s|$ is the state size of each processor.
Moreover, we assume that $\MI \geq \pinf$ \IS{is this right?} (e.g., $\pinf = \MI = 2^{64}$). 
%
%
%
We say that execution $R$ is \textit{temporary} when $|R|\ll \pinf$. 
Moreover, $R$ \textit{is of} $\pinf$-scale, if and only if, $|R| \nll \pinf$. 
In the example above, $\pinf/(\capacity \cdot |s|  \cdot N^2)$ is of $\pinf$-scale.
Suppose that for a given sequence of steps, $A=\{a_i\}_{i\in I}$, it holds that for every system state $c \in R$, there is a step $a_i \in A$ that a processor takes within $x \ll \pinf$ steps before or after $c$, where $I\subseteq \N$. 
We then say that $A$'s elements appear \emph{often} in $R$. 
When $A_i=\{a \in R : p_i \textnormal{ takes step } a \}$ is the set of all steps that processor $p_i$ takes in $R$ and $A_i$'s elements appear often in $R$, we say that \emph{$p_i$ takes steps often in $R$}.
%
\IS{modified:} Note that in every $\pinf$-scale execution there exists at least one processor that takes steps often, since otherwise the execution would be temporary.
Also, we do not require that every active processor is taking steps often. 
%
%
\IS{restricted $\LE$ to active processors:}
We define an abstract task $\cT$ by a set of requirements and we say that an execution $R$ is a \textit{legal execution} if the requirements of task $\cT$ hold for all the processors that take steps during $R$.
We denote the set of legal executions with  $\LE$.
%
\remove{We define an abstract task $\cT$ by requirements~\ref{req:correctCounting} and~\ref{req:2act}, and $\LE$ as the set of $\cT$'s legal executions, such that requirements~\ref{req:correctCounting} and~\ref{req:2act} are met throughout any run $R \in \cT$.}
Below, we give a definition of a practically-stabilizing system.

\begin{definition}[Practically-Stabilizing Systems]
\label{def:pss}
Let $\cS$ be a system and $R$ be an $\pinf$-scale execution. 
We say that $\cS$ is practically stabilizing in $R$, if and only if there exists a partition of $R$, $R = R' \circ R^* \circ R''$, such that $|R'| \ll \pinf$~$\land$ $|R^*|\nll \pinf$~$\land$ $R^*\in \LE$.
\end{definition}

\remove{\IS{it's probably better to differentiate from the Alon et al. paper, and not include such comments}
Note that Definition~\ref{def:pss} is equivalent to the definition of practically stabilizing systems of~\cite{DBLP:journals/jcss/AlonADDPT15} for $r\geq \pinf$.}


} 




\Paragraph{Problem definition (task requirement)}
We present a requirement (Requirement\reqs) which defines the abstract task of vector clocks.
This requirement trivially holds for a fault-free system that can store unbounded values (and thus does not need to deal with integer overflow events). 
The presence of transient faults
can violate these assumptions and cause the system to deviate from the abstract task, which $\LE$ specifies (through Requirement\reqs).
In the following, we present Requirement\reqs\ and its relation to causal ordering (Property~\ref{req:3causality}). 
%

We assume that each processor $p_i$ is recording the occurrence of a new local event by incrementing the $i$-th entry of its vector clock.
%
During a legal execution, we require that the processors count all the events occurring in the system, despite the (possibly concurrent) wrap around events.
Hence, we require that 
\remove{for every vector clock that is received by an active processor, the values of that vector clock or higher ones should be received within the legal execution by all active processors (Requirement~\ref{req:correctCounting}), 
while}
 the vector clock element of each (active) processor records all the increments done by that processor (Requirement~\ref{req:2act}).
As a basic functionality, we assume that each processor can always query the value of its local vector clock. 
We say that an execution $R^*$ is a legal execution, i.e., $R^*\in \LE$, if Requirement~\ref{req:2act} holds for the states of all processors that take steps during $R^*$.


%

\begin{requirement}[Counting all \remove{relevant} events]
\label{req:2act} 
Let $R$ be an execution, $p_i$ be an active processor, and $V^k_i[i]$ be $p_i$'s \remove{\textit{relevant}} value in $c_k \in R$. 
For every active processor $p_i\in P$, the number of $p_i$'s \remove{\textit{relevant}} counter increments between the states $c_k$ and $c_\ell \in R$ is $V^\ell_i[i] - V^k_i[i]$, where $c_k$ precedes $c_\ell$ in $R$.
\end{requirement}

\Subparagraph{Causal precedence}
We explain how faults and bounded counter values affect Requirement\reqs\ and causal ordering.
Let $V$ and $V'$ be two vector clocks, and $\causalPrecedence(V, V')$ be a query which is true, if and only if, $V$ causally precedes $V'$, i.e., $V'$ records all the events that appear in $V$~\cite{tanenbaum2007distributed}.
Then, $V$ and $V'$ are concurrent when $\neg \causalPrecedence(V, V')\land \neg \causalPrecedence(V', V)$ holds.
We formulate the causal precedence property in Property~\ref{req:3causality}.
In a fault-free system with unbounded values, Requirement~\ref{req:2act} trivially holds, since no wrap around events occur.
That is, $V$ causally precedes $V'$, if $V[i] \leq V'[i]$ for every $i\in \{1,\ldots,n\}$ and $\exists_{j\in \{1,\ldots,n\}} V[j] < V'[j]$ hold~\cite{tanenbaum2007distributed}.
However, this is not the case in an asynchronous, crash-prone, and bounded-counter setting, where counter overflow events can occur.
We present cases where Requirement\reqs\ and Property~\ref{req:3causality} do not hold due to a counter overflow event in Example~\ref{eg:overflow}.


\begin{property}[Causal precedence]
\label{req:3causality}
For any two vector clocks $V_i$ and $V_j$ of two processors $p_i, p_j\in P$, $\causalPrecedence(V_i, V_j)$ is true if and only if $V_i$ causally precedes $V_j$.
\end{property}

\begin{example}
\label{eg:overflow}
Consider two bounded vector clocks  $V_i = \la v_{i_1},\ldots, v_{i_N} \ra$ and $V_j = \la v_{j_1},\ldots, v_{j_N} \ra$ of $p_i, p_j\in P$, such that upon a new event $p_k\in P$ increments $V_k[k]$ by adding $1\,\mmi$. 
Assume that $V_i = V_j$ and $V_i[i] = V_j[i] = \MI - 1$ hold (e.g., as an effect of a transient fault).
In the following step $p_i$ increments $V_i[i]$ by 1, thus $V_i[i]$ wraps around to $V_i[i]=0$, while $V_j[i] = \MI-1$ remains.
Then, $V_i[i]$ mistakenly indicates zero events for $p_i$ ($V_i[i] = 0$) instead of $\MI$, i.e., Requirement\reqs\ does not hold.
Also, using the definition of causal precedence in fault-free systems and unbounded counters, $V_i$ appears to causally precede $V_j$, which is wrong, since $V_j$ causally precedes $V_i$ ($p_i$ had one more event than what $p_j$ records).
That is, $V_i[k] = V_j[k]$ for $k\neq i$ and $V_i[i] = 0 < \MI-1 = V_j[i]$, which mistakenly indicates that $p_j$ records $\MI-1$ more events than $p_i$.\qed
%
\end{example}

We remark that Requirement\reqs\ is a necessary and sufficient condition for Property~\ref{req:3causality} to hold.
Suppose that Requirement\reqs\ does not hold, which means that it is not possible to count the events of a single processor between two states (e.g., as we showed in the previous example).
This implies that it is not possible to compare two vector clocks, hence Property~\ref{req:3causality} cannot hold.
Moreover, if Requirement\reqs\ holds, then it is possible to compare how many events occurred in a single processor between two states, and by extension it is possible to compare all vector clock entries for two vector clocks.
The latter is a sufficient condition for defining causal precedence (as in the fault-free unbounded-counter setting~\cite{tanenbaum2007distributed}).

In Section~\ref{s:pair} we present our solution for computing $V^\ell_i[i] - V^k_i[i]$ for Requirement\reqs\ ($c_\ell,\, c_k$ are states in an execution $R$ and $p_i\in P$) and $\causalPrecedence(V_i, V_j)$ for Property~\ref{req:3causality} in a legal execution. 
In Section~\ref{s:algorithms} we present an algorithm for replicating vector clocks in the presence of faults and bounded-counters, which we prove to be practically-self-stabilizing in Section~\ref{s:proofs}.

%

\section{Background: Practically-self-stabilizing Labeling Schemes}
\label{s:back} 
%

In this section we give an overview of labeling schemes that can be used for designing an algorithm that guarantees Requirement\reqs.
%
%
It is evident from Example~\ref{eg:overflow} (Section~\ref{s:systemSettings}) that a solution for comparing vector clock elements that overflow can be based on associating each vector clock element with a timestamp (or label, or epoch).
This way, even if a vector clock element overflows, it is possible to maintain order by comparing the timestamps.

As a first approach for providing these timestamps, one might consider to use an integer counter (or sequence number), $cn$.
We explain why this approach is not suitable in the context of self-stabilization.
Any system has memory limitations, thus a single transient fault can cause the counter to quickly reach the memory limit, say $\MI$.
The event of counter overflow occurs when a processor increments the counter $cn$, causing $cn$ to encode the maximum value $\MI$. 
In this case, the solution often is that $cn$ wraps around to zero.
Thus, this approach faces the same ordering challenges with the vector clock elements.

Existing solutions associate counters with \textit{epochs} $\ell$, which mark the period between two overflow events. 
A non-stabilizing representation of epochs can simply consider a, say, $64$-bit integer. 
Upon the overflow of $cn$, the algorithm increments $\ell$ by one and nullifies $cn$. 
The order among the counters is simply the lexicographic order among the pairs $\la\ell, cn\ra$.  
With this approach, it is a challenge to maintain an order within a set of integers during phases of concurrent wrap around events at different processors. 
In the following we present more elegant solutions for bounded labeling schemes, that tolerate concurrent overflow events, transient faults, and the absence of execution fairness.


\Paragraph{Bounded labeling schemes}
Bounded labeling schemes (initiated in~\cite{israeli1987bounded, dolev1997bounded}, cf.~\cite[Section 2]{DBLP:books/daglib/0020056}) provide labeling of data and denote temporal relations.
Given a bounded set of labels $L$, a bounded labeling scheme usually includes a partial or total order $\prec_L$ over $L$ and a function for constructing locally a new maximal label from $L$ with respect to $\prec_L$, given a set of input labels.
Labeling algorithms handle these labels such that the processors eventually agree, for example, on a maximal label.
Since we consider processor crashes, a suitable labeling scheme should include a garbage collection mechanism that cancels obsolete labels, by possibly using label storage.

\Paragraph{Practically-self-stabilizing bounded labeling schemes}


Alon et al.~\cite{DBLP:journals/jcss/AlonADDPT15} and Dolev et al.~\cite{DBLP:journals/corr/DolevGMS15} present practically-self-stabilizing bounded-size labels. 
Whenever a counter $cn$ reaches $\MI$, the algorithm by Alon et al.~\cite{DBLP:journals/jcss/AlonADDPT15}  replaces its current label $\ell$ with $\ell'$, which at the moment of this replacement is greater than any label that appears in the system state. 
This means that immediately after the counter wraps around, the counter $\la\ell', cn=0\ra$ is greater than all system counters. 
In the remainder of this section, we give an overview of the labeling schemes of Alon et al.~\cite{DBLP:journals/jcss/AlonADDPT15} (Section~\ref{s:Alon}) and Dolev et al.~\cite{DBLP:journals/corr/DolevGMS15} (Section~\ref{s:DolevAbstractTask}). 

\subsection{The case of no concurrent overflow events}
\label{s:Alon}
Alon et al.~\cite{DBLP:journals/jcss/AlonADDPT15} address the challenge of always being able to introduce a label that is greater than any other previously used one. 
They present a two-player guessing game, between a finder, representing the algorithm, and a hider, representing an adversary controlling the asynchronous system that starts from an arbitrary state. 
Let $M$ be the maximum number of labels that can exist in the communication channels, i.e., $M = \capacity N(N-1)$, where $N(N-1)/2$ is the number of bidirectional communication channels in the system and $\capacity$ is the capacity in number of messages per channel (and hence labels).

The hider has a bounded size label set, $\mathcal{H}$, such that $|\mathcal{H}|\leq \msg \in \N$. 
The finder, who is oblivious to $\mathcal{H}$'s content, aims at obtaining a label $\ell$ that is greater than all of $\mathcal{H}$'s labels. 
To that end, the finder generates $\ell$ in such a way that whenever the hider exposes a label $\ell' \in \mathcal{H}$, such that 
$\ell$ is not greater than $\ell'$,
$\mathcal{H}$ has one less label that the finder is unaware of its existence. 
The hider may choose to include $\ell$ in $\mathcal{H}$ as long as it makes sure that $|\mathcal{H}|\leq \msg$ by omitting another label from $\mathcal{H}$ (without notifying the finder).

\Paragraph{Label construction}
%
A label component $\ell = (sting, Antistings)$ is a pair, where $sting \in D$, $D = \{1, \ldots, k^{2}+1\}$, $Antistings \subset D$, 
$|Antistings|=k$, and $k>1$ is an integer.
The order among label components is defined by the relation $\lea$, where $\ell_{i} \lea \ell_{j} \iff (\ell_i.sting \in \ell_j.Antistings) \land (\ell_j.sting \not \in \ell_i.Antistings)$.
%
%
The function $Next_b(L)$ takes a set $L=\{ \ell_1,\ldots, \ell_{\kappa}\}$ of (up to) $k\in \N$ label components, and returns a newly created label component, $\ell_{j}=\la s,A\ra$, such that $\forall \ell_{i} \in L: \ell_{i} \lea \ell_{j}$,
%
%
where $s \in D\setminus \cup_{i=1}^{\kappa} A_i$ and $A = \{s_1, \ldots, s_{\kappa} \}$, possibly augmented by arbitrary elements of $D\setminus A$ when $|A| = \kappa <k$.


\Paragraph{Label cancelation}
%
%
Alon et al.~\cite{DBLP:journals/jcss/AlonADDPT15} use the order $\lea$ for which, during the period of recovery from transient faults, it can happen that $\ell_1$, $\ell_2$, and $\ell_3$ appear in the system and $\ell_1 \lea \ell_2 \lea \ell_3 \lea \ell_1$ holds. 
The finder breaks such cycles by \textit{canceling} these label components so that the system (eventually) avoids using them.
Alon et al.~\cite{DBLP:journals/jcss/AlonADDPT15} implement labels (epochs) as pairs $(ml,cl)$, where $ml$ is always a label component and $cl$ is either $\bot$ when $(ml,cl)$ is legitimate (non-canceled) or a label component for which $cl \not \lea ml$ holds. 
Thus, the finder stores $cl$ as an evidence of $ml$'s cancelation.

\Paragraph{Keeping the number of stored labels bounded} 
Alon et al.~\cite{DBLP:journals/jcss/AlonADDPT15} present a finder strategy that queues the most recent labels that the finder is aware of in a FIFO manner. 
They show a $2\msg$ bound on the queue size by pointing out that, if the finder queues (1) any label that it generates and (2) the ones that the hider exposes, the hider can surprise the finder at most $\msg$ times before the queue includes all the labels in $\mathcal{H}$.

In detail, the algorithm gossips repeatedly its (currently believed) greatest label and stores the received ones in a queue of at most $2\msg$ labels, where $\msg = \capacity N(N-1)$ is the maximum number of labels that the system can ``hide'', i.e., one (currently believed) greatest label that each of the $N$ processors has and $\capacity$ (capacity) in each communication link. 
Upon arrival of label $\ell_{i}$ to $\ell_{j}$ such that $\ell_{i} \not\lea \ell_{j}$, processor $p_j$ queues the arriving label $\ell_{i}$, uses $Next_b()$ to create a new (currently believed) greatest legitimate label $\ell'_{j}$ and queue it as well. 
Alon et al.~\cite{DBLP:journals/jcss/AlonADDPT15} show that, when only $p_j$ may create new labels, the system stabilizes to a state in which $p_j$ believes in a legitimate label that is indeed the greatest in the system. 
Note that the stabilization period includes at most $\msg$ arrivals to $p_j$ of labels $\ell_{i}$ that ``surprise'' $p_j$, i.e., $\ell_{i} \not\lea \ell_{j}$.
%
%
\remove{ 
\IS{We should add the value of $k$. 
Here $k\geq 4m$, since 
(i) the writer can respond to the $m$ labels that initially exist in the system with $m$ larger ones and in the worst case the last one is the maximum label, and
(ii) each label has an $ml$ and $cl$ part.} \EMS{Please write the text itself.}
}

\remove{

Alon et al. presented in~\cite{DBLP:journals/jcss/AlonADDPT15} a  practically-self-stabilizing labeling scheme that was used as a building block for a shared memory emulation of a single-writer multiple-reader (SWMR) atomic register through an asynchronous message passing system that is prone to transient and permanent failures.
They present a function, $Next_b()$, that takes a set of 
(up to) $k\in \N$ labels, $L=\{ \ell_1, \ell_2, \ldots \}$, and returns a label, $\ell_{j}$, such that $\forall \ell_{i} \in L: \ell_{i} \lea \ell_{j}$, where $sting \in D, Antistings \subset D, |Antistings|=k, D = \{1, \ldots, k^{2}+1\}$ and $\ell_{i} \lea \ell_{j} \equiv (\ell_i.sting \in \ell_j.Antistings) \land (\ell_j.sting \not \in \ell_i.Antistings)$. 
To that end, the algorithm gossips repeatedly its (currently believed) greatest label and stores the received ones in a queue of at most $2\msg$ labels, where where $\msg = \Theta(N^2 \cdot \capacity)$ is the maximum number of labels that the system can ``hide'', i.e., one (currently believed) greatest label that each of the $N$ processors has and $\capacity$ (capacity) in each communication link. 
Upon arrival of label $\ell_{i}$ to $\ell_{j}$ such that $\ell_{i} \not\lea \ell_{j}$, processor $p_j$ queues the arriving label $\ell_{i}$, uses $Next_b()$ to create a new (currently believed) greatest label $\ell'_{j}$ and queue it as well. 
Alon et al.~\cite{DBLP:journals/jcss/AlonADDPT15} show that, when only $p_j$ may create new labels, the system practically-stabilizes to a state in which $p_j$ believes in a label that is indeed the greatest in the system. 
Note that the stabilization period includes at most $\msg$ arrivals to $p_j$ of labels $\ell_{i}$ that ``surprise'' $p_j$, i.e., $\ell_{i} \not\lea \ell_{j}$.

\IS{delete?}  
\noindent \textbf{Finding the system's maximum label.~}
Alon et al.~\cite{DBLP:journals/jcss/AlonADDPT15} present a relation $\lea$ among labels. 
Given a distinguished processor $p_w$, the algorithm in~\cite{DBLP:journals/jcss/AlonADDPT15} aims that, within the recovery period, $p_w$ uses only the $\lea$-greatest label, $\ell_{\max}$, in the system and cancels all other labels that it ever observed. 
This allows $p_w$ to use $\ell_{\max}$ as an epoch to its counter. 
Whenever that counter reaches its maximum value $\MI$, $p_w$ cancels $\ell_{\max}$ and replaces $\ell_{\max}$ with another label by generating a label that is $\lea$-greater than any label that it has seen ever since the system start. 
During the recovery period, the system state may include a label, say $\ell$, such that $\ell_{\max}$ and $\ell$ are incomparable, i.e., $\ell_{\max} \not\lea \ell \land \ell \not\lea \ell_{\max}$. 
Once $p_w$ discover this incomparability, it cancels both $\ell_{\max}$ and $\ell$ and replaces $\ell_{\max}$ with another label that is $\lea$-greater than any label that it has seen ever. 
Dolev et al.~\cite{DBLP:journals/corr/DolevGMS15} extends the relation $\lea$ and presents $\leqlb$. 
The aim here is that, within the recovery period, \textit{every} processor $p_i$ uses only the $\leqlb$-greatest label in the system and cancels any other labels. 
This is achieved by letting Dolev et al.'s labels $\ell$ to include also the field $lCreator$, which is the identifier of the processor that has added it to the system.
\IS{/delete? }
}

\subsection{The case of concurrent overflow events}
\label{s:DolevPaper}
Alon et al.~\cite{DBLP:journals/jcss/AlonADDPT15}'s labels allow, once a single label (epoch) $\ell$ is established, to order the system events using the counter $(\ell, cn)$.
Dolev et al.~\cite{DBLP:journals/corr/DolevGMS15} extend Alon et al.~\cite{DBLP:journals/jcss/AlonADDPT15} to support concurrent $cn$ overflow events, 
by including the label creator identity. 
This information facilitates symmetry breaking, and decisions about which label is the most recent one, even when more than one creator concurrently constructs a new label. 
Dolev et al.~\cite{DBLP:journals/corr/DolevGMS15} make sure that active processors $p_i$ remove eventually obsolete labels $\ell$ that name $p_i$ as their creator (due to the fact that $p_i$ indeed created $\ell$, or $\ell$ was present in the system's  arbitrary starting state). 
Note that the system's arbitrary starting state may include cycles of legitimate (not canceled) labels $\ell_1 \lea \ell_2 \lea \ell_3 \lea \ell_1$ that share the same creator, e.g., $p_k$. 
The algorithm by Dolev et al. guarantees cycle breaking by logging all labels that it observes and canceling any label that is not greater than its currently known maximal one.

\Paragraph{Label construction}
Dolev et al.~\cite{DBLP:journals/corr/DolevGMS15} extend Alon et al.'s label component to $(\creator, sting$, $Antistings)$, where $\creator$ is the identity of the label creating processor, and $sting$ as well as $Antistings$ are as in~\cite{DBLP:journals/jcss/AlonADDPT15} (Section~\ref{s:Alon}). 
They use $=_{lb}$ to denote that two labels, $\ell_{i}$ and $\ell_{j}$, are identical and define the relation $\ell_{i} \lb \ell_{j} \iff (\ell_i.\creator < \ell_j.\creator) \lor (\ell_{i}.\creator = \ell_j.\creator \land ((\ell_i.sting \in \ell_j.Antistings) \land (\ell_j.sting \not \in \ell_i.Antistings)))$. The labels $\ell_i$ and $\ell_j$ are \textit{incomparable} when  $\ell_i \nled \ell_j \land \ell_j \nled \ell_i$ (and comparable otherwise).



\Paragraph{Label cancelation}
Dolev et al. consider label $\ell$ to be obsolete when there exists another label $\ell' \not \lb \ell$ of the same creator. In detail, $\ell_i$ cancels $\ell_j$, if and only if, $\ell_i$ and $\ell_j$ are incomparable, or if $\ell_i.\creator = \ell_j.\creator \land \ell_i.sting \in \ell_j.Antistings \land \ell_j.sting \notin \ell_i.Antistings$, i.e., $\ell_i$ and $\ell_j$ have the same creator but $\ell_j$ is greater than $\ell_i$ according to the $\lea$ order.


\Paragraph{The abstract task of Dolev et al.'s labeling scheme}
\label{s:DolevAbstractTask}
Each processor presents to the system a label that represents the \textit{locally perceived maximal label}. 
During a legal execution, as long as there is no explicit request for a new label, 
all processors refer to the same locally perceived maximal label, which we refer to as the \textit{globally perceived maximal label}. 
Moreover, it cannot be the case that processor $p_i$ has a locally perceived maximal label $\ell_i$ and another processor $p_j\in P$ (possibility $i=j$) stores a label $\ell_j$ that is incomparable to $\ell_i$, greater than $\ell_i$, or that cancels $\ell_i$ (where $\ell_j$ is not necessarily $p_j$'s locally perceived maximal label). 

We note that when the system starts in an arbitrary state, the (active) processors might refer to a globally perceived maximal label that is not the maximal label in the system. 
This is due to the fact that in practically-self-stabilizing systems there could be a non zero number of deviations from the abstract task during any practically infinite execution (Definition~\ref{def:practSelfStab}). 
In detail, Dolev et al.~\cite[Algorithm~2]{DBLP:journals/corr/DolevGMS15} store the locally perceived maximal label of processors $p_i$ at $max_i[i]$ and demonstrate the satisfaction of the above abstract task in~\cite[Theorem~4.2]{DBLP:journals/corr/DolevGMS15}.

\Paragraph{Keeping the number of stored labels bounded} 
%
%
Whenever $p_k$ is active, it will eventually queue all of the labels that it has created, cancel them, and generate a label that is greater than them all. However, in case that $p_k$ is inactive, the algorithm uses the active processors to prevent the asynchronous system from endlessly using labels that belong to cycles. 
Dolev et al.~\cite{DBLP:journals/corr/DolevGMS15} show that $p_k$'s cycle may include at most $\msg+N$ labels, where $\msg$ is the number of labels that can appear in the communication channels and $N$ is the number of processors. Therefore, $p_i \in P$ needs to queue $\msg+N$ labels for any other $p_k$, so that $p_i$ could remove the label cycles once their creator $p_k$ becomes inactive. Moreover, $p_i$ needs a queue of $2(\msg N + 2N^2 - 2N) + 1$ labels $\ell$ (for which $\ell.\creator=i$) until it can be sure to have the maximal label.
These bounds give the maximum number of labels that $p_i$ can either \textit{adopt} (use as its maximal label) or \textit{create} when it does not store a maximal label, throughout any execution.
%
Next, we provide the algorithm details and use these details when justifying our bounds (Section~\ref{s:interfaceDolev}).

\PARAgraph{Variables}
%
%
Each processor $p_i$ maintains an $N$-size vector of labels $max_i$, where $max_i[i]$ is $p_i$'s local maximal $\lb$-label and $max_i[j]$ is the latest \textit{legitimate}, i.e., not canceled, label that $p_i$ received most recently from $p_j$.
Also, $p_i$ maintains an $N$-size vector $storedLabels_i$ of queues that logs the labels that $p_i$ has observed so far, which $p_i$ sorts by their label creator.
That is, $storedLabels_i[j]$ queues label $\ell$, such that 
(i) $p_i$ has received $\ell$ from an arbitrary processor,
(ii) $\ell$'s creator is $p_j \in P$, i.e., $\ell.\creator = j$,
(iii) there are no duplicates of $\ell$ in $storedLabels_i[j]$, and
(iv) $\ell$ is either canceled or every other label in $storedLabels_i[j]$ is canceled.
%


\PARAgraph{The algorithm} 
\remove{
\IS{mention label adoption}}
Processor $p_i$ gossips repeatedly its $\lb$-greatest label, $max_i[i]$, and stores the arriving labels in $storedLabels_i[j]$, where $j=\ell.\creator$.
The algorithm ensures that $storedLabels_i[j]$ stores at most one legitimate label by canceling any label $\ell'$ using label $\ell$ when (1) they are incomparable or (2) they share the same creator and $\ell' \lb \ell$. 
Moreover, $p_i$ makes sure that, for any $j\in [1,N]$, the $\lb$-greater label $max_i[j]$ is indeed greater than any other label in $storedLabels_j[i]$. 
%
%
In case it does not, $p_i$ selects the $\lb$-greatest legitimate label in $storedLabels_i$, and if there is no such legitimate label, $p_i$ creates a new label via $nextLabel()$, which is an extension of $Next_b()$ that also includes the label creator, $p_i$.
Dolev et al.~\cite{DBLP:journals/corr/DolevGMS15} bound the size of $storedLabels_i[j]$, for $j\neq i$ by $N + \msg$ and the size of $storedLabels_i[i]$ by $2(\msg N + 2N^2 - 2N) + 1$.

\remove{ 
\IS{We should add the size of $k$ here too. Here, $k\geq 2|storedLabels_i[i]|$, since any processor $p_i$ may store at most 
$|storedLabels_i[i]|$ labels $\ell$ such that $\ell.\creator = i$ and each label includes an $ml$ and a $cl$ part.} \EMS{Please write the text itself.}
} 



\section{Composing practically-self-stabilizing labeling algorithms and the interface to Dolev et al.~\cite{DBLP:journals/corr/DolevGMS15} labeling scheme}
%
\label{s:interfaceDolev}
%
%

In this section we present a framework for composing any practically-self-stabilizing labeling algorithm (server) with any other practically-self-stabilizing algorithm (client). 
By this composition we obtain a compound algorithm with combined properties.
Then, we discuss the challenges in composing practically-self-stabilizing algorithms, with respect to the composition of strong self-stabilizing algorithms.
Moreover, we present an interface to a labeling algorithm that facilitates our composition approach.
The interface is also used by the client algorithm to query the state of the labeling algorithm, send messages, or to request the labeling algorithm to cancel a label.
We show how this interface is implemented by the practically-self-stabilizing labeling scheme of Dolev et al.~\cite[Algorithm 2]{DBLP:journals/corr/DolevGMS15}, which we use in our solutions (Section~\ref{s:pair}) and algorithm (Section~\ref{s:algorithms}).
%
%
We end the section by discussing the stabilization guarantees of the compound algorithm.


\remove{ 
The shared counter algorithm in~\cite[Algorithm 3]{DBLP:journals/corr/DolevGMS15} uses the labeling scheme by Dolev et al.~\cite[Algorithm 2]{DBLP:journals/corr/DolevGMS15} and synchronization mechanisms. We detail an interface through which~\cite[Algorithm 3]{DBLP:journals/corr/DolevGMS15} uses~\cite[Algorithm 2]{DBLP:journals/corr/DolevGMS15} after explaining  the
composition of the scheme and any algorithm, such as~\cite[Algorithm 3]{DBLP:journals/corr/DolevGMS15} or the proposed \modified{one (Section~\ref{s:algorithms}),}{algorithm,} which does not require access to synchronization mechanisms.}



\Paragraph{Composition with a practically-self-stabilizing labeling algorithm} 
%
We follow an approach for algorithm composition in message passing systems that resembles the one in~\cite[Section 2.7]{DBLP:books/mit/Dolev2000}, which considers a composition of two self-stabilizing algorithms. 
Let us name these two algorithms as the server and client algorithms. 
The server algorithm provides services and guaranteed  properties that the client algorithm uses. 
In the composition presented in~\cite[Section 2.7]{DBLP:books/mit/Dolev2000}, once the server algorithm stabilizes, the client algorithm can start to also stabilize. 
This way, the compound algorithm obtains more complex guarantees than the individual algorithms. 

We detail our composition approach which we illustrate in Figure~\ref{fig:composition}. 
In the following, we refer to the computations of a step excluding the send or receive operation, as the step's \textit{invariant check}, which possibly includes updates of local variables.
Our approach for composing practically-self-stabilizing algorithms assumes that the messages of the client algorithm are piggybacked by the ones of the server, and that the server algorithm can send any message independently. 
Also, we assume that the communication among processors relies on a self-stabilizing end-to-end protocol, such as the ones in~\cite{DBLP:journals/ipl/DolevDPT11, DBLP:conf/sss/DolevHSS12}.

\begin{figure}[t!]
   \centering
   \includegraphics[scale=0.65]{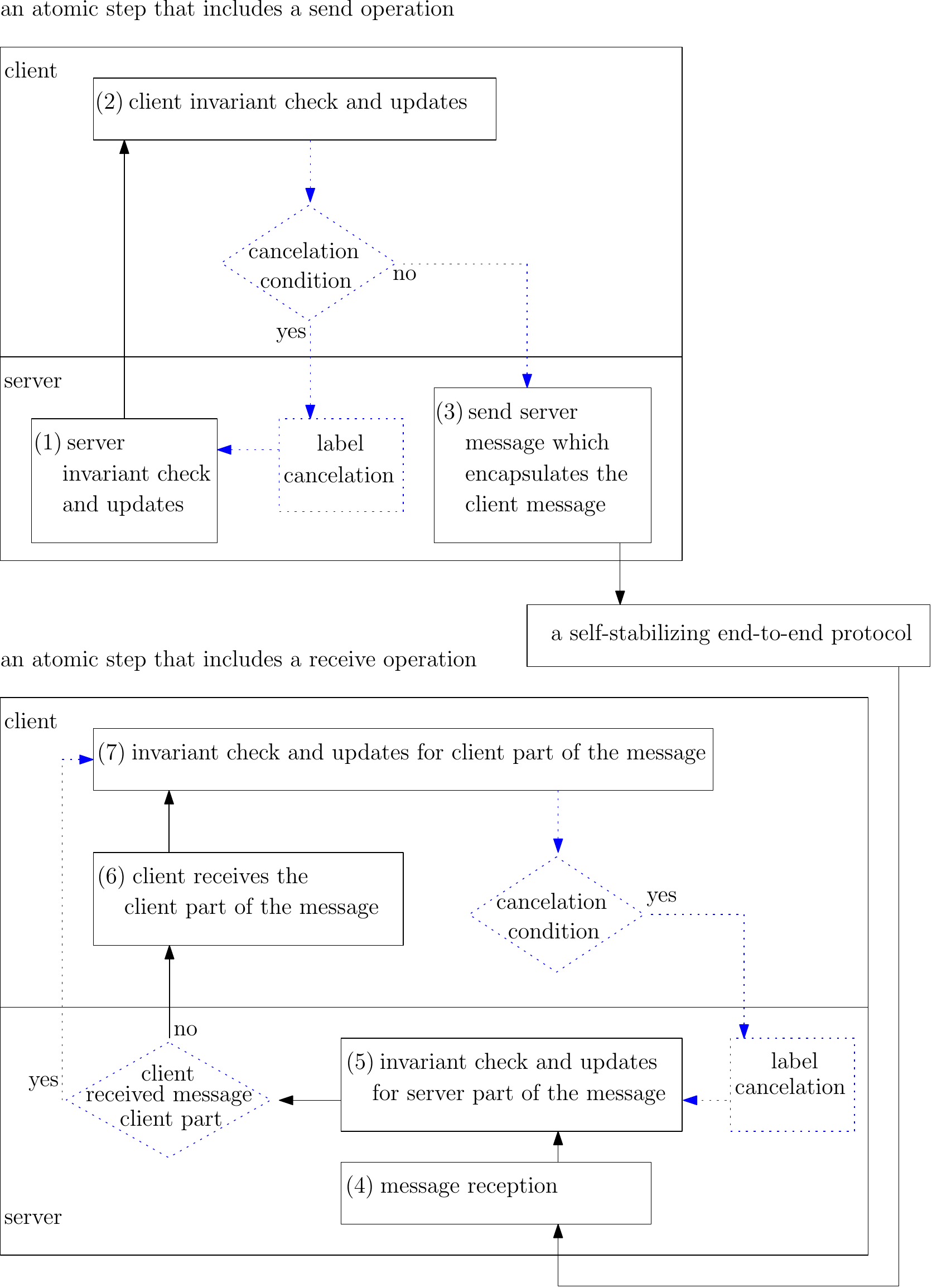}
\caption{Composition of the server and the client algorithms.
The normal lines denote the composition parts that are common in both strong and practically-self-stabilizing algorithms.
The dotted blue lines show the computations in the composition of practically-self-stabilizing algorithms, that are additional to the normal lines.
We refer to the computations done in a step excluding the send or receive operation as the (server or client) invariant check and updates.}
\label{fig:composition}
\end{figure}

\Subparagraph{A step that includes a send operation} We first explain the computations of the compound algorithm during a step that ends with a send operation.
This step starts with the server algorithm's invariant check and updates, which is followed by the client algorithm's invariant check and updates (parts 1 and 2 of Figure~\ref{fig:composition}, respectively).
We assume that the client algorithm can request a change in the labeling (server) algorithm's state, e.g., a label cancelation, but this change is performed by the labeling algorithm (cf. label cancelation in Figure~\ref{fig:composition}).
In case the client algorithm indeed requires a label to be canceled, the labeling algorithm cancels that label, and then the server and client invariant check and updates repeat (cf. Figure~\ref{fig:composition}).
Otherwise, the server encapsulates the client's message, $m_{client}$, and transmits the server message $m_{server}=\langle serverPart, m_{client} \rangle$, which encodes the server and client parts of the message.

\Subparagraph{A step that includes a receive operation} 
Upon the arrival of a message $m$ by the labeling (server) algorithm (part 4 in Figure~\ref{fig:composition}), the server algorithm performs the server invariant check and updates on the server part of the message (part 5 in Figure~\ref{fig:composition}).
Then, the server algorithm raises a message reception event for the client algorithm (part 6 in Figure~\ref{fig:composition}), which delivers the part of the arriving message that is relevant to the client algorithm, i.e., $m_{client}$.
In the following, the client algorithm performs the client invariant check and updates (part 7 in Figure~\ref{fig:composition}), which might include a request to change the state of the server algorithm, e.g., by canceling a label.
If that is the case, the labeling algorithm cancels that label, and then parts 5 and 7 of Figure~\ref{fig:composition} repeat.



\remove{ 

%
%
The approach for 
composition of practically-self-stabilizing algorithms 
assumes that the messages of the client algorithm are piggybacked by the ones of the server and the server algorithm can send any message independently. 
\IS{*}Specifically, the client can send a message $m_{client}$ after asserting the server state consistency, as well as its own state consistency with respect to the (immediately before validated and updated) server state (Figure~\ref{fig:hiddenTerminal}). 
%
Then, the server encapsulates $m_{client}$ and transmits the server message $m_{server}=\langle serverPart, m_{client} \rangle$, which encodes the client information after the server information. 

Upon the arrival of a message $m$, the server algorithm first validates and processes $m_{server}$, and then it raises the message arrival event at the client algorithm, which delivers the part of the arriving message that is relevant to the client algorithm, i.e., $m_{client}$. 
The client algorithm then asserts $m_{client}$'s consistency with respect to $m_{server}$ as well as that its state is consistent (with the one of the server) before processing $m_{client}$. 
This processing starts by validating that the client state of the node to which $m_{client}$ has arrived is consistent with the state of the node that has sent $m_{client}$. 
We illustrate the algorithm composition in Figure~\ref{fig:hiddenTerminal}.  

The server algorithm is in our case the labeling algorithm by Dolev et al.~\cite[Algorithm 2]{DBLP:journals/corr/DolevGMS15}. 
The client can be, for example, 
the vector clock algorithm that this work proposes. 
\short{
The presented algorithm composition is not identical to the one in~\cite[Section 2.7]{DBLP:books/mit/Dolev2000}, because when the client algorithm cancels a label (e.g., due to the overflow of its associated counter) it directly changes the server's state, and thus the correctness proof must take such considerations into account.
}  
We illustrate the the differences with the algorithm composition presented in~\cite[Section 2.7]{DBLP:books/mit/Dolev2000} in Figure~\ref{fig:breach}.
%

} 

\Paragraph{Challenges in composition of practically-self-stabilizing algorithms}
Composing practically-self-stabilizing algorithms is not always identical to composing strong self-stabilizing algorithms (cf.~\cite[Section 2.7]{DBLP:books/mit/Dolev2000}).
An infinite execution is \textit{fair}~\cite{DBLP:books/mit/Dolev2000} if all processors take steps infinitely often (hence no processor crashes).
In this paper we allow processor crashes, i.e., the executions are not fair, in contrast to strong self-stabilization.
Moreover, when composing two self-stabilizing algorithms, we assume that 
the client algorithm does not change the state of the server algorithm.
However, labels can become obsolete (canceled), e.g., due to an overflow event of a counter in the client algorithm.
Thus, a step of the client algorithm might include requesting the labeling (server) algorithm to change its state by canceling a label (cf. Figure~\ref{fig:composition}).
%

\Paragraph{An interface to a labeling algorithm and its implementation by the labeling algorithm of Dolev et al.~\cite[Algorithm 2]{DBLP:journals/corr/DolevGMS15}} 
We detail an interface to a labeling (server) algorithm in order to facilitate composition with a client algorithm.
%
The functions of the interface allow the labeling algorithm to do its invariant check and updates.
They also allow the client algorithm to query the state of the labeling algorithm without changing the labeling algorithm's state, except for the function $cancel()$.
The function $cancel()$ changes the labeling algorithm's state by canceling a label.
Moreover, we explain how the Dolev et al. labeling algorithm~\cite[Algorithm 2]{DBLP:journals/corr/DolevGMS15} (cf. Section~\ref{s:DolevPaper}) implements the functions of this interface.
These functions are also used by the shared counter algorithm in~\cite[Algorithm 3]{DBLP:journals/corr/DolevGMS15}.
We note that~\cite[Algorithm 3]{DBLP:journals/corr/DolevGMS15} relies on synchronization mechanisms, but the labeling algorithm~\cite[Algorithm 2]{DBLP:journals/corr/DolevGMS15} does not rely on synchronization mechanisms, and hence it is suitable for our solution.

\PARAgraph{$\bullet$ $labelBookkeeping()$: server invariant check and updates} 
This function allows the labeling algorithm to perform its invariant check and updates, i.e., the step's computations excluding the send or receive operation (part 1 or parts 4 and 5 in Figure~\ref{fig:composition}).
It is intended to be called in every step of the client algorithm, and thus facilitates the composition of the two algorithms.

In \cite[Algorithm 2]{DBLP:journals/corr/DolevGMS15}, when calling $labelBookkeeping()$ without arguments, \cite[Algorithm 2, lines 21 to 28]{DBLP:journals/corr/DolevGMS15} perform the server invariant check and updates (part 1 of Figure~\ref{fig:composition}). 
When calling $labelBookkeeping(m,j)$, \cite[Algorithm 2, lines 19 to 28]{DBLP:journals/corr/DolevGMS15} process a message $m$ that arrived from a processor $p_j \in P$ (parts 4 and 5 of Figure~\ref{fig:composition}).
The (mutable) function $labelBookkeeping()$ is an alias to $process()$~\cite[Algorithm 3, line 2]{DBLP:journals/corr/DolevGMS15}. 

\PARAgraph{$\bullet$ $isStored()$ and $isCanceled()$: querying whether a label is stored or canceled} 
%
Given a label $\ell$, the (immutable, i.e., its value cannot change) predicate $isStored(\ell)$ checks whether $\ell$ appears in the label storage of the labeling algorithm. 
The (immutable) predicate $isCanceled(\ell)$ checks whether $\ell$ is canceled. 

In \cite[Algorithm 2]{DBLP:journals/corr/DolevGMS15}, $isStored(\ell)$ returns true, if and only if $\ell \in storedLabels[j]$, such that $p_j\in P$ is $\ell$'s creator. 
Also, $isCanceled()$ is an exact alias to $legit()$ in~\cite[Algorithm 2, line 6]{DBLP:journals/corr/DolevGMS15}.  

%
%

\PARAgraph{$\bullet$ $\getLabel()$: retrieving the largest label} The (immutable) function $\getLabel()$ returns the largest locally stored label.

In \cite[Algorithm 2]{DBLP:journals/corr/DolevGMS15}, $\getLabel()$ returns the largest locally stored label with respect to the partial order of labels $\lb$ (cf. Section~\ref{s:DolevPaper})\modified{}{, where $\lb$ is the relation among labels that Dolev et al. define~\cite{DBLP:journals/corr/DolevGMS15}}. 
\short{In detail, that label is stored in $max_i[i]$ (cf. lines 27 and 28 of~\cite[Algorithm 2]{DBLP:journals/corr/DolevGMS15}).} 

\PARAgraph{$\bullet$ $\legitArriving()$ and $encapsulate()$: Token circulation and message encapsulation} 
The $\legitArriving()$ function enables a token circulation mechanism for the labeling algorithm, which is part of the self-stabilizing end-to-end protocol (cf. Figure~\ref{fig:composition}).
The token circulation mechanism guarantees that for two processors $p_i, p_j\in P$, $p_i$ processes an incoming message from $p_j$ only if $p_j$ has received $p_i$'s local maximal label.
To that end, $p_j$ piggybacks the last received maximal label of $p_i$, $sentMax$, to every message $m_{server}$ that it sends to $p_i$.
Moreover, the function $encapsulate()$ facilitates piggybacking of a message of the labeling algorithm with the one of the composed (client) algorithm (facilitating part 3 of Figure~\ref{fig:composition}). 
That is, the (immutable) function $encapsulate(m_{client})$ returns a message $m_{server}$, such that the labeling (server) algorithm's message encapsulates the message $m_{client}$. 

Let $serverPart = \langle sentMax, \bullet \rangle$ and $m_{client}$ be the server, and respectively, the client part of an outgoing message of the compound algorithm.
In \cite[Algorithm 2]{DBLP:journals/corr/DolevGMS15}, the server message is $m_{server}=\langle \langle sentMax, \bullet \rangle, m_{client} \rangle$~\cite[Algorithm 2]{DBLP:journals/corr/DolevGMS15} and $encapsulate(m_{client})$ returns the value $m_{server}$.
Moreover, the (immutable) function $\legitArriving(m_{server}, \ell)$ tests the consistency of an arriving label $\ell$ with $serverPart$ of the server message $m_{server}$. 
That is, the predicate $\legitArriving(m_{server}, \ell)$ returns the value of $\ell=sentMax$.

\PARAgraph{$\bullet$ $cancel()$: canceling a label} 
This is a function that the client algorithm uses to request the labeling algorithm to cancel a label, e.g., upon an overflow event.
In contrast to the functions presented above, $cancel()$ is the only function that the client algorithm can use to change the state of the labeling (server) algorithm (cf. Figure~\ref{fig:composition}).
Let $\ell$ and $\ell'$ be two labels, such that $\ell'$ cancels $\ell$ according to the scheme's label order.
Then, when the client algorithm calls the (mutable) function $\ell.cancel(\ell')$, the labeling algorithm marks $\ell$ as canceled (by $\ell'$).

In \cite[Algorithm 2]{DBLP:journals/corr/DolevGMS15}, in case $\ell' \not \lb \ell$ holds, $p_i$ marks $\ell$ as canceled by $\ell'$  by calling $\ell.cancel(\ell')$ (cf. label cancelation definition in Section~\ref{s:DolevPaper}). 
In detail, the function $cancel()$ is an alias to $cancelExhausted()$~\cite[Algorithm 3, line 10]{DBLP:journals/corr/DolevGMS15}.

\remove{ 
\begin{figure*}[t!]
   \centering
   \includegraphics[scale=0.7]{layersNewerTrim.pdf}
\caption{Composition of the server and the client algorithms. 
\short{
The upper node sends a server message $m_{server}$ that encapsulates the client message $m_{client}$ in (3) only after the server and then the client validated their state consistency in (1), and respectively, (2). The lower node, lets the server algorithm receive any arriving message, cf. (4). After validating its consistency and processing it, the server algorithm delivers the client part of received message to the client algorithm, where the message consistency is validated before the client validates its state's consistency with respect to the arriving message, cf. (5) and (6), respectively.
} 
}
   \label{fig:hiddenTerminal}
\end{figure*}

\modified{

\begin{figure*}[t!]
   \centering
   \includegraphics[scale=0.665]{layersModelShortTrim.pdf}
\caption{Label (epoch) cancelation due to counter overflow events. After the recovery period, this breach in the algorithm composition (Figure~\ref{fig:hiddenTerminal}) occurs only once a counter becomes exhausted, i.e., after $\MI$ increments. For brevity, we exemplify only a sketch of a the case of an atomic step with a send operation. 
In Section~\ref{s:algorithms} detail how the labeling algorithm is composed with the algorithm proposed in this paper.
}
   \label{fig:breach}
\end{figure*}

} {

\begin{wrapfigure}{r}{0.65\textwidth}
   \centering
   \includegraphics[scale=0.475]{layersModelShortTrim.pdf}
\caption{Label (epoch) cancelation due to counter overflow events. After the recovery period, this breach in the algorithm composition (Figure~\ref{fig:hiddenTerminal}) occurs only once a counter becomes exhausted, i.e., after $\MI$ increments. For brevity, we exemplify only a sketch of a the case of an atomic step with a send operation. 
}
   \label{fig:breach}
\end{wrapfigure}

}
} 

\Paragraph{Preserving the stabilization guarantees of the labeling algorithm}
%
We note that during a subexecution in which the client algorithm does not call the function $cancel()$, the approach for algorithm composition of this section is along the lines of the one in~\cite[Section 2.7]{DBLP:books/mit/Dolev2000} (cf. Figure~\ref{fig:composition}).
However, the function $cancel()$ changes the state of the labeling algorithm.
Thus, it is necessary for the stabilization proof of the compound algorithm, i.e., the composition of the labeling and client algorithms, to show that the stabilization guarantees of the labeling algorithm are preserved.
The (client) algorithm that we propose in Section~\ref{s:algorithms} (Algorithm~\ref{alg:SSVC}) for the vector clock problem is composed with the labeling algorithm of Dolev et al.~\cite[Algorithm 2]{DBLP:journals/corr/DolevGMS15} through the interface that we presented in this section.
In Section~\ref{s:proofs} we show that the algorithm that we propose for the vector clock problem preserves the labeling algorithm's stabilization guarantees.

\remove{

\noindent \textbf{practically-self-stabilizing labels.~}
%
%
\cite{DBLP:journals/jcss/AlonADDPT15,DBLP:journals/corr/DolevGMS15} offer labeling scheme that create bounded-size labels (Section~\ref{fig:kftftask}). 
Any two label, $\ell_1$ and $\ell_1$, can be either comparable or incomparable.

%
%
%
They both offer a practically-self-stabilizing (bounded-size) counter that can be incremented for an unbounded number of times sequentially (as in Alon et al.~\cite{DBLP:journals/jcss/AlonADDPT15}) or concurrently (as in Dolev et al.~\cite{DBLP:journals/corr/DolevGMS15}). 
Their proofs demonstrate convergence to an $\lb$-greatest label in the system (Section~\ref{fig:kftftask}). 
Over the entire system run, these $\lb$-greatest labels serve as epochs that mark the beginning and end of periods in which a practically infinite integer, say a $64$-bit counter, wraps around from $\MI$ to the zero value, where $\MI=2^{64}-1$. 
They show guarantees for stabilization during these periods by assuming that counting $2^{64}$ events would take a practically infinite time. 
Dolev et al.~\cite{DBLP:journals/corr/DolevGMS15} support wait-free multiple-writer labeling (Section~\ref{fig:kftftask}). 
They do so, by extending the labels of Alon et al. with a field $\creator$ (the identifier of the label creator) that helps with breaking symmetry and selecting the most recent label. 
The algorithm by Dolev et al. provides the system's $\lb$-greatest label in the presence of concurrent writes and crashed nodes by letting each node to cancel, i.e., mark as unuseful, any label that is not the greatest and to maintain bounded FIFO histories, $storedLabels[]$ (Section~\ref{fig:kftftask}), which accumulate eventually all labels (canceled or not) that were created by any node (including the crashed ones). 
Thus, eventually every non-crashed node either: (1) becomes aware, say, via gossip, of a larger label and thus \textit{adopts} the larger one and cancels the one that it was using or (2) \textit{creates} a new label via $nextLabel()$ and uses that label after it discovered that the label that it had been using is canceled, say, due to the fact that its counter is exhausted, i.e., it reached the $\MI$ value. Dolev et al. show that by using the (bounded-size) histories for proposing a new (non-canceled) label, they can bound by ${\cal O}(N^3)$ the number of label adoptions and creations~\cite[lemmas 4.3 and 4.4]{DBLP:journals/corr/DolevGMS15}. 
Then, they demonstrate that the recovery period of the system is significantly shorter than the system lifetime, which they consider to be practically infinite. 
After that period, the proof guarantees that for a practically infinite period no label update (adoption or creation) occurs~\cite[Theorem 4.2]{DBLP:journals/corr/DolevGMS15}.

} 

 
\remove{

\section{The Solution in a Nutshell}
\label{s:back} 

\remove{ 
 Task Requirements and the
\noindent \textbf{Problem definition (task requirements).~}
The operations counter increment and message aggregation can facilitate vector clock implementations.
Requirement~\ref{req:correctCounting} specifies that the most recent values (of non-crashed nodes) become known to all non-crashed nodes eventually. Note that, for the case of values that arrive from eventually crashed nodes, we consider only the relevant values for which there is at least one non-crashed receiving node. Requirement~\ref{req:2act} specifies that all (non-crashed) nodes count their own events correctly, even after wrap-around events.
Note that requirements~\ref{req:correctCounting} and~\ref{req:2act} imply that all non-crashed nodes count all relevant events correctly.
\IS{cf. with Shapiro's definition. Include: increment, merge, value query} 

} 

\remove{

\begin{figure}[t!]
\begin{framed}\sizeOfInvariantsFig
\begin{minipage}{1.0\columnwidth}
We consider the interleaving model~\cite{DBLP:books/mit/Dolev2000} in which nodes take (atomic) steps that include at most one communication operation (per step) after a (finite) local computation (Section~\ref{s:systemSettings}). 
The system state is a vector with all node states (including the messages that are in transit to them) and a (system) run is an unbounded interleaving sequence of steps and system states. 
\IS{the following definition is wrong and conflicting with the system settings} We say that a node is \textit{active} when it takes steps infinitely often during the (system) run $R$. 
\IS{alt: We say that a node is \textit{active} if it is not crashed.
We assume that nodes take steps arbitrarily often, but at least one node is taking steps often.
?Thus, a node cannot distinguish a crashed node from one that rarely takes steps.?%
}
%
%
\IS{We focus on the conflict-free increase-only counter, where} Each node $p_i$ represents\IS{/maintains/updates} an $N$-size vector $V_i$, where $V_i[j]$ records the number of increments that $p_j$ has done to its counter and that $p_i$ has received.
\IS{Note that our problem definition and solutions can be applied to any CRDT that is based on vector clocks (e.g., increase-decrease counters).}
For any two (active or not) nodes, $p_i$ and $p_j$, a system run $R$ and the sequence $x^1_{j,i}, \ldots, x^k_{j,i}, \ldots $ of values that ever appear in $V_j[i]$ in $R$ (in the system states $c_1, \ldots, c_k, \ldots \in R$), we say that $ x^k_{i,i}$ is a \textit{relevant (counter) value} of $p_i$ in system state $c_k$ when there exists an active processor $p_j$, such that $V_j[i] \geq x^k_{i,i}$ eventually holds (in system state $c_{\ell} \in R$, $\ell> k$). 
\IS{or: Let $R$ be a system run with states $c_1, \ldots, c_k, \ldots$ and $V_j^k[i]$ be the value of $V_j[i]$ in state $c_k$.
For any two (active or not) nodes, $p_i$ and $p_j$ in $R$, we say that $V_i^k[i]$ is a \textit{relevant (counter) value} of $p_i$ in system state $c_\ell$@@was $c_k$@@ when there exists an active processor $p_j$, such that $V^{\ell}_j[i] \geq V_i^k[i]$ holds in a system state $c_{\ell} \in R$, $\ell> k$. }
\IS{We say that a counter increment step is relevant when its counter value is relevant.} 
%
%
\IS{The problem (or abstract task) definition consists of requirements~\ref{req:correctCounting} and~\ref{req:2act}.}
Requirement~\ref{req:correctCounting} implies that any value that an active node receives becomes relevant to all active nodes.
\sout{We say that a counter increment step is relevant when its counter value is relevant.} 
Requirement~\ref{req:2act} specifies the correct number of relevant counter increments (also after wrapping around to the zero value).
\IS{When requirements~\ref{req:correctCounting} and~\ref{req:2act} hold during an run $R$, strong eventual consistency is guaranteed despite the system's asynchrony and possible wrap-around events that may occur.}

\begin{requirement}[Strong eventual consistency of relevant values among active nodes]
\label{req:correctCounting} 
For any three nodes $p_i$, $p_j$ and $p_m$, such that $p_j$ and $p_m$ are active, and \IS{$V_i^k[i]$} \sout{$x^k_{i,i}$} is $p_i$'s relevant value that $p_j$ receives, it holds that $V^\ell_m[i] \geq$ \IS{$V_i^k[i]$} \sout{$x^k_{i,i}$} \sout{eventually (}in a system state $c_{\ell} \in R:\ell>k$\sout{)}.
\end{requirement}

\begin{requirement}[Counting all relevant events]
\label{req:2act} 
Let $p_i$ be an active node and \IS{$V^k_i[i]$} \sout{$V^c_i[i]$} be $p_i$'s \IS{delete: most recent} \textit{relevant} value in $c_k \in R$. 
The number of $p_i$'s \textit{relevant} counter increment steps between \IS{$c_k$} \sout{$c$} and \IS{$c_\ell \in R$} \sout{$c' \in R$} is \IS{$V^\ell_i[i] - V^k_i[i]$} \sout{$V^{c'}_i[i] - V^c_i[i]$}, where \IS{$c_k$} \sout{$c$} precedes \IS{$c_\ell$} \sout{$c'$} in $R$.
\end{requirement}
\end{minipage}
\end{framed}
\caption{The task requirements.}
\label{fig:req}
\end{figure}

} 

\remove{

\begin{figure}[t]

\begin{framed}\sizeOfInvariantsFig
\begin{minipage}{1.0\columnwidth}

\noindent {\bf Label schemes for single-writers.}~~
Alon et al.~\cite{DBLP:journals/jcss/AlonADDPT15} presented a function, $Next_b()$, that takes a set of 
(up to) $k\in \N$ labels, $L=\{ \ell_{i} \}_{i\leq k}$, and returns a label, $\ell_{j}$, such that $\forall \ell_{i} \in L: \ell_{i} \lea \ell_{j}$, where $sting \in D, Antistings \subset D, |Antistings|=k, D = \{1, \ldots, k^{2}+1\}$ and $\ell_{i} \lea \ell_{j} \equiv (\ell_i.sting \in \ell_j.Antistings) \land (\ell_j.sting \not \in \ell_i.Antistings)$. 
To that end, the algorithm gossips repeatedly its (currently believed) greatest label and stores the received ones in a queue of at most $2\msg$ labels, where [[]] $\msg = \Theta(N^2 \cdot \capacity)$ is the maximum number of labels that the system can ``hide'', i.e., one (currently believed) greatest label that each of the $N$ processors has and $\capacity$ (capacity) in each communication link. 
Upon arrival of label $\ell_{i}$ to $\ell_{j}$ such that $\ell_{i} \not\lea \ell_{j}$, processor $p_j$ queues the arriving label $\ell_{i}$, uses $Next_b()$ to create a new (currently believed) greatest label $\ell'_{j}$ and queue it as well. 
Alon et al.~\cite{DBLP:journals/jcss/AlonADDPT15} show that, when only $p_j$ may create new labels, the system practically-stabilizes to a state in which $p_j$ believes in a label that is indeed the greatest in the system. 
Note that the stabilization period includes at most $\msg$ arrivals to $p_j$ of labels $\ell_{i}$ that surprise $p_j$, i.e., $\ell_{i} \not\lea \ell_{j}$.

\noindent {\bf Label schemes for multiple-writers.}~~
Dolev et al.~\cite{DBLP:journals/corr/DolevGMS15} consider {\em labels} as the records $\langle \creator, sting, Antistings\rangle$, where $\creator$ is the identity of the label creating processor, and $D$, $sting$ and $Antistings$ are as in Alon et al.~\cite{DBLP:journals/jcss/AlonADDPT15}. 
They use $=_{lb}$ to denote that two labels, $\ell_{i}$ and $\ell_{j}$, are identical and define the relation $\ell_{i} \lb \ell_{j} \equiv (\ell_i.\creator < \ell_j.\creator) \lor (\ell_{i}.\creator = \ell_j.\creator \land ((\ell_i.sting \in \ell_j.Antistings) \land (\ell_j.sting \not \in \ell_i.Antistings)))$. 
Suppose that $(\ell_i.\creator = \ell_j.\creator) \land ((\ell_i.sting \not \in \ell_j.Antistings) \land (\ell_j.sting \not \in \ell_i.Antisting))$. Dolev et al. say that these labels are {\em incomparable}. 
This is possible since $\lb$ is not a total order, yet Dolev et al.~\cite{DBLP:journals/corr/DolevGMS15} present a practically-self-stabilizing algorithm that finds the greatest label in the system eventually. 
To that end, $p_j$ gossips repeatedly its (currently believed) greatest label $max_j[j]$ and stores any arriving label $\ell_i$ in an array of queues, $storedLabels_j[i]$, where $i=\ell_i.\creator$. 
The algorithm performs a number of bookkeeping tasks. 
\IS{check canceling} For example, it cancels label $\ell'$ using label $\ell$ whenever (1) they are incomparable or (2) they share the same creator and $\ell' \lb \ell$. 
This means that the queue $storedLabels_i[j]$ stores at most one {\em legitimate label}, i.e., not canceled. 
Moreover, the algorithm also makes sure that, for any $i\in [1,N]$, the (currently believed) $\lb$-greater label  $max_j[j]$ is indeed greater than any legitimate label in $storedLabels_j[i]$. 
In case it does not, $p_j$ selects the greatest legitimate label in $storedLabels_j$, and if there is no such legitimate label, $p_j$ creates a new label via $nextLabel()$.
Dolev et al.~\cite{DBLP:journals/corr/DolevGMS15} bound the size of $storedLabels_i[j]$, for $j\neq i$ by $N + \msg$ and the size of $storedLabels_i[i]$ by $2N^2 + \msg N - 2N$. 
\end{minipage}
%
\end{framed}

\caption{Label schemes for single (Alon et al.~\cite{DBLP:journals/jcss/AlonADDPT15}) and  multiple-writers (Dolev et al.~\cite{DBLP:journals/corr/DolevGMS15}).}

\label{fig:kftftask}
\end{figure}

} 

\remove{

\noindent \textbf{``A labeling service''.~} 
\IS{describe why/what we need from a labeling scheme}

\noindent \textbf{``Toy solution: unbounded vectors, no (transient) faults''.~} 
\IS{$\la epoch, vector \ra$}

\noindent \textbf{practically-self-stabilizing labels.~}
%
%
\cite{DBLP:journals/jcss/AlonADDPT15,DBLP:journals/corr/DolevGMS15} offer labeling algorithms that create bounded-size labels (Section~\ref{fig:kftftask}). 
%
%
%
They both offer a practically-self-stabilizing (bounded-size) counter that can be incremented for an unbounded number of times sequentially (as in Alon et al.~\cite{DBLP:journals/jcss/AlonADDPT15}) or concurrently (as in Dolev et al.~\cite{DBLP:journals/corr/DolevGMS15}). Their proofs demonstrate convergence to an $\lb$-greatest label in the system (Section~\ref{fig:kftftask}). Over the entire system run, these $\lb$-greatest labels serve as epochs that mark the beginning and end of periods in which a practically infinite integer, say a $64$-bit counter, wraps around from $\MI$ to the zero value, where $\MI=2^{64}-1$. They show guarantees for stabilization during these periods by assuming that counting $2^{64}$ events would take a practically infinite time.
%
%
Dolev et al.~\cite{DBLP:journals/corr/DolevGMS15} support wait-free multiple-writer labeling (Section~\ref{fig:kftftask}). 
They do so, by extending the labels of Alon et al. with a field $\creator$ (the identifier of the label creator) that helps with breaking symmetry and selecting the most recent label.
%
%
The algorithm by Dolev et al. provides the system's $\lb$-greatest label
%
%
in the presence of concurrent writes and crashed nodes by letting each node to cancel, i.e., mark as unuseful, any label that is not the greatest and to maintain bounded FIFO histories, $storedLabels[]$ (Section~\ref{fig:kftftask}), which accumulate eventually all labels (canceled or not) that were created by any node (including the crashed ones). 
Thus, eventually every non-crashed node either: (1) becomes aware, say, via gossip, of a larger label and thus \textit{adopts} the larger one and cancels the one that it was using or (2) \textit{creates} a new label via $nextLabel()$ and uses that label after it discovered that the label that it had been using is canceled, say, due to the fact that its counter is exhausted, i.e., it reached the $\MI$ value. 
Dolev et al. show that by using the (bounded-size) histories for proposing a new (non-canceled) label, they can bound by ${\cal O}(N^3)$ the number of label adoptions and creations~\cite[lemmas 4.3 and 4.4]{DBLP:journals/corr/DolevGMS15}. 
Then, they demonstrate that the recovery period of the system is significantly shorter than the system lifetime, which they consider to be practically infinite. 
After that period, the proof guarantees that for a practically infinite period no label update (adoption or creation) occurs~\cite[Theorem 4.2]{DBLP:journals/corr/DolevGMS15}.

} 

\noindent \textbf{Basic result: practically-self-stabilizing solutions using $N$ labels per vector clock.~}
%
As a first attempt to solve the studied problem, we replace every element $V_i[j]$ with the pair $(v_{i,j},\ell_{i,j})$, i.e., an $\MI$-state counter and a (single writer) label. 
Moreover, $p_i$ acts as a (single) writer whenever its counter reaches the value of $\MI$, i.e., it uses the $Next_b()$ of Alon et al.~\cite{DBLP:journals/jcss/AlonADDPT15} for replacing $\ell_{i,i}$'s label and nullifying $v_{i,i}$'s value. 
The proof here considers every vector component as a separated labeling sub-system for which a single node is allowed to update the component's label. 
Using the proof of Alon et al.~\cite[Lemma]{DBLP:journals/jcss/AlonADDPT15}, one can show that within $(\msg +1)$ (Section~\ref{fig:kftftask}) label updates (of that vector component), a practically infinite period starts in which no label updates occur, because it is larger than all sub-system labels. Note that the recovery period ends immediately after the recovery of all sub-systems.

\noindent \textbf{A practically-self-stabilizing solution using $2$ labels per vector clock.~}
%
%
%
We present a (bounded-size) vector clock CRDT for asynchronous practically-self-stabilizing systems that use just two labels per vector and yet allows an unbounded number of counter increments in the presence of concurrent wrap-around events.
Each such CRDT has a single label that serves as an epoch, which marks the occurrences of two wrap-around events, such that the number of steps that the system takes between these events is as large as its lifetime (aka practically infinitive).
Each node strives to use the vector clock CRDT that its \textit{current} label is the largest it has ever heard of and does not wrap-around to the zero value before the epoch ends. The proposed algorithm deals with concurrent wrap-around events by letting this CRDT to also include it \textit{previous} label, i.e., the one that it had as its current label before its most recent wrap-around event.
Each of these labels is encapsulated in what we call a vector clock item, such that the current item includes the current label and the vector clock value. Moreover, this pair of items, the current and the previous, facilitate the correct aggregation of all relevant counter increment events (in the presence of concurrent wrap-around events).

\remove{
We present bounded-size vector clocks that allow an unbounded number of counter increments, in the presence of concurrent wrap-around events.
@@needs rewriting@@ In this setting, there can number of wrap-around events per processor when the system has made a number of @@$\pinf$ steps@@.
Thus, between every two wrap around events the system can take $\pinf$ increments, but for a system run that starts in an arbitrary system state these wrap around events might occur after a temporary period.
Therefore, we maintain two copies of the CRDT, a \textit{current} one, that is the one that is incremented, and a \textit{previous} one, that is kept for reference when a wrap around event occurs, so that a (conflict-free) merge can apply.
Moreover, we associate each vector clock with a label and offset to mark the wrap-around events, i.e. the offset is the vector clock's value upon a wrap-around.
To that end we use the practically-self-stabilizing labeling scheme of Dolev et al.~\cite{DBLP:journals/corr/DolevGMS15}.
} 
%
%
%

%
We say that $I=\langle\ell, m,o\rangle$ is a \textit{(vector clock) item}, where $\ell$ is the label, $m$ (main) is an $N$-size vector clock that keeps the processor increments, and $o$ (offset) is an $N$-size vector that the algorithm uses as a reference to $m$'s value upon $\ell$'s creation. We use $(I.m - I.o)\mmi$ for retrieving $I$'s vector clock value.  
Let $X=$ $\langle curr, prev \rangle$ be the \textit{(vector clock) pair}, where both $curr$ and $prev$ are vector clock items. We use $VC(X) := (X.curr.m - X.curr.o)\mmi$ for retrieving $X$'s vector clock and, for completeness sake, define additional (vector clock) pair operations (Figure~\ref{fig:operations}).  
When processor $p_i$ increments its (vector clock) pair $X$, it increases by one modulo $\MI$ the $i$-th entry of the main array, i.e., $X.curr.m[i]$, while testing for exhaustion, i.e., $\Sigma_{k=1}^N (X.curr.m[k] - X.curr.o[k]) \geq \MI - 1$. On the event that $X$ is indeed exhausted, $p_i$ revives $X$
%
%
by canceling $X$'s labels, 
%
%
 letting its $prev$ item store its $curr$, 
%
%
replacing $curr$'s label with a new label and allowing $X$ to count events after adjusting the offset $curr.o \gets curr.m$.  
%
%
We inherent from Dolev et al.~\cite{DBLP:journals/corr/DolevGMS15} the ability to cancel a label via an interface (Figure~\ref{fig:interfaceDolev}).


\begin{figure}[t!]
\begin{framed}\sizeOfInvariantsFig
We define the (vector clock) addition $+_{vec}$, subtraction $-_{vc}$, and equality $=_{vc}$ operations in equations~\ref{eq:plusVec}, \ref{eq:subsVC}, and~\ref{eq:eqVC}, respectively. Let $V_i$, $i:=1,2$ be two (vector clock) pairs, $u := (u_1, \ldots, u_N)^T$ be a vector of integers and $VC(V_i) := (V_i.curr.m - V_i.curr.o)\mmi$ for retrieving $V_i$'s vector clock.   
\smallestSpaceBetweenEquations
\begin{equation}
V_1 +_{vec} u = \left\{ \begin{array}{l}
\bot, \text{ if } \max\{V_1.curr.m[i] + u[i] \,|\, i\in [1,n]\} > \MI\\
\la \la V_1.curr.\ell, V_1.curr.m + u ,V_1.curr.o\ra ,V_1.prev\ra, \text{ otherwise}
\end{array}\right.
\label{eq:plusVec}
\end{equation}
\smallSpaceBetweenEquations
\smallSpaceBetweenEquations
\begin{equation}
V_1 -_{vc} V_2 = \left\{ \begin{array}{l}
\bot, \text{ if }\neg( (V_1.curr.\ell =_{lb} V_2.curr.\ell) \land (V_1.curr.o = V_2.curr.o))\\
(VC(V_1) - VC(V_2))\mmi, \text{ otherwise}
\end{array}\right.
\label{eq:subsVC}
\end{equation}
\smallSpaceBetweenEquations
\smallestSpaceBetweenEquations
\begin{equation}
V_1 =_{vc} V_2 \iff 
(V_1.curr.\ell =_{lb} V_2.curr.\ell) \land \forall_{x\in \{m,o\}}(V_1.curr.x = V_2.curr.x)
\label{eq:eqVC}
\end{equation}
\smallestSpaceBetweenEquations
Equivalently, we can define $V_1 =_{vc} V_2 \iff VC(V_1) -_{vc} VC(V_2) = (0,\ldots, 0)^T$.
%
%
\end{framed}
\caption{The operations addition $+_{vec}$, subtraction $-_{vc}$ and equality $=_{vc}$ for (vector clock) pairs}
\label{fig:operations}
\end{figure}

\begin{figure}[t]
    \centering
  \begin{subfigure}[b]{0.215\textwidth}
        \includegraphics[width=\textwidth]{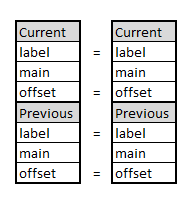}
    \caption{Conditions~\ref{eq:samePrev} to~\ref{eq:sameCurr} hold (no wrap-around).} 
    \label{fig:noWrapAround}
    \end{subfigure}
    \qquad 
    \begin{subfigure}[b]{0.215\textwidth}
    \includegraphics[width=\textwidth]{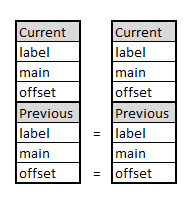}
    \caption{Condition~\ref{eq:afterPrInf} holds  (concurrent wrap-around).}
    \label{fig:concurrentWrapAround}
    \end{subfigure}  
    \qquad 
    \begin{subfigure}[b]{0.215\textwidth}
        \includegraphics[width=\textwidth]{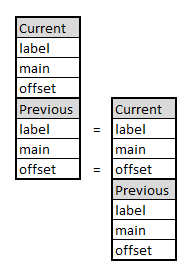}
    \caption{Condition~\ref{eq:afterPrInf} holds (one of the pairs has wrapped around).}
    \label{fig:oneWrapAround}
    \end{subfigure}
    \caption{Pairs and wrap-arounds}
\label{fig:parisAndConditions}
\end{figure}


\noindent \textit{Conditions~\ref{eq:labelsOrdered} to~\ref{eq:afterPrInf} (Figure~\ref{fig:conditions}).~}
The correct operation of the (vector clock) pairs depends on a number of conditions (Figure~\ref{fig:conditions}).
Some of these conditions consider the properties for which: 
%
%
(condition~\ref{eq:labelsOrdered}) any node that considers a (vector clock) pair $\langle curr, prev \rangle$ maintains that $prev$'s (canceled) label is smaller than (or equals to) the (non-canceled) one of $curr$, and that (condition~\ref{eq:offsetsMatch}) $prev$'s main field equals to $curr$'s offset, as well as that (condition~\ref{eq:notExhausted}) this (vector clock) pair is {\em not} exhausted, i.e., condition~\ref{eq:notExhausted} does not hold. When the above is true, we say that the pair is ordered, matched and not exhausted.
There are other kind of conditions that consider two (vector clock) pairs, say, the one that node $p_i$ stores locally and the one that arrived in a message from $p_j$. 
Here, conditions~\ref{eq:samePrev} and~\ref{eq:sameCurr} consider the pair properties for which the $curr$, and respectively, $prev$ fields of $p_i$'s pair and $p_j$'s pair have matching labels and offsets. 
%
%
%
We note that, when considering pair that are ordered, matched and non-exhausted and for which conditions~\ref{eq:labelsOrdered} to~\ref{eq:sameCurr} hold, the aggregation of vector clock values that $p_j$'s messages bring to $p_i$ follows along the same lines of (non-self-stabilizing and unbounded) vector clock algorithms, because these (vector clocks) pairs differ only by their $curr.m$ values and no wrap around (to the zero value) event occurs.
%

%

\noindent \textit{Bounding the recovery period}.
The correctness proof shows that nodes that never stop taking steps use pairs that are ordered, matched and not exhausted. Moreover, every pair of nodes, $p_i$ and $p_j$, that never stop exchanging messages have conditions~\ref{eq:samePrev} to~\ref{eq:sameCurr} hold eventually for their pairs (Figure~\ref{fig:noWrapAround}). 
For the sake of liveness, we assume that at least one of them, say, $p_i$ takes steps of the proposed algorithm infinitely often. 
The proof shows that the number of times that $p_i$ might update the labels of its (vector clock) pair (throughout a period that is as long as the system lifetime) is in ${\cal O}(N^3)$. 
We then use the pigeonhole principle for demonstrating that the system recovery period is significantly shorter than the system lifetime. 
%
%
After that recovery, the proof guarantees a period (that is as long as the system lifetime) in which $p_i$ does not update the labels of its (vector clock) pair. 
This facilitates the demonstration of the above conditions, because (1) we assume that the communication channels are fair (i.e., if $p_i$ sends a message infinitely often, $p_j$ receives that message infinitely often), and thus (2) no two nodes may take steps infinitely often without exchanging messages infinitely often (during a period that is as long as the system lifetime), which implies that (3) the nodes that never stop exchanging messages during the period that they do not update the labels of their (vector clock) pairs must have pairs that differ only by their $curr.m$ (Figure~\ref{fig:noWrapAround}).

%

\noindent \textit{Concurrent wrap around to the zero value after the recovery period.~}
The rest of the proof considers a case of wrap around events that are due to exhausted pairs.
%
%
Let us consider the set of pairs that are used by nodes that never stop exchanging messages among themselves after the recovery period. Suppose that at least one  of them concurrently wrap around to the zero value. Immediacy after the recovery period and before any wrap around event, these pairs differ only by their $curr.m$ values (Figure~\ref{fig:noWrapAround}).
%
%
Note that after the recovery period, nodes that process incoming messages either (a) hold on to their pairs, (b) have their pair wrap around to the zero value, or (c) adopt a pair with a larger label. However, no pair can wrap around again before the system has taken $\MI-1$ steps. 
Therefore, after the recovery, period all pairs are ordered and matched. Moreover, any two pairs from the above set either (i) satisfy conditions~\ref{eq:samePrev} (Figure~\ref{fig:concurrentWrapAround} and Condition~\ref{eq:afterPrInf}) or (ii) follow the case in which one of the pairs have, while the other have not, wrapped around to the zero and thus the values in the label and offset fields of the former $prev$ are identical to the latter $curr$ (Figure~\ref{fig:oneWrapAround} and Condition~\ref{eq:afterPrInf}). Note that in both case (i) and (ii), there is a common value in the pairs' offset fields that the proposed algorithm can use for aggregating the events that these pairs count while selecting the greater label of their $curr$ items.
%
%
%
%
The rest of the proof demonstrates that the algorithm is able aggregate the events that these pairs count even in the presence of concurrent updates to their pairs. 

\remove{

\subsubsection{The Labeling Algorithm}
The labeling algorithm (Algorithm~\ref{alg:WFR}) specifies how the processors exchange their label information in the asynchronous system  and how they maintain proper label bookkeeping so as to ``discover'' their greatest label and cancel all obsolete ones. As we will be using pairs of labels with the {\em same} label creator, for the ease of presentation, we will be referring to these two variables as the {\em (label) pair}.
The first label in a pair is called $ml$. The second label  
is called $cl$ and it is either $\bot$, or equal to a label that cancels $ml$ (i.e., $cl$ indicates whether $ml$ is an obsolete label or not). 

\paragraph{\bf\em The processor state}

Each processor stores an array of label pairs, $max_i[n]$, where $max_i[i]$ refers to $p_i$'s maximal label pair and $max_i[j]$ considers the most recent value 
that $p_i$ knows about $p_j$'s pair. 
Processor $p_i$ also stores the pairs of the most-recently-used labels in the array of queues $storedLabels_i[n]$. 
The $j$-th entry refers to the queue with pairs from $p_j$'s domain, i.e., that were created by $p_j$. 
The algorithm makes sure that $storedLabels_i[j]$ includes only label pairs with unique $ml$ from $p_j$'s domain and that at most one of them is \emph{legitimate}, i.e., not canceled. 
Queues $storedLabels_i[j]$ for $i\neq j$, have size $n+m$ whilst $storedLabels_i[i]$ has size $2(mn+2n^2-2n)$ 
where $m$ is the system's total link capacity in labels.
We later show (c.f. 
) that these queue sizes are sufficient to prevent overflows of useful labels.

{\bf Variables:}\\
$max[n]$ of $\langle ml$, $cl \rangle$: $max[i]$ is $p_i$'s largest label pair, $max[j]$ refers to $p_j$'s label pair (canceled when $max[j].cl \neq \bot$).\\

$storedLabels[n]$: an array of queues of the most-recently-used label pairs, where $storedLabels[j]$ holds the labels created by $p_j \in P$. For $p_j \in (P \setminus \{ p_i \})$, $storedLabels[j]$'s queue size is limited to $(n+m)$ w.r.t. label pairs, where $n=|P|$ is the number of processors in the system and $m$ is the maximum number of label pairs that can be in transit in the system. The $storedLabels[i]$'s queue size is limited to $(n(n^2+m))$ pairs. The operator $add(\ell)$ adds $lp$ to the front of the queue, and $emptyAllQueues()$ clears all $storedLabels[]$ queues. We use $lp.remove()$ for removing the record $lp \in storedLabels[]$. Note that an element is brought to the queue front every time this element is accessed in the queue.\\

\paragraph{\bf\em Information exchange between processors}
Processor $p_i$ takes a step whenever it receives two pairs $\langle sentMax$, $lastSent \rangle$ from some other processor.
We note that in a legal execution $p_j$'s pair includes both $sentMax$, which refers to $p_j$'s
maximal label pair $max_j[j]$, and $lastSent$, which refers to a recent label pair that $p_j$ received from $p_i$ about $p_i$'s maximal label, $max_j[i]$ (line~\ref{ln:transmit}).

Whenever a processor $p_j$ sends a pair $\langle sentMax$, $lastSent \rangle$ to $p_i$, this processor stores the value of the arriving  $sentMax$ field in $max_i[j]$ (line~\ref{ln:exposeStore}). 
However, $p_j$ may have local knowledge of a label from $p_i$'s domain that cancels $p_i$'s maximal label, $ml$, of the last received $sentMax$ from $p_i$ to $p_j$ that was stored in $max_j[i]$. 
Then $p_j$ needs to communicate this canceling label in its next communication to $p_i$.
To this end, $p_j$ assigns this canceling label to $max_j[i].cl$ which stops being $\bot$. 
Then $p_j$ transmits $max_j[i]$ to $p_i$ as a $lastSent$ label pair, and this satisfies $lastSent.cl \not \leqlb lastSent.ml$, i.e., $lastSent.cl$ is either greater or incomparable to $lastSent.ml$. 
This makes $lastSent$ illegitimate and in case this still refers to $p_i$'s current maximal label, $p_i$ must cancel $max_i[i]$ by assigning it with $lastSent$ (and thus $max_i[i].cl = lastSent.cl$) as done in line~\ref{ln:lastSentCancel}. 
Processor $p_i$ then processes the two pairs received   (lines~\ref{ln:clean} to~\ref{ln:useOwnLabel}).

\paragraph{\bf\em Label processing}
Processor $p_i$ takes a step whenever it receives a new pair message  $\langle sentMax$, $lastSent \rangle$ from processor $p_j$ (line~\ref{ln:uponReceive}). Each such step starts by removing \emph{stale} information, i.e., misplaced or doubly represented labels (line~\ref{ln:staleInfo}). 
In the case that stale information exists, the algorithm empties the entire label storage. 
Processor $p_i$ then tests whether the arriving two pairs are already included in the label storage ($storedLabels[]$), otherwise it includes them (line~\ref{ln:add}). 
The algorithm continues to see whether, based on the new pairs added to the label storage, it is possible to cancel a non-canceled label pair (which may well be the newly added pair).
In this case, the algorithm updates the canceling field of any label pair $lp$ (line~\ref{ln:cancelLabels}) with the canceling label of a label pair $lp'$ such that $lp'.ml \not \leqlb lp.ml$ (line~\ref{ln:cancelLabels}). It is implied that since the two pairs belong to the same storage queue, they have the same processor as creator. 
The algorithm then checks whether any pair of the $max_i[]$ array can cause canceling to a record in the label storage (line~\ref{ln:receivedCanceled}), and also line~\ref{ln:remove} removes any canceled records that share the same creator identifier.
The test also considers the case in which the above update may cancel any arriving label in $max[j]$ and updates this entry accordingly based on stored pairs (line~\ref{ln:cancelMax}).

After this series of tests and updates, the algorithm is ready to decide upon a maximal label based on its local information. 
This is the $\leqlb$-greatest legit  label pair among all the ones in $max_i[]$ (line~\ref{ln:adopt}). 
When no such legit label exists, $p_i$ requests a legit label in its own label storage, $storedLabels_i[i]$, and if one does not exist, will create a new one if needed (line~\ref{ln:useOwnLabel}). 
This is done by passing the labels in the $storedLabels_i[i]$ queue to the $nextLabel()$ function. 
Note that the returned label is coupled with a $\bot$ and the resulting label pair is added to both $max_i[i]$ and $storedLabel_i[i]$. 

} 

\remove{
\section{Background}

@@ Why do we need this? @@

\subsection{Strong eventual consistency}

\begin{definition}[partial order]
A binary relation $\leq$ over a set $S$ is a partial order if the following three conditions are satisfied for all $x,y$ and $z$ in $S$:
\begin{enumerate}
\item[reflexivity:] $x\leq x$

\item[antisymmetry:] $x\leq y \land y\leq x \implies x=y$

\item[transitivity:] $x\leq y \land y\leq z \implies x\leq z$
\end{enumerate}
\label{def:po}
\end{definition}

Before giving a formal definition of confict-free replicated data types (CRDTs) we need to define semillatices. See Shapiro et al. in \cite{DBLP:conf/sss/ShapiroPBZ11} for an introduction to CRDTs .

\begin{definition}[(join-)Semilattice] 
A set $S$ partially ordered by the binary relation $\preceq$ is a (join-)semilattice, iff for every two elements $x,y\in S$, there exists a least upper bound (lub)%
\footnote{
Given a set $S$, partially ordered by a binary relation $\preceq$, we say that $b$ is an upper bound of $S'\preceq S$, if for every $x\in S'$, it holds that $x \preceq b$. We say that $b^*$ is the least upper bound (lub) of $S'$, if $b^*$ is an upper bound of $S'$ and for every $b$ that is an upper bound of $S'$ it holds that  $b^* \preceq b$.
} 
of $\{x,y\}$.
\label{def:semilattice}
\end{definition}

\begin{definition}[strong eventual consistency, CRDTs \cite{DBLP:conf/sss/ShapiroPBZ11}]
\label{def:SEC}

\end{definition}

} 

\remove{

\subsection{Labeling schemes}
\label{s:labelingSchemes}

Let $k\in \N$ and $K = k^2 + 1$.
Alon et al. \cite{DBLP:journals/jcss/AlonADDPT15} give the following definition of a label in their bounded labeling scheme.
\begin{definition}[Alon et al. labels]
We define a tuple $(s,A)$ to be a label iff, $s\in [K]$, $A\subset [K]$ and $|A| = k$. We denote the set of all labels with $\mathcal{L}$.
\label{def:AlonLabels}
\end{definition}

\begin{definition}[Alon et al. partial order on labels]
We define $\lea$ to be the binary relation over elements in $\mathcal{L}$, such that $\ell \lea \ell'$ $\iff$ $s_\ell \in A_{\ell'} \land s_{\ell'} \notin A_{\ell}$.
\label{def:AlonOrder}
\end{definition}

\begin{definition}[Dolev et al. labels~\cite{DBLP:journals/corr/DolevGMS15}]
$(i, sting, Antistings)$, where $i$ is the label's creator id.
\label{def:DolevLabels}
\end{definition}

We denote by $\ell.\creator$ the creator of a label $\ell$.

Let $\msg = \Theta(N^2 \cdot \capacity)$ be the maximum number of labels in transit in the system, where $\capacity$ is a bound on a link's capacity. 
Each processor $p_i$ keeps a bounded history of received labels for every processor using an array of $N$ queues, $storedLabels_i$.
The size of $storedLabels_i[j]$, for $j\neq i$ is $N + \msg$ and the size of $storedLabels_i[i]$ is $2N^2 + \msg N - 2N$.
It is shown in~\cite{DBLP:journals/corr/DolevGMS15} that these queue sizes are sufficient for the system to [[fulfill]] the requirements of their proposed labeling scheme and counter increment algorithm, so that their asynchronous message passing system is practically stabilizing.

\begin{definition}[Dolev et al. partial order on labels~\cite{DBLP:journals/corr/DolevGMS15}]
Let $\ell = (i,s_\ell, A_\ell)$ and $\ell' = (j,s_{\ell'}, A_{\ell'})$ be two labels. We define $\ell \lb \ell'$ iff $i<j \lor (i=j \land (s_\ell, A_\ell) \prec_b (s_{\ell'}, A_{\ell'}))$.
\label{def:DolevOrder}
\end{definition}

\subsection{Stabilization definitions}


\begin{definition}[$\pinf$]
Let $\cS$ be a system and $\pinf$ (system lifetime) be a large number, which depends on $\cS$.
We say that execution $R$ is temporary when $|R|\ll \pinf$. 
%
%
We say that an execution $R$ is $\pinf$-comparable, if and only if, $|R| \nll \pinf$.
%
\end{definition}

\begin{definition}[Appears Often]
Suppose that for a given set of steps, $A=\{a_i\}$, it holds that for every system state $c \in R$, there is a step $a_i \in A$ that a processor takes within $x \ll \pinf$ steps before or after $c$. 
We say that $A$'s elements {\em appear often} in $R$. When $A_i=\{a \in R : p_i \textnormal{ takes step } a \}$ is the set of all steps that processor $p_i$ takes in $R$ and $A_i$'s elements appear often in $R$, we say that $p_i$ often takes steps in $R$.
\label{def:often}
\end{definition}

We say that processor $p_i \in P$ is {\em active} in $R = R' \circ R'' $, when for any of $R$'s suffixes $R''$, it holds that processor $p_i$ does take steps@@$p_i$ active in $R$ iff $A_i$'s elements appear often in $R$?@@. Suppose that processor $p_i$ does not take steps in  $R''$, where $R = R' \circ R'' \circ R'''$ and $|R''| \nll \pinf$. In this case, we assume that  $p_i \in P$ does not take any step in $R$'s suffix $R'''$, i.e., $p_i$ has failed-stop (without resume) in $R$ (after perhaps taking steps in $R'$).



\begin{definition}[Practically Stabilizing Systems]
\label{def:pss}
Let $\cS$ be a system and $R$ be an execution, such that $|R|\centernot\ll \pinf$ and there exists a processor that takes steps often in $R$. 
We say that $\cS$ is practically stabilizing in $R$, if and only if, there exists a partition of $R$, $R = R' \circ R^* \circ R''$, such that $\circ$ is the sequence concatenation operator, and $|R'| \ll \pinf \land$ $|R^*|\nll \pinf \land$ $R^*\in \LE$. 
\end{definition}

@@The concept of practically stabilizing systems was introduced in~\cite{DBLP:journals/jcss/DolevKS10}.
With the definitions above we intend to further explain this concept and discuss what is a relatively large response time for a processor in an asynchronous system.@@
Note that practically stabilizing systems (Definition~\ref{def:pss}) can have more than one temporary period, $R' \notin LE$, i.e., 
$R = (R'\circ R^*) \circ (R'\circ R^*) \circ \ldots \circ (R'\circ R^*) \circ R^{**}$, $R^{**}\in \LE$, Moreover, the case in which there are no grantees for $R^{**}$ to be infinite (even when $R$ is infinite) refers to the design criteria of loosely-stabilizing systems~\cite{DBLP:conf/opodis/SudoOKM14}. Furthermore, the case in which infinite $R$ implies that $R^{**}$ is infinite refers the design criteria of pseudo-stabilizing systems~\cite{DBLP:journals/dc/BurnsGM93}.

} 

\remove{

\paragraph{Remark} 
Note that $\lea$ is not a partial order on $\mathcal{L}$.
In order to demonstrate this we first need to clarify how equality is defined. 
Thus, let $\ell =_b \ell'$ iff, $s_\ell = s_{\ell'}$ $\land$ $A_\ell = A_{\ell'}$.
Following Definition~\ref{def:AlonLabels} we show that $\leqa$ breaks both the antisymmetry and transitivity conditions of the partial order definition (Definition~\ref{def:po}):
\begin{itemize}
\item[(antisymmetry)] $\ell \leqa \ell'$ and $\ell' \leqa \ell$ $\implies$ $s_{\ell} \in A_{\ell'}$ $\land$ $s_{\ell'} \notin A_{\ell}$ and $s_{\ell'} \in A_{\ell}$ $\land$ $s_{\ell} \notin A_{\ell'}$, which is a contradiction

\item[(transitivity)] It can be the case that $\ell_a \leqa \ell_b$, $\ell_b \leqa \ell_c$, and $\ell_c \leqa \ell_a$, i.e.,
$s_{\ell_a} \in A_{\ell_b}$ $\land$ $s_{\ell_b} \notin A_{\ell_a}$,
$s_{\ell_b} \in A_{\ell_c}$ $\land$ $s_{\ell_c} \notin A_{\ell_b}$, and
$s_{\ell_a} \in A_{\ell_c}$ $\land$ $s_{\ell_c} \notin A_{\ell_a}$\footnote{%
That is, 
$A_{\ell_a} = \{s_{\ell_c},\ldots\} \not\ni s_{\ell_b}$, 
$A_{\ell_b} = \{s_{\ell_a},\ldots\} \not\ni s_{\ell_c}$, and
$A_{\ell_c} = \{s_{\ell_b},\ldots\}$.
}
\end{itemize}
Hence, the set of Alon et al.'s labels $\mathcal{L}$ (Definition~\ref{def:AlonLabels}), ordered by $\leqa$ (Definition~\ref{def:AlonOrder}) is not a semilattice. 
Moreover, the same remark holds if we relax the equality definition to $\ell =_b \ell'$ $\iff$ $s_\ell = s_{\ell'}$.

} 

} 

\section{Vector Clock Pairs: operations, invariants, and event counting}
\label{s:pair}

In this section we define a vector clock pair, which is a construction for emulating a vector clock that can tolerate counter overflows.
We define the invariants and conditions that should hold for the vector clock pairs  with respect to Requirement\reqs. 
We show how to merge two (vector clock) pairs (Section~\ref{s:merging}), and use this construction for counting the events of a single processor and computing the query $\causalPrecedence()$, which we defined in Section~\ref{s:systemSettings} (Section~\ref{s:counting}).
In Section~\ref{s:algorithms} we use the vector clock pairs for designing a practically-self-stabilizing algorithm with respect to the abstract task that Requirement\reqs\ defines (cf. Section~\ref{s:systemSettings}).

\Paragraph{The (vector clock) pair}
We say that $I=\langle\ell, m,o\rangle$ is a \textit{(vector clock) item}, where $\ell$ is a label of the Dolev et al.~\cite{DBLP:journals/corr/DolevGMS15} labeling scheme (Section~\ref{s:DolevPaper}), $m$ (main) is an $N$-size vector of integers that holds the processor increments, and $o$ (offset) is an $N$-size vector of integers that the algorithm uses as a reference to $m$'s value upon $\ell$'s creation. We use $(I.m - I.o)\mmi$ for retrieving $I$'s vector clock value. We define a \textit{(vector clock) pair} as the tuple $Z=$ $\langle curr, prev \rangle$, where both $curr$ and $prev$ are vector clock items, such that $Z.curr.o=Z.prev.m$, i.e., two variable names that refer to the same storage (memory cell). We use $VC(Z) := (Z.curr.m - Z.curr.o)\mmi$ for retrieving $Z$'s vector clock. 
We assume that each processor $p_i$ stores a vector clock pair $local_i$ and we explain below how $p_i$ uses $local_i$ for counting local events as well as events that it receives from other processors, even when (concurrent) counter overflows occur.
\remove{
Figure~\ref{fig:operations} depicts examples of additional retrieval operations.

\begin{figure*}[t!]
\begin{framed}\sizeOfInvariantsFig
We define the (vector clock) addition $+_{vec}$, subtraction $-_{vc}$, and equality $=_{vc}$ operations in equations~\ref{eq:plusVec}, \ref{eq:subsVC}, and~\ref{eq:eqVC}, respectively. Let $V_i$, $i:=1,2$ be two (vector clock) pairs, $u := (u_1, \ldots, u_N)^T$ be a vector of integers and $VC(V_i) := (V_i.curr.m - V_i.curr.o)\mmi$ for retrieving $V_i$'s vector clock.   
\smallestSpaceBetweenEquations
\begin{equation}
V_1 +_{vec} u = \left\{ \begin{array}{l}
\bot, \text{ if } \max\{V_1.curr.m[i] + u[i] \,|\, i\in [1,n]\} > \MI\\
\la \la V_1.curr.\ell, V_1.curr.m + u ,V_1.curr.o\ra ,V_1.prev\ra, \text{ otherwise}
\end{array}\right.
\label{eq:plusVec}
\end{equation}
\smallSpaceBetweenEquations
\smallSpaceBetweenEquations
\begin{equation}
V_1 -_{vc} V_2 = \left\{ \begin{array}{l}
\bot, \text{ if }\neg( (V_1.curr.\ell =_{lb} V_2.curr.\ell) \land (V_1.curr.o = V_2.curr.o))\\
(VC(V_1) - VC(V_2))\mmi, \text{ otherwise}
\end{array}\right.
\label{eq:subsVC}
\end{equation}
\smallSpaceBetweenEquations
\smallestSpaceBetweenEquations
\begin{equation}
V_1 =_{vc} V_2 \iff 
(V_1.curr.\ell =_{lb} V_2.curr.\ell) \land \forall_{x\in \{m,o\}}(V_1.curr.x = V_2.curr.x)
\label{eq:eqVC}
\end{equation}
\smallestSpaceBetweenEquations
Equivalently, we can define $V_1 =_{vc} V_2 \iff VC(V_1) -_{vc} VC(V_2) = (0,\ldots, 0)^T$ and 
%
%
\end{framed}
\caption{The operations addition $+_{vec}$, subtraction $-_{vc}$ and equality $=_{vc}$ for (vector clock) pairs}
\label{fig:operations}
\end{figure*}

} 


\Paragraph{Starting a vector clock pair}
%
The first value of a pair $Z$ is $\la \la\ell, zrs, zrs \ra, \la\ell, zrs, zrs\ra  \ra$, where $\ell:=\getLabel()$ is the local maximal label and $zrs := (0,\ldots,0)$ \short{is the zero vector}.
\modified{}{The pseudocode is given in line~\ref{alg:SSVC:resetLocal} of Algorithm~\ref{alg:SSVC}.}
That is, the vector clock value of $Z$ is an $N$-sized vector of zeros, i.e., $VC(Z) = zrs$, that we associate with the local maximal label.

\Paragraph{Exhaustion of vector clock pairs}
We say that a pair $Z$ is \textit{exhausted} when Condition~\ref{eq:notExhausted} holds. 
Condition~\ref{eq:notExhausted} defines exhaustion when the sum of the elements of the vector clock's value $VC(Z)$ is at least $\MI-1$.
Note that defining exhaustion according to the sum of the vector clock's values reduces the exhaustion events, in comparison to defining exhaustion for every vector clock element overflow, i.e., for every $curr.m[i]$, $p_i\in P$.
The latter also justifies the use of one label for a vector clock item $I$, instead of $N$ labels, i.e., one per each element of $I.m$.
Since the size of a label $I.\ell$ in the Dolev et al. labeling scheme~\cite{DBLP:journals/corr/DolevGMS15} is in $\bigO(N^3)$, this linear improvement is significant.

\begin{equation}
exhausted(Z) \iff  \Sigma_{k=1}^N (Z.curr.m[k] - Z.curr.o[k]) \geq \MI - 1
\label{eq:notExhausted}
\end{equation}


\Paragraph{Reviving a (vector clock) pair}
When the (vector clock) pair $Z$ is exhausted (Condition~\ref{eq:notExhausted}), $p_i$ \textit{revives} $Z$ by 
(i) canceling the labels of $Z$, i.e., $Z.curr.\ell$ and $Z.prev.\ell$, and
(ii) replacing $Z$ with $Z'=\la\la getLabel(), Z.curr.m, Z.curr.m\ra, Z.curr\ra$. 
Hence, the value of the new vector clock, $Z'$, is an $N$-sized vector of zeros, i.e., $VC(Z') = Z.curr.m - Z.curr.m = (0,\ldots,0)$ and  $Z'$ has the current offset field, $Z'.curr.o$, that refers to the same main values as the ones recorded by $Z.curr.m$ (and $Z'.curr.o$ alias value, which is $Z'.prev.m$). 
As we show in this section, the fact that $Z'.prev$ stores the value of $Z.curr$ upon exhaustion enables counting local events, as well as, merging (vector clock) pairs even upon concurrent exhaustions in different processors.
%

\remove{ \IS{this text is very confusing at this point}
\short{Note that $Z$ and $Z'$ have the same reference points, which are $Z.curr.m$ and $Z'.prev.o =Z'.prev.m$ (Figure~\ref{fig:oneWrapAround}). Moreover, this property holds also when $Z$ wraps around to another pair, say $Z''$ (Figure~\ref{fig:concurrentWrapAround}). Next, we continue with providing the details needed for using this property when merging two pairs that can, possibly, wrap around concurrently.}
}


\Paragraph{Incrementing vector clock values}
Processor $p_i \in P$ increments its (vector clock) pair, $Z$, by incrementing the $i^\textit{th}$ entry of $Z$'s current item, i.e., it increments $Z.curr.m[i]$ by one.
The new value of the vector clock is $VC(Z) = (Z.curr.m$ $+$ $\idv$ $-$ $Z.curr.o)$ $\mmi$, where $\idv$ is an $N$-size vector with zero elements everywhere, except for the $i^\textit{th}$ entry which is one, and $Z.curr.m$ is the value before the increment. 
In case that increment leads to exhaustion (Condition~\ref{eq:notExhausted}), $p_i$ has to revive the pair $Z$.
We assume that a processor can call $increment()$ only before it starts the computations of a step that ends with a send operation, to ensure that increments are immediately propagated to all other processors.

\subsection{Merging two vector clock pairs}
\label{s:merging}

We present a set of invariants for a single (vector clock) pair as well as for two pairs.
We explain when it is possible to merge two pairs and present the merging procedure.
Our approach is based in finding a common label and offset in the items of the two pairs, which works as a common reference.

\Paragraph{Pair label orderings}
Given a pair $Z = \langle curr, prev \rangle$, we say that its elements are \textit{ordered} when Condition~\ref{eq:labelsOrdered} holds. That is, either the current label of a pair $Z$, $Z.curr.\ell$, is larger than the previous label, $Z.prev.\ell$, and $Z.prev.\ell$ is canceled, or the labels are equal and not canceled (Condition~\ref{eq:labelsOrdered}). 

\begin{small}
\begin{align}
\begin{split}
\lblOrdrd(Z) \iff& \left((Z.prev.\ell \lb Z.curr.\ell \land isCanceled(Z.prev.\ell))\lor\right.\\
& ~\left(Z.prev.\ell = Z.curr.\ell \land \neg isCanceled(Z.curr.\ell)\right)
\label{eq:labelsOrdered} 
\end{split}
\end{align}
\end{small}

\Paragraph{The $\mathbf{=_{\ell,o}}$ and $\mathbf{<_{\ell,o}}$ relations}
\modified{We define the relations $=_{\ell,o}$ and $<_{\ell,o}$ to be able to compare and order vector clock items (and hence pairs). Let $\ell_1 = \la ml_1, cl_1 \ra$ and $\ell_2 = \la ml_2, cl_2 \ra$ be two labels of the Dolev et al. labeling scheme~\cite{DBLP:journals/corr/DolevGMS15}.
Recall that for $\ell = \la ml, cl \ra$, $cl$ indicates if $\ell$ is canceled; if $cl = \bot$ then $\ell$ is not canceled and if $cl \neq \bot$, $cl$ is the label that canceled $ml$, i.e., $\ell$ is canceled.
We say that $\ell_1 =_m \ell_2$, if and only if $ml_1 = ml_2$.
In the sequel we will use $=_m$ and $=$ interchangeably when comparing labels, as the $cl$ part is only used for notifying whether a label is canceled or not. 


Let $\la \ell_1, m_1, o_1 \ra =_{\ell,o} \la \ell_2, m_2, o_2 \ra  \iff \ell_1 = \ell_2 \land o_1 = o_2$. 
We say that two (vector clock) items $z$ and $z'$ \textit{match (in label and offset)}, if and only if, $z =_{\ell, o} z'$. 
We use the order $\la \ell_1, m_1, o_1 \ra <_{\ell,o} \la \ell_2, m_2, o_2 \ra \iff \ell_1 <_{lb} \ell_2$ $\lor$ $(\ell_1 = \ell_2 \,\land\, o_1 <_{lex} o_2)$ for 
comparing between vector clock items,
where 
$<_{lex}$ is the lexicographic order in $\N$.
We define $\max_{\ell, o} \cX$ to be the $<_{\ell,o}$-maximum item in a set of items $\cX$ in which all labels are comparable with respect to $\lb$ and there exists a maximum label among them.
}{We define the relations $=_{\ell,o}$ and $<_{\ell,o}$ to be able to compare vector clock items (and hence pairs). \short{Let $\ell_1 = \la ml_1, cl_1 \ra$ and $\ell_2 = \la ml_2, cl_2 \ra$ be two labels~\cite{DBLP:journals/corr/DolevGMS15}.
We say that $\ell_1 =_m \ell_2$, if and only if $ml_1 = ml_2$.
We use $=_m$ and $=$ interchangeably when comparing labels.} 
Let $\la \ell, m, o \ra =_{\ell,o} \la \ell', m', o' \ra  \iff \ell = \ell' \land o = o'$. 
We say that two (vector clock) items $z$ and $z'$ \textit{match (in label and offset)}, if and only if, $z =_{\ell, o} z'$. 
We use the order $\la \ell_1, m_1, o_1 \ra <_{\ell,o} \la \ell_2, m_2, o_2 \ra \iff \ell_1 \prec_{lb} \ell_2$ $\lor$ $(\ell_1 = \ell_2 \land o_1 <_{lex} o_2)$, where $v_1, v_2$ are $N$-size vectors with elements in $\N$, $\prec_{lb}$ is the label order of the labeling scheme~\cite{DBLP:journals/corr/DolevGMS15},  and $<_{lex}$ is the lexicographic order.
We define $\max_{\ell, o} \cX$ to be the $<_{\ell,o}$-maximum in the item set $\cX$ in which all labels are comparable.}

\Paragraph{Pivot existence}
\modified{Condition~\ref{eq:afterPrInf} tests the pair merging feasibility (Figure~\ref{fig:mergePairs}). 
It considers two pairs $Z$ and $Z'$ and returns true when one of the following holds: 
\begin{enumerate}[(a)]
\item $Z$ and $Z'$ match (in label and offset) in their $curr$ and $prev$, i.e., $Z.itm =_{\ell, o} Z'.itm$, for $itm \in \{curr, prev\}$ (Figure~\ref{fig:noWrapAround}), i.e., there was no vector clock exhaustion,  or 
\item $Z$ and $Z'$ match in their $prev$, i.e., $Z.prev =_{\ell, o} Z'.prev$ (figure~\ref{fig:concurrentWrapAround}), i.e., both vector clocks where exhausted (assuming that case (a) was true before exhaustion),  or 
\item the label and offset in the $prev$ of one equals the label and offset in the $curr$ of the other one, i.e., $Z.curr =_{\ell, o} Z'.prev \lor Z.prev =_{\ell, o} Z'.curr$ (Figure~\ref{fig:oneWrapAround}), i.e., one vector clock was exhausted (assuming that case (a) was true before exhaustion).
\end{enumerate}
}{Condition~\ref{eq:afterPrInf} tests the feasibility of merging the pairs $Z$ and $Z'$ when: (a) $Z$ and $Z'$ match (in label and offset) in their $curr$ and $prev$, i.e., $Z.itm =_{\ell, o} Z'.itm$, for $itm \in \{curr, prev\}$ (Figure~\ref{fig:noWrapAround}),  or 
(b) $Z$ and $Z'$ match in their $prev$, i.e., $Z.prev =_{\ell, o} Z'.prev$ (figure~\ref{fig:concurrentWrapAround}),  or 
(c) the label and offset in the $prev$ of one equals the label and offset in the $curr$ of the other one, i.e., $Z.curr =_{\ell, o} Z'.prev \lor Z.prev =_{\ell, o} Z'.curr$ (Figure~\ref{fig:oneWrapAround}).}
We refer to the common item between $Z$ and $Z'$ as the \textit{pivot} item.


\begin{equation}
\begin{array}{rcr}
\existsOverlap(Z,Z') &\!\iff\!& Z.prev =_{\ell, o} Z'.prev \lor Z.curr =_{\ell, o} Z'.prev \\
&&\lor Z.prev =_{\ell, o} Z'.curr
\end{array}
\label{eq:afterPrInf}
\end{equation}


\begin{figure}[t]
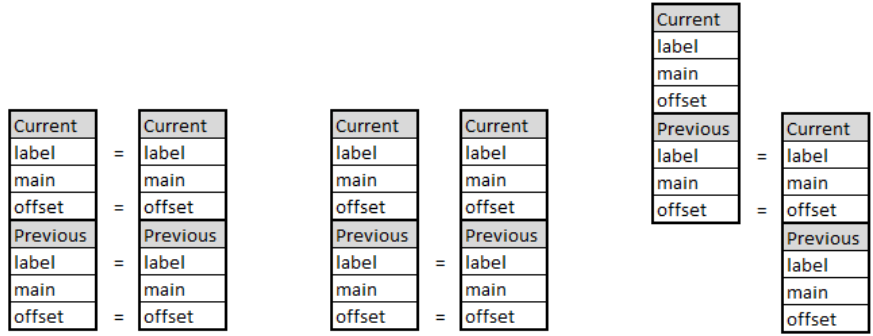

    \centering
  \begin{subfigure}[b]{0.285\textwidth}
        \includegraphics[width=1.0\textwidth]{SEN1.png}
    \caption{Condition~\ref{eq:afterPrInf} holds because the two pairs differ only by their $curr.main$ fields (no wrap-around).} 
    \label{fig:noWrapAround}
    \end{subfigure}
    \qquad 
    \begin{subfigure}[b]{0.285\textwidth}
    \includegraphics[width=1.0\textwidth]{SEN2.png}
    \caption{Condition~\ref{eq:afterPrInf} holds because the two pairs match in their $prev$ item (the pairs had wrapped around concurrently).}
    \label{fig:concurrentWrapAround}
    \end{subfigure}  
    \qquad 
    \begin{subfigure}[b]{0.285\textwidth}
        \includegraphics[width=1.0\textwidth]{SEN3.png}
    \caption{Condition~\ref{eq:afterPrInf} holds because $Z.prev$ and $Z'.curr$ differ only by their main filed ($Z$ has wrapped around).}
    \label{fig:oneWrapAround}
    \end{subfigure}
\caption{Conditions for merging two given (vector clock) pairs; $Z$ (on the left) and $Z'$ (on the right).}
\label{fig:mergePairs}
\end{figure}

\Paragraph{Merging two (vector clock) pairs}
Two vector clocks $Z$ and $Z'$ can be merged when there exists a pivot item, i.e., $\existsOverlap(Z,Z')$ holds (Figure~\ref{fig:mergePairs}).
The $<_{\ell,o}$-maximum pivot item, $pvt$, in $Z$ and $Z'$, provides a reference point when merging $Z$ and $Z'$, because it refers to a point in time from which both $Z$ and $Z'$ had started counting their events. 
We merge $Z$ and $Z'$ to the pair $\out$ in two steps; one for initialization and another for aggregation.

We initialize $\out$ to the $<_{\ell,o}$-maximum pair between $Z$ and $Z'$ (Figure~\ref{fig:mergePairs}), and choose $Z$ (the first input argument) when symmetry exists (figures~\ref{fig:noWrapAround} and~\ref{fig:concurrentWrapAround}).
%
In order to distinguish when we treat numbers and operations in $\N$ or in $\Z_{\MI}$, we denote by $x+_\N y$ the result of adding two numbers $x,y \in \Z_{\MI}$ in $\N$ ($x+_\N y$ can be possibly larger than $\MI$) and $x|_\N$ denotes that $x\in \Z_{\MI}$ is treated as a number in $\N$.

For every $i\in \{1,\ldots, N\}$, let $\newEvents(X,\pivot)[i]$ be the number of new events that the pair $X\in \{Z,Z'\}$ counts since the reference item, $\pivot$.
In Equation~\ref{eq:Xpivot} we compute $\newEvents(X,\pivot)[i]$ depending on whether $\pivot$ matches $X.curr$ or $X.prev$. 
In the former case, we count the number of events in $X.curr.m[i]$ since the offset $X.curr.o[i]$.
In the latter case, we also add the number of events in $X.prev.m[i]$ since the offset $X.prev.o[i]$, because $X.prev.o$ is the common offset of $Z$ and $Z'$.
The aggregation step sets 
$\out[i] = \max\{newEvents(X,\pivot)[i] \,|\, X \in \{Z, Z'\}\} + pivot[i]\mmi$,
for every $i\in \{1,\ldots, N\}$.

\begin{small}
\begin{equation}
\label{eq:Xpivot}
\newEvents(X,\pivot)[i] = \left\{\begin{array}{l}
(X.curr.m[i] - X.curr.o[i]\mmi)|_\N,\\
\hfill\text{ if } \pivot =_{\ell,o} X.curr,\\
\\
(X.curr.m[i] - X.curr.o[i]\mmi)|_\N\,\,\,\, +_\N\\
\hfill (X.prev.m[i] - X.prev.o[i]\mmi)|_\N,\\ 
\hfill\text{ if } \pivot =_{\ell,o} X.prev 
\end{array}\right.
\end{equation}
\end{small}

%

%

\subsection{Event counting and causal precedence}
\label{s:counting}
%

In this section we present our implementation of the queries about counting the events of a single active processor (Requirement\reqs) and about causal precedence (Section~\ref{s:systemSettings}), which is based on the vector clock pair construction.
We explain the conditions under which we compute the query of how many events occurred in a processor $p_i$ between the states $c_x$ and $c_y$ (Requirement\reqs) using $local_i$'s value in these two states, and present the query's computation.
Then, we describe how we compute the query $\causalPrecedence(local_i, local_j)$, for two vector clocks $local_i$ and $local_j$ of active processors $p_i$ and $p_j$, in possibly different states (cf. Section~\ref{s:systemSettings}).

Let $V_i^k[i]$ be the $i^{th}$ entry of $p_i$'s vector clock $V_i$ in state $c_k$, $k\in \{x,y\}$.
Requirement\reqs\ implies that in a legal execution, the query $V_i^y[i] - V_i^x[i]$ returns the number of events that occurred in $p_i$ between the states $c_x$ and $c_y$, where $c_x$ precedes $c_y$.
Let $local_i^k$ be the value of $local_i$ in state $c_k$.
The result of this query depends on the number of calls to $revive_i()$ between $c_x$ and $c_y$.
That is, in case there were two or more calls to $revive_i()$ between $c_x$ and $c_y$, then it is not possible to infer the correct response to the query $V_i^y[i] - V_i^x[i]$ from $local_i^x$ and $local_i^y$, since these two pairs have no common pivot item (cf. Section~\ref{s:merging}). 
Otherwise, in case there was no wrap around (cf. Figure~\ref{fig:noWrapAround}) or one wrap around (cf. Figure~\ref{fig:oneWrapAround}) between $c_x$ and $c_y$, we can use $local_i.curr$, or respectively, $local_i.prev$ as pivot items to count the correct number of events in $p_i$.
Thus, we compute the response to the query $V_i^y[i] - V_i^x[i]$ as follows:

\begin{equation}\small
\label{eq:eventCountQuery}
V_i^y[i] - V_i^x[i] = \left\{ \begin{array}{ll}
VC(local_i^y)[i] - VC(local_i^x)[i],& \text{if $local_i^x$ and $local_i^y$ differ only on} \\
&\text{the field $curr.m$} \text{ (cf. Figure~\ref{fig:noWrapAround})}\\
&\\
\newEvents(local_i^y, local_i^y.prev)[i],& \text{if } local_i^x.curr =_{\ell,o} local_i^y.prev\\
& \text{ (cf. Figure~\ref{fig:oneWrapAround})}\\
 &\\
\bot, & \text{otherwise}
\end{array} \right.
\end{equation}

In Section~\ref{s:algorithms} we propose Algorithm~\ref{alg:SSVC} and in Section~\ref{s:proofs} we show that it is practically-self-stabilizing with respect to Requirement\reqs\ (cf. Section~\ref{s:systemSettings}).
Thus, during a legal execution the return value of $V_i^y[i] - V_i^x[i]$ in Equation~\ref{eq:eventCountQuery} is never $\bot$.




In order to compute the query $\causalPrecedence(Z,Z')$, which is true if and only if $Z$ causally precedes $Z'$ (Section~\ref{s:systemSettings}), we follow a similar approach to merging pairs (Section~\ref{s:merging}).
As in the computation of the query $V_i^y[i] - V_i^x[i]$,
we require that there exists a pivot item, $\pivot$, between two pairs $Z$ and $Z'$ in order to be able to compare them, and 
we use the $\newEvents(X,pivot)$ function to compare these pairs, $X\in \{Z,Z'\}$.
We detail the computation of $\causalPrecedence(Z,Z')$ in Equation~\ref{eq:causal}.

\begin{equation}
\label{eq:causal}
\begin{footnotesize}
\begin{array}{ll}
\!\!\causalPrecedence(Z,Z') \Leftrightarrow&\!\!\!\! \existsOverlap(Z,Z') \land \\
&\!\!\!\!\! (\forall_{i\in \{1,\ldots, N\}} \newEvents(Z,\pivot)[i] \leq \newEvents(Z',\pivot)[i] \land \\
&\!\!\!\!\! \exists_{j\in \{1,\ldots, N\}} \newEvents(Z,\pivot)[j] < \newEvents(Z',\pivot)[j])
\end{array}
\end{footnotesize}
\end{equation}
%
Our approach of including the $prev$ item in a vector clock pair allows to count events from a common reference, even when wrap around events occur.
Hence, in a legal execution of Algorithm~\ref{alg:SSVC} (Section~\ref{s:algorithms}), $\causalPrecedence(Z,Z')$ as we compute it in Equation~\ref{eq:causal} is true if and only if $Z$ causally precedes $Z'$.

\section{Practically-self-stabilizing Vector Clock Algorithm}
\label{s:algorithms}


%
We propose Algorithm~\ref{alg:SSVC} as a practically-self-stabilizing vector clock algorithm that fulfills Requirement~\ref{req:2act} (Section~\ref{s:systemSettings}).
Algorithm~\ref{alg:SSVC} builds on the vector clock pair construction (Section~\ref{s:pair}), which uses a practically-self-stabilizing labeling scheme (cf. Section~\ref{s:back}).
Thus, Algorithm~\ref{alg:SSVC} is composed with a\remove{the} labeling algorithm\remove{ of Dolev et al.~\cite[Algorithm 2]{DBLP:journals/corr/DolevGMS15}} using our composition approach and the interface in Section~\ref{s:interfaceDolev}.
In a nutshell, Algorithm~\ref{alg:SSVC} includes the procedures for 
(i) vector clock increments, 
(ii) checking the invariants of the local (vector clock) pair, e.g., vector clock exhaustion, and sending the local pair of a processor ($local$) to its neighbors (do-forever loop procedure), and 
(iii) merging an incoming vector clock pair with the local one.
To that end, Algorithm~\ref{alg:SSVC} relies on the functions that we defined in Section~\ref{s:pair}.


\newcommand{\algSize}{footnotesize}
\begin{algorithm*}[t!]

\begin{\algSize}

\SetKwInput{KwVariables}{Variables}
\SetKwInput{KwMacros}{Macros}


\textbf{Constants:}
$zrs := (0,\ldots, 0)$: the $N$-size vector of zeros, $\idv$: $N$-size vector, where $\idv[i] = 1$ and $\idv[j] = 0$, for $j\neq i$\;


\textbf{Variables:} 
$\pairs[]$: an $N$-size vector of pairs, where $\pairs[i]$ is the local vector clock pair, i.e., $local$ is an alias to $\pairs[i]$. Also, $\pairs[j]$ is the latest value of $p_j$'s $local$ that $p_i$ received.\label{alg:SSVC:variables}



\textbf{Interface:} 
$\isStored()$, $\getLabel()$, $\legitArriving()$, $encapsulate()$, $cancel()$, $labelBookkeeping()$  (Section~\ref{s:interfaceDolev}), $newEvents()$ (Section~\ref{s:pair}). 



\textbf{Macros:} we use as macros conditions~\ref{eq:notExhausted} to~\ref{eq:afterPrInf}, Equation~\ref{eq:Xpivot} (Section~\ref{s:pair}), and the following:\\
$\mirroredLocal() :=  \isStored(local.prev.\ell) \land local.curr.\ell = \getLabel()$\label{alg:SSVC:mirroredLocalLabels}\;
$\pairInvariants(X)$ $:=$ $\neg exhausted(X) \land (X.prev.\ell \preceq_{lb} X.curr.\ell)$\label{alg:SSVC:pairInvar}\; 

$\compLbls(\cX) := \forall \ell, \ell' \in \{X.curr.\ell, X.prev.\ell \,|\, X\in \cX\}, \ell \leqlb \ell' \lor \ell' \leqlb \ell$\label{alg:SSVC:compLbls}\;
$\labelCheck(X,Y) := \compLbls(\{X,Y\}) \land \existsOverlap(X,Y)$ (Condition~\ref{eq:afterPrInf}, Section~\ref{s:pair})\label{alg:SSVC:legitPairs}\;

$\resetLocal() := \{local \gets \la y, y \ra\}$, where $y = \la \getLabel(), zrs, zrs  \ra$\label{alg:SSVC:resetLocal}\;

$\equalStatic(X,Y) := X.curr.\ell = Y.curr.\ell \land X.curr.o = Y.curr.o \land X.prev = Y.prev$\label{alg:SSVC:equalStatic}\;



\textbf{procedure} $\cancelPairLabels(Z)$\label{alg:SSVC:cancelLabelPairsBegin} \Begin{$\textbf{foreach} {~\ell \in \{Z.curr.\ell, Z.prev.\ell\}} \textbf{ do } \ell.cancel(\ell)$; $labelBookkeeping();$\label{alg:SSVC:resetLabelPairs}}



\textbf{function} $\revive(Z)$ \label{alg:SSVC:reviveBegin}
\Begin{$\cancelPairLabels(Z)$\label{alg:SSVC:reviveCancelLbls}; 
\Return $\la \la \getLabel(), Z.curr.m, Z.curr.m \ra, Z.curr \ra$\;\label{alg:SSVC:reviveEnd}}



\textbf{procedure} $increment()$ \Begin{\label{alg:SSVC:incrementStart}
\Let $local$$=$$\la\la local.curr.\ell, (local.curr.m + \idv) \mmi, local.curr.o  \ra, local.prev\ra$\label{alg:SSVC:increment}\;
\lIf{$exhausted(local)$}{$local \gets \revive(local)$\label{alg:SSVC:checkIncrExh}}
}

\textbf{function} $merge(loc, arr)$ \Begin{\label{alg:SSVC:merge}
%
\leIf{$\exists_{x\in \{curr,\, prev\}} \locCurr =_{\ell,o} arr.x$}{\Let $\pivot$$:= \locCurr.o$}{\Let $\pivot$$:= \locPrev.o$\label{alg:SSVC:pivot}}

\textbf{let} $\initializeToLocal := \arrCurr <_{\ell,o} \locCurr$ $\lor$ $(\arrCurr =_{\ell,o} \locCurr \land \arrPrev$ $\leq_{\ell,o}$ $\locPrev)$\label{alg:SSVC:useLocal}\;
\leIf{$\initializeToLocal$\label{alg:SSVC:mergeInitStart}}
{\Let $\out := loc$\label{alg:SSVC:mergeInitLoc}}
{\Let $\out := arr$\label{alg:SSVC:mergeInit}}

%
\ForEach{$k\in \{1,\ldots,N\}$}{\label{alg:SSVC:newEventsStart}
\Let $maxNewEvents = \max\{newEvents(Z,\pivot)[k] \,|\, Z \in \{loc, arr\} \}$\label{alg:SSVC:newEventsCount}\;
$\oc.m[k] \gets (\pivot[k] + maxNewEvents) \mmi$
\label{alg:SSVC:newEventsEnd}\;
}

\Return $\out$\label{alg:SSVC:mergeReturn}\;
}



\textbf{do forever} \Begin{\label{alg:SSVC:doForeverStart}
$labelBookkeeping()$\label{alg:SSVC:callDolevDoForever}\;
\lIf{$\neg(\mirroredLocal() \land \lblOrdrd(local))$}{$\resetLocal()$\label{alg:SSVC:removeStaleInfo}}
\lIf{$exhausted(local)$}{$local \gets revive(local)$\label{alg:SSVC:doForeverRevive}}
 
\lForEach{$p_k \in P\setminus\{p_i\}$}{\Send $encapsulate(\la local, \pairs[j]\ra)$ \To $p_k$\label{alg:SSVC:doForeverEnd}}
}



\Upon \textbf{message} $m= \langle \bullet, \la arriving, \receiverLocal\ra \rangle$ \textbf{arrival from} $p_j$  \Begin{\label{alg:SSVC:messageReceiveStart}
$labelBookkeeping(m,j)$\label{alg:SSVC:callDolevArriving}\;
$\pairs[j] \gets arriving$\label{alg:SSVC:storeArriving}\; 
%
\If{$\equalStatic(local, \receiverLocal) \land \legitArriving(m,arriving.curr.\ell) \land \pairInvariants(arriving)$\label{alg:SSVC:exhArrCheck}}{
\lIf{$\neg \labelCheck(local, arriving)$\label{alg:SSVC:invarMerge}}{$\resetLocal()$\label{alg:SSVC:receiveResetPairs}}
\Else{\label{alg:SSVC:LegitElse}
$local \gets merge(local, arriving)$\label{alg:SSVC:rvcElseStart}\label{alg:SSVC:mergeArrWithLocal}\;
\lIf{$exhausted(local)$}{$local \gets revive(local)$\label{alg:SSVC:reviveCall}}
}

}
}


\end{\algSize}

\caption{\smaller Practically-self-stabilizing vector-clock replication, code for $p_i$}\label{alg:SSVC}
\end{algorithm*}


\Paragraph{Local variables (line~\ref{alg:SSVC:variables})}
Processor $p_i \in P$ maintains a local (vector clock) pair, $local_i$, such that for any state, $p_i$'s vector clock value is $VC(local_i)$ (cf. Section~\ref{s:pair}).

\Paragraph{Restarting $local$ via $\resetLocal()$ (line~\ref{alg:SSVC:resetLocal})}
The macro $\resetLocal()$ lets $local_i$ have its starting value $\la y, y \ra$, where $y = \la \getLabel(), zrs, zrs  \ra$ and $zrs$ is the $N$-size vector of zeros.
Processor $p_i$ can use $\resetLocal()$ for setting $local_i$ to its initial value, whenever the invariants for $local_i$ do not hold in the do-forever loop and in the message arrival procedures of Algorithm~\ref{alg:SSVC}.

\Paragraph{Token passing mechanism for sending and receiving $local$}
Algorithm~\ref{alg:SSVC} uses a token circulation mechanism for sending and receiving $local$, which is independent of the algorithm's computations on $local$.
This mechanism is necessary for ensuring that (after a constant number of steps) for every two processors $p_i, p_j \in  P$, $p_j$ processes a message from $p_i$ only if $p_i$ has received the latest value of $local_j$.

We remark that without this mechanism, it is possible that 
$p_i$ does not receive (and process) $p_j$'s latest value of $local_j$ for an unbounded number of steps, and yet $p_i$ keeps sending $local_i$ to $p_j$ for an unbounded number of steps.
The latter case can cause an unbounded number of steps that include a call to $\resetLocal()$ at $p_j$, if the pair that $p_j$ received from $p_i$ cannot be merged with $local_j$ (cf. Section~\ref{s:merging} and message arrival procedure in this section).
In Section~\ref{s:proofs}, we show that a call to $\resetLocal()$ in a step of the algorithm (possibly) implies that Requirement\reqs\ does not hold for the state that immediately follows this step. 
Hence, an unbounded number of calls to $\resetLocal()$, imply an unbounded number of states in which Requirement\reqs\ does not hold.
The token circulation mechanism helps the proposed algorithm to avoid this problem.

To implement the token circulation mechanism, each processor $p_i$ maintains an $N$-size vector of pairs, $\pairs_i[]$, where $\pairs_i[j]$, for $j\neq i$, is the last value of $local_j$ that $p_i$ received (from $p_j$), and $\pairs_i[i]$ stores $p_i$'s pair, i.e., $local_i$ is an alias for $\pairs_i[i]$.
We implement the token passing mechanism by augmenting the messages that a processor sends (via $encapsulate()$) in Algorithm~\ref{alg:SSVC} as follows.
A processor $p_i$ sends $\la local_i, \pairs_i[j]\ra$ to a processor $p_j$ by calling $encapsulate(\la local_i, \pairs_i[j]\ra)$ in line~\ref{alg:SSVC:doForeverEnd}.
Hence, a message sent by $p_j$ and received by $p_i$ has the form $m_j= \la \bullet, \la arriving_j, \receiverLocal_j\ra\ra$ (line~\ref{alg:SSVC:messageReceiveStart}).
Processor $p_i$ stores $arriving_j$ in $\pairs_i[j]$ (line~\ref{alg:SSVC:storeArriving}), in order to ensure that $p_i$ has received the latest value of $local_j$.
Thus, processor $p_i$ processes the message $m_j$ if the pairs $local_i$ and $\receiverLocal_j$ are equal or differ only on their $curr.m$, since the merging conditions (cf. Section~\ref{s:merging}) don't depend on $curr.m$.
We detail the exact procedures of sending and receiving messages in Algorithm~\ref{alg:SSVC} in the last part of this section.
In Section~\ref{s:proofs} we show that the token passing mechanism is self-stabilizing (in at most $\capacity N^2$ steps).


\remove{ 

\Paragraph{Token passing module for sending and receiving messages}

We use a token circulation module in Algorithm~\ref{alg:SSVC} for sending and receiving $local$, which is independent of the algorithm's computations on $local$.
This is necessary for ensuring that for every two processors $p_i, p_j \in  P$, $p_i$ processes a message from $p_j$ only if $p_j$ has received the latest value of $local_i$.
Note that if we don't use this module, it is possible that for two processors $p_i$ and $p_j$, $p_i$ does not receive $p_j$'s latest value of $local_j$ for an unbounded number of steps, but $p_i$ keeps sending $local_i$ to $p_j$.
The latter can cause an unbounded number of steps that include a call to $\resetLocal_j()$ at $p_j$, which we avoid by using the token passing mechanism.

To that end, each processor $p_i$ maintains an $N$-size vector of pairs $\pairs_i[]$, where $\pairs_i[j]$, for $j\neq i$, is the last value of $local_j$ that $p_i$ received (from $p_j$), and $\pairs_i[i]$ stores $p_i$'s pair, i.e., $local_i$ is an alias for $\pairs_i[i]$.
We implement the token passing mechanism by augmenting the message send ($encapsulate()$) and receive operations in Algorithm~\ref{alg:SSVC} as follows.
A processor $p_i$ sends $\la local_i, \pairs_i[j]\ra$ to a processor $p_j$ by calling $encapsulate(\la local_i, \pairs_i[j]\ra)$ in line~\ref{alg:SSVC:doForeverEnd}.
Thus, a message sent by $p_j$ and received by $p_i$ has the form $m_j= \la \bullet, \la arriving_j, \receiverLocal_j\ra\ra$ (line~\ref{alg:SSVC:messageReceiveStart}).
Processor $p_i$ stores $arriving_j$ in $\pairs_i[j]$ (line~\ref{alg:SSVC:storeArriving}), in order to ensure that $p_i$ has received the latest value of $local_j$.
Thus, the message $m_j$ is processed by $p_i$ only if the pairs $local_i$ and $\receiverLocal_j$ are equal in their static parts (line~\ref{alg:SSVC:exhArrCheck}), i.e., when $\equalStatic(X,Y) := X.curr.\ell = Y.curr.\ell \land X.curr.o = Y.curr.o \land X.prev = Y.prev$ holds (line~\ref{alg:SSVC:equalStatic}).
This way, (when this module stabilizes) $p_i$ processes $m_j$ only if $p_j$ has received the latest value of $local_i$.

By the implementation of token circulation that we just described, $\pairs_i[j]$ determines only if $m_j$ will be processed by Algorithm~\ref{alg:SSVC}, but it is completely independent from $local_i$.
In our proofs (Section~\ref{s:proofs}) we show that this module is self-stabilizing after a constant number of message receptions per link (which depends on the link capacity).


} 

\Paragraph{The function $revive()$ (lines~\ref{alg:SSVC:reviveBegin}--\ref{alg:SSVC:reviveEnd})}
\label{s:algDescr:revive}
When the pair $Z$ is exhausted (Condition~\ref{eq:notExhausted}), a call to $revive(Z)$ lets $Z$ to wrap around and return its new version (Section~\ref{s:pair}).
That is, $p_i$ cancels $Z$'s labels, $Z.curr.\ell$ and $Z.prev.\ell$, by calling the labeling algorithm (function $\cancelPairLabels$, lines~\ref{alg:SSVC:cancelLabelPairsBegin}--\ref{alg:SSVC:resetLabelPairs}), and then sets  $Z.curr$ to be the output pair's $prev$ and $\la getLabel(), Z.curr.m, Z.curr.m\ra$ as the output's $curr$.

\Paragraph{The vector clock increment function, $increment()$ (lines~\ref{alg:SSVC:incrementStart}--\ref{alg:SSVC:checkIncrExh})}
When $p_i$ calls $increment()$, it increments the $i^\textit{th}$ entry of $p_i$'s vector clock.
That is, $p_i$ increments $local_i.curr.m[i]$ by $1$ by adding $\idv$ to $local_i.curr.m$, where $\idv$ is an $N$-size vector with zero elements everywhere, except for the $i^\textit{th}$ entry which is 1 (line~\ref{alg:SSVC:increment}).
%
In case that increment leads to a vector clock exhaustion, it calls the function $revive()$ (line~\ref{alg:SSVC:checkIncrExh}). 
%
We assume that a processor can only call $increment()$ 
in the beginning of a step that ends with a send operation (see paragraph on Algorithm~\ref{alg:SSVC}'s do-forever loop below), and this call is part of the step.
This restriction ensures that vector clock increments are immediately sent to all other processors.

\Paragraph{Aggregation of vector clock pairs with the $\mathbf{merge}()$ function (lines~\ref{alg:SSVC:merge}--\ref{alg:SSVC:mergeReturn})}
The function $merge(Z,Z')$ (lines~\ref{alg:SSVC:merge}--\ref{alg:SSVC:mergeReturn}) aggregates two pairs, $Z$ and $Z'$, such as the local one and another one arriving via the network. 
It outputs a pair $\out$ with the $<_{\ell,o}$-maximum items that includes the aggregated number of events of $Z$ and $Z'$ (Section~\ref{s:pair}).
%
%
%

The function uses the $<_{\ell,o}$-maximum pivot item $x$ in $Z$ and $Z'$, from which it counts the new events in $Z$ and $Z'$ (line~\ref{alg:SSVC:pivot}). 
It initializes the output pair, $\out$, with the input pair that is $<_{\ell,o}$-maximum both in $curr$ and $prev$ (lines~\ref{alg:SSVC:useLocal}--\ref{alg:SSVC:mergeInit}).
The algorithm then updates $\out.curr.m$ with the maximum number of new events between $Z.curr.m$ and $Z'.curr.m$ since the pivot item (lines~\ref{alg:SSVC:newEventsStart}--\ref{alg:SSVC:newEventsEnd}), and returns $\out$ (line~\ref{alg:SSVC:mergeReturn}). 
%
That is, 
$output.curr.m[i] = \max\{newEvents(X,\pivot)[i] \,|\, X \in \{Z, Z'\}\} + pivot[i]\mmi$,
for every $i\in \{1,\ldots, N\}$.

\Paragraph{The procedures of the do-forever loop and the message arrival event}
We explain Algorithm~\ref{alg:SSVC}'s do-forever loop (lines~\ref{alg:SSVC:doForeverStart}--\ref{alg:SSVC:doForeverEnd}) and message arrival procedure (lines~\ref{alg:SSVC:messageReceiveStart}--\ref{alg:SSVC:reviveCall}), which follow the algorithm composition of Figure~\ref{fig:composition} (Section~\ref{s:interfaceDolev}).

\Subparagraph{The do-forever loop procedure (lines~\ref{alg:SSVC:doForeverStart}--\ref{alg:SSVC:doForeverEnd})}
The do-forever loop starts by letting the labeling algorithm take a step in line~\ref{alg:SSVC:callDolevDoForever} (part 1 of Figure~\ref{fig:composition}).
Line~\ref{alg:SSVC:removeStaleInfo} refers to the invariants of $local$.
Algorithm~\ref{alg:SSVC} calls $\resetLocal()$ in line~\ref{alg:SSVC:removeStaleInfo}, in case one of the following does not hold: 
(i) $local.curr.\ell$ is not the local maximal label or $local.prev.\ell$ is not stored in the labeling algorithm's storage, i.e., if $\mirroredLocal()$ is false (line~\ref{alg:SSVC:mirroredLocalLabels}), or
(ii) Condition~\ref{eq:labelsOrdered} is false, i.e., $\lblOrdrd(local)$ is false.
In line~\ref{alg:SSVC:doForeverRevive}, the algorithm checks if $local$ is exhausted and in the positive case, $local$ wraps around to the return value of $revive(local)$ (cf. line~\ref{alg:SSVC:resetLocal}).
Lines~\ref{alg:SSVC:removeStaleInfo}--\ref{alg:SSVC:doForeverRevive} refer to part 2 of Figure~\ref{fig:composition}.

In line~\ref{alg:SSVC:doForeverEnd} the processor sends $local$ to every other processor in the system.
The processor sends the message $m_{client} = \la local, pairs[j]\ra$ to every $p_j\in P\setminus\{p_i\}$, by calling $encapsulate(m_{client})$.
The pair $pairs[j]$ is appended due to the token circulation mechanism.
Line~\ref{alg:SSVC:doForeverEnd} refers to part 3 of Figure~\ref{fig:composition}.

\Subparagraph{The message arrival procedure (lines~\ref{alg:SSVC:messageReceiveStart}--\ref{alg:SSVC:reviveCall})}
Upon arrival of a message $m = \la\bullet, \la arriving$, $\receiverLocal\ra\ra$ from processor $p_j$ (part 4 of Figure~\ref{fig:composition}) the labeling algorithm processes its own part of $m$ (part 5 of Figure~\ref{fig:composition}) by the call to $labelBookkeeping(m,j)$ in line~\ref{alg:SSVC:callDolevArriving}.
In lines~\ref{alg:SSVC:storeArriving}--\ref{alg:SSVC:reviveCall} of the message arrival procedure, Algorithm~\ref{alg:SSVC} processes $\la arriving$, $\receiverLocal\ra$ (parts 6 and 7 of Figure~\ref{fig:composition}).
In line~\ref{alg:SSVC:storeArriving} the algorithm stores $arriving$ to $pairs[j]$, i.e., the latest pair that $p_i$ received from $p_j$, to facilitate the token passing mechanism.

Algorithm~\ref{alg:SSVC} proceeds in processing $arriving$ only if $\equalStatic(local, \receiverLocal) \land \legitArriving(m,arriving.curr.\ell) \land \pairInvariants(arriving)$ holds (line~\ref{alg:SSVC:exhArrCheck}).
Let $\receiverLocal$ be the pair that $p_j$ had received from $p_i$ immediately before the step in which it sent the message $m$ to $p_i$.
The predicate $\equalStatic(local, \receiverLocal)$ (line~\ref{alg:SSVC:equalStatic}) is true, if $\receiverLocal$ either equals $local$ or differs from $local$ only in $curr.m$ (in case until the reception of $m$, $p_i$ incremented its vector clock pair, without exhausting it).

Recall that the part of $m$ that refers to the labeling algorithm includes $p_j$'s local maximal label, which should be equal to $arriving.curr.\ell$ (cf. $\mirroredLocal()$ predicate in line~\ref{alg:SSVC:removeStaleInfo}).
The predicate $\legitArriving(m,arriving.curr.\ell)$ (cf. Section~\ref{s:interfaceDolev}) is true if $arriving.curr.\ell$ is equal to $p_j$'s local maximal label as it appears in the part of $m$ that refers to the labeling algorithm.
The predicate $\pairInvariants(arriving)$ (line~\ref{alg:SSVC:pairInvar}) is true if $arriving$ is not exhausted (Condition~\ref{eq:notExhausted}) and $arriving.prev.\ell \preceq_{lb} arriving.curr.\ell$ holds.
Hence, if $\legitArriving(m,arriving.curr.\ell) \land \pairInvariants(arriving)$ is false, $m$ contains stale information and existed in the system in the starting system state.

In case the condition of line~\ref{alg:SSVC:exhArrCheck} holds, the algorithm attempts to merge the arriving pair with the local one.
Merging is feasible if $\labelCheck(local, arriving)$ holds.
The predicate $\labelCheck(X,Y)$ (line~\ref{alg:SSVC:legitPairs}) is true if and only if $\compLbls(\{X,Y\}) \land \existsOverlap(X,Y)$ holds.
That is, all the labels of the pairs $X$ and $Y$ must be comparable with respect to the order of the labeling scheme and there exist a pivot item between $X$ and $Y$ (Condition~\ref{eq:afterPrInf}, Section~\ref{s:pair}).
In case $\labelCheck(local, arriving)$ is false, the algorithm calls $\resetLocal(local)$ (line~\ref{alg:SSVC:receiveResetPairs}), since merging must be possible in a legal execution.
Otherwise, merging $local$ and $arriving$ is feasible, and thus the algorithm lets $local$ to have the return value of $merge(local, arriving)$ (line~\ref{alg:SSVC:mergeArrWithLocal}).
In case the new pair value of $local$ is exhausted, $local$ wraps around to the return value of $revive(local)$ in line~\ref{alg:SSVC:reviveCall} (cf. line~\ref{alg:SSVC:resetLocal}).

\Subparagraph{Remarks on algorithm composition}
Note that in case of pair exhaustion Algorithm~\ref{alg:SSVC} forces the repetition of parts 1 and 2 of Figure~\ref{fig:composition} corresponding to the do-forever loop procedure, as well as, parts 5 and 7 of Figure~\ref{fig:composition} corresponding to the message arrival procedure.
That is, the algorithm requests the cancelation of $local$'s labels by the labeling algorithm, the labeling algorithm cancels these labels, and the call to $labelBookkeeping()$ provides a new local maximal label (cf. lines~\ref{alg:SSVC:cancelLabelPairsBegin}--\ref{alg:SSVC:reviveEnd}).
Then, Algorithm~\ref{alg:SSVC} stores the return value of $revive(local)$ in $local$ (line~\ref{alg:SSVC:doForeverRevive} or~\ref{alg:SSVC:reviveCall}).
Thus, if $local$ is not exhausted during a step (line~\ref{alg:SSVC:doForeverRevive} or~\ref{alg:SSVC:reviveCall}), the composition of the labeling and the vector clock algorithm is along the lines of~\cite[Section 2.7]{DBLP:books/mit/Dolev2000}.
The latter holds, since Algorithm~\ref{alg:SSVC} changes the state of the labeling algorithm only when it calls $cancel()$ and this occurs only upon a call to $revive()$ (due to pair exhaustion).
Also, this repetition of step parts occurs at most once per step, since the output pair of $revive(local)$ is by definition not exhausted (cf. line~\ref{alg:SSVC:reviveEnd} and Section~\ref{s:pair}).
Moreover, in case the invariants for $local$ do not hold in line~\ref{alg:SSVC:removeStaleInfo} or~\ref{alg:SSVC:receiveResetPairs}, the call to $\resetLocal()$ in these lines does not change the state of the labeling algorithm, since it only retrieves the local maximal label through $\getLabel()$ (cf. Section~\ref{s:interfaceDolev}).

\remove{ 
\Paragraph{The procedures of the do-forever loop and the message arrival event}
\IS{elad's text}
The procedure of the do-forever loop (lines~\ref{alg:SSVC:doForeverStart} to~\ref{alg:SSVC:doForeverEnd}) asserts the consistency of the local variables (lines~\ref{alg:SSVC:callDolevDoForever} to~\ref{alg:SSVC:doForeverRevive}) and shares $p_i$'s state with all system processors via individual transmissions (line~\ref{alg:SSVC:doForeverEnd}).
The message arrival event (lines~\ref{alg:SSVC:messageReceiveStart} to~\ref{alg:SSVC:reviveCall}) asserts consistency of the local and received variables (lines~\ref{alg:SSVC:callDolevArriving} to~\ref{alg:SSVC:receiveResetPairs} and~\ref{alg:SSVC:reviveCall}) and (if possible) merges the incoming pair information with the local one of $p_i$ (line~\ref{alg:SSVC:rvcElseStart}).
These two procedures also focus on asserting consistency and correct composition with the labeling scheme. 

Algorithm~\ref{alg:SSVC} asserts the correct use of the local maximal legitimate (non-canceled) label that the server algorithm queues. 
We overview this assertion by following Algorithm~\ref{alg:SSVC}'s explicit composition (Section~\ref{s:interfaceDolev}) with the algorithm of the labeling scheme~\cite{DBLP:journals/corr/DolevGMS15}. 
Our description follows the items (1) to (6), which Figure~\ref{fig:composition} presents.
Recall from Figure~\ref{fig:composition} that steps (1)--(3) refer to the iteration of the do-forever loop, which ends with a send operation, and steps (4)--(6) refer to the message receive procedure.

\begin{enumerate}[(1)]
\item A call to $labelBookkeeping()$ 
allows 
processing (line~\ref{alg:SSVC:callDolevDoForever}) of the (server algorithm, which represents the) labeling scheme~\cite[Algorithm 2]{DBLP:journals/corr/DolevGMS15}.

\item A call to $\mirroredLocal()$ validates that the $local$ pair uses only labels that the label (sever) algorithm~\cite[Algorithm 2]{DBLP:journals/corr/DolevGMS15} queues and that its current label is a local maximal (line~\ref{alg:SSVC:removeStaleInfo}). 
A call to $\lblOrdrd(local)$ proceeds with the client consistency check by testing that Condition~\ref{eq:labelsOrdered} holds (Section~\ref{s:pair}). 
In case any of the above two tests fails, $p_i$ restarts the $local_i$ pair. 
Algorithm~\ref{alg:SSVC} completes the testing of the client consistency by checking that the local pair is not exhausted (Condition~\ref{eq:notExhausted}, Section~\ref{s:pair}). 
In case it does, Algorithm~\ref{alg:SSVC} revives $local_i$ (line~\ref{alg:SSVC:doForeverRevive}).

\item Algorithm~\ref{alg:SSVC} encapsulates (by calling $encapsulate()$, line~\ref{alg:SSVC:doForeverEnd}) its messages before sending them (including the variable $\pairs[j]$, which refers to the token passing mechanism).

\item Algorithm~\ref{alg:SSVC} calls $labelBookkeeping(m)$ in line~\ref{alg:SSVC:callDolevArriving} to ensure that the labeling (server) algorithm processes the server part of the message $m$ (labeling algorithm consistency check).

\item The call to $\legitArriving(m,arriving.curr.\ell)$ validates that the $arriving$ pairs uses legitimate labels in a way that is consistent the arriving (server) message $m$ (line~\ref{alg:SSVC:receiveResetPairs}). 
Algorithm~\ref{alg:SSVC} completes the validation of the arriving message by calling $\pairInvariants(arriving)$, which tests conditions~\ref{eq:notExhausted} to~\ref{eq:labelsOrdered} (Section~\ref{s:pair}), and by $\equalStatic(local, \receiverLocal)$ which refers to the token passing mechanism. 
Messages that fail these consistency checks are simply ignored.

\item By calling $\labelCheck(local, arriving)$ (line~\ref{alg:SSVC:invarMerge}), Algorithm~\ref{alg:SSVC} checks the consistency of the state of the calling processor $p_i$ with the one of processor $p_j$, which has sent the $arriving$ pair. 
Here, the test considers label comparability and the pivot existence (Condition~\ref{eq:afterPrInf}, Section~\ref{s:pair}), which is imperative for merging the $arriving$ pair with the $local$ one (figures~\ref{fig:noWrapAround} to~\ref{fig:oneWrapAround}). 
In case this test fails, $p_i$ restarts the $local_i$ pair. 
Otherwise, $p_i$ merges the $arriving$ pair with the $local_i$ one (line~\ref{alg:SSVC:mergeArrWithLocal}) and tests for that the later is not exhausted in line~\ref{alg:SSVC:reviveCall} (Condition~\ref{eq:notExhausted}, Section~\ref{s:pair}). 
In case it does, Algorithm~\ref{alg:SSVC} revives $local_i$ (line~\ref{alg:SSVC:reviveCall}).
\end{enumerate}
} 


\remove{ 
Note that during the stabilization period, the mechanism might mislead the algorithm to compute that the predicate $\equalStatic_i(local_i, \receiverLocal_j)$ in line~\ref{alg:SSVC:exhArrCheck} is true for a value $\receiverLocal_j$ that $p_j$ never sent to $p_i$.
The latter can happen in at most $\capacity N^2$ number of steps, due to corrupted messages that appeared in the starting system state (see proof in Claim~\ref{cl:mechanismBound}, Section~\ref{s:proofs}), hence the token passing mechanism is self-stabilizing.
%
Thus, by the implementation of token passing that we just described, $\pairs_i[j]$ determines only if an arriving message $m_j$ from $p_j$ will be processed by $p_i$ in Algorithm~\ref{alg:SSVC}, but it is completely independent from the computations on $local_i$.
%
%
} 

\remove{ 
It is important to clarify, as we mentioned in Section~\ref{s:interfaceDolev} (Figure~\ref{fig:composition}), that, as long as the pairs in use are not exhausted, our approach for algorithm composition is along the lines of the one in~\cite[Section 2.7]{DBLP:books/mit/Dolev2000}. 
The event of an exhausted local pair, either due to local increments (line~\ref{alg:SSVC:doForeverRevive}) or merging with an arranging pair (line~\ref{alg:SSVC:reviveCall}), results in 
(a) the cancelation of the current local pair, 
(b) a call for an iteration of the labeling scheme, and then 
(c) the retrieval of the current local maximal label from which the revived label is formed (line~\ref{alg:SSVC:reviveBegin}). 
Another important observation is that all other corruptions of the of $local_i$'s state, result in the restart of the $local_i$ pair (line~\ref{alg:SSVC:resetLocal}), which does not include the cancelation of the current label and thus, in these cases, Algorithm~\ref{alg:SSVC} does not breach the approach for algorithm composition depicted by Figure~\ref{fig:composition}. 
} 

\section{Correctness Proof}
\label{s:proofs}

\subsection{The proof in a nutshell}
\label{s:proofNutshell}

We show that Algorithm~\ref{alg:SSVC} is practically-self-stabilizing (Definition~\ref{def:practSelfStab}).
Recall from Section~\ref{s:systemSettings} that the number of system states in which active processors in an execution $R$ deviate from the abstract task is denoted by $f_R$.
For the vector clock abstract task, $f_R$ denotes the number of system states in $R$, in which Requirement\reqs\ does not hold, with respect to the active processors in $R$.
Thus, in Theorem~\ref{thm:reqHold} we show that for any $\pinf$-scale execution $R$, $f_R \ll |R|$ holds (cf. Section~\ref{s:systemSettings}).

\begin{myTheorem}[Algorithm~\ref{alg:SSVC} is practically-self-stabilizing]
\label{thm:reqHold}
For every infinite execution $R$ of Algorithm~\ref{alg:SSVC}, and for every $\pinf$-scale subexecution $R' \seg R$, $f_{R'} = f(R',N) \ll |R'|$ holds.
\end{myTheorem}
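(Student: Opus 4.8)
The overall strategy is to split any $\pinf$-scale execution $R'$ into a bounded-length "recovery" prefix followed by a long suffix that is a legal execution (i.e., in which Requirement~\ref{req:2act} holds), and then count the safety violations. Concretely, I would argue that $f_{R'}$ is bounded by a quantity of the form $\bigO(N^8\capacity)$, which — being a small-degree polynomial in $N$ — satisfies $f_{R'}\ll|R'|$ whenever $|R'|$ is $\pinf$-scale, since $\pinf$ is a practically-infinite (exponential) quantity. The work is in bounding the number of deviating states, which I would break into three sources: (a) states caused by the token-passing mechanism not yet being stabilized; (b) states caused by corrupted $local$ pairs (stale labels, non-ordered pairs, exhausted pairs present in $c_0$, un-mergeable incoming pairs) that trigger calls to $\resetLocal()$; and (c) states tied to label updates in the underlying Dolev et al.~\cite{DBLP:journals/corr/DolevGMS15} labeling scheme.

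First I would establish that the composition with the labeling algorithm preserves the labeling scheme's stabilization guarantees: the only way Algorithm~\ref{alg:SSVC} mutates the server state is through $cancel()$, invoked solely inside $revive()$, and a revived pair is by construction non-exhausted, so this happens at most once per step and never creates a label that the bookkeeping cannot handle. Hence I may import, from~\cite[Theorem~4.2, Lemmas~4.3--4.4]{DBLP:journals/corr/DolevGMS15}, that after $\bigO(N^3)$ label adoptions/creations per processor the system reaches (and for a $\pinf$-scale period stays in) a configuration with a single globally-perceived maximal label and no incomparable or cancelling labels stored anywhere. Next I would show the token-passing module self-stabilizes within $\bigO(\capacity N^2)$ message receptions: once every channel has been flushed of the (at most $\capacity N(N-1)$) corrupted messages present in $c_0$, the predicate $\equalStatic(local,\receiverLocal)$ in line~\ref{alg:SSVC:exhArrCheck} is true only for values $\receiverLocal$ that the sender genuinely sent, so $p_i$ processes $p_j$'s message only after $p_j$ saw the latest $local_i$; this is the key to ruling out an \emph{unbounded} stream of $\resetLocal()$ calls. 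Then I would show that after these two sub-systems stabilize, for any two processors that keep exchanging messages their pairs satisfy conditions~\ref{eq:labelsOrdered}--\ref{eq:afterPrInf} (ordered, matched, non-exhausted, pivot-existing): this follows because fair channels force message exchange, the single maximal label forbids incomparabilities, and no pair can wrap around twice within $\MI-1$ increments (a practically-infinite number of steps), so the only way two synchronized pairs differ is by $curr.m$ or by one of them having wrapped exactly once — precisely figures~\ref{fig:noWrapAround}--\ref{fig:oneWrapAround}.

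With these structural facts in place, I would argue that every state after the recovery prefix satisfies Requirement~\ref{req:2act}: between any two such states a processor's $local_i$ either only changes $curr.m$ or undergoes exactly one $revive()$, and in both cases Equation~\ref{eq:eventCountQuery} recovers $V_i^y[i]-V_i^x[i]$ exactly (never $\bot$), because the pre-revive $curr$ is stored as the post-revive $prev$. For the recovery prefix itself I would bound its length, and hence $f_{R'}$, by summing: $\bigO(\capacity N^2)$ from token stabilization, $\bigO(N^3)$ label updates per processor across $N$ processors with queues of size $\bigO(\msg N+N^2)=\bigO(\capacity N^3)$ giving the dominant $\bigO(N^8\capacity)$-type term, plus $\bigO(\capacity N^2)$ initially-corrupted messages each able to trigger at most one $\resetLocal()$. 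Finally, since each such bound is polynomial in $N$ and $\capacity$ while $|R'|=y\cdot\MI$ for some $y\ll\MI$, I conclude $f_{R'}\ll|R'|$.

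\textbf{The main obstacle} I expect is rigorously ruling out a \emph{perpetual} trickle of safety violations long after the "obvious" corruption has cleared — in particular, showing that once the token mechanism and the label scheme stabilize, no further $\resetLocal()$ or spurious $revive()$ can be triggered by an adversarial-but-crash-prone schedule. This requires a careful invariant argument: that a processor adopting a larger-labelled pair, or wrapping around, always leaves its neighbours in a state where the merge conditions~\ref{eq:afterPrInf} still hold, so that $\labelCheck$ never fails after stabilization. Tying this invariant to the unfair scheduler (processors may sleep for arbitrarily long finite stretches, yet the argument must still go through because $\MI$ is practically infinite) is the delicate part, and it is where the bulk of the formal lemmas will be needed.
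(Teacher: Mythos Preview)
Your plan identifies the right ingredients --- labeling stability, token-mechanism stabilization, bounded label creations --- and the right headline bound, but the architecture differs from the paper's in a way that matters. You commit to a \emph{closure} argument: a bounded recovery prefix after which ``$\labelCheck$ never fails'' and the remaining suffix lies in $\LE$. The paper does not prove, and does not attempt to prove, any such closure. It runs a pure \emph{counting} argument: Lemma~\ref{lem:boundRestartLocal} shows that in any window $R'$ with $|R'|\le\MI$ the total number of steps calling $\resetLocal()$ or $revive()$ is $\bigO(N^8\capacity)$, and Lemma~\ref{lem:LE} (Part~II) together with Corollary~\ref{cor:practicallyStabilizing} then pigeonholes across $\MI$-segments to get long legal \emph{segments} --- plural, not a single suffix. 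The machinery that replaces your closure invariant is the pair-evolution graph and a formal notion of one function call \emph{causing} a later $\resetLocal()$ (Section~\ref{s:PSVC}, Definition~\ref{def:cPcauses}); with it the paper bounds how many distinct pair static parts can ever appear (Claim~\ref{cl:receiveRestartsBound}) and, crucially, how many $\resetLocal()$ calls a single static part can trigger system-wide, including through the \emph{label-recycling} phenomenon in which a processor reuses an old label after a later one gets cancelled (Claim~\ref{cl:pairRecyclingBound}, Part~II). None of that counting infrastructure is present in your plan.

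The obstacle you flag is precisely where the two routes diverge, and the paper sidesteps it rather than solving it. Concretely, your suffix cannot be legal in the sense of Lemma~\ref{lem:LE} once $|R'|>\MI$: a processor can then naturally exhaust and revive twice, and by Remark~\ref{rem:twoReviveBreakReqs} two revives by the same processor already break Requirement~\ref{req:2act} between the suffix endpoints. Moreover, a second effective revive at $p_i$ (say, $p_i$ merges with a neighbour's revived pair and later revives itself) leaves $local_i.prev$ carrying a label that a slow processor $p_m$ has never seen; when $p_m$ processes $p_i$'s pair --- and the token mechanism need not block this, since $p_i$ may still hold $p_m$'s unchanged static part in $pairs_i[m]$ --- $\existsOverlap$ fails and $p_m$ calls $\resetLocal()$. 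The paper does not rule this out; it counts it (Claims~\ref{cl:reviveBound}--\ref{cl:receiveRestartsBound}). The fix is therefore not a sharper invariant but the paper's per-$\MI$-segment counting together with the causality bookkeeping of Section~\ref{s:PSVC}.
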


To the end of proving Theorem~\ref{thm:reqHold}, we first present a set of invariants both for the state of a single active processor and also when considering the states of all active processors in an execution (Section~\ref{s:invariants}).
Given these invariants we present the conditions for an execution to be legal (Section~\ref{s:invariants}).
More specifically, 
we show that an execution is legal if, 
(i) there are no steps that include a call to $\resetLocal()$, and 
(ii) for each processor, there is at most one step in which that processor calls the function $revive()$.
In Section~\ref{s:PSVC} we study the functions that \textit{cause} a call to $\resetLocal()$ or $revive()$.
That is, we define a notion of \textit{function causality}, which bases on the interleaving model (cf. Section~\ref{s:systemSettings}).
Then, in Section~\ref{s:boundRestart}, we prove that for every $\pinf$-scale execution $R'$, the number of steps that include a call to either $\resetLocal()$ or $revive()$ is significantly less than $|R'|$, and combine the above to prove Theorem~\ref{thm:reqHold} (Corollary~\ref{cor:practicallyStabilizing}).

Our proof also requires to show that the labeling algorithm by Dolev et al.~\cite{DBLP:journals/corr/DolevGMS15} remains practically-self-stabilizing (Section~\ref{s:labelingAlgStabilizes}), even if we use \modified{a larger, but yet bounded number of}{twice as many} labels\remove{ (since each pair has two labels)}, by extending the size of the label storage, i.e., $storedLabels_i$, for each $p_i\in P$ (cf. Section~\ref{s:DolevAbstractTask}).

\subsubsection{Notation}
We refer to the values of variable $X$ at processor $p_i$ as $X_i$. 
Similarly, $f_i()$ refers to the returned value of function $f()$ that processor $p_i$ executes.
Throughout the proof, any execution is an execution of Algorithm~\ref{alg:SSVC}.
Let $\msg = \capacity N(N-1)$ be the maximum number of messages, and hence pairs, that can exist in the communication channels in any system state, i.e., $N(N-1)/2$ links, where each link is a bidirectional communication channel of capacity $\capacity$ in each direction.
Moreover, recall that $P(R)\subseteq P$ is the set of processors that take steps during an execution $R$.
When referring to a value $Z_x$ that a variable takes, e.g., $local_i$, we treat $Z_x$ as an (immutable) literal, i.e., a value that does not change.


\subsection{Convergence of the labeling algorithm in the absence of wrap around events}
\label{s:labelingAlgStabilizes}

We generalize the lemmas of Dolev et al.~\cite{DBLP:journals/corr/DolevGMS15} \short{(Section~\ref{s:DolevAbstractTask})} that bound the number of label creations and adoptions of their labeling algorithm (cf. Section~\ref{s:DolevAbstractTask}) to accommodate for the extra number of labels of Algorithm~\ref{alg:SSVC}, and show that the labeling algorithm converges when twice as many labels are processed, due to the fact that each pair includes two labels. 
Recall from Section~\ref{s:algorithms} that if there are no calls to $revive()$ (lines~\ref{alg:SSVC:checkIncrExh}, \ref{alg:SSVC:doForeverRevive}, and~\ref{alg:SSVC:reviveCall}) during an execution, then Algorithm~\ref{alg:SSVC} does not change the state of the labeling algorithm (cf. function definition in lines~\ref{alg:SSVC:reviveBegin}--\ref{alg:SSVC:reviveEnd}).
However, in the starting system state of an execution of Algorithm~\ref{alg:SSVC} there exist twice as many labels as in the starting system state of an execution of the labeling algorithm, due to the two labels that each pair consists of.

%
%

We extend~\cite[Lemma 4.3]{DBLP:journals/corr/DolevGMS15}, which bounds the number of labels that were created by $p_j$ and adopted by $p_i$, after $p_j$ stopped adding labels to the system  (Corollary~\ref{cor:dolevLabelAdoptions}).
In Corollary~\ref{cor:dolevLabelCreations}, we extend~\cite[Lemma 4.4]{DBLP:journals/corr/DolevGMS15}, which bounds the number of labels that $p_i$ creates (Corollary~\ref{cor:dolevLabelCreations}).
We then present Corollary~\ref{cor:labelingSchemeConvergence} that is an implication of corollaries~\ref{cor:dolevLabelAdoptions} and~\ref{cor:dolevLabelCreations} and states that the labeling algorithm of Dolev et al.~\cite{DBLP:journals/corr/DolevGMS15} remains practically-self-stabilizing given the generalized bounds presented in those corollaries.
Corollary~\ref{cor:labelingSchemeConvergence} is an extension of~\cite[Theorem 4.2]{DBLP:journals/corr/DolevGMS15}, which shows that the labeling algorithm ~\cite[Algorithm 2]{DBLP:journals/corr/DolevGMS15} is practically-self-stabilizing. 
Hence, we will use Corollary~\ref{cor:labelingSchemeConvergence} for proving that Algorithm~\ref{alg:SSVC} is also practically-self-stabilizing. 
In Section~\ref{s:boundRestart}, we will extend these bounds to accommodate for the extra labels created by Algorithm~\ref{alg:SSVC}, when a processor calls the function $revive()$.

\begin{corollary}[extension of~{\cite[Lemma 4.3]{DBLP:journals/corr/DolevGMS15}}]
\label{cor:dolevLabelAdoptions}
Let $p_i, p_j \in P$ be two processors. Suppose that $p_j$ has stopped adding labels to the system state, and sending these labels during an execution $R$.
Moreover, suppose that at the system state that immediately follows the last step in which $p_j$ stopped adding labels to the system, the number of labels that have $p_j$ as their creator and that were adopted by any of the $N$ processors in the system is at most $2N$, and the maximum number of labels in transit that were created by $p_j$ is at most $2\msg$.
Processor $p_i$ adopts at most $2N+ 2\msg$ labels $\ell$, such that $\ell.\creator = j$ and $\ell \notin storedLabels_i[j]$.
%
\end{corollary}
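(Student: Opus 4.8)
The plan is to obtain this bound by lifting Lemma~4.3 of Dolev et al.~\cite{DBLP:journals/corr/DolevGMS15} to the composed setting, after first isolating the quantities on which its proof actually depends. Re-examining that proof, the constants $N$ and $\msg$ enter only as upper bounds on two ``supplies'' measured at the system state that immediately follows $p_j$'s last label creation: the number of labels $\ell$ with $\ell.\creator = j$ that are \emph{adopted} (held as some processor's maximal label) and the number of such labels \emph{in transit}. From that state on, $p_j$ contributes no further labels with creator $j$, so neither supply can be replenished; and each time $p_i$ adopts a label $\ell$ with $\ell.\creator = j$ and $\ell \notin storedLabels_i[j]$, the labeling algorithm's bookkeeping permanently inserts $\ell$ into $storedLabels_i[j]$, so $\ell$ is never again the object of such an adoption. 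Hence Lemma~4.3 is really the parametric statement ``supply $a$ adopted plus supply $b$ in transit implies at most $a+b$ such adoptions by $p_i$'', and it is monotone in $a$ and $b$.

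Two points make the instantiation $a = 2N$, $b = 2\msg$ legitimate. First, throughout a subexecution of Algorithm~\ref{alg:SSVC} with no call to $revive()$, Algorithm~\ref{alg:SSVC} does not alter the state of the labeling algorithm: it only queries it (through $\getLabel()$, $\isStored()$, and $labelBookkeeping()$ without a $cancel()$), and $\resetLocal()$ merely re-reads $\getLabel()$, as noted in Section~\ref{s:algorithms}. Therefore the labeling algorithm evolves exactly as in the standalone analysis of Dolev et al., so every structural claim used in the proof of Lemma~4.3 transfers verbatim. Second, the two supplies at the freeze point are $2N$ and $2\msg$, rather than $N$ and $\msg$, precisely because the label-carrying units of the composed system are vector clock pairs and messages, each of which holds two labels of the scheme ($curr.\ell$ and $prev.\ell$) in place of one: there are at most two adopted labels with creator $j$ per processor, hence at most $2N$, and at most two in-transit labels with creator $j$ per message, hence at most $2\msg$. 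These are exactly the hypotheses of the corollary, so the conclusion follows by substituting them into the parametric form of \cite[Lemma~4.3]{DBLP:journals/corr/DolevGMS15}.

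The main obstacle I expect is the bookkeeping underlying the monotone-consumption argument, i.e., making fully precise that after $p_j$ freezes the only way $p_i$ can come to hold a creator-$j$ label is to receive one already present somewhere in the system at the freeze point, and that re-arrivals of an already-stored label are not counted. This is the content of the invariant maintained by $labelBookkeeping()$; since that procedure is untouched by Algorithm~\ref{alg:SSVC} in the absence of $revive()$, the argument of Dolev et al.\ applies without change, and the only genuinely new element is the factor of two in the initial counts, which is immediate from the definition of a vector clock pair (Section~\ref{s:pair}). I would therefore write the proof as a short reduction: recall the parametric reading of \cite[Lemma~4.3]{DBLP:journals/corr/DolevGMS15}, invoke the observation that the absence of $revive()$ calls leaves the labeling algorithm's state unchanged, and instantiate the two supply parameters with the hypothesised $2N$ and $2\msg$ to obtain the bound $2N + 2\msg$.
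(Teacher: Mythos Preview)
Your proposal is correct and follows essentially the same approach as the paper. The paper's own justification is even terser than yours: it simply states that the bound in \cite[Lemma~4.3]{DBLP:journals/corr/DolevGMS15} is $N+\msg$, and that in the setting of Algorithm~\ref{alg:SSVC} each pair carries two labels rather than one, hence the factor of~$2$; your parametric reading of the original lemma and the observation that the labeling algorithm's state is untouched in the absence of $revive()$ calls make explicit exactly what the paper leaves implicit.
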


The bound in \cite[Lemma 4.3]{DBLP:journals/corr/DolevGMS15} is $N+M$, but in the setting of Algorithm~\ref{alg:SSVC} each pair includes two labels, hence the factor of 2.
Thus, setting $|storedLabels_i[j]| = 2N + 2\msg$, $i\neq j$ allows the labeling algorithm to converge.
In the following corollary, we denote with $max_i[i]$ the local maximal label of processor $p_i$, as in the labeling algorithm of Dolev et al.~\cite{DBLP:journals/corr/DolevGMS15} (cf. Section~\ref{s:DolevPaper}).

%

\begin{corollary}[extension of~{\cite[Lemma 4.4]{DBLP:journals/corr/DolevGMS15}}]
\label{cor:dolevLabelCreations}
Let $p_i \in P$ be a processor and $L_i = \ell_{i_0}, \ell_{i_1}, \ldots$ be the sequence of legitimate (not canceled) labels that $p_i$ stores in $max_i[i]$ over an execution $R$, such that 
no counter exhaustions occur during $R$
 and $\ell_{i_k}.\creator = i$, $k\in\N$. 
It holds that $|L_i| \leq 4N^2 + 4N\msg -4N - 2\msg$~\cite{DBLP:journals/corr/DolevGMS15}.
\end{corollary}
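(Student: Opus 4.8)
The crucial observation is that the hypothesis that no counter exhaustion occurs during $R$ means that no processor ever calls $revive()$ (lines~\ref{alg:SSVC:checkIncrExh}, \ref{alg:SSVC:doForeverRevive}, and~\ref{alg:SSVC:reviveCall}), hence never calls $cancel()$, and so --- by the composition discussion in Section~\ref{s:algorithms} --- Algorithm~\ref{alg:SSVC} never modifies the labeling (server) algorithm's state during $R$: the client merely queries it (through $\getLabel()$, $\isStored()$, $\isCanceled()$, $\legitArriving()$) and relays arriving messages to it via $labelBookkeeping()$. Therefore the projection of $R$ onto the labeling algorithm's variables ($max_i[\cdot]$ and $storedLabels_i[\cdot]$) is exactly a standalone run of Dolev et al.'s Algorithm~2, and the proof of \cite[Lemma 4.4]{DBLP:journals/corr/DolevGMS15} can be replayed with essentially a single change.

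\textbf{The single change} concerns only the arbitrary starting state: in the composed system every processor's state and every channel additionally holds vector-clock pairs, and each pair $\langle curr, prev\rangle$ carries two labels, $curr.\ell$ and $prev.\ell$. Hence the adversary can conceal from the labeling algorithm at most twice as many labels with creator $p_i$ as in Dolev et al.'s setting --- at most $2N$ in processor states and at most $2\msg$ in transit --- which is why we enlarge $storedLabels_i[j]$ ($j\neq i$) to $2N+2\msg$ entries, as Corollary~\ref{cor:dolevLabelAdoptions} requires, and $storedLabels_i[i]$ to $2(\msg N + 2N^2 - 2N)+1$ entries. I would then re-run Dolev et al.'s charging argument for the sequence $L_i$: each legitimate creator-$i$ label that $p_i$ newly installs in $max_i[i]$ is charged either to the removal of one of the at most $2N+2\msg$ concealed creator-$i$ labels, or to the revelation of a label lying on a cycle of creator-$i$ labels; since $storedLabels_i[i]$ accumulates every observed creator-$i$ label without duplicates and its enlarged capacity exceeds the maximal such cycle, it eventually becomes saturated, after which $nextLabel()$ must return a label greater than all of them and no further cycle revelations occur. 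Summing the two contributions over the doubled parameters yields $|L_i| \le 4N^2 + 4N\msg - 4N - 2\msg$.

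\textbf{The hard part} will be arguing rigorously that this substitution is \emph{all} that changes --- i.e., that Dolev et al.'s counting is monotone in the number of concealed labels and that the composition introduces no genuinely new behavior. Two points need care. First, I must check that the client's message shape $\langle\bullet,\langle arriving,\receiverLocal\rangle\rangle$ and the token-circulation array $pairs_i[\cdot]$ cannot smuggle in more than a constant-factor surplus of creator-$i$ labels, and that the correct constant for the dominant terms is indeed $2$. Second, I must rule out that the two labels inside a pair being \emph{correlated} (the ordered-pair invariant $prev.\ell \preceq_{lb} curr.\ell$ of Condition~\ref{eq:labelsOrdered}, together with the cancelation of $prev.\ell$) gives the hider any power beyond having $2N+2\msg$ unconstrained concealed labels; intuitively this correlation can only help the finder, but it must be verified so that the worst case used in the count is the legitimate one. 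The remaining work --- confirming that the precise constants in $4N^2 + 4N\msg - 4N - 2\msg$ fall out of plugging the doubled bounds into the formula of \cite[Lemma 4.4]{DBLP:journals/corr/DolevGMS15}, and that the enlarged queues suffice --- is routine bookkeeping in the style of Dolev et al.
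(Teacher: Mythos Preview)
Your proposal is correct and follows the same line as the paper, whose entire justification is the one-line computation $2(\msg + \Sigma_{j\neq i}|storedLabels_i[j]|) = 2\msg + 2(N-1)(2N+2\msg) = 4N^2+4N\msg-4N-2\msg$, obtained by plugging the doubled queue size of Corollary~\ref{cor:dolevLabelAdoptions} into the proof of \cite[Lemma~4.4]{DBLP:journals/corr/DolevGMS15}. Your explicit composition argument and the caveats you flag are sound and more careful than what the paper actually provides; one small slip is that the figure $2(\msg N + 2N^2 - 2N)+1$ you quote for $storedLabels_i[i]$ is the \emph{original} Dolev et al.\ bound, not the doubled one implied by the corollary's conclusion.
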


The bound in Corollary~\ref{cor:dolevLabelCreations} follows by the proof of \cite[Lemma 4.4]{DBLP:journals/corr/DolevGMS15}, which bounds the number of labels existing either in other processors' states or in transit, for which $p_i$ is the label creator.
These labels are at most $2(\msg + \Sigma_{j\neq i} |storedLabels_i[j]|) = 2\msg + 2\left(N-1)(2N+2\msg)\right) = 4N^2 + 4N\msg -4N - 2\msg$ (the second equality holds by Corollary~\ref{cor:dolevLabelAdoptions}). 
\remove{
Note that counter exhaustions do not occur in $R$, hence Algorithm~\ref{alg:SSVC} does not affect the convergence of the labeling algorithm, since the function $revive()$ is never called (line~\ref{alg:SSVC:reviveBegin}) and the remainder of Algorithm~\ref{alg:SSVC} does not change the state of the labeling algorithm.
That is, except for the function $revive()$ (that includes the function $cancel()$ and $labelBookkeeping()$ (which allows the labeling algorithm to take a step), Algorithm~\ref{alg:SSVC} only interfaces with the labeling algorithm through the functions $\isStored()$, $\getLabel()$, $\legitArriving()$, $encapsulate()$ (Section~\ref{s:interfaceDolev}), which do not change the labeling algorithm's state.
Thus, in the absence of wrap around events setting $|storedLabels_i[i]| = 4N^2 + 4N\msg -4N - 2\msg$ is sufficient for the convergence of the labeling algorithm.
%
} 

Corollary~\ref{cor:labelingSchemeConvergence}  is a straightforward extension of~\cite[Theorem 4.2]{DBLP:journals/corr/DolevGMS15} that also holds for the updated bounds of corollaries~\ref{cor:dolevLabelAdoptions} and~\ref{cor:dolevLabelCreations}, since the proof is based on the bounds' existence, rather than the actual bounds.


\begin{corollary}[{extension of~\cite[Theorem 4.2]{DBLP:journals/corr/DolevGMS15}}]
\label{cor:labelingSchemeConvergence}
Let $R$ be an $\pinf$-scale execution of the labeling algorithm~\cite[Algorithm 2]{DBLP:journals/corr/DolevGMS15}, in which no wrap around events occur.
The labeling algorithm is practically-self-stabilizing in $R$, given the number of label creations and adoptions in corollaries~\ref{cor:dolevLabelAdoptions} and \ref{cor:dolevLabelCreations} as well as the updated queue lengths in $storedLabels_i$, where $p_i \in P$.
\end{corollary}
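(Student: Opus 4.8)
The plan is to lift the convergence argument of Dolev et al.~\cite[Theorem 4.2]{DBLP:journals/corr/DolevGMS15} essentially verbatim, observing that wherever a numerical bound enters their proof, only the \emph{finiteness} of that bound together with the matching storage-queue capacities is actually used. First I would record the two facts that make the extension legitimate. \textbf{(1)} In an execution $R$ with no wrap around events, Algorithm~\ref{alg:SSVC} never calls $cancel()$: the only call site is inside $revive()$ (lines~\ref{alg:SSVC:reviveBegin}--\ref{alg:SSVC:reviveEnd}), which is reached only from lines~\ref{alg:SSVC:checkIncrExh}, \ref{alg:SSVC:doForeverRevive}, and~\ref{alg:SSVC:reviveCall}, all guarded by $exhausted(\cdot)$. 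Since $cancel()$ and $labelBookkeeping()$ are the only interface operations that mutate the labeling sub-state (Section~\ref{s:interfaceDolev}), in $R$ the labeling algorithm evolves exactly as the stand-alone algorithm of~\cite[Algorithm~2]{DBLP:journals/corr/DolevGMS15}, except that the starting system state of $R$ may contain twice as many labels (two per vector clock pair, both in transit and in the $\pairs[\,]$ arrays). \textbf{(2)} Corollaries~\ref{cor:dolevLabelAdoptions} and~\ref{cor:dolevLabelCreations} replace the original per-processor adoption bound $N+\msg$ and creation bound by the doubled quantities $2N+2\msg$ and $4N^2+4N\msg-4N-2\msg$, and the queues $storedLabels_i[j]$ ($j\neq i$) and $storedLabels_i[i]$ have been sized accordingly, so the invariants of~\cite[Algorithm~2]{DBLP:journals/corr/DolevGMS15} that keep these queues free of useful-label overflows (at most one legitimate label per queue; every ``hidden'' label eventually logged) remain valid with the new constants.

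Next I would re-run the recovery argument. Following~\cite[Theorem~4.2]{DBLP:journals/corr/DolevGMS15}: whenever a creator $p_k$ is active it eventually queues all of its own labels, cancels the obsolete ones (breaking any cycle $\ell_1 \lb \ell_2 \lb \ell_3 \lb \ell_1$ present in the arbitrary starting state), and generates a label that dominates them all; whenever $p_k$ is inactive, the active processors use their $storedLabels_i[k]$ histories to cancel $p_k$'s cyclic labels. By Corollaries~\ref{cor:dolevLabelAdoptions} and~\ref{cor:dolevLabelCreations} each processor performs at most $4N^2+4N\msg-4N-2\msg$ label creations and, for each other creator, at most $2N+2\msg$ adoptions; summing over the $N$ processors and over creators, the total number of label updates during the whole of $R$ is at most a fixed polynomial $g(N,\capacity)$, with $\msg=\capacity N(N-1)$. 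Each such update, together with the bounded (at most $\bigO(\capacity N^2)$) number of steps needed for it to propagate through the fair channels and for all active processors to re-agree on the globally perceived maximal label, contributes only a bounded number of system states in which the labeling abstract task (Section~\ref{s:DolevAbstractTask}) fails. Hence the number of such states over the entire infinite execution $R$ is at most $g(N,\capacity)\cdot\bigO(\capacity N^2)$, a fixed polynomial in $N$ and $\capacity$, and is therefore $\ll \MI$.

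Finally I would conclude practically-self-stabilization: for any $\pinf$-scale subexecution $R'\seg R$, the states violating the labeling task inside $R'$ form a subset of those in $R$, so their number is $\ll \MI \le \pinf$, while $|R'| = y\cdot\MI$ for some integer $y\geq 1$ with $y\ll\MI$; thus the violation count is $\ll |R'|$, which is exactly Definition~\ref{def:practSelfStab} applied to the labeling task, and this is what Corollary~\ref{cor:labelingSchemeConvergence} asserts. I expect the only real work, and hence the main obstacle, to be the bookkeeping in steps~\textbf{(1)}--\textbf{(2)}: verifying that neither~\cite[Theorem~4.2]{DBLP:journals/corr/DolevGMS15} nor its supporting lemmas secretly depend on the specific values $N+\msg$, $2N^2+\msg N-2N$, etc.\ rather than on ``$\le$ (stored-queue length)'', and that the composition with Algorithm~\ref{alg:SSVC} introduces no label mutation outside $revive()$. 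Once that is confirmed, the quantitative part is a direct substitution of the new constants into the original pigeonhole estimate.
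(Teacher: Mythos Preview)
Your proposal is correct and follows essentially the same approach as the paper. The paper's own justification is in fact even terser than yours: it simply asserts that the extension of~\cite[Theorem~4.2]{DBLP:journals/corr/DolevGMS15} holds for the updated bounds of Corollaries~\ref{cor:dolevLabelAdoptions} and~\ref{cor:dolevLabelCreations} ``since the proof is based on the bounds' existence, rather than the actual bounds,'' which is exactly the observation your steps~\textbf{(1)}--\textbf{(2)} unpack in detail.
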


We remark that in Section~\ref{s:boundRestart} we extend the queue lengths to accommodate for the extra labels that are created due to Algorithm~\ref{alg:SSVC}, i.e., when wrap-around events occur and a processor calls $revive()$.



\subsection{Local and global invariants and their relation to Requirement\reqs}
\label{s:invariants}

In this section we study the local and global invariants that determine if an execution is legal.
We define the predicate $\localInvariants(i)$ (Definition~\ref{def:pairInvariants}), which gives the local invariants for $local_i$ of a processor $p_i$.
That is, if $\localInvariants(i)$ is false in line~\ref{alg:SSVC:removeStaleInfo}, then processor $p_i$ calls $\resetLocal_i()$.
We show that for all functions of Algorithm~\ref{alg:SSVC} that include $local_i$ of a processor $p_i$ in their input, $\localInvariants(i)$ holds (lemmas~\ref{lem:revive}--\ref{lem:restart}).

We also give the conditions for an execution to be legal.
To that end, we show that Requirement\reqs\ is possibly violated in a step where a processor calls $\resetLocal()$ (Remark~\ref{rem:restartBreaksReqs}) and definitely violated when a processor calls $revive()$ in two or more steps in an execution (Remark~\ref{rem:twoReviveBreakReqs}).
Also, we define the predicate $\globalInvariants(R,c)$ for an execution $R$ and a state $c\in R$, which gives the invariants that should hold for every active processor in $R$, so that no step includes a call to $\resetLocal()$ in line~\ref{alg:SSVC:receiveResetPairs}.
Finally, in Lemma~\ref{lem:LE}, we prove that given the bounds $B_{restart}(R)$ and $B_{revive}(R)$ on the number of steps that include a call to $\resetLocal()$ or $revive()$ in an $\pinf$-scale execution $R$, there exists at least one legal subexecution $R^*$ of $R$, such that $|R^*| \nll |R|$ holds under the condition that $B_{restart}(R)\ll |R| \land B_{revive}(R)\ll |R|$ holds.
In sections~\ref{s:PSVC} and~\ref{s:boundRestart}, we prove that the bounds $B_{restart}(R)$ and $B_{revive}(R)$ indeed exist for every $\pinf$-scale execution $R$ (and show that Algorithm~\ref{alg:SSVC} is practically-self-stabilizing).

\begin{definition}[\textbf{The $\localInvariants()$ predicate}]
\label{def:pairInvariants}
Let $R$ be an execution of Algorithm~\ref{alg:SSVC}, $c \in R$ be a system state, and $p_i\in P$.
We say that \textit{the local invariants hold for $p_i$ in $c\in R$}, if and only if, $\localInvariants(i) := \mirroredLocal_i() \land \lblOrdrd_i(local_i)$ (line~\ref{alg:SSVC:removeStaleInfo}) holds.
\end{definition}

\begin{lemma}
\label{lem:revive}
Let $a_x$ be a step in $R$ in which processor $p_i$ calls $\revive_i(local_i)$ when executing line~\ref{alg:SSVC:checkIncrExh}, \ref{alg:SSVC:doForeverRevive}, or~\ref{alg:SSVC:reviveCall} and suppose that $\localInvariants(i)$ holds before $p_i$ executes $\revive_i(local_i)$. 
Then, $\localInvariants(i)$ also holds in the state that immediately follows $a_x$.
\end{lemma}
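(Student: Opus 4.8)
\textbf{Proof plan for Lemma~\ref{lem:revive}.}
The plan is to unfold the definition of $\revive$ (lines~\ref{alg:SSVC:reviveBegin}--\ref{alg:SSVC:reviveEnd}) and verify the two conjuncts of $\localInvariants(i)$ (Definition~\ref{def:pairInvariants}) directly on the pair that $\revive$ returns, using only the hypothesis that $\localInvariants(i)$ held on the input pair together with the guarantees of the labeling algorithm (Section~\ref{s:DolevPaper}) as exposed through the interface of Section~\ref{s:interfaceDolev} and carried over by Corollary~\ref{cor:labelingSchemeConvergence} under the enlarged $storedLabels$ bounds. First I fix notation. Let $Z$ be the value of $local_i$ at the moment $\revive_i(local_i)$ is invoked in $a_x$: for line~\ref{alg:SSVC:checkIncrExh}, $Z$ is the post-increment value, which differs from the pre-increment one only in $curr.m$ (irrelevant, since neither $\mirroredLocal_i()$ nor Condition~\ref{eq:labelsOrdered} mentions $curr.m$); for line~\ref{alg:SSVC:reviveCall}, $Z$ is the output of the preceding $merge$. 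By hypothesis $\localInvariants(i)$ holds on $Z$, i.e. $\isStored(Z.prev.\ell)$, $Z.curr.\ell=\getLabel()$ (so $Z.curr.\ell$ is $p_i$'s stored local maximal label), and Condition~\ref{eq:labelsOrdered} holds on $Z$. By lines~\ref{alg:SSVC:reviveCancelLbls}--\ref{alg:SSVC:reviveEnd}, the value assigned to $local_i$ right after the call is $Z':=\la\la\bar\ell,Z.curr.m,Z.curr.m\ra,Z.curr\ra$, where $\bar\ell$ is the value returned by $\getLabel()$ evaluated immediately after $\cancelPairLabels(Z)$; in particular $Z'.curr.\ell=\bar\ell$ and $Z'.prev.\ell=Z.curr.\ell$.

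The central step is to pin down the effect of $\cancelPairLabels(Z)$ (line~\ref{alg:SSVC:resetLabelPairs}) on the labeling algorithm's state, namely that afterwards: (i) $Z.curr.\ell$ is canceled and still stored, and (ii) $\getLabel()$ returns a legitimate (non-canceled) label $\bar\ell$ with $Z.curr.\ell \lb \bar\ell$. For (i): the $cancel$ calls mark $Z.curr.\ell$ (and $Z.prev.\ell$) as canceled; since $Z.curr.\ell$ is $p_i$'s current maximal label it lies in $storedLabels_i[i]$ and was just accessed, so the subsequent $labelBookkeeping()$ does not evict it (the labeling algorithm retains a processor's own canceled labels and keeps recently-accessed queue entries near the front). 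For (ii): once $Z.curr.\ell$ is canceled, the single $labelBookkeeping()$ call detects that $p_i$'s own maximal label is canceled and — exactly as in a counter-exhaustion event in Dolev et al.'s scheme — installs a fresh maximal label obtained from $nextLabel(storedLabels_i[i])$; since $Z.curr.\ell\in storedLabels_i[i]$, this label is $\lb$-greater than it and is legitimate by construction. (Should the algorithm instead adopt a stored legitimate label of a strictly larger creator, it is still $\lb$-greater than $Z.curr.\ell$, whose creator is $i$; the only case needing exclusion is adoption of a $\lb$-smaller legitimate label, which is ruled out by the labeling algorithm's invariant that $max_i[i]$ dominates $p_i$'s own stored labels.) I would cite Section~\ref{s:labelingAlgStabilizes} and Corollary~\ref{cor:labelingSchemeConvergence} for these facts.

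Given (i) and (ii), checking $\localInvariants(i)$ on $Z'$ is immediate. $\mirroredLocal_i()$ holds because $\isStored(Z'.prev.\ell)=\isStored(Z.curr.\ell)$ is true by (i) and $Z'.curr.\ell=\bar\ell=\getLabel()$. Condition~\ref{eq:labelsOrdered} holds on $Z'$ via its first disjunct, since $Z'.prev.\ell=Z.curr.\ell\lb\bar\ell=Z'.curr.\ell$ and $Z.curr.\ell$ is canceled, both by (i)--(ii). It then remains to argue that whatever $a_x$ executes after the $\revive$ call and the assignment to $local_i$ preserves $\localInvariants(i)$: in all three cases the remaining operations are at most a further $labelBookkeeping()$ (which leaves $local_i$ untouched and, applied to the already-consistent post-$\revive$ state, changes neither $\getLabel()$ nor the stored status of $Z.curr.\ell$), the invariant test of line~\ref{alg:SSVC:removeStaleInfo} or~\ref{alg:SSVC:invarMerge} (which now passes, so no $\resetLocal$), the exhaustion test of line~\ref{alg:SSVC:doForeverRevive} or~\ref{alg:SSVC:reviveCall} (which fails, since the output of $\revive$ has $VC(\cdot)=(0,\ldots,0)$ and hence is not exhausted by Condition~\ref{eq:notExhausted}), and — in the do-forever case — the $encapsulate$/send of line~\ref{alg:SSVC:doForeverEnd} (which does not modify $local_i$). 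Hence $\localInvariants(i)$ holds in the state immediately following $a_x$.

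The main obstacle is claim (i)--(ii): one must reason carefully about the effect of a single $labelBookkeeping()$ call that immediately follows a client-requested cancelation of $p_i$'s current maximal label, establishing that it re-installs a stored, legitimate, strictly $\lb$-greater maximal label rather than leaving $max_i[i]$ canceled, evicting the just-canceled label, or adopting a label incomparable to or $\lb$-smaller than $Z.curr.\ell$. This is precisely where the extended labeling-algorithm guarantees of Section~\ref{s:labelingAlgStabilizes} are needed; everything else is a mechanical unfolding of the definitions of $\revive$ and of the macros in line~\ref{alg:SSVC:mirroredLocalLabels} and Condition~\ref{eq:labelsOrdered}.
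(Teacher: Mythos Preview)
Your proposal is correct and follows essentially the same approach as the paper: unfold the definition of $\revive$, name the old and new labels, and check $\mirroredLocal_i()$ and Condition~\ref{eq:labelsOrdered} on the returned pair. The paper's proof is considerably terser—it simply asserts that $\isStored_i(\ell_{old})$, $\ell_{new}=\getLabel_i()$, and $\ell_{old}\lb\ell_{new}\land\isCanceled_i(\ell_{old})$ all hold ``by $\revive_i()$'s definition,'' without the careful one-step analysis of the labeling algorithm that you flag as the main obstacle.

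One small correction: the facts you need about the post-cancelation behaviour of $labelBookkeeping()$ (that the just-canceled label remains stored and that the fresh maximal label is legitimate and $\lb$-greater) are properties of a single step of Dolev et al.'s algorithm, not asymptotic convergence properties. The right place to point to is the algorithm description in Section~\ref{s:DolevPaper} and the interface semantics in Section~\ref{s:interfaceDolev}; Corollary~\ref{cor:labelingSchemeConvergence} and Section~\ref{s:labelingAlgStabilizes} concern long-run practical stabilization and do not give you the one-step guarantee you need here. The paper sidesteps this entirely by treating it as immediate from the definition of $\revive$, so your extra care is not wrong, just slightly mis-cited.
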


\begin{proof}
Recall that the function $\revive_i(local_i)$ cancels $local_i$'s labels and returns $\la \getLabel_i()$, $local_i.curr.m$, $local_i.curr.m\ra$, $local_i.curr \ra$ (line~\ref{alg:SSVC:reviveEnd}).
Let $local_i$ $=$ $\la\la \ell_{old}$, $m_{old}$, $o_{old}\ra$, $prev\ra$ be the value of $local_i$ before $p_i$ calls $revive_i()$ and $local_i$ $=$ $\la \la \ell_{new}, m_{old}, m_{old}\ra, \la \ell_{old}, m_{old}, o_{old} \ra \ra$, be the value of $local_i$ after $p_i$ calls $revive_i()$.

Since $\localInvariants(i)$ $=$ $\mirroredLocal_i()$ $\land$ $\lblOrdrd_i(local_i)$ holds for $local_i$ $=$ $\la\la \ell_{old}$, $m_{old}$, $o_{old}\ra$, $prev\ra$ (Condition~\ref{eq:labelsOrdered} and line~\ref{alg:SSVC:mirroredLocalLabels}), 
the following hold for $local_i =$ $\la \la \ell_{new}, m_{old}, m_{old}\ra, \la \ell_{old}, m_{old}, o_{old} \ra \ra$ (i.e., after $p_i$ calls $revive_i()$): 
\begin{enumerate}[(i)]
\item $\isStored_i(local_i.prev.\ell)$ $=$ $\isStored_i(\ell_{old})$ holds, since ($\mirroredLocal_i()$ holds for $local_i$ before calling $\revive_i()$,
\item $local_i.curr.\ell$ $=$ $\ell_{new}$ $=$ $\getLabel_i()$ holds, by $\revive_i()$'s definition (lines~\ref{alg:SSVC:reviveBegin}--\ref{alg:SSVC:reviveEnd}), and
\item $(local_i.prev.\ell$ $\lb$ $local_i.curr.\ell$ $\land$ $\isCanceled_i(local_i.prev.\ell)$ $=$ $\ell_{old}$ $\lb$ $\ell_{new}$ $\land$ $\isCanceled_i(\ell_{old})$ holds, again by $\revive_i()$'s definition.
\end{enumerate}
\end{proof}

\begin{lemma}
\label{lem:merge}
Let $m_j$ $=$ $\la\bullet$, $\la arriving_j$, $\receiverLocal_j \ra \ra$ be a message that $p_i$ received from $p_j$ in step $a_i \in R$,  and $c$, $c'$ are system states in $R$, such that $(c$, $a_i$, $c'$, $\bullet)$ $\seg$ $R$.
Suppose that $\mirroredLocal_i()$ $\land$ $\lblOrdrd_i(local_i)$ $\land$ $\equalStatic_i(local_i, \receiverLocal_j)$ $\land$ $\legitArriving_i(m_j$, $arriving_j.curr.\ell)$ $\land$ $\pairInvariants_i(arriving_j)$ $\land$ $\labelCheck_i(local_i$, $arriving_j)$ hold in $c$.
Then, $\localInvariants(i)$ holds in $c'$, i.e., after the execution of line~\ref{alg:SSVC:mergeArrWithLocal} which calls $merge_i(local_i$, $arriving_j)$ and updates $local_i$.
\end{lemma}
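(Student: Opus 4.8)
The plan is to reduce the claim to a statement about labels only, by exploiting the structure of $merge_i$. First I would observe that $merge_i(local_i,arriving_j)$ (lines~\ref{alg:SSVC:merge}--\ref{alg:SSVC:mergeReturn}) returns a pair $\out$ whose label and offset components are copied verbatim from exactly one of its two inputs --- the ``winner'' $W\in\{local_i,arriving_j\}$ selected by $\initializeToLocal$ at lines~\ref{alg:SSVC:useLocal}--\ref{alg:SSVC:mergeInit} --- and which differs from $W$ only in the vector $\out.curr.m$ (lines~\ref{alg:SSVC:newEventsStart}--\ref{alg:SSVC:newEventsEnd}); in particular $\out.curr.\ell=W.curr.\ell$ and $\out.prev.\ell=W.prev.\ell$ as label values, cancellation bits included. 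Since $\localInvariants(i)=\mirroredLocal_i()\land\lblOrdrd_i(local_i)$ reads $local_i$ only through $local_i.curr.\ell$ and $local_i.prev.\ell$ (line~\ref{alg:SSVC:mirroredLocalLabels} and Condition~\ref{eq:labelsOrdered}), never through an $m$ or $o$ field, it suffices to prove $\localInvariants(i)$ evaluated on $W$ against the labeling algorithm's state as it stands right after line~\ref{alg:SSVC:mergeArrWithLocal}. I would also record that $merge_i$ invokes no state-mutating interface call ($\newEvents()$ and the comparisons $=_{\ell,o},<_{\ell,o},\leq_{\ell,o}$ are immutable; no $cancel()$, no $labelBookkeeping()$), so $\getLabel_i()$, $\isStored_i(\cdot)$, $\isCanceled_i(\cdot)$ are unchanged across line~\ref{alg:SSVC:mergeArrWithLocal}, and that the only mutation of the labeling state during $a_i$ before that point is the call $labelBookkeeping(m,j)$ at line~\ref{alg:SSVC:callDolevArriving}, so it is the post-$labelBookkeeping$ labeling state that the hypotheses (which are exactly the predicates checked at lines~\ref{alg:SSVC:exhArrCheck} and~\ref{alg:SSVC:invarMerge}) refer to.

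Then I would split on the winner. If $W=local_i$ (i.e.\ $\initializeToLocal$ holds), the conclusion is immediate: $\out$ has the same two labels as $local_i$, and $\mirroredLocal_i()\land\lblOrdrd_i(local_i)$ is a hypothesis. If $W=arriving_j$, I must establish $\isStored_i(arriving_j.prev.\ell)$, $arriving_j.curr.\ell=\getLabel_i()$, and $\lblOrdrd_i(arriving_j)$. For $arriving_j.curr.\ell=\getLabel_i()$: $\neg\initializeToLocal$ forces $\arrCurr\geq_{\ell,o}\locCurr$ (line~\ref{alg:SSVC:useLocal}); by $\compLbls_i(\{local_i,arriving_j\})$ the two $curr$-labels are $\lb$-comparable, whence $arriving_j.curr.\ell\succeq_{lb}local_i.curr.\ell=\getLabel_i()$ (the latter by $\mirroredLocal_i()$); and since $\legitArriving_i(m_j,arriving_j.curr.\ell)$ certifies $arriving_j.curr.\ell$ is the maximal label announced in the server part of $m_j$, the call $labelBookkeeping(m,j)$ makes $p_i$'s local maximal label equal to $arriving_j.curr.\ell$. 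For the two $prev$-claims I would case-split on which of the three disjuncts of $\existsOverlap_i(local_i,arriving_j)$ (Condition~\ref{eq:afterPrInf}) is witnessed, and in each case combine $\pairInvariants_i(arriving_j)$ (so $arriving_j.prev.\ell\lbeq arriving_j.curr.\ell$), the inequality $\arrCurr\geq_{\ell,o}\locCurr$, $\compLbls$, and $\lblOrdrd_i(local_i)$ to pin $arriving_j.prev$ to be $=_{\ell,o}$ either $\locCurr$ (the old local maximal, hence in storage) or $\locPrev$ (in storage by $\mirroredLocal_i()$); this yields $\isStored_i(arriving_j.prev.\ell)$, and $\lblOrdrd_i(arriving_j)$ then follows because $arriving_j.prev.\ell$ and $arriving_j.curr.\ell$ are transmitted label values whose cancellation bits already make one disjunct of Condition~\ref{eq:labelsOrdered} true --- the cancellation status being transferred from $\locCurr$ resp.\ $\locPrev$ through the $=_{\ell,o}$ identification and $\lblOrdrd_i(local_i)$. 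Finally, if $exhausted(local_i)$ holds after line~\ref{alg:SSVC:mergeArrWithLocal}, then line~\ref{alg:SSVC:reviveCall} calls $\revive_i(local_i)$ and Lemma~\ref{lem:revive} preserves $\localInvariants(i)$; otherwise $local_i$ is untouched through the end of the step; either way $\localInvariants(i)$ holds in $c'$.

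The main obstacle is the $W=arriving_j$ case, and inside it the interaction with $labelBookkeeping(m,j)$: I must argue both that $p_i$ genuinely \emph{adopts} $arriving_j.curr.\ell$ as its maximal label (and not that some still-larger stored label surfaces instead --- which $\compLbls$ together with $\mirroredLocal_i()$ ought to rule out) and that $arriving_j.prev.\ell$ actually resides in $p_i$'s label storage after processing. For both I would appeal to the interface semantics of $\getLabel()$ and $\isStored()$ (Section~\ref{s:interfaceDolev}) and to the labeling-algorithm invariants behind corollaries~\ref{cor:dolevLabelAdoptions}--\ref{cor:labelingSchemeConvergence}, and I expect the structural constraints of $\existsOverlap$ and $\compLbls$ to be exactly tight enough to exclude the configurations in which $arriving_j$ could win while carrying a label $p_i$ has not stored or while $\getLabel_i()$ is something else. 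A secondary, lighter point to pin down is that $labelBookkeeping(m,j)$ does not \emph{break} $\mirroredLocal_i()\land\lblOrdrd_i(local_i)$ in the $W=local_i$ branch (it only adopts labels that it stores and only cancels via recorded evidence), so the hypothesis survives to $c'$ there as well.
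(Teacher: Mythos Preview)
Your proposal is correct and matches the paper's proof in content; the only structural difference is that the paper's outer case split is on the three pivot configurations of Figure~\ref{fig:mergePairs} (and, within each, on whether $\out$ is initialized to $local_i$ or to $arriving_j$), whereas you split first on the merge winner $W$ and defer the $\existsOverlap$ case analysis to the $W=arriving_j$ branch. The substantive sub-arguments coincide: both must argue that after $labelBookkeeping(m,j)$ the value $\getLabel_i()$ is the $\lb$-maximum of $local_i.curr.\ell$ and $arriving_j.curr.\ell$, and both must show that when $arriving_j$ wins, $arriving_j.prev.\ell$ equals one of $local_i.curr.\ell$ or $local_i.prev.\ell$ so that $\isStored_i$ goes through---the paper reads this last point directly off the wrap-around geometry of Figure~\ref{fig:oneWrapAround}, while in your organization you would derive it from $\neg\initializeToLocal$ together with $\pairInvariants_i(arriving_j)$ and $\lblOrdrd_i(local_i)$. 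Your handling of $W=local_i$ is a bit more economical (one uniform case instead of three), at the cost of having to argue separately, as you note, that $labelBookkeeping(m,j)$ does not perturb $\mirroredLocal_i()$ there; the paper absorbs that into its per-figure discussion.
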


\begin{proof}
Let $m_j = \la\bullet$, $\la arriving_j$, $\receiverLocal_j \ra \ra$ be a message that $p_i$ receives from $p_j$ in step $a_i$ and assume that 
$ \mirroredLocal_i() \land \lblOrdrd_i(local_i) \land \equalStatic_i(local_i, \receiverLocal_j) \land \legitArriving_i(m_j, arriving_j.curr.\ell) \land \pairInvariants_i(arriving_j)$ $\land \labelCheck_i(local_i, arriving_j)$ hold in $c$ with respect to $local_i$ and $m_j$.
For brevity, we denote $\xi_i := \isStored_i(local_i.prev.\ell) \land local_i.curr.\ell = \getLabel_i() \land \lblOrdrd_i(local_i)$.
Observe that $merge_i(local_i, arriving_j)$, initializes $\out_i$ either to $local_i$ or to $arriving_j$ (lines~\ref{alg:SSVC:pivot}--\ref{alg:SSVC:mergeInit}).
Then, $\out_i.curr.m[k]$ is updated with $newEvents$, for each $k\in \{1,\ldots, N\}$, and the result is returned and saved to $local_i$, hence lines~\ref{alg:SSVC:newEventsStart} to \ref{alg:SSVC:newEventsEnd} do not change the value of $\xi_i$. 
Therefore, we show that for each of the three different cases in which a pivot exists (Figure~\ref{fig:mergePairs}), $\xi_i$ holds after $local_i$ is updated with $merge_i(local_i, arriving_j)$.

\Paragraph{Case of Figure~\ref{fig:noWrapAround}}
In this case $local_i.itm$ and $arriving_j.itm$ match in label and offset for each $itm \in \{curr, prev\}$, and $\out_i$ is initialized to $local_i$, for which $\xi_i$ holds.
Also, no new label is processed by the labeling scheme ($labelBookkeeping_i(m_j,j)$ in line~\ref{alg:SSVC:callDolevArriving}), hence the return value of $\getLabel_i()$ remains the same (in $c$ and $c'$) after the execution of line~\ref{alg:SSVC:callDolevArriving} in step $a_i$.
Therefore, $\isStored_i(\out_i.prev.\ell) \land \out_i.curr.\ell = \getLabel_i() \land \lblOrdrd_i(\out_i)$ holds, since in $c$ and before $p_i$ calls $merge_i()$ in step $a_i$, $local_i.itm =_{\ell,o} \out_i.itm$ holds for each $itm \in \{curr, prev\}$, and $\xi_i = \isStored_i(local_i.prev.\ell) \land local_i.curr.\ell = \getLabel_i() \land \lblOrdrd_i(local_i)$ also holds.

\Paragraph{Case of Figure~\ref{fig:concurrentWrapAround}}
Let 
$\ell_{loc} := local_i.curr.\ell$, 
$\ell_{arr} := arriving_j.curr.\ell$, 
$\ell_{prv} := local_i.prev.\ell = arriving_j.prev.\ell$, and
$\ell_{max} := \max_{\lb}\{\ell_{loc}, \ell_{arr}\}$ in state $c$.
Note that $\ell_{max}$ exists, since in $c$ (and thus in $a_i$), $\labelCheck_i(local_i, arriving_j)$ holds, which implies that $\compLbls_i(\cX)$ holds, where $\cX$ includes the labels in $local_i$ and $arriving_j$.
Observe that in $a_i$, $merge_i(local_i$, $arriving_j)$ initializes $\out_i$ to the $<_{\ell,o}$-maximum pair in both $curr$ and $prev$ between $local_i$ and $arriving_j$ (lines~\ref{alg:SSVC:pivot}--\ref{alg:SSVC:mergeInit}), i.e., the pair that stores $\ell_{max}$ in its $curr.\ell$.
Thus, $\isStored_i(\out_i.prev.\ell)$ holds in $c'$, since $\isStored_i(\ell_{prv})$ and $\out_i.prev.\ell = \ell_{prv}$ hold in $c$.
Due to line~\ref{alg:SSVC:callDolevArriving}, $\ell_{arr}$ is stored in the variables of the labeling algorithm in $c$, hence $\ell_{max}$ is also stored in the variables of the labeling algorithm.
Therefore, by $\ell_{max}$'s definition, $\getLabel_i()$ returns $\ell_{max}$ in step $a_i$ and after the execution of line~\ref{alg:SSVC:callDolevArriving}, i.e., $\out_i.curr.\ell = \ell_{max} = \getLabel_i()$.
Moreover, $\mirroredLocal_i()$ holds for $local_i$, after $local_i$ is updated with $merge(local_i,arriving_j)$ in line~\ref{alg:SSVC:mergeArrWithLocal} during step $a_i$ (and hence holds in $c'$).

By the definition of the case of Figure~\ref{fig:concurrentWrapAround}, $local_i.curr.o \neq arriving_j.curr.o$ holds in $a_i$.
If $\ell_{arr} \lb \ell_{loc}$, then $\lblOrdrd_i(local_i)$ holds in $c'$, since $\out_i.curr.\ell = \ell_{loc}$, $\out_i.prev.\ell = \ell_{prv}$, and $\lblOrdrd_i(local_i)$ holds in $c$.
Otherwise, if $\ell_{max} = \ell_{arr}$, then $\out_i.prev.\ell = \ell_{prv} \lb \ell_{max} = \out_i.curr.\ell$ holds in $c'$.
Also $\isCanceled_i(\ell_{prv})$ holds in $c'$, since either $\isCanceled_i(\ell_{prv})$ holds in $c$ or $\ell_{prv}$ is canceled in step $a_i$ by the maximal label $\ell_{arr}$.
Therefore, $\lblOrdrd_i(local_i)$ holds in $c'$


%
%

\Paragraph{Case of Figure~\ref{fig:oneWrapAround}}
In this case we also use the definitions of $\ell_{loc}$, $\ell_{arr}$, and $\ell_{max}$ from the previous case (but here $\ell_{prv}$ is not common for $local_i$ and $arriving_j$).
If $local_i.curr.\ell = \ell_{max}$ in $c$, then $\out_i$ is initialized to $local_i$.
Thus, $\isStored_i(\out_i.prev.\ell) \land \out_i.curr.\ell = \getLabel_i() \land \lblOrdrd_i(\out)$ holds in the end of step $a_i$, since 
(i) $local_i.itm =_{\ell,o} \out_i.itm$, for each $itm \in \{curr, prev\}$, 
(ii) $\isStored_i(local_i.prev.\ell) \land local_i.curr.\ell = \getLabel_i() \land \lblOrdrd_i(local_i)$ holds before $merge_i()$ is called, and
(iii) line~\ref{alg:SSVC:callDolevArriving} does not change the return value of $\getLabel_i()$.
Note that this is the only case where $arriving_j.prev.\ell$ is not processed by the labeling algorithm, since it is a canceled label by $p_j$ that either 
(a) exists already in the variables of the labeling algorithm (hence, $\getLabel_i$ returns a larger label than $arriving_j.prev.\ell$ in $p_i$), or 
(b) it can be reused in case all processors that store it as canceled crash before it is introduced in the system by another processor.
Hence, $\mirroredLocal_i()$ holds in $c'$.

Otherwise, $\out_i$ is initialized to $arriving_j$, since $arriving_j.curr.\ell$ $=$ $\ell_{max}$.
In this case, $\isStored_i(\out_i.prev.\ell) \land \out_i.curr.\ell = \getLabel_i()$ holds, since
(i) $\out_i.prev.\ell = \ell_{loc}$ and $isStored_i(\ell_{loc})$ holds, and 
(ii) $\out_i.curr.\ell = \ell_{arr} = \ell_{max} \land \ell_{loc} \lb \ell_{arr}$, hence $\getLabel_i()$ returns $\ell_{arr}$ after the execution of line~\ref{alg:SSVC:callDolevArriving}.
Also, $\pairInvariants_i(arriving_j)$ holds, which implies that $\ell_{loc}$ $=$ $arriving_j.prev.\ell \preceq_{lb}$ $arriving_j.curr.\ell$ $=$ $\ell_{arr} = \ell_{max}$. 
Thus, it either holds that $arriving_j.prev.\ell$ $=$ $arriving_j.curr.\ell$ $=$ $\ell_{arr}$ and $\neg \isCanceled_i(\ell_{arr})$ (since $\getLabel_i() = \ell_{arr}$), or that $arriving_j.prev.\ell = arriving_j.curr.\ell \land \isCanceled_i(arriving_j.prev.\ell)$.
Therefore, $\lblOrdrd_i(arriving_j)$ holds, hence $\lblOrdrd_i(\out_i)$ holds in the end of step $a_i$, thus $\lblOrdrd_i(local_i)$ holds in $c'$.
\end{proof}


Note that during $increment_i()$, $p_i$ changes only in $local_i.curr.m[i]$ (line~\ref{alg:SSVC:increment}) and the other fields of $local_i$ stay intact.
In case $local_i$ is exhausted after that increment, $p_i$ calls $revive_i()$  (line~\ref{alg:SSVC:checkIncrExh}), hence by Lemma~\ref{lem:revive} the 
value of $\localInvariants(i)$ does not change
whenever $p_i$ calls $increment_i()$. 
By lemmas~\ref{lem:revive} and~\ref{lem:merge} we have the following. 

\begin{corollary}
\label{cor:reviveMergeIncrementAndPivot}
Let $R$ be an execution, $p_i, p_j\in P$, and $c_k\in R$ be a system state, which is followed by a step in which $p_i$ calls $\revive_i(local_i)$, or $increment()$, or $merge_i(local_i,arriving_j)$.
If $\localInvariants(i)$ holds in $c_k$, then $\localInvariants(i)$ holds also in $c_{k+1}$.
\end{corollary}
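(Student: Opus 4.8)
The plan is to assemble Corollary~\ref{cor:reviveMergeIncrementAndPivot} as an immediate consequence of the three ways $local_i$ can be mutated during a step: a call to $\revive_i(local_i)$, a call to $increment_i()$, or a call to $merge_i(local_i, arriving_j)$. So first I would make the case split explicit: assume $\localInvariants(i)$ holds in $c_k$, and that $a_k$ (the step after $c_k$) contains one of those three function calls that updates $local_i$. I would argue that the only lines in Algorithm~\ref{alg:SSVC} that alter $local_i$ are lines~\ref{alg:SSVC:increment}, \ref{alg:SSVC:checkIncrExh}, \ref{alg:SSVC:doForeverRevive}, \ref{alg:SSVC:mergeArrWithLocal}, and~\ref{alg:SSVC:reviveCall} (plus $\resetLocal()$, which is out of scope here), so these three cases are exhaustive for the hypothesis as stated.

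For the $\revive_i$ case, I would simply invoke Lemma~\ref{lem:revive}: since $\localInvariants(i)$ holds before $p_i$ executes $\revive_i(local_i)$ in line~\ref{alg:SSVC:checkIncrExh}, \ref{alg:SSVC:doForeverRevive}, or~\ref{alg:SSVC:reviveCall}, it holds in $c_{k+1}$. For the $merge_i$ case, I would note that $merge_i(local_i, arriving_j)$ is only called in line~\ref{alg:SSVC:mergeArrWithLocal}, which is guarded by the conjunction of predicates in lines~\ref{alg:SSVC:exhArrCheck} and~\ref{alg:SSVC:invarMerge} together with $\localInvariants(i) = \mirroredLocal_i() \land \lblOrdrd_i(local_i)$ holding in $c_k$; these are exactly the hypotheses of Lemma~\ref{lem:merge}, so $\localInvariants(i)$ holds in $c_{k+1}$. (I should be careful to observe that if line~\ref{alg:SSVC:reviveCall} then fires because the merged pair is exhausted, Lemma~\ref{lem:revive} again preserves $\localInvariants(i)$ — a short chained application.)

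The remaining case, $increment_i()$, is the one requiring a genuine (if tiny) argument: $increment_i()$ changes only $local_i.curr.m[i]$ (line~\ref{alg:SSVC:increment}), leaving $local_i.curr.\ell$, $local_i.prev.\ell$, and the canceled-status of those labels untouched, so $\mirroredLocal_i()$ and $\lblOrdrd_i(local_i)$ — which depend only on labels and the labeling-algorithm storage, not on $curr.m$ — are unaffected; and if the increment exhausts $local_i$, line~\ref{alg:SSVC:checkIncrExh} calls $\revive_i(local_i)$ and Lemma~\ref{lem:revive} applies. I would present this as the paragraph of text immediately preceding the corollary statement already does, essentially citing lemmas~\ref{lem:revive} and~\ref{lem:merge}. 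I don't anticipate a real obstacle here; the only subtlety worth double-checking is that the guard on line~\ref{alg:SSVC:mergeArrWithLocal} genuinely supplies all six antecedents of Lemma~\ref{lem:merge} in the state $c_k$ (in particular that $\localInvariants(i)$ in $c_k$ gives $\mirroredLocal_i() \land \lblOrdrd_i(local_i)$, and the $\Upsilon$-guard of line~\ref{alg:SSVC:exhArrCheck} supplies $\equalStatic_i$, $\legitArriving_i$, $\pairInvariants_i$, while the else-branch entry-condition supplies $\labelCheck_i$), which is a bookkeeping check rather than a mathematical difficulty.
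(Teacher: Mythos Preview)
Your proposal is correct and matches the paper's own argument almost exactly: the paper dispatches the $\revive$ and $merge$ cases by citing Lemmas~\ref{lem:revive} and~\ref{lem:merge}, and handles $increment_i()$ by observing that line~\ref{alg:SSVC:increment} touches only $local_i.curr.m[i]$ (leaving the label fields on which $\localInvariants(i)$ depends intact), with the exhaustion branch reducing to Lemma~\ref{lem:revive}. Your extra care in verifying that the guards of lines~\ref{alg:SSVC:exhArrCheck} and~\ref{alg:SSVC:invarMerge} supply all six antecedents of Lemma~\ref{lem:merge}, and your remark about chaining Lemma~\ref{lem:revive} after a merge-then-exhaust, are sound refinements the paper leaves implicit.
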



Lemma~\ref{lem:restart} considers the case in which a processor calls $\resetLocal_i()$ (line~\ref{alg:SSVC:removeStaleInfo} or~\ref{alg:SSVC:receiveResetPairs}).


\begin{lemma}
\label{lem:restart}
Let $local_i$ $=$ $\la y, y\ra$, where $y=\la \getLabel()$, $zrz$, $zrz\ra$, be the value of $local_i$ after a call to $\resetLocal_i()$ in line~\ref{alg:SSVC:removeStaleInfo} or~\ref{alg:SSVC:receiveResetPairs}, in a step $a_k\in R$. 
Then, $\localInvariants(i)$ holds for $local_i$ in the state $c_{k+1}$ that immediately follows $a_k$.
\end{lemma}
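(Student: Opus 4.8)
The plan is to unfold $\localInvariants(i) = \mirroredLocal_i() \land \lblOrdrd_i(local_i)$ on the explicit post-reset value $local_i = \la y, y\ra$, where $y = \la \ell^*, zrs, zrs\ra$ and $\ell^* := \getLabel_i()$ as evaluated during step $a_k$, and verify each conjunct by direct substitution. The one structural fact I would record at the outset is that after the reset both items of the pair carry the \emph{same} label and offset, so $local_i.curr.\ell = local_i.prev.\ell = \ell^*$ and $local_i.curr.o = local_i.prev.m = zrs$; hence $VC(local_i) = (zrs - zrs)\,\mmi = zrs$, so the sum in Condition~\ref{eq:notExhausted} is $0 < \MI - 1$ and the freshly reset pair is not exhausted. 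Consequently, in the do-forever loop line~\ref{alg:SSVC:doForeverRevive} does not fire, and in the message-arrival procedure line~\ref{alg:SSVC:reviveCall} sits in the untaken \textbf{else} branch of line~\ref{alg:SSVC:receiveResetPairs}; therefore $\la y, y\ra$ is indeed the value of $local_i$ in $c_{k+1}$. I would also note that $\resetLocal_i()$ (line~\ref{alg:SSVC:resetLocal}) invokes only the immutable interface function $\getLabel()$ (Section~\ref{s:interfaceDolev}) and so does not alter the labeling algorithm's state, whence $\getLabel_i()$ still returns $\ell^*$ in $c_{k+1}$.

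For $\mirroredLocal_i()$, the first conjunct $local_i.curr.\ell = \getLabel_i()$ holds by construction of $y$ together with the immutability remark above; the second conjunct $\isStored_i(local_i.prev.\ell) = \isStored_i(\ell^*)$ holds because $\getLabel()$ returns the largest \emph{locally stored} label (Section~\ref{s:interfaceDolev}), which is in particular stored.

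For $\lblOrdrd_i(local_i)$ (Condition~\ref{eq:labelsOrdered}) I would argue as follows. Since $local_i.prev.\ell = local_i.curr.\ell$ and $\lb$ is irreflexive, the first disjunct of Condition~\ref{eq:labelsOrdered} is impossible, so the obligation reduces to the second disjunct, $local_i.prev.\ell = local_i.curr.\ell \land \neg\isCanceled_i(local_i.curr.\ell)$; the equality is immediate, so everything comes down to $\neg\isCanceled_i(\ell^*)$. This is the only nontrivial point of the lemma and the main obstacle. I would discharge it from the design of the Dolev et al.\ labeling algorithm as recalled in Section~\ref{s:DolevAbstractTask}: $\getLabel()$ returns $max_i[i]$, and in both call sites of $\resetLocal_i()$ a server step has been triggered in the same step $a_k$ by the immediately preceding $labelBookkeeping()$ (line~\ref{alg:SSVC:callDolevDoForever}) resp.\ $labelBookkeeping(m,j)$ (line~\ref{alg:SSVC:callDolevArriving}), which leaves $max_i[i]$ equal either to the $\lb$-greatest legitimate label present in $storedLabels_i$ or, when none exists, to a newly created label (paired with $\bot$, hence not canceled). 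Either way $\ell^*$ is legitimate, and by the immutability of $\getLabel()$ inside $\resetLocal_i()$ it is still legitimate in $c_{k+1}$.

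Combining the two parts yields $\localInvariants(i)$ in $c_{k+1}$. I expect the legitimacy of $\getLabel()$'s output to be the only step requiring care; the rest is substitution into the definitions. It is instructive to contrast this with Lemma~\ref{lem:revive}, where $local_i.curr.\ell$ differs from and $\lb$-dominates the just-canceled $local_i.prev.\ell$, so Condition~\ref{eq:labelsOrdered} is met via its \emph{first} disjunct and the argument never needs $\getLabel()$ to be non-canceled; here the two labels coincide, forcing the second disjunct, which is precisely what makes the non-cancellation of $\getLabel()$ indispensable.
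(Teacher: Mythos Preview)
Your proof is correct and follows essentially the same approach as the paper's: both verify $\mirroredLocal_i()$ and $\lblOrdrd_i(local_i)$ by direct substitution into the post-reset value, relying on the preceding $labelBookkeeping()$ call to guarantee that $\getLabel_i()$ returns a stored, legitimate (non-canceled) maximal label. Your version is in fact slightly more careful, since you explicitly confirm that the freshly reset pair is not exhausted and hence that $local_i=\langle y,y\rangle$ really is the value in $c_{k+1}$ (no subsequent $revive()$ fires in the same step), a point the paper's proof leaves implicit.
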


\begin{proof}
Since $labelBookkeeping_i()$ is called (lines~\ref{alg:SSVC:callDolevDoForever} and~\ref{alg:SSVC:callDolevArriving}) before each call of $\resetLocal_i()$ (lines~\ref{alg:SSVC:removeStaleInfo} and~\ref{alg:SSVC:receiveResetPairs}) and no other function in the lines between the call to $labelBookkeeping_i()$ and $\resetLocal_i()$ changes the variables of the labeling algorithm (lines~\ref{alg:SSVC:storeArriving}--\ref{alg:SSVC:exhArrCheck}), $\getLabel_i()$ returns the maximal label $\ell_{max}$ stored by the labeling algorithm and $local_i.curr.\ell = local_i.prev.\ell = \ell_{max}$.
Hence, $\isStored_i(local_i.prev.\ell) \land local_i.curr.\ell = \getLabel_i()$ hold, and therefore $\mirroredLocal_i()$ holds (cf. condition~\ref{eq:labelsOrdered}).
Also, $\lblOrdrd_i(local_i)$ holds (condition~\ref{eq:labelsOrdered}), since $(local_i.prev.\ell = local_i.curr.\ell \land \neg \isCanceled_i(local_i.curr.\ell)$ holds.
\end{proof}


\paragraph{Conditions for an execution to be legal}
So far in this section, we have shown that $\localInvariants(i)$ holds for the output of every function of Algorithm~\ref{alg:SSVC} that a processor $p_i$ applies on $local_i$.
However, in order to compute queries about the number of events on a single processor between two states (Requirement\reqs) or to the query $\causalPrecedence(local_i, local_j)$,
 we need to compare two vector clock pairs, i.e., pairs that appear in different processors or different system states.
To that end, we present the conditions under which Requirement\reqs\ breaks (remarks~\ref{rem:restartBreaksReqs} and~\ref{rem:twoReviveBreakReqs}).
Then, in Lemma~\ref{lem:LE} we present the conditions under which an execution is legal (Requirement\reqs\ holds), i.e., we show the conditions under which different vector clock pairs can be compared for computing correctly queries about counting events (and by Property~\ref{req:3causality} causal precedence).


\begin{remark}[$\resetLocal()$ breaks Requirement\reqs]
\label{rem:restartBreaksReqs}
We remark that it is possible that Requirement\reqs\ does not hold immediately after the execution of $\resetLocal()$ (lines~\ref{alg:SSVC:removeStaleInfo} and~\ref{alg:SSVC:receiveResetPairs}).
Since after executing $\resetLocal()$ all values in the main and offset of $local_i.curr$ and $local_i.prev$ are set to zero, it is possible to miscounting events when comparing two pairs in the states of active processors.
That is,
\remove{$p_i$ can miscount the events it records for a processor $p_j$, due to the reset to zero (violation of Requirement~\ref{req:correctCounting}), and} 
$p_i$ can miscount its own events when its entry $local_i.curr.m[i]$ is set to zero after a call to $\resetLocal_i()$\remove{ (violation of Requirement~\ref{req:2act})}, except for the case when $local_i$ remains the same before and after the call to $\resetLocal_i()$.
\qed\end{remark}

Consider a message $m_{j,i} = \la \bullet, \la arriving_j, \receiverLocal_j \ra \ra$ that a processor $p_i$ receives from a processor $p_j$, such that $\chi_{i,j} := \equalStatic_i(local_i, \receiverLocal_j) \land  \legitArriving_i(m_j,arriving_j.curr.\ell) \land \pairInvariants_i(arriving_j)$ does not hold  (line~\ref{alg:SSVC:exhArrCheck}).
Since such messages are not processed by Algorithm~\ref{alg:SSVC}, there is no call to $\resetLocal_i()$ or $revive_i()$ in the step that $p_i$ receives $m_{j,i}$, hence no immediate violation of Requirement\reqs.
However, the fields of $m_{j,i}$ that refer to the labeling algorithm's part of the message are processed by the labeling algorithm in $p_i$ in line~\ref{alg:SSVC:callDolevArriving}.
Hence, it is possible that the maximal label of $p_i$ has changed in the step where $p_i$ receives $m_{j,i}$, and that $p_i$ calls $\resetLocal_i()$ in its next step.
In  Section~\ref{s:boundRestart} (Lemma~\ref{lem:boundRestartLocal}) we show that for every $\pinf$-scale execution $R$, there exist bounds in the number of steps that include a call to $\resetLocal()$ or to $revive()$ that are significantly less than $|R|$.


\remove{
\EMS{Important. Iosif, I cannot understand the following sentence. `Note that a corrupted message [[@@ this is abroad expression @@]] in a communication channel between two active processors, i.e., a message $m_{j,i} = \la \bullet, \la arriving_j, \receiverLocal_j \ra \ra$ for which $\chi_{i,j} := \equalStatic_i(local_i, \receiverLocal_j) \land  \legitArriving_i(m_j,arriving_j.curr.\ell) \land \pairInvariants_i(arriving_j)$ does not hold  (line~\ref{alg:SSVC:exhArrCheck}), does not break $\localInvariants(i)$, [[@@ The explanation that you bring here is not clear to me because I understand that a corrupted main label can cause Dolev to cancel both label and then for our algorithm to restart @@]] since only the labeling algorithm takes a step and $local_i$ stays intact.	[[@@ Maybe you mean that first Dolev fails and then we miss a pivot and then we call local restart? In this case, you have to clarify your explanation and add the part that talks about Dolev causing a local restart.  @@]] Also, in Section~\ref{s:boundRestart} (Lemma~\ref{lem:boundRestartLocal}) we show that for any processor $p_i$ that has completed one iteration of the do-forever loop (lines~\ref{alg:SSVC:doForeverStart}--\ref{alg:SSVC:doForeverEnd}), $\chi_i$ holds.' [[@@ What is $\chi_i$? You define  $\chi_{i,j}$ but not $\chi_i$ @@]]}

} 



\begin{remark}[Two calls to $revive()$ by the same processor break Requirement\reqs]
\label{rem:twoReviveBreakReqs}
Let $p_i$ be a processor, and $c_x, c_y$ be two states, such that there exist at least two steps between $c_x$ and $c_y$ in which $p_i$ called $revive_i$.
We remark that we cannot compute correctly the events that occurred between $c_x$ and $c_y$ by comparing $local_i^x$ and $local_i^y$, where $local_i^k$ is the value of $local_i$ in state $c_k$.
We explain why this holds in the following.

\begin{figure*}[t!]
\centering
\includegraphics[scale=1]{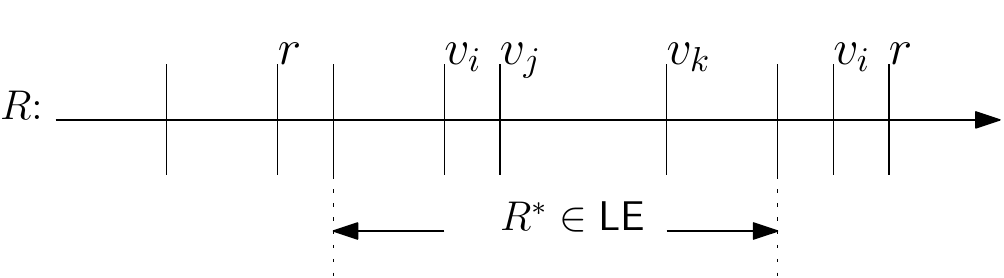}
\caption{Illustration of a legal execution (cf. Lemma~\ref{lem:LE}).
The horizontal line denotes an execution $R$ and the vertical lines highlight specific steps of $R$.
The vertical lines that are marked with $r$, denote a step in which a processor called $\resetLocal()$.
The vertical lines that are marked with $v_x$ denote a step in which a processor $p_x$ called $revive_x()$.
In this example for the segment of $R$, marked as $R^*$, the following hold:
(i) no processor called $\resetLocal()$, and 
(ii) $p_i$ (and every other active processor) called $revive()$ at most once.
Thus, by Lemma~\ref{lem:LE}, $R^*$ is a legal execution, i.e., $R^*\in\LE$.}
\label{fig:restartsLE}
\end{figure*}

Let $c_u$ be the first state after (the step in which) $p_i$ calls $revive_i()$ for the first time after $c_x$, and $c_v$ be the first state after $p_i$ calls $revive_i()$ for the first time after $c_u$.
By the vector clock pair construction and the definition of the function $revive()$ (Section~\ref{s:pair}), $local_i^u.prev$ is the value of $local_i.curr$ immediately before the first call to $revive_i()$ (after $c_x$).
Hence, the pivot item between $local_i^x$ and $local_i^u$ is $local_i^u.prev$, i.e., $(local_i^x.curr.\ell, local_i^x.curr.o) = (local_i^u.prev.\ell, local_i^u.prev.o)$.
Similarly,  $local_i^v.prev$ is the value of $local_i.curr$ immediately before the first call to $revive_i()$ after $c_u$.
Hence, the pivot item between $local_i^u$ and $local_i^v$ is $local_i^v.prev$, i.e., $(local_i^u.curr.\ell, local_i^u.curr.o) = (local_i^v.prev.\ell, local_i^v.prev.o)$.
Thus, the vector clock items $local_i^x.prev$ and $local_i^x.curr$, and specifically the events that $local_i^x.curr.m$ recorded in state $c_x$ do not appear in state $c_v$.
Therefore, irrespective of the number of calls to $revive_i()$ between $c_v$ and $c_y$, it is not possible to count the events in $p_i$ (i.e., the calls to $increment_i()$) between the states $c_x$ and $c_y$ by comparing $local_i^x$ and $local_i^y$.
\qed\end{remark}

In Definition~\ref{def:globalInvariants} we describe the conditions under which $\resetLocal()$ is never called in an execution.
Then, in Lemma~\ref{lem:LE} we give the conditions for an execution to be legal.
We also prove that for every $\pinf$-scale execution $R$, there exists a legal subexecution $R^*$ of $R$, such that $|R^*| \nll |R|$, under the conditions that there exist bounds $B_{restart}(R)$ and $B_{revive}(R)$ on the number of steps that include a call to  $\resetLocal()$, and respectively, $revive()$ in $R$, and $B_{restart}(R) \ll |R| \land B_{revive}(R) \ll |R|$ holds.
We illustrate Lemma~\ref{lem:LE} in Figure~\ref{fig:restartsLE}.

\begin{definition}[\textbf{The $\globalInvariants()$ predicate}]
\label{def:globalInvariants}
Let $R$ be an execution of Algorithm~\ref{alg:SSVC}, $c \in R$ a system state, $\varphi_i \equiv \mirroredLocal() \land \lblOrdrd(local_i)$ (line~\ref{alg:SSVC:removeStaleInfo}), 
and $\psi_{i,j} \equiv \labelCheck(local_i, arriving_j)$ (line~\ref{alg:SSVC:receiveResetPairs}), where $p_i, p_j \in P$ and $m_j= \langle \bullet, \la arriving_j, \receiverLocal_j\ra \rangle$ is a message in the communication channel from $p_j$ to $p_i$.
We define $\globalInvariants(R,c) := \forall_{p_i \in P(R), p_j\in P}\  \varphi_i \land (\neg\chi_{i,j} \lor \psi_{i,j})$.
We say that \textit{the global invariants hold during $R$}, if $\globalInvariants(R,c)$ holds for every $c\in R$.
\end{definition}
 
\begin{lemma}
\label{lem:LE}
Let $R$ be an $\pinf$-scale execution.
(I) For every subexecution $R^*$ of $R$, such that
\begin{enumerate}[(i)]
\item there is no step in $R^*$ in which a processor calls $\resetLocal()$, and

\item for every processor $p_i$ there exists at most one step $a_x \in R^*$ in which $p_i$ calls $revive()$ in $a_x$,

\end{enumerate}
$R^*\in\LE$ holds, i.e., $R^*$ is a legal execution.\\
(II) Moreover, let $B_{restart}(R)$ and $B_{revive}(R)$ be bounds on the number of steps in $R$ that include a call to $\resetLocal()$ and $revive()$, respectively, such that $B_{restart}(R) \ll |R| \land B_{revive}(R) \ll |R|$ holds.
Then, there exists at least one subexecution $R^*$ of $R$ such that $R^*\in\LE \land |R^*| \nll |R|$ holds.
\end{lemma}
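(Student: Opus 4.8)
The plan is to prove the two parts in turn: part~(I) carries the real content, and part~(II) is a pigeonhole argument on top of it.

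For part~(I), recall that showing $R^*\in\LE$ means showing that Requirement~\ref{req:2act} holds for every $p_i\in P(R^*)$, i.e., that for any two states $c_k,c_\ell$ of $R^*$ with $c_k$ before $c_\ell$, the number of $increment_i()$ calls in between equals $V_i^\ell[i]-V_i^k[i]=VC(local_i^\ell)[i]-VC(local_i^k)[i]$ (and hence, via Section~\ref{s:counting}, the $\causalPrecedence$ query is also correct). Condition~(i) is used first: since no step of $R^*$ calls $\resetLocal()$, neither reset site fires (line~\ref{alg:SSVC:removeStaleInfo} nor line~\ref{alg:SSVC:receiveResetPairs}), which is exactly $\globalInvariants(R^*,c)$ holding at every $c\in R^*$ (Definition~\ref{def:globalInvariants}); in particular $\localInvariants(i)$ holds throughout (this also follows from Lemmas~\ref{lem:revive}, \ref{lem:merge}, \ref{lem:restart} and Corollary~\ref{cor:reviveMergeIncrementAndPivot} once it holds at $p_i$'s first do-forever step of $R^*$, up to trimming a constant-length prefix), and every arriving pair admitted by the guard of line~\ref{alg:SSVC:exhArrCheck} is mergeable since $\labelCheck(local_i,arriving_j)$ holds.

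The core of part~(I) is then an induction over the steps of $R^*$ that tracks $local_i$. The only operations modifying it are $increment_i()$, $revive_i()$ and $merge_i()$, and I would establish: (a)~$increment_i()$ adds exactly $1$ to $VC(local_i)[i]$ --- no overflow of entry $i$ occurs without an accompanying $revive_i()$, and with at most one $revive_i()$ in the whole piece no further modular wrap of that entry happens; (b)~$revive_i()$ preserves $p_i$'s own event count relative to a fixed reference item, since the new $prev$ stores the old $curr$ verbatim; and (c)~each $merge_i()$ preserves it, because the merge takes the pointwise maximum over $\{loc,arr\}$ and --- under the invariants maintained along $R^*$ --- no processable arriving pair over-reports a processor's own counter, so the maximum at index $i$ is attained by $loc$. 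Combining (a)--(c) with condition~(ii) and the observation already used in Remark~\ref{rem:twoReviveBreakReqs} --- that a label-adopting merge leaves $prev$ pointing at the same reference item that the preceding $revive_i()$ installed --- yields that between any $c_k$ and $c_\ell$ of $R^*$ the previous item of $p_i$ advances at most once; hence $local_i^k$ and $local_i^\ell$ fall into exactly one of the two good (non-$\bot$) cases of Equation~\ref{eq:eventCountQuery}, in which that query is well-defined and, by (a)--(c), evaluates to the number of $increment_i()$ calls between $c_k$ and $c_\ell$. Thus Requirement~\ref{req:2act} holds and $R^*\in\LE$; Remarks~\ref{rem:restartBreaksReqs} and~\ref{rem:twoReviveBreakReqs} show both conditions are necessary.

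For part~(II), delete from $R$ the at most $B_{restart}(R)$ steps that call $\resetLocal()$; this leaves at most $B_{restart}(R)+1$ reset-free segments of total length at least $|R|-B_{restart}(R)$. Inside each reset-free segment, scanning left to right, start a fresh piece immediately before any step in which some processor calls $revive()$ for the second time since the start of the current piece; each such cut is a distinct $revive()$ step, so there are at most $B_{revive}(R)$ of them overall. Hence $R$ is covered by at most $B:=B_{restart}(R)+B_{revive}(R)+1$ pieces, each with no call to $\resetLocal()$ and at most one call to $revive()$ per processor, so each piece is in $\LE$ by part~(I). Their total length is at least $|R|-B_{restart}(R)$, so the longest piece $R^*$ satisfies $|R^*|\ge(|R|-B_{restart}(R))/B$; since $B$ is polynomial in $N$ and $\capacity$ (hence $B\ll|R|$, cf.\ Theorem~\ref{thm:reqHold}) and the quotient of a $\pinf$-scale quantity by a polynomial is again $\pinf$-scale (cf.\ the discussion of $\ll$ in Section~\ref{s:systemSettings}), we get $|R^*|\nll|R|$, proving part~(II). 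The main obstacle is the case analysis in part~(I): pinning down precisely how $local_i.curr$ and $local_i.prev$ move under label-adopting merges during (possibly concurrent) wrap-arounds, verifying that ``at most one $revive_i()$ per processor'' really forces ``the reference item advances at most once between any two states'', and discharging the supporting claim that under the invariants no circulating pair over-reports a processor's own counter.
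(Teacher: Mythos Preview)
Your approach matches the paper's: Part~I is argued from Remarks~\ref{rem:restartBreaksReqs} and~\ref{rem:twoReviveBreakReqs} (condition~(i) forces $\globalInvariants$ to hold so every admitted message is mergeable, condition~(ii) keeps Equation~\ref{eq:eventCountQuery} out of the~$\bot$ case), and Part~II is the pigeonhole bound $|R^*|\ge |R|/(B_{restart}(R)+B_{revive}(R))$. Your treatment is noticeably more detailed than the paper's --- it does not carry out your step-by-step induction on $local_i$ via claims~(a)--(c), nor your explicit greedy cut construction for Part~II, and it does not isolate the merge-over-reporting issue you flag as the main obstacle --- but the overall line of argument is the same.
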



\begin{proof}
%
We prove Part I of the lemma using remarks~\ref{rem:restartBreaksReqs} and~\ref{rem:twoReviveBreakReqs}.
For Part II we use the definitions of $\pinf$-scale executions and the $\ll$ relation (Section~\ref{s:systemSettings}), as well as, the pigeonhole principle.

\PARAgraph{Proof of Part I} Let $R^*$ be a subexecution of $R$, such that conditions (i) and (ii) of the lemma hold.
We show that Requirement\reqs\ holds throughout $R^*$.
Since no processor calls $\resetLocal()$ in $R^*$ (condition (i)) no event is ever lost in $R^*$.
That is, there is no step in which for a processor $p_i$, $local_i.curr.m \neq zrs$ holds and $p_i$ sets $local_i.curr.m$ to $zrs$ by calling $\resetLocal_i()$, where $zrs$ is the zero vector  (Remark~\ref{rem:restartBreaksReqs}).
Also, since no processor calls $\resetLocal()$ in $R^*$,  $\globalInvariants(R^*,c)$ holds for every $c\in R^*$.
The latter implies that every message that an active processor receives in $R^*$ is either discarded or contains a pair that is merged with the local one.

Recall that by condition (ii) of the lemma, every processor calls $revive()$ in $R^*$ at most once.
Thus, it is always possible to compute the number of events that occurred in each processor between two states in $R^*$.
That is, the query $V_i^y[i] - V_i^x[i]$ (number of events in $p_i$ from state $c_x$ to state $c_y$), for every $p_i$ that is active in $R^*$, is computed by the first two cases of Equation~\ref{eq:eventCountQuery} (Section~\ref{s:pair}), since there is always a pivot item between $local_i^x$ and $local_i^y$ in $R^*$ (i.e., the return value is never $\bot$).
Moreover, by the fact that $\globalInvariants(R^*,c)$ holds for every $c\in R^*$, we have that it is possible to merge every two pairs in $R^*$.
Hence, Property~\ref{req:3causality} holds (i.e., we can compute correctly the query $\causalPrecedence()$), since $\existsOverlap()$ is true in the computation of the query in Section~\ref{s:pair}.

\PARAgraph{Proof of Part II}
Since $R$ is of $\pinf$-scale, and $B_{restart}(R) \ll |R| \land B_{revive}(R) \ll |R|$ holds, (by the pigeonhole principle) there exists at least one segment $R^*$ of $R$ in which conditions (i) and (ii) of the lemma hold, such that $|R^*| \nll |R|$.
Thus, by Part I of the lemma, $R^*\in\LE$.
In fact, the maximal such $R^*$ is of size at least $|R|/(B_{restart}(R) + B_{revive}(R))$, hence $|R^*|\nll \pinf$ indeed holds (cf. Section~\ref{s:systemSettings}).
 \end{proof}
 
By Lemma~\ref{lem:LE}, we need to show that for every $\pinf$-scale execution $R$ the bounds $B_{restart}(R)$ and $B_{revive}(R)$ of the lemma statement indeed exist and also that $B_{restart}(R) \ll |R| \land B_{revive}(R) \ll |R|$ holds.
 We do that in Sections~\ref{s:PSVC} and~\ref{s:boundRestart}.
 In the end of Section~\ref{s:boundRestart} we complete the proof by showing that Algorithm~\ref{alg:SSVC} is practically-self-stabilizing.

%


\remove{ 

\IS{this lemma refers to requirement~\ref{req:correctCounting}. I think we should drop this requirement and thus this lemma.}
\begin{lemma}
%
Suppose that $\globalInvariants(R_1,c)$ holds for every state $c \in R_1$.
Then there exists a partition of $R$, $R_1 = R^*\circ R_2$, 
such that Requirement\reqs\ holds during $R^*$, i.e., $R^* \in \LE$, and $R_2$ is temporary.
%
\label{lem:ReqEventuallyHold}
\end{lemma} 

\begin{proof} 
Let $R_1$ be \remove{ \IS{a non-temporary}} an \remove{$\pinf$-scale} execution, such that $\globalInvariants(R_1,c)$ holds for every $c\in R_1$ and recall that $P(R_1)$ denotes the set of processors that take steps in $R_1$.
%
%
Moreover, we denote by $a_i \in R_1$ a step that processor $p_i \in P(R_1)$ takes. 
We first show that $a_i$ does not include a call to $\resetLocal_i()$ (lines~\ref{alg:SSVC:doForeverRevive} and~\ref{alg:SSVC:receiveResetPairs}).

By the definition of $\globalInvariants(R_1,c)$, the if-statements in lines~\ref{alg:SSVC:removeStaleInfo}  and~\ref{alg:SSVC:invarMerge} are false, 
for every $c\in R_1$.
That is, $\globalInvariants(R_1,c)$ holds for every $c\in R_1$, which by its definition, implies that 
(i) $\mirroredLocal_i() \land \lblOrdrd_i(local_i)$ holds for every $p_i\in P(R_1)$ (line~\ref{alg:SSVC:removeStaleInfo}) 
and (ii) $\labelCheck_i(local_i, arriving_j)$ (line~\ref{alg:SSVC:invarMerge}) is true for every $p_i \in P(R_1)$ and $p_j\in P$. 
%
Hence, there is no step in $R_1$ in which a processor $p_i\in P(R_1)$ that takes that step calls $\resetLocal_i()$.
Moreover, for every message that $p_i$ receives it either ignores that message or calls $merge_i()$.

%
Thus, all vector clock pairs that become relevant to a processor in $P(R_1)$ that is active in $R$, eventually become relevant to all processors in $P(R)$ that are active in $R$, 
 which implies that (i) Requirement~\ref{req:correctCounting} holds during a prefix $R^*$ of $R_1$ for which all such messages have been received (thus we exclude a temporary suffix $R_2$ of $R_1$) and that 
(ii) requirement~\ref{req:2act} holds during $R_1$, since $\resetLocal_i()$ is never called and all the arriving vector clock pairs are either ignored or merged with $local_i$.
Hence the proof is complete.
\end{proof} 

\IS{suggestion: Let $R$ be an execution of Algorithm~\ref{alg:SSVC} and $R_x$ be a subexecution of $R$, such that the processors that take the steps that immediately precede and follow $R_x$ call $\resetLocal()$.
Lemma~\ref{lem:ReqEventuallyHold} shows that when excluding a temporary suffix of $R_x$ (and for every subexecution with the same properties) the remaining subexecution is in $\LE$.
We depict this property in Figure~\ref{fig:restartsLE}.}


 } 

\subsection{Pair evolution graph and function causality}
\label{s:PSVC}

In this section we establish that a call to $\resetLocal()$ in a step $a_x$ of an execution $R$ is \textit{caused} only due to either 
(i) stale information that resided in the system in the starting configuration, or
(ii) a call to $\resetLocal()$ in a step that precedes $a_x$, or
(iii) a call to $revive()$  in a step that precedes $a_x$.
To that end, we define a notion of \textit{function causality} between functions that processors call in $R$, which bases on a graph that relates pairs when they are either in the input or output set of a function that is called during a step of $R$.
We refer to that graph as the \textit{pair evolution graph} and note that it is an illustration of the interleaving model (Section~\ref{s:systemSettings}).

Our aim is to highlight all the changes that occur to any pair during an execution due to functions that processors apply on these pairs in the steps they take, as well as the relations between these functions.
The illustration that we bring resembles Lamport's \textit{happened before} relation~\cite{DBLP:journals/cacm/Lamport78}. 
In our work, we study the events that can \textit{cause} a call to the function $\resetLocal()$, rather than just the order in which the events occur. 
In the following paragraphs, we gradually define the pair evolution graph by identifying the functions that can be applied on a pair, the transition from the input to the output pair when applying a function, and the pairs that appear in the system throughout an execution.
We then define function causality (Definition~\ref{def:cPcauses}), basing on the pair evolution graph.
\remove{Thus, we need definitions that highlight details and observations, which we provide in the following, that are slightly different than the ones provided by the happened before relation. }


\remove{ 
In the following, we bound the number of steps in which processor $p_i$ calls $\resetLocal()$ during an execution of size at most $\MI$, either due to stale information or due to other events, i.e., other function calls that cause $p_i$ to call $\resetLocal()$.
To that end, we define the \textit{pair evolution graph} and the notion of \textit{function causality}, which will allow us to argue about when (and how many times) a function call causes a subsequent call to $\resetLocal()$.

\EMS{Iosif, I suggest `other events [[that can]] cause $p_i$ to call $\resetLocal()$.' It is simpler. Let us change `\textit{function causality}, which will allow us' to `\textit{function causality}[[ that allows]] us' I make more sense to use that and not which here.}

Recall from Remark~\ref{rem:restartBreaksReqs} and Corollary~\ref{cor:restartBreaksReq} that a call to $\resetLocal()$ is the only function of Algorithm~\ref{alg:SSVC} that can cause a deviation from the vector clock abstract task, i.e., Requirement\reqs\ does not hold after a call to $\resetLocal()$.
Thus, after defining function causality, we argue about which functions cause a call to $\resetLocal()$ (Remark~\ref{rem:causesOfRestart}) and then show that there can be at most temporary number of calls to $\resetLocal()$ during an execution that has size at most $\MI$ (Lemma~\ref{lem:boundRestartLocal}).

\EMS{Iosif, is it `requirements~\reqs\ do not hold after a call to' or `requirements~\reqs\ do not [[always]] hold after a call to'? The former is more restrictive. Also, is it ` can be at most temporary number' or ` can be at most [[a]] temporary number'? --- I am not a native speaker and I am not sure.}
} 

\paragraph{Functions called during a step.} 
We start by listing the functions that a processor may call during a step.
Let $R$ be an arbitrary execution of Algorithm~\ref{alg:SSVC} and $(c_x, a_x, c_{x+1})\seg R$ be a subexecution of $R$, such that processor $p_i\in P$ takes step $a_x$.
During the step $a_x$ and by the definition of the interleaving model (Section~\ref{s:systemSettings}), $p_i$ can either
\begin{enumerate}[(1)]
\item run lines~\ref{alg:SSVC:doForeverStart}--\ref{alg:SSVC:doForeverRevive} and send one message (out of $N-1$) to another processor due to line~\ref{alg:SSVC:doForeverEnd}, or
\item send one message (out of at most $N-2$ remaining messages) to another processor due to line~\ref{alg:SSVC:doForeverEnd}, or
\item run the message arrival procedure in lines~\ref{alg:SSVC:messageReceiveStart}--\ref{alg:SSVC:reviveCall}.
\end{enumerate}
After giving some insights on the send operation and $labelBookkeeping()$, we detail the functions that $p_i$ can call during $a_x$.

According to the interleaving model (Section~\ref{s:systemSettings}), each step includes a single send or receive operation.
Hence, a complete iteration of Algorithm~\ref{alg:SSVC}'s do-forever loop (lines~\ref{alg:SSVC:doForeverStart}--\ref{alg:SSVC:doForeverEnd}) requires $N-1$ steps (not necessarily consecutive), due to the $N-1$ messages to be send to the processor's neighbors.
In detail, we assume that when a processor $p_i\in P$, runs line~\ref{alg:SSVC:doForeverEnd}, it calls the function $\clone_i(local_i)$, which creates a separate copy of  $local_i$ that is then used in every of the $N-1$ calls of $encapsulate_i(local_i)$ and remains intact during those calls, regardless of the changes that occur to $local_i$ after the first (out of $N-1$) send operation.
We assume $p_i$ automatically discards the output pair of $\clone_i(local_i)$ in the last of the $N-1$ steps of that send operation.
These $N-1$ steps can be interleaved with steps of other processors or with steps in which $p_i$ runs the message arrival procedure (lines~\ref{alg:SSVC:callDolevArriving}--\ref{alg:SSVC:reviveCall}) and possibly changes $local_i$, but not the copy of $local_i$ that is used to complete the send operation.


Thus, during a step a processor can call functions from  $\F_1 := \{increment()$, $revive()$, $labelBookkeeping()$, $\resetLocal()$, $\clone()$, $encapsulate()\}$ in case (1), 
$\F_2 := \{encapsulate()\}$ in case (2), and
$\F_3 := \{labelBookkeeping()$, $merge()$, $revive()$, $\resetLocal()\}$ in case (3).
We define $\F = \F_1\cup \F_2 \cup \F_3$ to be the set of functions that a processor can call during a step. 

\paragraph{Transitions.} 
We define the notion of transitions to denote the application of a single function on a pair during an execution.
We say that $(Z,f,Z')$ is a \textit{transition} in $R$, if there exists a step $a_i \in R$ of a processor $p_i\in P$ and a function $f\in \F$, such that $p_i$ calls $f(Z,\bullet)$ in step $a_i$ with output $Z'$.
In this paragraph, we list all possible transitions for every function $f\in \F$.
In the following paragraphs, we define the pair evolution graph of an execution, basing on the set of all transitions that occurred during that execution.

\subparagraph{Transitions of pairs that stay intact between consecutive steps.}
We define the  transition $(Z,\lambda,Z)$, which denotes that the pair $Z$ (either in a communication channel or in $local_j$ of a processor $p_j\in P$) remained intact between a step $a_x$ and the beginning of consecutive step, $a_{x+1}$.

\subparagraph{Transitions due to a call to $labelBookkeeping()$.}
We define the transition $(Z$, $labelBookkeeping()$, $Z')$ to denote a call to $labelBookkeeping_i()$ that does not change the state of the labeling algorithm, and distinguish the following cases.
When $p_i$ calls $labelBookkeeping_i()$ in the do-forever loop (line~\ref{alg:SSVC:callDolevDoForever}), we consider the transition $(local_i, labelBookkeeping(), local_i)$, since $local_i$ stays intact after the call to $labelBookkeeping_i()$ ends.
Moreover, when $p_i$ calls $labelBookkeeping_i()$ in the message arrival procedure for a message $m = \la\bullet$, $arriving_j\ra$ (line~\ref{alg:SSVC:callDolevArriving}), we consider the transition $(arriving_j, labelBookkeeping(), local_i)$, since information from $m$ is incorporated to the local label storage.


We consider the cases in which $p_i$ possibly changes the state of the labeling algorithm during a call to $labelBookeeping_i()$  by either 
\begin{enumerate}[(i)]
\item canceling a label and creating or recycling another label  during a call to $revive_i()$ (in fact $revive()$ calls $\cancelPairLabels()$ in line~\ref{alg:SSVC:reviveCancelLbls}, which includes a call to $labelBookkeeping_i()$ in line~\ref{alg:SSVC:resetLabelPairs}), or 
\item discovering stale information in the label storage in line~\ref{alg:SSVC:callDolevDoForever}, or
\item receiving a new label during the message arrival procedure in line~\ref{alg:SSVC:callDolevArriving}.
\end{enumerate}
We denote any of the changes in the state of the labeling algorithm that are stated above with the (abstract) function $\newLabel()$ and remark that whenever a processor calls $\newLabel()$, the labeling algorithm deviates from its abstract task (cf. Section~\ref{s:DolevPaper}).


We define transitions for cases (i)--(iii) as follows:
(i) $(local_i$, $labelBookkeeping()\circ\newLabel_i()$, $local_i)$ refers to the case where $p_i$ calls $labelBookkeeping_i()$ during a call to $revive_i()$ (lines~\ref{alg:SSVC:reviveBegin}--\ref{alg:SSVC:reviveEnd})), which includes a call to $\newLabel_i()$, 
(ii)~$(local_i$, $labelBookkeeping()\circ\newLabel()$, $local_i)$ refers to the case where $p_i$ calls $labelBookkeeping_i()$ in line~\ref{alg:SSVC:reviveCancelLbls}, which includes a call to $\newLabel_i()$, and
(iii)~$(arriving_j$, $labelBookkeeping()\circ\newLabel()$, $local_i)$ refers to the case where $p_i$ calls $labelBookkeeping_i()$ in line~\ref{alg:SSVC:callDolevArriving} due to an arriving message $m = \la\bullet$, $arriving_j\ra$, which includes a call to $\newLabel_i()$.
Moreover, a call to $revive()$ on $local_i$ is illustrated by the transition $(local_i$, $labelBookkeeping()\circ\newLabel()$, $local_i)$, denoting the call to $\cancelPairLabels_i()$, 
followed by the transition $(local_i$, $revive()$, $revive_i(local_i))$, to denote the creation of $revive_i()$'s output pair.

\subparagraph{Transitions due to a call to $increment()$ or $revive()$.}
We illustrate a call to the $increment_i()$ function by $p_i$ through a number of transitions, depending on the value of $exhausted_i(local_i)$ in line~\ref{alg:SSVC:checkIncrExh}.
In case $exhausted_i(local_i)$ is false, then $increment_i()$ is only changing $local_i$ to a new value in line~\ref{alg:SSVC:increment}, say $local_i'$.
In this case, the transition $(local_i$, $increment()$, $local_i')$, captures all the changes that occurred to $local_i$ during the call to $increment_i()$.
Otherwise, if $exhausted_i(local_i)$ is true, then a call to $revive_i()$ (line~\ref{alg:SSVC:checkIncrExh}), follows the update from $local_i$ to $local_i'$.
In this case, we illustrate the call to $increment_i()$ with the three following transitions.
The first is the transition $(local_i$, $increment()$, $local_i')$, which indicates the change that occurs to $local_i$ in line~\ref{alg:SSVC:increment}.
The two following transitions are $(local_i'$, $labelBookkeeping()\circ\newLabel()$, $local_i')$ and $(local_i'$, $revive()$, $revive_i(local_i'))$, to denote the call to $revive_i()$ in line~\ref{alg:SSVC:checkIncrExh}.


\subparagraph{All possible transitions.}
We define the set of all transitions that are possible in a step $a_i\in R$ to be the set $\T_i := \cup_{f\in\F}\{ (Z$, $f$, $f(Z,\bullet)\}$ $\cup$ $\{(Z$, $\lambda$, $Z)$, $(Z$, $labelBookkeeping_i()\circ\newLabel_i()$, $Z'\}$, where $f(Z,\bullet)$ denotes the output pair of $f$ when $Z$ is part of its input.
Moreover, we define the set of all transitions that occur during a step to be the set $E_i(R) \subset \T_i$.
Given a transition $e = (Z, f, Z')$, we refer to $f$ as the \textit{tag} of $e$, and
the function $T_R : E(R) \to \F$ returns the tag $f$ of $e$, i.e., $T_R(e) = f$.

\begin{figure}[t!]
\qquad\qquad\includegraphics[scale=0.76]{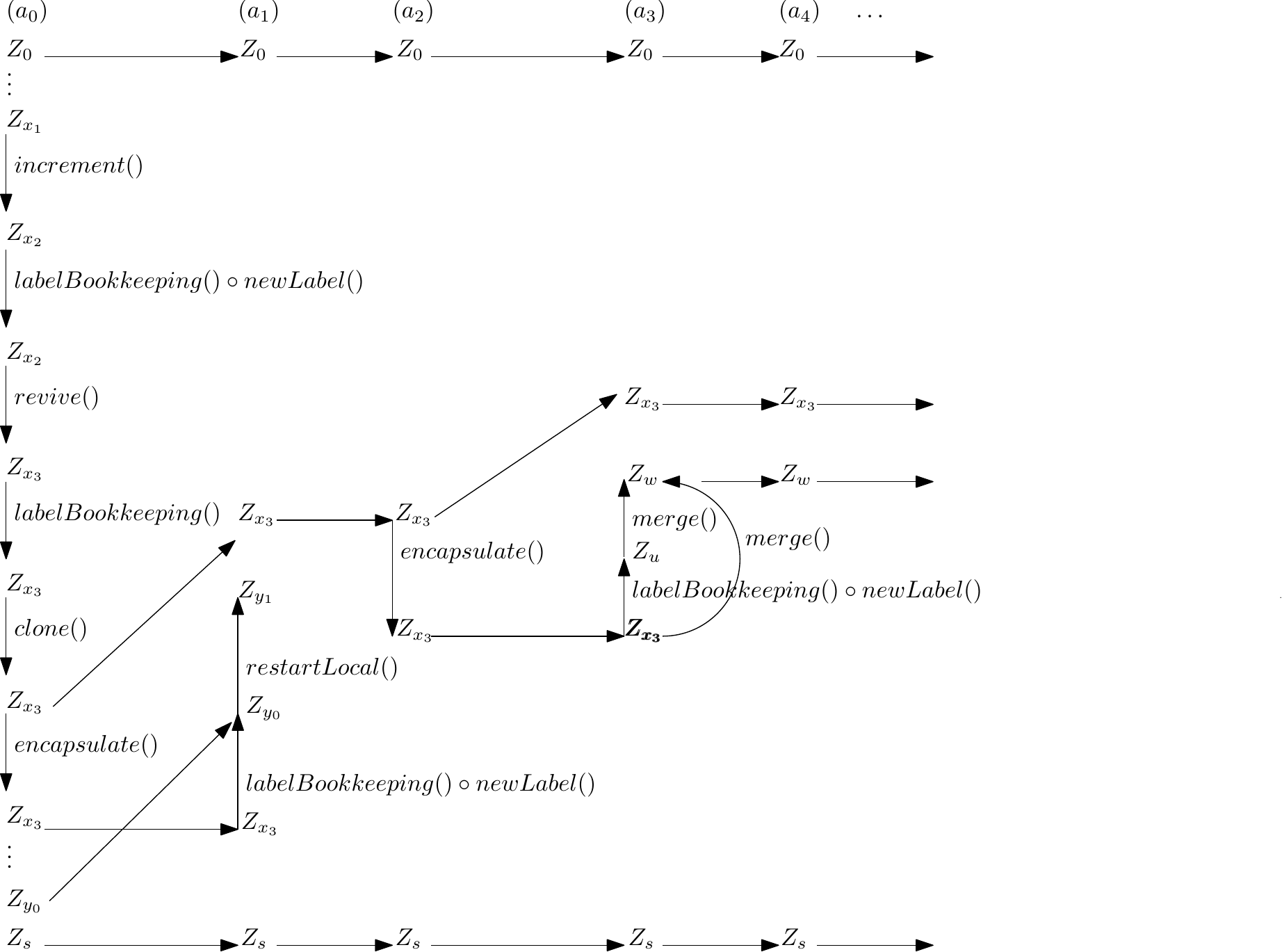}
\caption{\small An example of the pair evolution graph $\pgraph(R)$ of an arbitrary execution $R$.
For simplicity, we illustrate any edge $(Z,\lambda, Z')\in E(R)$ without its tag $\lambda$.
In step $a_0$, processor $p_i$ calls $increment$ on $Z_{x_1} = local_i$, which corresponds to the following three transitions:
$(Z_{x_1}$, $increment()$, $Z_{x_2})$ refers to line~\ref{alg:SSVC:increment} and
 $(Z_{x_2},$ $\lbnl$, $Z_{x_2})$ together with $(Z_{x_2},$ $revive()$, $Z_{x_3})$ refer to the call to $revive()$ in line~\ref{alg:SSVC:checkIncrExh}. 
Then, $p_i$ does an iteration of its do-forever loop (lines~\ref{alg:SSVC:doForeverStart}--\ref{alg:SSVC:doForeverEnd}), which ends with one send operation, say to processor $p_j$.
The latter corresponds to the transitions 
$(Z_{x_3},$ $\lB$, $Z_{x_3})$,
$(Z_{x_3},$ $\clone()$, $Z_{x_3})$, and
$(Z_{x_3},$ $encapsulate()$, $Z_{x_3})$.
In step $a_1$, $p_j$ receives $p_i$'s message (lines~\ref{alg:SSVC:messageReceiveStart}--\ref{alg:SSVC:reviveCall}), which cannot be merged with $Z_{y_0} = local_j$ due to label incomparability, and hence $p_j$ calls $\resetLocal_j()$ (line~\ref{alg:SSVC:receiveResetPairs}).
This step corresponds to the transitions 
$(Z_{x_3},$ $\lbnl$, $Z_{y_0})$ and
$(Z_{y_0},$ $\resetLocal()$, $Z_{y_1})$.
In step $a_2$, $p_i$ does one more send operation (line~\ref{alg:SSVC:doForeverEnd}), say to processor $p_k$, which corresponds to the transition $(Z_{x_3},$ $encapsulate()$, $Z_{x_3})$.
Then, in step $a_3$, $p_k$ receives $p_i$'s message and merges $Z_{x_3}$ to $Z_{u} = local_k$ (lines~\ref{alg:SSVC:messageReceiveStart}--\ref{alg:SSVC:reviveCall}).
This step corresponds to the transitions 
$(Z_{x_3},$ $\lbnl()$, $Z_{u})$, 
$(Z_{u},$ $merge()$, $Z_{w})$, and
$(Z_{x_3},$ $merge()$, $Z_{w})$.}
\label{fig:pairEvolutionGraph}
\end{figure}

\paragraph{All pair values during an execution.}
Our definitions consider all the pair values that appear in the system during $R$. 
These values can appear in the data field of a message that resides in the communication channel, or in the state of a processor. 
Additionally, we consider values of pairs that appear temporarily during a step, because we are interested in the exact values that the algorithm functions compute, and thus we unseal the encapsulation of the step atomicity (Section~\ref{s:systemSettings}).
%
Thus, for a step $a_i\in R$ that follows the state $c_i$, we define $V_i(R)$ to be the collection (with duplicates) of pairs that includes:
(i) all pairs that appear in the state of every processor in $c_i$, and
(ii) all the pairs that are outputs of functions that are called during the step $a_i$.

Consider two pairs that are identical but appear either (a) in different states or (b) in the same state but one appears in the communication channel, and the other one appears either in a different message or in the processor's local pair.
Then these two pairs appear as different elements in $V_i(R)$. 
Moreover, since the output of $\clone()$, $encapsulate()$, and $labelBookkeeping()$ equals the input these pairs appear twice in $V_i(R)$.
At any time, the size of $V_i(R)$ is bounded by $N+ \msg$ plus the number of pairs that are outputs of the functions that a processor calls during step $a_i$.
This bound holds due to the fact that we have $N$ processors and $\msg = \capacity N(N-1)$ is the maximum capacity of pairs in the communication channels. 
%
Note that this definition of $V_i(R)$ makes the definition of pair evolution graphs more intuitive.
%

%



\paragraph{Pair evolution graph.} 
We define a graph that illustrates the evolution of all the pair values that appear in the system during $R$, according to the interleaving model.
In this graph, the vertices are all the pair values that appear in the data field of every message and the states of every processor (including the intermediate stages that steps use for their computation) of an execution $R$. 
The graph's edges are all the transitions 
between couples of pairs that occur during $R$.

We define the \textit{pair evolution graph} of an execution $R$ to be the directed and layered graph with tagged edges $\pgraph(R) = (V(R), E(R))$, where $V(R) = \cup_{a_i\in R} V_i(R)$, $E(R) = \cup_{a_i\in R} E_i(R)$, and an edge $(Z,f,Z') \in E(R)$ is a directed graph's edge $(Z,Z')$ tagged with $f$ (and hence denoted with a triple).
We say that $V_i(R) \subseteq V(R)$ is a \textit{layer} of $\pgraph(R)$, for every step $a_i$ of $R$. 
We illustrate an example of a pair evolution graph (and hence the transitions) in Figure~\ref{fig:pairEvolutionGraph}, and give some insights below. 

We do some observations for pair evolution graphs.
Let $=_{V(R)}$ be the relation that denotes the fact that two pairs are the same node in $\pgraph(R)$.
For example, it might be the case that $Z_1 = Z_2$ but $Z_1 \neq_{V(R)} Z_2$, due to the multiple copies of a single pair that are created in a send operation.
By the definition of $\pgraph(R)$,
all edges that are tagged with $\lambda$, connect only pairs of consecutive layers, i.e., for every $e = (Z, \lambda, Z') \in E_i(R)$, $Z' \in V_{i+1}(R)$ $\land$ $Z = Z'$ $\land$ $Z \neq_{V(R)} Z'$ holds and also $Z$ is not further processed in step $a_i$.
Also, all edges that are not tagged with $\lambda$, include pairs from the same layer, i.e., $\forall  e = (Z,t,Z') \in E(R)$ such that $t\neq \lambda$, there exists a step $a_i\in R$, such that $Z\in V_i(R) \land Z' \in V_i(R)$ holds. 
Moreover, there is no edge in $\pgraph(R)$, that includes pairs from non-consecutive layers, i.e., $\forall  e = (Z,t,Z') \in E_i(R),\, Z'\in V_i(R)\cup V_{i+1}(R)$ holds.  
We note that given a subexecution $R'\seg R$, the pair evolution graph of $R'$ is the subgraph of $\pgraph(R)$ that includes the layers corresponding to the steps of $R'$, i.e., $\pgraph(R')$.


\paragraph{Function causality for $\resetLocal()$.}
As we showed in Section~\ref{s:invariants}, $\resetLocal()$ and $revive()$ are the only two functions of Algorithm~\ref{alg:SSVC} that can possibly violate the conditions for an execution to be legal.
In this paragraph, we define a notion of function causality with respect to $\resetLocal()$, and note that we deal with the case of $revive()$ in the following paragraph.
Our definition determines when a call to a function $f\in\F$ in a step of an execution $R$ \textit{causes} a  function call to $\resetLocal_i()$ in a subsequent step. 

We focus in a subset of $\F$, $\FF := \{\resetLocal(), revive()\}$, since as we will show in Section~\ref{s:boundRestart}, the number of calls to functions in $\FF$ has an effect on the number of subsequent calls to $\resetLocal()$.
In order to define function causality, we first define when two functions are adjacent or connected (i.e., there is a path that connects them in $\pgraph(R)$) in Definition~\ref{def:TRe}.

\begin{definition}[Adjacent and connected functions]
\label{def:TRe}
For two functions $f,g\in \F$, we say that $f$ \textit{is adjacent to} $g$ in $R$, if and only if,
$\exists i\in \N$, $e_1 =$ $(Z_1$, $f$, $Z_1') \in E(R)$, $e_2 = (Z_2$, $g$, $Z_2') \in E(R):\, T_{R}(e_1) = f$ $\land$ $T_{R}(e_2) =$ $g$ $\land$ $Z_1' =_{V(R)} Z_2$ holds, i.e., $(e_1, e_2)$ is a path in $\pgraph(R)$.
%
%
For $f,g\in\FF$, we say that $f$ \textit{is connected to} $g$ in $R$, if and only if, there exist $e_1, e_2, \ldots, e_x \in E(R)$, such that $T_{R}(e_1) = f$ $\land$ $T_{R}(e_x) = g$ $\land$ $\left(\land_{i=1,\ldots, x-1} (e_i \right.$ is adjacent to $\left. e_{i+1})\right)$. 
\end{definition}

Recall that we refer to all the variables of $Z$ except for $Z.curr.m$ as the static part of $Z$, since increments to $Z$ affect only $Z.curr.m$.
We say that a function $f$ leaves the static part of a pair $Z$ intact, if the static part of $f(Z,\bullet)$ equals the static part of $Z$.
Lemma~\ref{rem:causesOfRestart} shows which functions leave the static part of their input pairs intact and which don't.

\begin{lemma}
\label{rem:causesOfRestart}
The functions in $\F\setminus \FF$ leave the static part of at least one of their input pairs intact.
For each function in $\FF$, the input and output pairs may differ in their static parts.
\end{lemma}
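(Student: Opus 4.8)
The plan is to verify the two clauses function-by-function, inspecting the transitions associated with each $f\in\F$ (as enumerated just above the lemma), and splitting $\F$ into $\F\setminus\FF=\{increment(),labelBookkeeping(),\clone(),encapsulate(),merge()\}$ and $\FF=\{\resetLocal(),revive()\}$. For $\clone()$ and $encapsulate()$ there is nothing to do: as already observed, the output pair of each equals its input pair verbatim, so in particular the static part is unchanged. For $labelBookkeeping()$, the key point is that this function only reads and rewrites the internal state of the labeling (server) algorithm and never alters the contents of any vector clock pair; hence every pair occurring as an input of a $labelBookkeeping()$ transition (and, incidentally, of the composite $labelBookkeeping()\circ\newLabel()$ transitions and of every $\lambda$-transition) is left entirely intact.

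For $increment()$, the transition tagged $increment()$ corresponds exactly to line~\ref{alg:SSVC:increment}, which modifies only $local_i.curr.m[i]$; the fields $curr.\ell$, $curr.o$, and $prev$ are untouched, so the static part of the single input pair survives. I would emphasize here that the call to $revive_i()$ in line~\ref{alg:SSVC:checkIncrExh}, triggered only when $local_i$ becomes exhausted, is recorded by a separate transition tagged $revive()$, not $increment()$, and is therefore irrelevant to this clause. For $merge(loc,arr)$, the output $\out$ is first initialized (lines~\ref{alg:SSVC:useLocal}--\ref{alg:SSVC:mergeInit}) to whichever of $loc,arr$ is $<_{\ell,o}$-maximal in both $curr$ and $prev$, and the only subsequent writes (lines~\ref{alg:SSVC:newEventsStart}--\ref{alg:SSVC:newEventsEnd}) overwrite $\out.curr.m[k]$ for $k\in\{1,\dots,N\}$ and nothing else; therefore the static part of $\out$ coincides with the static part of one of the two input pairs, which is precisely what ``at least one of their input pairs'' asks for (and explains why the clause is phrased this way rather than ``every'').

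For the $\FF$ side I would simply exhibit witnesses. For $\resetLocal_i()$, the output is $\la y,y\ra$ with $y=\la\getLabel_i(),zrs,zrs\ra$, so whenever the input pair had $curr.o\neq zrs$, a non-zero $prev$, or a current label different from the one $\getLabel_i()$ returns, the static part changes. For $revive_i(Z)$, the output $\la\la\getLabel_i(),Z.curr.m,Z.curr.m\ra,Z.curr\ra$ has its $prev$ field equal to the old $curr$, its $curr.o$ equal to the old $curr.m$, and a freshly created (or recycled) $curr.\ell$, so in general all three static components differ from those of $Z$. I do not expect a genuine obstacle: the argument is purely structural, a line-by-line reading of Algorithm~\ref{alg:SSVC} against the transition list, and the only thing that needs care is the bookkeeping — keeping each transition under its correct tag (in particular separating the $increment()$ transition from the $revive()$ it may trigger, and the $labelBookkeeping()$ transitions from the abstract $\newLabel()$ occurrences) and confirming for $merge()$ that the $\newEvents$ loop writes only to $\out.curr.m$. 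This structural fact is exactly what Section~\ref{s:boundRestart} will use, since it isolates $\resetLocal()$ and $revive()$ as the only operations that can disturb the static part of a pair.
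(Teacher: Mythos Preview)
Your proposal is correct and follows essentially the same function-by-function approach as the paper: the paper likewise observes that $\clone()$, $encapsulate()$, and $labelBookkeeping()$ leave their input intact, that $increment()$ touches only $curr.m$ (with the possible $revive()$ call treated as a separate transition), that $merge()$ initializes $\out$ to one of its two inputs and thereafter writes only $\out.curr.m$, and then exhibits for each of $revive()$ and $\resetLocal()$ a situation in which the static part changes. The paper spends a bit more space on $\resetLocal()$, explicitly identifying the one edge case in which the static part happens \emph{not} to change (namely when $local_i=\la\la\ell_x,\bullet,zrs\ra,\la\ell_x,zrs,zrs\ra\ra$ with $\ell_x$ still maximal), but since the claim is only that the static parts \emph{may} differ, your witness argument already suffices.
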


\begin{proof}
The lemma statement holds for all functions in $\F\setminus (\FF\cup \{merge()$, $increment()\})$ $=$ $\{clone(), encapsulate(), labelBookkeeping()\}$, since these functions leave their input intact (recall that $labelBookkeeping()$ does not operate on a pair).
Since the output of $merge()$ equals, in its static part, to one of the two input pairs, the claim holds also for $merge()$ (cf. Section~\ref{s:pair}).
Note that a call to $increment()$ can include a call to $revive()$ (we study the case of $revive()$ below), however line~\ref{alg:SSVC:increment} does not change the static part of the input pair.

The functions in $\FF = \{revive(), \resetLocal()\}$, by their definitions, can possibly output pairs that have different static part from their input (cf. lines~\ref{alg:SSVC:resetLocal} and~\ref{alg:SSVC:reviveBegin}--\ref{alg:SSVC:reviveEnd}).
For the case of $revive()$, let $p_i$ be a processor, $Z_1 = local_i$, and $Z_2 = revive_i(Z_1)$.
Since $p_i$ cancels the labels of $Z_1$ and uses the new maximal label that the labeling algorithm returns as the label in $Z_2.curr.\ell$ (lines~\ref{alg:SSVC:reviveBegin}--\ref{alg:SSVC:reviveEnd}), the static parts of $Z_1$ and $Z_2$ are different.

We now study the case of $\resetLocal()$.
Recall from the definition of $\resetLocal()$ (line~\ref{alg:SSVC:resetLocal}) that immediately after a processor $p_i$ calls $\resetLocal_i()$, $local_i$ has the form $\la y, y\ra$, where $y = \la\getLabel_i(), zrs, zrs\ra$ and $zrs$ is the $N$-size zero vector.
Thus, the only case where $\resetLocal_i()$ leaves the static part of $local_i$ intact, is when $local_i = \la\la \ell_x, \bullet, zrs\ra, \la \ell_x, zrs, zrs\ra\ra$ holds immediately before $p_i$ calls $\resetLocal_i()$, where $\ell_x$ is $p_i$'s local maximal label.
The latter holds because $local_i$ equals $\la\la \ell_x, zrs, zrs\ra, \la \ell_x, zrs, zrs\ra\ra$ after $p_i$ calls $\resetLocal_i()$.
This is the case when $p_i$ receives a pair $arriving_j$ from a processor $p_j$, such that $\existsOverlap_i(local_i, arriving_j)$ does not hold and $local_i.curr.\ell$ remains $p_i$'s maximal label even after $p_i$ processes the labels in $arriving_j$.

For any other case except for the one described above one of the following is true for $local_i$ immediately before $p_i$ calls $\resetLocal_i()$:  $local_i.curr.\ell$ $\neq$ $local_i.prev.\ell$ or $s \neq zrs$, for at least one vector $s$ in $\{local_i.curr.m$, $local_i.prev.m$, $local_i.prev.o\}$.
For any of these cases and by the definition of $\resetLocal()$, immediately after $p_i$ calls $\resetLocal_i()$, $local_i$ has a different static part than immediately before $p_i$ called $\resetLocal_i()$.
\remove{
A processor $p_i$ calls the function $\resetLocal_i()$ in line~\ref{alg:SSVC:removeStaleInfo}, when $\mirroredLocal_i() \land \lblOrdrd_i(local_i)$ is false.
Thus, either if $\mirroredLocal_i()$ is false ($local_i.curr.\ell$ is not the maximal label in $p_i$) or if $\lblOrdrd_i(local_i)$ is false (the labels in $local_i$ are not ordered as in Equation~\ref{eq:labelsOrdered}), the output pair of $\resetLocal_i()$ is different than the input (at least) in $local_i.curr.\ell$.
Moreover, when $p_i$ calls the function $\resetLocal_i()$ in line~\ref{alg:SSVC:receiveResetPairs} when processing $arriving_j$, it is either the case that  $local_i.curr.\ell$ is larger than $arriving_j.curr.\ell$ or not.
In case  $arriving_j.curr.\ell \lb local_i.curr.\ell := \ell_x$ and $local_i = \la \la\ell_x, \bullet, zrs\ra, \la\ell_x, zrs, zrs\ra\ra$, then the output pair of $\resetLocal_i()$ in line~\ref{alg:SSVC:receiveResetPairs} will be $local_i = \la \la\ell_x, zrs, zrs\ra, \la\ell_x, zrs, zrs\ra\ra$, an will thus have the same static part with the input.
In case $local_i.curr.o \neq zrs$ then $\resetLocal_i()$ will indeed create a pair with different static part than its input.}
\end{proof}


In Definition~\ref{def:cPcauses} we define function causality between a call to a function in  $\{revive(), \resetLocal()\}$ and a subsequent call to $\resetLocal()$.
Let $p_{\resetLocal}(i,k) := \neg(\mirroredLocal() \land \lblOrdrd(local_i))$ (line~\ref{alg:SSVC:removeStaleInfo}) and $q_{\resetLocal}(i,j,k) := \neg\labelCheck(local_i, arriving_j)$ (line~\ref{alg:SSVC:receiveResetPairs}) be the predicates such that whenever either of them is true $p_i$ calls $\resetLocal_i()$ in a step $a_k\in R'$, where $arriving_j$ is the pair received by $p_i$ in a message from $p_j$ in step $a_k$.

\begin{definition}[$f$ causes $\resetLocal()$, for $f\in \{revive(), \resetLocal()\}$]
\label{def:cPcauses}
Let $|R'| \leq \MI$ be an execution, $p_i, p_j\in P$, $a_k\in R'$, and $\cP(i,j,k):=$ $p_{\resetLocal}(i,k)$ $\lor$ $q_{\resetLocal}(i,j,k)$.
Moreover, let $e$, $e'$ be two edges in $\pgraph(R')$, such that $T_{R'}(e) \in \{revive()$, $\resetLocal()\}$, $T_{R'}(e') = \resetLocal()$, $e\in E_r(R')$, $e'\in E_k(R')$, $r\leq k$, and $\cP(i,j,k)$ is true (in step $a_k$).
We say that $f := T_{R'}(e)$ \textit{causes} $\resetLocal()$ in $R'$, if and only if, the following hold:
\begin{enumerate}[(a)]
\item $f$ is connected to $\resetLocal()$ in $\pgraph(R')$ through a path $P = (e_1, \ldots, e_x)$, such that $e_1 = e$ and $e_x = e'$, 

\item the value of a predicate $\pi$ in $\cP(i,j,k)$ depends on a vector clock item $I_f$ in $f$'s output, and

\item for every edge $e_t\in P$, such that $e_t\notin\{e,e'\}$, it holds that the function $T_{R'}(e_t)$ does not change the label and the offset of $I_f$.
\end{enumerate}
\end{definition}

For example, in Figure~\ref{fig:pairEvolutionGraph} $revive()$ in step $a_0$ causes $\resetLocal()$ in step $a_1$, since the pair (and hence the vector clock items) that $p_i$ sent to $p_j$, was incomparable (no pivot existed) with $p_j$'s local pair, was created when $p_i$ called $revive()$ in $a_0$.

\paragraph{Function causality for $revive()$.}
In the following lemma, we show which functions in $\F$ can change the value of the predicate $exhausted(local)$ (Equation~\ref{eq:notExhausted} and lines~\ref{alg:SSVC:checkIncrExh}, \ref{alg:SSVC:doForeverRevive}, and~\ref{alg:SSVC:reviveCall}), and thus cause a call to $revive()$ (Lemma~\ref{cl:causesForRevive}).
We note that the analysis for $revive()$ (Lemma~\ref{cl:causesForRevive}) is much simpler than the one for $\resetLocal()$, because the condition for calling $revive()$, $exhausted(local)$, depends only on vector clock increments.
On the contrary, the conditions for calling $\resetLocal()$ (lines~\ref{alg:SSVC:removeStaleInfo} and~\ref{alg:SSVC:receiveResetPairs}) depend on every field of $local$, as well as on whether an arriving pair can be merged with the local one.

\begin{lemma}
\label{cl:causesForRevive}
Let $p_i\in P$.
(i) The value of the predicate $exhausted_i(local_i)$ (cf. Section~\ref{s:pair}) can change to true only due to a call to $increment_i()$ or $merge_i()$, or
it can be true in the starting system state due to stale information in $p_i$'s state. 
(ii) The value of $exhausted_i(local_i)$ does not change after $p_i$ calls a function in $\{labelBookkeeping_i()$, $clone_i()$, $encapsulate_i()\}$.
(iii) The value of $exhausted_i(local_i)$ is false after $p_i$ calls a function in $\FF = \{\resetLocal(), revive()\}$.
\end{lemma}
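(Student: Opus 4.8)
The plan is to prove all three parts by a direct case analysis over the functions that a processor $p_i$ may call during a step, i.e., over $\F = \F_1 \cup \F_2 \cup \F_3$ (the sets identified just above). The key observation that drives the whole argument is that, by Condition~\ref{eq:notExhausted}, the predicate $exhausted_i(local_i)$ is determined \emph{solely} by the two vectors $local_i.curr.m$ and $local_i.curr.o$: it is true exactly when $\Sigma_{k=1}^N (local_i.curr.m[k] - local_i.curr.o[k]) \geq \MI - 1$. Hence, to track how the value of $exhausted_i(local_i)$ evolves, it suffices to identify, for each $f \in \F$, whether $f$ modifies $local_i.curr.m$ or $local_i.curr.o$ and, if so, in which direction the above sum can move.

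For Part~(ii), recall that $\clone_i()$ and $encapsulate_i()$ return their input pair unchanged and, moreover, operate on a separate copy of $local_i$ (the output of $\clone_i(local_i)$), so they leave $local_i$ itself intact; and $labelBookkeeping_i()$ touches only the variables of the labeling algorithm (e.g., $storedLabels_i$ and $max_i$), never a vector clock pair. Therefore none of these three functions alters $local_i.curr.m$ or $local_i.curr.o$, so the value of $exhausted_i(local_i)$ is unchanged when $p_i$ calls any of them. For Part~(iii), I would simply evaluate the sum of Condition~\ref{eq:notExhausted} on the outputs of $\resetLocal_i()$ and $revive_i()$. After $\resetLocal_i()$ (line~\ref{alg:SSVC:resetLocal}) we have $local_i = \langle y, y \rangle$ with $y = \langle \getLabel_i(), zrs, zrs\rangle$, so $local_i.curr.m = local_i.curr.o = zrs$ and the sum equals $0 < \MI - 1$ (as $\MI$ is a large integer). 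After $revive_i(Z)$ (line~\ref{alg:SSVC:reviveEnd}) the returned pair has $curr.m = curr.o = Z.curr.m$, so again the sum equals $0$. In both cases $exhausted_i(local_i)$ is false, which proves Part~(iii) and also re-establishes that the pair produced by $revive()$ is never exhausted.

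For Part~(i): by Parts~(ii) and~(iii), the functions in $\{ \clone(), encapsulate(), labelBookkeeping() \}$ leave $exhausted_i(local_i)$ unchanged and the functions in $\FF = \{ \resetLocal(), revive() \}$ make it false, so none of them can turn it from false to true. The only functions left in $\F$ are $increment_i()$, whose line~\ref{alg:SSVC:increment} raises $local_i.curr.m[i]$ by one and so increases the sum of Condition~\ref{eq:notExhausted} by one (and may thereby reach the threshold), and $merge_i(loc, arr)$, whose aggregation loop (lines~\ref{alg:SSVC:newEventsStart}--\ref{alg:SSVC:newEventsEnd}) sets $local_i.curr.m[k] = (\pivot[k] + maxNewEvents) \mmi$ for every $k$ and can therefore output a pair whose sum meets the exhaustion threshold. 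Hence $exhausted_i(local_i)$ can become true only through a call to $increment_i()$ or $merge_i()$; and since the starting system state $c_0$ is arbitrary (Section~\ref{s:systemSettings}), $exhausted_i(local_i)$ may additionally already hold in $c_0$ because of stale information in $p_i$'s state.

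I expect the only delicate point to be the bookkeeping of the modular arithmetic in the last paragraph: confirming that an $increment$ or a $merge$ genuinely can push the sum of Condition~\ref{eq:notExhausted} up to (or past) $\MI - 1$, while double-checking that each of the remaining functions never writes to $curr.m$ or $curr.o$ of $local_i$. Everything else is a routine dispatch over the (short) list of functions once the dependency of $exhausted()$ on exactly those two vectors has been isolated.
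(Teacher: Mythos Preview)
Your proposal is correct and follows essentially the same approach as the paper: both arguments reduce the question to tracking which functions in $\F$ can modify $local_i.curr.m$ or $local_i.curr.o$, observe that $\clone()$, $encapsulate()$, and $labelBookkeeping()$ leave these vectors intact, that $\resetLocal()$ and $revive()$ force $curr.m = curr.o$ (hence sum $0$), and that only $increment()$ and $merge()$ can push the sum up to the threshold (plus the arbitrary starting state). The only cosmetic difference is that you prove (ii) and (iii) first and derive (i) by elimination, whereas the paper handles the parts in the stated order.
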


\begin{proof}
Recall that $exhausted(Z)$ $\Leftrightarrow$  $\Sigma_{k=1}^N (Z.curr.m[k] - Z.curr.o[k]) \geq$ $\MI - 1$ (Equation~\ref{eq:notExhausted}).
For part (i) of the claim, first note that the starting system state, $c_0$, of any execution, $R$, is arbitrary.
Hence, it can be the case that $exhausted_i(local_i)$ is true for $local_i$ in $c_0$.
The lemma statement holds for $increment()$, since by its definition (lines~\ref{alg:SSVC:incrementStart}--\ref{alg:SSVC:checkIncrExh}) it increases $local_i.curr.m$.
Similarly, $merge_i(local_i, arriving_j)$ outputs a pair that possibly includes more events that than $local_i$ and $arriving_j$ (cf. lines~\ref{alg:SSVC:merge}--\ref{alg:SSVC:mergeReturn} and Section~\ref{s:pair}).
Hence, it might be the case that $exhausted_i(local_i)$ is false before a call to $merge_i(local_i, arriving_j)$, but true for the new value of $local_i.curr.m$, when $p_i$ stores in $local_i$ the output of $merge_i()$.

For part (ii) of the claim, note that $labelBookkeeping_i()$, $clone_i()$, and $encapsulate_i()$ do not change $local_i.curr.m$.
Finally, part (iii) of the claim is true since by the definitions of $\resetLocal()$ (line~\ref{alg:SSVC:resetLocal}) and $revive()$ (lines~\ref{alg:SSVC:reviveBegin}--\ref{alg:SSVC:reviveEnd}), $local_i.curr.m = local_i.curr.o$ holds for their outputs, hence $exhausted_i(local_i)$ is false.
\end{proof}

\remove{
Lemmas~\ref{lem:notFfocused} and~\ref{lem:restartLocal} show that 
(i) $\resetLocal()$ can be caused only by $revive()$ and $\newLabel()$, 
(ii) a call to $revive()$ can cause a call to $\newLabel()$, and 
(iii) no other function can cause the call of a function in $\FF$.
We illustrate these arguments in Figure~\ref{fig:deviations}.

\begin{lemma}
\label{lem:notFfocused} \IS{$\resetLocal()$ only due to starting conf, $\resetLocal()$, and $\newLabel()$}
Let $p_i, p_j\in P$ be two processors and $a_k$ a step in $R$.
(i) The value of $\cP(i,j,k)$ (Definition~\ref{def:cPcauses}) does not change when $p_i$ takes a step that calls any of the functions in $\{clone_i(), encapsulate_i()\}$.
(ii) The value of $p_{revive}(i,k)$ can change only when $p_i$ takes a step that calls any of the functions in $\{increment_i(), merge_i()\}$.
(iii) The functions $increment_i()$ and $merge_i()$ do not change the value of $p_{\resetLocal}(i,k)$ to true. 
(iv) A step that calls the function $revive()$ includes an (adjacent) call to $\newLabel()$, and no other function in $\F\setminus \{revive(), \newLabel()\}$ that can cause a call to $\newLabel()$.
\end{lemma}


\begin{proof}
Let $p_i\in P$ be the processor that takes step $a_k \in R$. 
Claim (i) of this lemma is true due to the fact that neither $\clone_i()$ nor $encapsulate_i()$ change the value of their input, hence any predicate value. 

For claim (ii) of this lemma, recall that $p_{revive}(i,k) = exhausted(local_i)$ (Equation~\ref{eq:notExhausted}, Section~\ref{s:pair}).
The only two functions in $\F$ that increase the value of $local_i$, and hence change $exhausted(local_i)$, are $increment()$ (by 1) and $merge()$ (by adding the new events since the pivot item, after comparing the pairs $local_i$ and $arriving_j$).
Note that after a call to $\resetLocal_i()$ or $revive()$, the value of $exhausted(local_i)$ is false, since both functions reset the value of the vector clock pair to zero. 
Finally, any of the functions in $\{clone(), encapsulate()\}$ do not change the value of $local_i.curr.m$ (by Claim (i) of this lemma).

For claim (iii) of this lemma, recall that $p_{\resetLocal}(i,k) := \neg(\mirroredLocal() \land \lblOrdrd(local_i))$ (line~\ref{alg:SSVC:removeStaleInfo}) and $q_{\resetLocal}(i,j,k) := \neg\labelCheck(local_i, arriving_j)$ (line~\ref{alg:SSVC:receiveResetPairs}).
Note that $increment_i()$ changes the value of $local_i.curr.m$, which can be followed by a call to $revive_i()$ by $p_i$ in case $exhausted(local_i)$ holds  (line~\ref{alg:SSVC:checkIncrExh}).
Hence, by Corollary~\ref{cor:reviveMergeIncrementAndPivot}, $p_{\resetLocal}(i,k)$ is false after a call to $increment_i()$.
Similarly, $merge_i(local_i, arriving_j)$ outputs a pair for which $p_{\resetLocal}(i,k)$ is false due to Corollary~\ref{cor:reviveMergeIncrementAndPivot}.
However, it is possible that $p_{revive}(i,k)$ is true (line~\ref{alg:SSVC:reviveCall}) after $p_i$ calls $merge()$, since $exhausted(merge(local_i, arriving_j))$ might be true (by adding new events from $arriving_j$), even though $exhausted(local_i)$ might be false before merging the two pairs).
Finally, by Corollary~\ref{cor:reviveMergeIncrementAndPivot}, $p_{\resetLocal}(i,k)$ is false after a call to $revive_i()$.

For claim (iv), note that $revive_i()$ includes cancellation of $local_i$'s labels (which includes $p_i$'s maximal label, $local_i.curr.\ell$) and thus a creation or reuse (recycling) of a label (cf. line~\ref{alg:SSVC:reviveCancelLbls} and Section~\ref{s:interfaceDolev}), i.e., a call to $\newLabel()$.
Moreover, note that no other function in $\F\setminus \{revive(), \newLabel()\}$ changes the state of the labeling algorithm (by their definitions in Algorithm~\ref{alg:SSVC} and Section~\ref{s:pair}). 
Hence, the proof is complete.
\end{proof}


\Ver{Iosif, other than the discussion above about the label creation, I could not find inconsistencies after reading the text twice. So, please make the lemma statement more exact with respect to creation as well as the proof of (iv) and then I consider this proof to be done. (2017-08-21)}

\IS{I have made the changes that you proposed until this part. I have made major changes in the following two lemmas.}

Lemma~\ref{lem:restartLocal} shows that a call to $\resetLocal()$ is caused either due to stale information from the starting system state (part (i) of Lemma~\ref{lem:restartLocal}), or a call to $\newLabel()$, or a call to $revive()$ (both calls refer to part (ii) of  Lemma~\ref{lem:restartLocal}), since those two functions add new values of pairs in the communication channels, which possibly contain new labels.


\begin{lemma}
\label{lem:restartLocal}
\IS{TODO: update according to new causality relation updates}
Let $R$ be an execution of Algorithm~\ref{alg:SSVC} and $R'$ be a subexecution of $R$, such that $|R'| \leq \MI$.
For every edge $e\in E_k(R')$ of $\pgraph(R')$, such that $p_i\in P$ takes step $a_k$ and $T_{R'}(e) = \resetLocal_i()$ 
one of the following holds: 
(i) there exists a path $P= (e_1, \ldots, e_x)$ in $\pgraph(R')$, such that $e_1\in E_1(R')$ $\land$ $e_x = e$ $\land$ $\forall_{k=2,\ldots, x-1} T_{R'}(e_k) \in\F\setminus \{revive()$, $\newLabel()\}$, or
(ii) there exists an edge $e' \in E_s(R')$, such that $s<k$ $\land$ $T_{R'}(e') = g \in \{\revive(), \newLabel()\}$ $\land$ $g$ causes $\resetLocal_i()$ in $R'$.
\end{lemma}

\begin{proof}
\IS{TODO: say that due to the (locally) self-stabilizing FIFO alg, each pair can cause at most two calls to $\resetLocal()$}
Let $R$ be an execution of Algorithm~\ref{alg:SSVC}, $R'$ be a subexecution of $R$, such that $|R'| \leq \MI$, and $p_i\in P$ be the processor that takes the step $a_k\in R'$, such that $T_{R'}(e) = \resetLocal_i()$.
First, recall from Lemma~\ref{lem:notFfocused} (part (i)) that any of the functions in $\{clone()$, $encapsulate()$, $increment()$, $merge()\}$ do not change the value of $p_{\resetLocal}(i)$.
In case no path of $\pgraph(R')$ that starts in $E_1(R')$ and ends in $e$, includes an edge $e'$ such that $T_{R'}(e') \notin \{revive(), \newLabel()\}$, then case (i) of the lemma holds.
Case (i) reflects, the calls to $\resetLocal()$ that occur either due to stale information or pairs for which no pivot exist for merging them (line~\ref{alg:SSVC:mergeArrWithLocal} and Section~\ref{s:pair}) that resides in the system in the starting configuration.
Recall from Remark~\ref{rem:multipleRestarts}  that $\bigO(1)$ calls to $\resetLocal()$ can occur for every two processors that store incomparable labels, until these processors store pairs for which a pivot exists, i.e., the pairs can be merged.



\IS{$revive()$ and $\newLabel()$ can possibly cause $\resetLocal()$, but it can also be due to the starting configuration}
Assume that case (i) of this lemma does not hold and recall that $p_i$ calls $\resetLocal_i()$ in step $a_k$ either in line~\ref{alg:SSVC:removeStaleInfo} or in line~\ref{alg:SSVC:receiveResetPairs} of Algorithm~\ref{alg:SSVC}. 
In case $p_i$ calls $\resetLocal_i()$ in line~\ref{alg:SSVC:removeStaleInfo}, we show that there exists an edge $e'$ in $\pgraph(R')$, such that $T_{R'}(e) = \newLabel_i()$ and $e'$ is adjacent to $e$ in $\pgraph(R)$ (i.e., $e'$ and $e$ are in step $a_k$).
If the latter claim is not true, then either $p_{\resetLocal}(i,s)$ or $q_{\resetLocal}(i,j,s)$ became true in a step $a_s$ that precedes $a_k$ and that predicate remained true until $a_k$.
This cannot occur, since after the completion of any step $a_s$, [[the predicates]] $p_{\resetLocal}(i,s)$ and $q_{\resetLocal}(i,j,s)$ are false.
That is, if $p_{\resetLocal}(i,s)$ is false before line~\ref{alg:SSVC:removeStaleInfo}, then $p_i$ will call $\resetLocal_i()$ in line~\ref{alg:SSVC:removeStaleInfo}, and 
(a) for the output of $\resetLocal_i()$, $p_{\resetLocal}(i,s)$ is true, as well as,
(b) none of the functions called until the completion of this do-forever loop changes the value of $p_{\resetLocal}(i,s)$ to true.
Similarly, if $q_{\resetLocal}(i,j,s)$ is false before line~\ref{alg:SSVC:receiveResetPairs}, $p_i$ will call $\resetLocal_i()$ in line~\ref{alg:SSVC:receiveResetPairs}, which changes the value of $q_{\resetLocal}(i,j,s)$ to false, and then the step ends.
Hence, a call to $\resetLocal_i()$ by $p_i$ during a do-forever loop can only be caused by a call to $\newLabel_i()$ in the same step.

\EMS{Iosif, I cannot follow the text. First of all, I would suggest to divide this proof of this lemma in two claim. This way, I would have a chance to see what exactly are you trying to prove. Also, some parts of the proof are hard to understand what you are trying to say specifically. Here are some questions suggestions. 
What does `latter' refers to in 
`If the latter  claim is not true, then either $p_{\resetLocal}(i,s)$' Is it `there exists an ... $e$ are in step $a_k$)' or just ` $e'$ is adjacent ... and $e$ are in step $a_k$)'.
Is it `precedes' or `precedes'?
What `this' refers to in `This cannot occur, since after'
What does `that' refers to in `that predicate remained'.
Maybe it is better to say `If the latter claim is not true, then a predicate $pre \in \{p_{\resetLocal}(i,s), q_{\resetLocal}(i,j,s)\}$ became true in a step $a_s$ that precedes $a_k$ and [[the predicate $pred$ remains]] true until $a_k$.'}




The case where $p_i$ calls $\resetLocal_i()$ in line~\ref{alg:SSVC:receiveResetPairs} can be caused either due to label incomparability of $local_i$ and $arriving_j$ or if $\existsOverlap_i(local_i, arriving_j)$ is false, since $q_{\resetLocal}(i,j,k) :=$ $\neg\labelCheck_i(local_i$, $arriving_j) =$ $\neg (\compLbls_i(\{local_i$, $arriving_j\})$ $\land$ $\existsOverlap_i(local_i$, $arriving_j))$ (cf. line~\ref{alg:SSVC:compLbls} and Condition~\ref{eq:afterPrInf} in Section~\ref{s:pair}).
In case the labels of $local_i$ and $arriving_j$ stay intact or if $\existsOverlap_i(local_i$, $arriving_j))$ is false, from the starting system state and until $a_k$, then claim (i) of this lemma holds.
Otherwise, let $e'$ be the first edge in $\pgraph(R')$ in a step of $p_s\in P$, such that
(a) $e'\notin E_1(R')$
(b) the pair at the end of $e'$ shares the same labels and offsets with either $local_i$ or $arriving_j$ and those labels were adopted in a pair for the first time in $R'$.
Then $T_{R'}(e') \in \{\newLabel_s(), revive_s()\}$ for processor $p_s$, since only functions in $\{\newLabel_s(), revive_s()\}$ allow processors to  use a new label when changing the value of a pair.
Thus, $T_{R'}(e')\in \{\newLabel_s(), revive_s()\}$ causes $\resetLocal_i()$ and the proof is complete.
\end{proof}

\Ver{Iosif, I am not able to verify Lemma~\ref{lem:restartLocal}. The text is not that easy to understand and there is an need to declare each step of the proof. For example, you can start with two claims that shows that care (1) and case (2) are possible. Then, have at least one claim that there other ``implications'' are not possible. (2017-08-23)}

} 

\subsection{Bounding the number of deviations from the abstract task in an $\pinf$-scale execution}
\label{s:boundRestart}
%
In Lemma~\ref{lem:boundRestartLocal}, we show that the number of steps in which a processor calls $revive()$ or $\resetLocal_i()$ during an execution $R'$, such that $|R'| \leq \MI$, is significantly less than $|R'|$.
We focus in these two functions, because due to Section~\ref{s:PSVC}, only these two functions can cause a call to $\resetLocal()$ (cf. Lemma~\ref{rem:causesOfRestart} and Definition~\ref{def:cPcauses}).
Then, in Corollary~\ref{cor:practicallyStabilizing} we show that Algorithm~\ref{alg:SSVC} is practically-self-stabilizing (i.e., Theorem~\ref{thm:reqHold} holds).

\begin{lemma}
\label{lem:boundRestartLocal}
Let $R$ be an execution of Algorithm~\ref{alg:SSVC} and $R'$ be a subexecution of $R$, such that $|R'| \leq \MI$.
Then, the number of steps in which a processor calls either $revive()$ or $\resetLocal()$ in $R'$ is significantly less than $\MI$.
\end{lemma}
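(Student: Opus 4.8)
The plan is to bound separately the number of $\resetLocal()$ calls and the number of $revive()$ calls in $R'$, then add the two bounds and confirm the sum is $\ll \MI$. The key engine for both counts is Corollary~\ref{cor:labelingSchemeConvergence}: the labeling algorithm is practically-self-stabilizing even with the doubled queue sizes, so in an $\pinf$-scale (and a fortiori in a $\leq\MI$) window the number of label creations and adoptions is bounded by a polynomial in $N$ and $\capacity$ (explicitly, on the order of $N^2\msg = \bigO(N^4\capacity)$ per processor from Corollary~\ref{cor:dolevLabelCreations}, times $N$ processors). The $\newLabel()$ events --- which by the causality analysis of Section~\ref{s:PSVC} (Lemma~\ref{rem:causesOfRestart} and Definition~\ref{def:cPcauses}) are the only things other than stale starting-state information that can \emph{cause} a $\resetLocal()$ --- are exactly label creations/adoptions/recyclings, hence also polynomially bounded.

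First I would bound the $revive()$ calls. By Lemma~\ref{cl:causesForRevive}, $exhausted_i(local_i)$ becomes true only via $increment_i()$ or $merge_i()$, or is true in $c_0$ due to stale information. A call to $revive()$ resets $VC(local_i)$ to the zero vector (Section~\ref{s:pair}, line~\ref{alg:SSVC:reviveEnd}), so between two consecutive $revive_i()$ calls driven by increments, processor $p_i$ must perform at least $\MI-1$ increments in aggregate across the vector entries (Condition~\ref{eq:notExhausted}); since $|R'|\leq\MI$ bounds the total number of steps, and hence of $increment_i()$ calls across all processors, the number of increment-driven revives in $R'$ is at most a small constant (at most one per processor, roughly $N$ in total, plus the at most one stale starting-state exhaustion per processor). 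The subtle point is merge-driven exhaustion: a $merge_i()$ can push $local_i$ over the threshold by \emph{adopting} events counted by another pair. But those adopted events correspond either to increments already performed somewhere in $R'$ (again capped by $|R'|\leq\MI$ total) or to values present in the arbitrary $c_0$; the latter can be absorbed at most once per (processor, incoming-label) combination before convergence, which is polynomially bounded via Corollary~\ref{cor:labelingSchemeConvergence}. So the total $revive()$ count is $\bigO(\text{poly}(N,\capacity)) \ll \MI$.

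Next I would bound the $\resetLocal()$ calls. By Lemma~\ref{rem:causesOfRestart} and Definition~\ref{def:cPcauses}, every step in which $p_i$ calls $\resetLocal_i()$ is caused by one of: (a) stale information in $c_0$ (the starting configuration contributes at most $N + \msg + (\text{number of intermediate pairs})$ ``bad'' pairs, and by the self-stabilizing FIFO end-to-end protocol each corrupted in-transit pair can trigger $\bigO(1)$ resets before being flushed --- this is where one invokes the $\capacity N^2$-step bound on token-passing stabilization mentioned in Section~\ref{s:algorithms}); (b) a preceding $\newLabel()$ event, of which there are polynomially many by Corollary~\ref{cor:labelingSchemeConvergence}, each capable of causing $\bigO(N)$ downstream resets (one per processor that receives the incomparable/unpivoted pair before learning the new maximal label); or (c) a preceding $\resetLocal()$ --- but a $\resetLocal_i()$ produces a pair $\langle y,y\rangle$ with $y=\langle\getLabel_i(),zrs,zrs\rangle$, so a reset pair is incomparable with another processor's pair only transiently, again $\bigO(1)$ propagations. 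Summing: $B_{restart}(R') = \bigO(\capacity N^2) + \bigO(N)\cdot(\text{number of }\newLabel()\text{ events}) = \bigO(N^8\capacity)$ or so, which is $\ll \MI$ since $\MI$ is practically infinite (e.g.\ $2^{64}$) while this is a fixed polynomial in $N,\capacity$.

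The main obstacle I anticipate is making the causality argument for $\resetLocal()$ fully rigorous --- specifically, arguing that between a $\newLabel()$ event and the $\resetLocal()$ calls it causes, no \emph{new} cause can silently reactivate an already-cleared $\cP(i,j,k)$ predicate in a way that multiplies the count beyond $\bigO(N)$ per $\newLabel()$. One has to combine three facts carefully: that $\F\setminus\FF$ functions leave the static part of some input pair intact (Lemma~\ref{rem:causesOfRestart}), that $\cP$ is false immediately after any step completes unless freshly retriggered, and that the token-passing mechanism (self-stabilizing in $\capacity N^2$ steps, Section~\ref{s:algorithms}) prevents an unbounded stream of resets at a single receiver from a single sender. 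Once convergence of the token mechanism and the labeling algorithm are in hand, the pigeonhole-style accounting over the polynomially many $\newLabel()$ events closes the argument, and Lemma~\ref{lem:LE}(II) then yields Theorem~\ref{thm:reqHold} as a corollary.
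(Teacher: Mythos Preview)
Your plan follows essentially the same high-level structure as the paper's proof: separately bound $revive()$ calls and $\resetLocal()$ calls, trace the latter back through the causality analysis to stale initial data and $\newLabel()$ events, and invoke Corollary~\ref{cor:labelingSchemeConvergence} to cap the number of label events. You also correctly identify the token-passing stabilization and arrive at the paper's $\bigO(N^8\capacity)$ figure.

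Where your decomposition diverges from the paper is in the accounting unit for $\resetLocal()$. You count \emph{per $\newLabel()$ event}, asserting each causes $\bigO(N)$ downstream resets and that a reset-produced pair is incomparable with others ``only transiently, $\bigO(1)$ propagations.'' The paper instead counts \emph{per pair static part} (the tuple of labels and offsets, ignoring $curr.m$): it shows there are at most $N+\msg+L+2V$ distinct static parts in $R'$, and then proves (its Claim~5) that each static part can cause at most $2N^2L$ calls to $\resetLocal()$---the factor $L$ arising because a processor can \emph{recycle} the same static part up to $L$ times when its interim maximal labels get cancelled and an older non-cancelled label resurfaces. Your $\bigO(1)$-propagation claim for case~(c) is exactly where this recycling phenomenon bites: a reset at $p_j$ can reinstate a static part that $p_j$ had previously abandoned, re-triggering resets at neighbours that had already ``cleared'' it. The paper's Part~II of Claim~5 (with the case analysis a/b-i/b-ii/c over which processor's label wins after a reset) is what makes this rigorous, and it is precisely the obstacle you flag at the end. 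So your plan is sound, but the $\bigO(N)$-per-event estimate is optimistic; the paper's static-part bookkeeping is what actually closes the loop, and without something equivalent your cascade bound is not yet justified.

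On the $revive()$ side, your argument is close to the paper's but slightly loose on merge-driven exhaustion: the paper's Claim~1 argues that a single near-exhausted pair value, once broadcast, can trigger at most $N$ concurrent wrap-arounds (one per receiving processor), and there are at most $N+\msg$ such pair values seeded by $c_0$, giving $N + N(N+\msg) = N + N^2 + N^3\capacity$ revives---a cleaner count than routing through label convergence.
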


\begin{proof}
%
%
The proof focuses on giving a bound on the number of steps in which a processor calls $\resetLocal()$ in $R'$ and showing that the bound is significantly less than $\MI$.
As a by-product of this goal, Claim~\ref{cl:reviveBound} shows that the number of steps in $R'$ in which a processor calls $revive()$ is significantly less than $\MI$.

A processor can call $\resetLocal()$ either in line~\ref{alg:SSVC:removeStaleInfo} or in line~\ref{alg:SSVC:receiveResetPairs}. 
The proof considers both cases.
We first show that there can be at most one call per processor to $\resetLocal()$ during any execution due to line~\ref{alg:SSVC:removeStaleInfo}.
To prove this statement, first observe that the condition in line~\ref{alg:SSVC:removeStaleInfo} can be false due to stale information that resided in the processor's state in the starting state. 
However, by Corollary~\ref{cor:reviveMergeIncrementAndPivot}, for any function that changes $local$, it holds that the condition in line~\ref{alg:SSVC:removeStaleInfo} is false for the updated value of $local$.

In the remainder of this proof, we show that the number of steps in $R'$ that include a call to $\resetLocal()$ due to line~\ref{alg:SSVC:receiveResetPairs} is significantly less than $\MI$.
We first bound the maximum number of steps that include a call to $revive()$ (Claim~\ref{cl:reviveBound}), 
as well as the maximum number of labels that can exist during $R'$ (Claim~\ref{cl:newLabelBound}).
In claims~\ref{cl:linkStabilization} and~\ref{cl:mechanismBound} we bound the number of steps that include a call to $\resetLocal()$ in line~\ref{alg:SSVC:receiveResetPairs} due to the recovery of the link-layer algorithm~\cite{DBLP:conf/sss/DolevHSS12}, and respectively, the token-passing mechanism (Section~\ref{s:algorithms}).
Moreover, in Claim~\ref{cl:pairRecyclingBound} we bound the number of calls to $\resetLocal()$ in line~\ref{alg:SSVC:receiveResetPairs} that occur due to a single pair static part that appears in $R'$.
Finally, in Claim~\ref{cl:receiveRestartsBound} we show that these bounds imply that the number of steps that include a call to $\resetLocal()$ in line~\ref{alg:SSVC:receiveResetPairs} during $R'$ is significantly less than $|R'|$, by showing that the number of pair static parts that appear in $R'$ is significantly less than $\MI$ and combining Claims~\ref{cl:reviveBound}--\ref{cl:pairRecyclingBound}.


\begin{claim}
\label{cl:reviveBound}
The number of steps during $R'$ that include a call to $\revive()$ is at most $N + N^2 + N^3\cdot\capacity$.
\end{claim}

\begin{proof}[\textbf{\emph{Proof of Claim~\ref{cl:reviveBound}}}]
The proof considers the three causes for pair exhaustion during $R$ (cf. Lemma~\ref{cl:causesForRevive}). 
That is, due to calls to $increment()$ and $merge()$, as well as due to stale information that appeared in the starting system state.

Since $|R'| \leq \MI$, the maximum number of increments that can occur in $R$ is less than $\MI$.
Note that for a single vector clock pair exhaustion, at most all processors can wrap around concurrently.
This can occur when processor $p_j$ holds a pair value $Z_j$ in $local_j$ that is close to be exhausted, say, just one increment away (lines~\ref{alg:SSVC:incrementStart}--\ref{alg:SSVC:checkIncrExh}). 
Then, $p_j$ sends $local_j$'s value $Z_j$ to all other processors $p_k \in P$ in the system. Every processor $p_k$ that receives $Z_j$, merges it with $local_k$, and in the following step calls $increment_k()$, which leads to exhausting $local_k$. 
Hence, there can be at most $N$ steps in $R'$ that processors $p_k$ take, that include a call to $\revive_k()$ due to the exhaustion of a pair that was merged with $Z_j$. 
Indeed, $p_k$ can exhaust the output of $merge_k(local_k, Z_j)$ at most once, because the call to $\revive_k()$ produces a pair with a static part (and hence the labels $curr.\ell$ and $prev.\ell$) that is different than the one of $Z_j$ and $local_k$.

The remaining pair exhaustions can be only due to arbitrary values that resided in the starting system state (cf. Lemma~\ref{cl:causesForRevive}). 
At most $N$ such vector clocks have resided in the states of the processors, and at most $\msg\leq N^2\cdot\capacity$ resided in the communication channels.
Since for each of these $N + N^2\cdot\capacity$ pair values can lead to at most $N$ concurrent exhaustions, there can be $N^2 + N^3\cdot\capacity$ exhaustions due to pairs that come from the arbitrary starting state.

Note that we have counted the number of steps that include a call to $\revive()$ in two ways;
(i) calls to $increment()$ or $merge()$, and 
(ii) stale information that appeared in the starting system state. 
Of course, a pair can become exhausted due to a combination of these two causes. 
The arguments above hold for such combinations and the counting is correct because each pair exhausted is counted at least once. 
Therefore, in total, there can be at most $N + N^2 + N^3\cdot\capacity$ pair exhaustions that can occur during $R'$. 
Hence, at most that many calls to $\revive()$ in $R'$.
\end{proof}


\begin{claim}
\label{cl:newLabelBound}
The maximum number of labels that can exist in $R'$ (and hence the number of steps that include a call to the $\newLabel()$ function) is in $\bigO(\capacity N^3)$.
\end{claim}

\begin{proof}[\textbf{\emph{Proof of Claim~\ref{cl:newLabelBound}}}]
Recall that from the proofs of corollaries~\ref{cor:dolevLabelAdoptions}, \ref{cor:dolevLabelCreations} and~\ref{cor:labelingSchemeConvergence}, proving that the labeling algorithm of Dolev et al.~\cite{DBLP:journals/corr/DolevGMS15} is practically-self-stabilizing depends on the existence of a bound on the maximum number of labels, 
rather than the actual value of the bound.
We give a (polynomial) bound on the number of labels that exist during $R'$, which implies that the number of steps that include a call to $newLabel()$ (cf. Section~\ref{s:PSVC}) has the same bound.

By corollaries~\ref{cor:dolevLabelAdoptions} and \ref{cor:dolevLabelCreations} there can be at most $4N^2 + 4N\msg - 4N -2\msg$ 
labels in the system, where $\msg = \capacity N(N-1)$ is the maximum capacity of pairs in the communication channels.
Note that there can be at most $N + N^2 + N^3\cdot\capacity$ additional labels creations, due to calls to the $revive()$ function.
Thus, there can be at most $L := (N + N^2 + N^3\cdot\capacity) + (4N^2 + 4N\msg - 4N -2\msg) \in \bigO(\capacity N^3)$ labels in the system during $R'$, and hence at most that many calls to $\newLabel()$.
\end{proof}


In the labeling algorithm of Dolev et al.~\cite{DBLP:journals/corr/DolevGMS15}, each processor $p_i$ uses an $N$-size array of bounded FIFO queues, $storedLabels_i[]$, for keeping a label history.
The queue $storedLabels_i[j]$ stores the labels that $p_i$ has received that show $p_j$ as their creator, i.e., $\ell.\creator = j$ holds for every $\ell \in storedLabels_i[j]$ (cf. Section~\ref{s:DolevPaper}).
Recall that in Section~\ref{s:labelingAlgStabilizes} we extended the queue lengths for an execution of Algorithm~\ref{alg:SSVC} in which no processor calls $revive()$, to accommodate for the two labels that each pair includes.
By Claim~\ref{cl:newLabelBound}, we are able to extend the size of the label storage of the labeling algorithm, in order to accommodate for the extra label creations due to calls to the function $revive()$ in Algorithm~\ref{alg:SSVC}.
Thus, by Claim~\ref{cl:newLabelBound} and Section~\ref{s:labelingAlgStabilizes} we set $|storedLabels_i[j]| = L$, for every $p_i, p_j\in P$, where $L = (N + N^2 + N^3\cdot\capacity) + (4N^2 + 4N\msg - 4N -2\msg) \in \bigO(\capacity N^3)$.

\begin{claim}
\label{cl:linkStabilization}
There can be at most $(2\capacity + 1)N^2$ steps that include a call to $\resetLocal()$ in line~\ref{alg:SSVC:receiveResetPairs} due to the recovery of the link-layer algorithm~\cite{DBLP:conf/sss/DolevHSS12}.
\end{claim}

\begin{proof}[\textbf{\emph{Proof of Claim~\ref{cl:linkStabilization}}}]
Recall that the self-stabilizing link-layer algorithm of~\cite{DBLP:conf/sss/DolevHSS12}, which we rely on, requires at most $2\capacity + 1$ message arrivals per direction of a communication channel to stabilize.
Therefore, since there are $N(N-1)/2$ links in the system, where each of them is a bidirectional communication channel, there can be at most $2 \cdot (2\capacity + 1) \cdot N(N-1)/2 \leq (2\capacity + 1)N^2$ steps that include a call to $\resetLocal()$ due to stale information that,  at the starting system state, resides in the communication channels.
\end{proof}


\begin{claim}
\label{cl:mechanismBound}
Let $m_{j,i} = \langle \bullet, \la arriving_j, \receiverLocal_j\ra \rangle$ be a message that $p_i$ receives from $p_j$, via their communication channel, $channel_{j,i}$.
There can be at most $\capacity$ steps that include a call to $\resetLocal_i()$ in line~\ref{alg:SSVC:exhArrCheck}, 
due to stale information 
that appears in the field $\receiverLocal_j$ of $m_{j,i}$, 
where $m_{j,i}$ appears in $channel_{j,i}$ in the starting system state 
and $\receiverLocal_j$ sets $\equalStatic_i(local_i, \receiverLocal_j)$ to true. 
Hence, there can be at most $\msg\leq \capacity N^2$ such calls to $\resetLocal()$ in any execution. 
\end{claim}


\begin{proof}[\textbf{\emph{Proof of Claim~\ref{cl:mechanismBound}}}]
Notice that there can be at most $\capacity$ messages in the communication channel from $p_j$ to $p_i$, $channel_{j,i}$, at any time, and specifically in the starting system state. 
Each of these $\capacity$ messages in transit from $p_j$ to $p_i$ can possibly store a value of $\receiverLocal_j$, such that $\equalStatic_i(local_i, \receiverLocal_j)$ is true in line~\ref{alg:SSVC:exhArrCheck}, which leads to a call to $\resetLocal_i()$ in line~\ref{alg:SSVC:receiveResetPairs}.

We provide details about how this can occur. 
Consider a step $a_x \in R$ of $p_i$ that includes a call to $\resetLocal_i()$ due to a message arrival (line~\ref{alg:SSVC:receiveResetPairs}). 
Hence, $\equalStatic_i(local_i, \receiverLocal_j)$ was true during $a_x$. 
The fact that $a_x$ includes a call to $\resetLocal_i()$ does not cause the other $\capacity-1$ stale messages in the channel from $p_j$ to $p_i$ to be omitted. 
Therefore, there could be a subsequent step during which $\equalStatic_i(local_i, \receiverLocal_j)$ is true due to the other $\capacity-1$ messages in $channel_{j,i}$ that have stale information, since those messages appeared in the starting system state.


Such steps can be repeated at most $\capacity$ times for $channel_{j,i}$ and at most $\msg$ in total during $R'$, where $\msg$ is the number of messages in transit at any given time and hence in starting system state, $c_0$.
\end{proof}


\begin{claim}
\label{cl:pairRecyclingBound}
Let $L = (N + N^2 + N^3\cdot\capacity) + (4N^2 + 4N\msg - 4N -2\msg) \in \bigO(\capacity N^3)$ be the maximum number of labels that can appear in the system in $R'$ (Claim~\ref{cl:newLabelBound}).
During $R'$, there can be at most $2N\cdot L$ calls to $\resetLocal_i()$ in line~\ref{alg:SSVC:exhArrCheck} for every pair static part that appears in $local_i$ of a processor $p_i$.
Hence, for each pair static part that appears in the state ($local$) of a processor in $R'$, there can be at most $2N^2L$ calls to $\resetLocal_i()$ in line~\ref{alg:SSVC:exhArrCheck}.
\end{claim}

\begin{proof}[\textbf{\emph{Proof of Claim~\ref{cl:pairRecyclingBound}}}]
Let $p_i,p_j\in P$ be two processors and $m_{j,i} = \langle \bullet$, $\la arriving_j$, $\receiverLocal_j\ra \rangle$ be a message that $p_j$ sends to $p_i$ by adding it to $channel_{j,i}$.
Consider the case where the pair static part of $arriving_j$ sent by $p_j$ to $p_i$ causes $p_i$ in step $a_x$ to call $\resetLocal_i()$ and obtain $local_i = Z$.
Note that when referring to a value $Z$ or $Z_x$ that a variable takes, e.g., $local_i$, we treat $Z$ and $Z_x$ as (immutable) literals, i.e., pair values that do not change.
In Part I of the proof, we show that there can be at most one more call to $\resetLocal_i()$ (i.e., a total of at most two) due to receiving the same pair static part from $p_j$, before $p_j$ stores a pair with a different static part in $local_j$.
The proof relies on the token-passing mechanism (lines~\ref{alg:SSVC:doForeverEnd}, \ref{alg:SSVC:storeArriving}, and~\ref{alg:SSVC:exhArrCheck}).
In the proof of this claim, we assume that the token passing mechanism has stabilized (since Claim~\ref{cl:mechanismBound} has already showed that the token passing mechanism of lines~\ref{alg:SSVC:doForeverEnd}, \ref{alg:SSVC:storeArriving}, and~\ref{alg:SSVC:exhArrCheck} can cause an additive  (bounded) number of calls to $\resetLocal()$.
Then, in Part II of the proof, we show that each pair static part can be created by a processor at most $L$ times in $R'$.
We combine Part I and II to obtain the claim's bound.
Throughout the claim's proof, we denote with $\s(Z) = \la\la\ell_1, \bot, o_1\ra, \la\ell_2,  m_2, o_2 \ra \ra$ the static part of a pair $Z = \la\la\ell_1,m_1, o_1\ra, \la\ell_2,  m_2, o_2 \ra \ra$.

\paragraph{Part I} Recall from line~\ref{alg:SSVC:exhArrCheck} that for any message $m_{j,i} = \langle \bullet$, $\la arriving_j$, $\receiverLocal_j\ra \rangle$ that $p_j$ sends to $p_i$, $\equalStatic_i(local_i, \receiverLocal_j)$  has to be true for $p_i$ to process $arriving_j$.
Let $local_i = Z_{i_1}$ in state $c_x$ and suppose in the step $a_x$ that immediately follows $c_x$, processor $p_i$ calls $\resetLocal_i()$ in line~\ref{alg:SSVC:exhArrCheck} after receiving $m_{j,i} = \langle \bullet$, $\la Z_{j_1}$, $\receiverLocal_j\ra \rangle$, which produces $local_i = Z_{i_2}$.
Due to the token passing mechanism (lines~\ref{alg:SSVC:doForeverEnd}, \ref{alg:SSVC:storeArriving},  and~\ref{alg:SSVC:exhArrCheck}), $p_i$ can process a new message from $p_j$ in a step that follows $a_x$, only after $p_j$ receives $Z_{i_2}$ or a subsequent pair that appeared in $local_i$ after $a_x$ (possibly after receiving other pairs).
Let $a_{x'}$ be the first step after $a_x$ in which $p_j$ stores in $local_j$ a pair with static part different than $\s(Z_{j_1})$.
We show that there can be at most one step between $a_x$ and $a_{x'}$ (different than $a_x$ and $a_{x'}$), in which $p_i$ calls $\resetLocal_i()$ due to receiving a pair with static part equal to the one of $Z_{j_1}$.
Hence, there can be at most two such calls to $\resetLocal_i()$, until a state in which $p_j$ stores a pair in $local_j$ with static part different than $\s(Z_{j_1})$.

Recall that $local_i = Z_{i_2}$ is the value of $local_i$ in the state that immediately follows step $a_x$, in which $p_i$ calls $\resetLocal_i()$, and let $\ell_{i_2} = Z_{i_2}.curr.\ell = Z_{i_2}.prev.\ell$ for brevity.
Observe from the definition of $\resetLocal()$ (line~\ref{alg:SSVC:resetLocal}), that $\getLabel_i()$ returns $\ell_{i_2}$ according to a possible update of the local maximal label in line~\ref{alg:SSVC:callDolevArriving}.
 Then $\ell_{i_2}$ is equal to 
either  $Z_{i_1}.curr.\ell$ (Case a), 
or $Z_{j_1}.curr.\ell$ (Case b), 
or $\ell_{i_3}$ is different than both $Z_{i_1}.curr.\ell$ and $Z_{j_1}.curr.\ell$ (Case c).
The latter case refers to a situation in which $Z_{i_1}.curr.\ell$ and $Z_{j_1}.curr.\ell$ cancel each other and $p_i$ produces a new label in line~\ref{alg:SSVC:callDolevArriving} (which is then used in line~\ref{alg:SSVC:receiveResetPairs}).

\subparagraph{Case a}  
In this case $Z_{j_1}.curr.\ell \lb Z_{i_1}.curr.\ell$ holds (cf. Section~\ref{s:DolevPaper} regarding the $\lb$ relation).
That is, $Z_{i_1}.curr.\ell$ was the value of $local_i.curr.\ell$ in the system state before $a_x$, $Z_{i_1}.curr.\ell$ remains as the maximal label of $p_i$ even after $p_i$ receives $Z_{j_1}$ in $a_x$, and hence $p_i$ uses $Z_{i_1}.curr.\ell$ in the return pair of $\resetLocal_i()$ in $a_x$, $Z_{i_2} = \la\la Z_{i_1}.curr.\ell, zrs, zrs\ra, \la Z_{i_1}.curr.\ell, zrs, zrs\ra\ra$.
We show that in a system state that immediately follows a step $a_{y_a}$ in which $p_j$ receives $Z_{i_2}$ from $p_i$ (hence after $a_x$), $\s(Z_{j_1}) \neq \s(local_j)$ holds.
This is true due to the fact that $p_j$ receives the message $m_{i,j} = \langle \bullet$, $\la Z_{i_2}, \bullet\ra \rangle$ from $p_i$, such that $Z_{i_2}.curr.\ell = Z_{i_1}.curr.\ell$, or a message from $p_i$ with a pair which has a label larger than $Z_{i_2}.curr.\ell$ (due to the token passing mechanism in lines~\ref{alg:SSVC:doForeverEnd}, \ref{alg:SSVC:storeArriving}, and~\ref{alg:SSVC:exhArrCheck}). 
Since $Z_{j_1}.curr.\ell \lb Z_{i_1}.curr.\ell$ (this case's assumption), we have that $Z_{j_1}.curr.\ell$ cannot be the label that appears in $local_j.curr.\ell$ after $a_{y_a}$, because during $a_{y_a}$ line~\ref{alg:SSVC:callDolevArriving} causes $p_j$ to adopt the label $Z_{i_2}.curr.\ell = Z_{i_1}.curr.\ell,$ since we have $Z_{j_1}.curr.\ell \lb Z_{i_1}.curr.\ell$ (or a label larger than $Z_{i_1}.curr.\ell$).

\subparagraph{Case b}  
In this case $Z_{i_1}.curr.\ell \lb Z_{j_1}.curr.\ell$ holds, due to the fact that $p_i$ sets $local_i = Z_{i_2} = \la\la Z_{j_1}.curr.\ell, zrs, zrs\ra, \la Z_{j_1}.curr.\ell, zrs, zrs\ra\ra$ when calling $\resetLocal_i()$ in step $a_x$.
Let $a_{y_b}$ be the step (that follows $a_x$) when $p_j$ receives $Z_{i_2}$ and $c_{y_b}$ be the system state that immediately precedes $a_{y_b}$.
Note that the next pair after $a_x$ that $p_j$ will receive from $p_i$ can possibly have a larger label than $\ell_{i_2} = Z_{i_2}.curr.\ell = Z_{j_1}.curr.\ell$, but this event falls in Case c, which we study below.
Thus, in case $p_j$ indeed receives $Z_{i_2}$ in $a_{y_b}$, either (b-i) $\s(local_j) = \s(Z_{i_2})$ or (b-ii) $\s(local_j) \neq \s(Z_{i_2})$ holds (in $c_{y_b}$).


In case (b-i) $\s(local_j) = \s(Z_{i_2})$, the two pairs can be merged to $local_j = Z_{j_2}$, such that $\s(Z_{j_2}) = \s(Z_{i_2}) = \la\la\ell_{i_2}, \bot, zrs\ra, \la\ell_{i_2}, zrs, zrs\ra\ra$ is the representation of $Z_{j_2}$'s static part.  
Thus, in a step that follows $a_{y_b}$, processor $p_j$ sends $Z_{j_2}$ to $p_i$ and in step $a_{z_b}$, processor $p_i$ receives $Z_{j_2}$.
Consider the case where $p_i$ merges $Z_{j_2}$ with $local_i$ in $a_{z_b}$ to $local_i = Z_{i_3}$.
\begin{itemize}
\item If $\s(Z_{i_3}) = \s(Z_{j_2})$ (due to merge or a $\resetLocal_i()$ that used $\ell_{i_2}$ as $p_i$'s maximal label), then we loop back to the beginning of Case b (without having an additional call to $\resetLocal_i()$).


\item Otherwise, if $\s(Z_{i_3}) \neq \s(Z_{j_2})$, then $\ell_{i_2}$ is not the maximal label in $p_i$ (due to a call to $merge_i()$ (line~\ref{alg:SSVC:merge}) or a call  to $\resetLocal_i()$ in $a_{z_b}$). 
Hence, $Z_{i_3}.curr.\ell$ is either larger than $\ell_{i_2}$ or cancels $\ell_{i_2}$.
Subsequently, once $p_j$ receives $Z_{i_3}.curr.\ell$ from $p_i$ (due to the token passing mechanism in lines~\ref{alg:SSVC:doForeverEnd}, \ref{alg:SSVC:storeArriving}, and~\ref{alg:SSVC:exhArrCheck}), $local_j.curr.\ell$ will change to either $Z_{i_3}.curr.\ell$ or a larger label that resides in $p_j$.
\end{itemize}


Hence, in this subcase (b-i), a single pair static part that was stored in $local_j$, can cause $p_i$ to call $\resetLocal_i()$ at most twice due to the static part of $Z_{j_1}$ before $p_j$ changes the static part of $local_j$ to another one.


In case (b-ii), the fact that $\s(local_j) \neq \s(Z_{i_2})$ holds in the system state immediately before $a_{y_b}$ implies that $Z_{i_2}.curr.\ell = Z_{j_1}.curr.\ell$ is not the maximal label in $p_j$ immediately before $a_{y_b}$.
The latter holds, because $local_j$ used to hold the value $Z_{j_1}$ before $a_{y_b}$ and $p_j$ can only substitute the value of $local_j.curr.\ell$ for a label with a larger label than $Z_{i_2}.curr.\ell = Z_{j_1}.curr.\ell$.
Hence, for the value of $local_j$ in the system state that immediately follows $a_{y_b}$, it holds that $\s(local_j) \neq \s(Z_{j_1})$, since $local_j.curr.\ell$ cannot be equal to $Z_{j_1}.curr.\ell$.

 

\subparagraph{Case c}  
In this case both $Z_{i_1}.curr.\ell$ and $Z_{j_1}.curr.\ell$ are canceled in $a_x$ and $p_i$ creates a larger label $\ell_{i_3}$ to use in $Z_{i_2}$. 
Thus, in the system state that immediately follows step $a_{y_c}$, in which $p_j$ receives $Z_{i_2}$ (or a pair with a $curr.\ell$ that is larger than $Z_{i_2}.curr.\ell$), it holds that $\s(local_j) \neq \s(Z_{j_1})$, since $Z_{j_1}.curr.\ell$ will not be the largest label in $p_j$'s state that immediately precedes $a_{y_c}$.\\ 


By the case analysis above, we conclude that a single pair static part in $p_j$ can cause $p_i$ to call $\resetLocal_i()$ either once (cases a, b-ii, and c) or twice (case b-i), before $p_j$ changes the static part of $local_j$ to another one (different from the static part of $Z_{j_1}$).

\begin{figure*}[t!]
\centering
\includegraphics[scale=0.69]{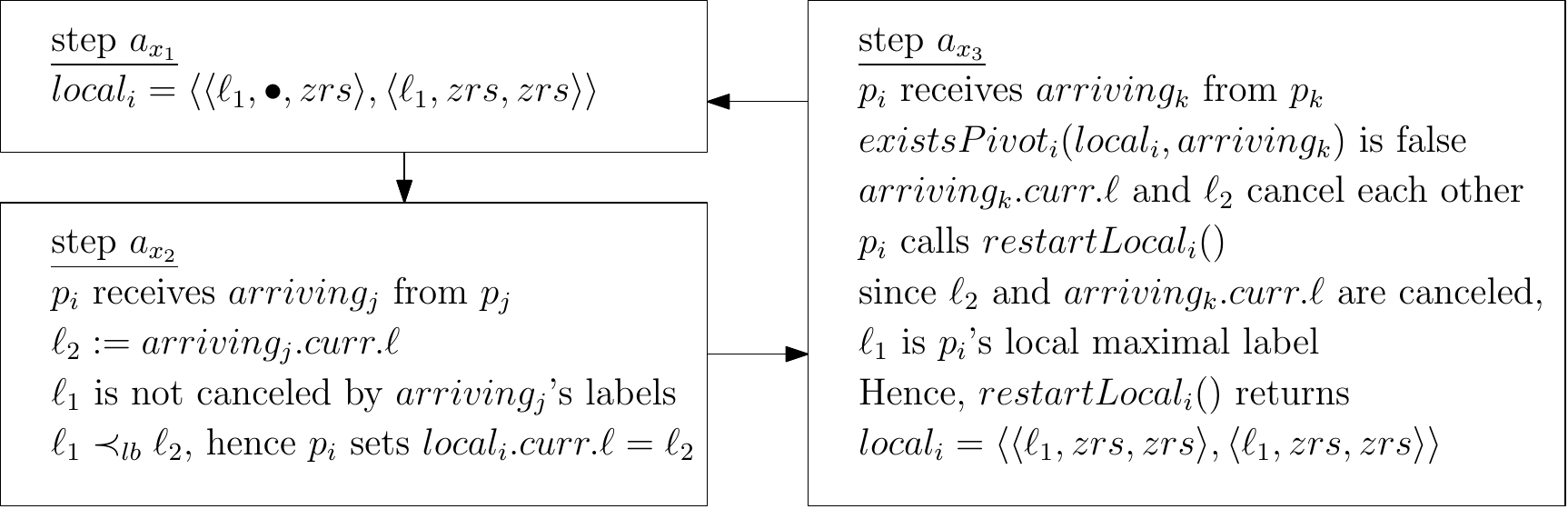}
\caption{Recycling of a pair static part by a processor $p_i$. 
In the end of step $a_{x_1}$, processor $p_i$ stores $local_i = \la \la \ell_1, \bullet, zrs\ra, \la \ell_1, zrs, zrs\ra \ra$.
In step $a_{x_2}$, $p_i$ receives $arriving_j$ from $p_j$, such that $\ell_2 := arriving_j.curr.\ell$ becomes $p_i$'s maximal label without canceling $\ell_1$ (the labels were created by different processors).
In step $a_{x_3}$, $p_i$ receives $arriving_k$ from $p_k$, such that $arriving_k.curr.\ell$ and $\ell_2$ cancel each other.
Hence, $\ell_1$, becomes again $p_i$'s local maximal label and after $p_i$ calls $\resetLocal_i()$, $local_i = \la \la \ell_1, zrs, zrs\ra, \la \ell_1, zrs, zrs\ra \ra$ holds.
That is, $p_i$ \textit{recycled} the same pair static part.
Steps $a_{x_1}$, $a_{x_2}$, and $a_{x_3}$ can be repeated (possibly with other steps in between) at most $L$ times (Claim~\ref{cl:pairRecyclingBound}, Part II).}
\label{fig:pairRecycling}
\end{figure*}

\paragraph{Part II} 

We show that there is a bound on the number of times a processor can create the same pair static part.
Let us consider a scenario in which processor $p_j$ stores the pair $Z_1$ in $local_j$ at some system state $c$, and then stores the pair $Z_2$, such that $\s(Z_1) \neq \s(Z_2)$, at a system state $c'$ that follows $c$, before creating (via $\resetLocal_j()$) the pair $Z_3$, such that $\s(Z_1) = \s(Z_3)$, which results in a subsequent system state $c''$ that follows $c'$.
We illustrate this scenario in Figure~\ref{fig:pairRecycling}.

We argue that $Z_3.curr.\ell = Z_3.prev.\ell$ holds in $c''$. 
Assume, towards a contradiction, that $Z_3.curr.\ell \neq Z_3.prev.\ell$ holds in $c''$.
Since $\s(Z_1) = \s(Z_3)$, we have that $Z_1.curr.\ell \neq Z_1.prev.\ell$ holds in $c$.
Moreover, since line~\ref{alg:SSVC:removeStaleInfo} makes sure that either $Z_1.curr.\ell=Z_1.prev.\ell$ or $Z_1.prev.\ell$ is canceled, it holds that $Z_1.prev.\ell$ is canceled, and hence $Z_3.prev.\ell = Z_1.prev.\ell$ is canceled.
Thus, $p_j$ in a step that follows $c$ used the canceled label $Z_1.prev.\ell$ to create a new pair and store it in $local_j$, which is a contradiction, because $\getLabel_j()$ by its definition never returns a canceled label.
Thus, it can only be the case that $Z_3.curr.\ell = Z_3.prev.\ell$ in $c''$, which means that $p_j$ created $Z_3$ via a call to a $\resetLocal_j()$.
The latter implies that $\s(Z_1) = \s(Z_3) = \la\la \ell_x, \bot, zrs \ra, \la \ell_x, zrs, zrs \ra\ra$, where $\ell_x := Z_k.curr.\ell = Z_k.prev.\ell$, for $k\in \{1,3\}$.


In fact, for this scenario to occur, $\ell_x$ should remain non-canceled in the label storage of $p_j$ between $c$ (where $local_j = Z_1$) and $c''$ (where $local_j = Z_3$), so that $p_j$ can recycle $\ell_x$ via the labeling algorithm and have $\ell_x$ returned through $\getLabel_j()$, as part of a $\resetLocal_j()$.
For that to happen, $Z_2.curr.\ell$ must be canceled by another label (so then $p_j$ recycles $\ell_x$).
Such cancelation scenarios can occur at most $L$ times in $R'$, since there exist at most $L$ labels in $R'$.


We remark that the number of pairs with static part different than $\s(Z_1)$ that $p_j$ stores in $local_j$ between the states $c$ and $c''$ does not change the fact that $p_j$ can (create and thus) store $Z_3$ in $c''$.
That is, the recycling scenario that we describe above with $Z_1$ and $Z_3$, is possible to occur even if $p_j$ stores the pairs $Z_k$, for every $k$ in a set of indices $K$, such that $\s(Z_k) \neq \s(Z_1)$, $k\in K$.
However, for the recycling scenario to occur we require that all the labels of the pairs $Z_k$, $k\in K$, cancel each other and make $\ell_x$ the maximal label in $p_j$'s state in a system state between $c$ and $c''$, which results to $p_j$ using $\ell_x$ to create $Z_3$.


We now show that by combining Part I and II we obtain the claim's bounds.
By Part I, a single pair static part of a pair $Z$ that processor $p_j$ stores can cause another processor $p_i$ to call $\resetLocal_i()$ at most twice before $p_j$ stores a pair with a different static part than $Z$.
By Part II, after $p_j$ stores a pair in $local_j$ that has a static part different than $Z$, it can create again a pair with the same static part as $Z$ at most $L$ times in $R'$.
Hence, a single pair static part can cause $p_i$ to call $\resetLocal_i()$ in at most $2L$ steps in $R'$, hence $2(N-1)\cdot L \leq 2N\cdot L$ for all other processors (since there are $N-1$ choices for $p_i$).
Since, there are $N$ choices for $p_j$ there can be at most $2N^2L$ steps that include a call to $\resetLocal()$ for each pair static part.
\end{proof}

In the proof of Claim~\ref{cl:receiveRestartsBound}, we use the claims of this lemma and Lemma~\ref{rem:causesOfRestart} to show that the number of steps in $R'$ that include a call to $\resetLocal()$ due to line~\ref{alg:SSVC:receiveResetPairs} is significantly less than $|R'|$.

\begin{claim}
\label{cl:receiveRestartsBound}
The number of steps that include a call to $\resetLocal()$ due to line~\ref{alg:SSVC:receiveResetPairs} in $R'$ is in $\bigO(N^8\capacity)$.
\end{claim}

\begin{proof}[\textbf{\emph{Proof of Claim~\ref{cl:receiveRestartsBound}}}]
First, we show that the claim is true when there are no calls to $revive()$ in $R'$.
Then, we extend our arguments to show that the claim holds even when there are calls to $revive()$ during $R'$.
In the following, we denote with $V = N + N^2 + N^3\capacity$ the maximum number of steps that include a call to $revive()$ (Claim~\ref{cl:reviveBound}) and $L = (N + N^2 + N^3\cdot\capacity) + (4N^2 + 4N\msg - 4N -2\msg) \in \bigO(\capacity N^3)$ the maximum number of labels that can be created during $R'$ (Claim~\ref{cl:newLabelBound}).

First, suppose that there are no calls to $revive()$ during $R'$.
Recall that there are at most $N+\msg$ distinct pairs in the starting system state of $R'$, $c_x$.
Since there are no calls to $revive()$ during $R'$ and due to Lemma~\ref{rem:causesOfRestart}, any pair that appears in $R'$ and differs to the ones that appear in $c_x$ with respect to their pair static part (i.e., all pair variables except for $curr.m$), can only be created in a step of $R'$ in which a processor called $\resetLocal()$.
A pair that is an output of $\resetLocal()$ has the form $\la\la\ell, zrs, zrs\ra, \la\ell, zrs, zrs\ra\ra$, where $\ell = \getLabel()$ is the local maximal label and $zrs$ is an $N$-size vector of zeros.
%
Thus, during $R'$ there can be at most $N+M+L$ pairs with respect to their static part.
The latter holds, since 
(i) each of the $N+M$ pairs from the starting state can be the input of a call to $\resetLocal()$ (line~\ref{alg:SSVC:receiveResetPairs}), and
(ii) any further call to $\resetLocal()$ produces a pair of the form $\la\la\ell, zrs, zrs\ra, \la\ell, zrs, zrs\ra\ra$, and there can be at most $L$ such pairs during $R'$ due to Claim~\ref{cl:newLabelBound} (since their static part only differs on $\ell$).

Hence, if there are no calls to $revive()$ during $R'$, there can be at most $(2\capacity + 1)N^2 + \capacity N^2 + N^2 L (N + M + L)$ calls to $\resetLocal()$.
This bound holds, since at most $2\capacity+1$ calls to $\resetLocal()$ can be caused due Claim~\ref{cl:linkStabilization}, $\capacity N^2$ due to Claim~\ref{cl:mechanismBound}, there can be at most $N + M + L$ pair static parts in $R'$, 
and each of them can cause at most $2N^2 L$ calls to $\resetLocal()$ due to Claim~\ref{cl:pairRecyclingBound} (including concurrent calls).

In case there exist steps in $R'$ that include calls to $revive()$, then at most $2V$ more pair static parts are added in the system.
The latter holds, because each of the $V$ pair static parts are added in the system by the output of $revive()$, can be the input to $\resetLocal()$, which in turn creates a new pair static part (hence at most $V$ more pairs with different static parts).
Thus, we update the calculation of the bound as follows: 
$(2\capacity + 1)N^2 + \capacity N^2 + 2N^2 L (N + M + L + 2V) \in \bigO(N^8\capacity)$.
\end{proof}


We are now ready to combine the claims of this proof to prove the lemma statement.
In the beginning of the proof we showed that each processor calls $\resetLocal()$ in line~\ref{alg:SSVC:removeStaleInfo} at most once for any execution and in Claim~\ref{cl:receiveRestartsBound} we showed that during $R'$ each processor calls $\resetLocal()$ in line~\ref{alg:SSVC:receiveResetPairs} in a number of steps that is significantly less than $\MI$. 
Thus, the number of steps in which a processor calls $revive()$ (Claim~\ref{cl:reviveBound}) or $\resetLocal()$ during $R'$ is significantly less than $\MI$. 
%
\end{proof}


\begin{corollary}
\label{cor:practicallyStabilizing}
Let $R$ be an $\pinf$-scale execution of Algorithm~\ref{alg:SSVC}.
By the definition of $\pinf$-scale (Section~\ref{s:systemSettings}), there exists an integer $x\ll \MI$, such that $|R| = x\cdot\MI$ holds.
By Lemma~\ref{lem:boundRestartLocal} the number of steps in which a processor calls $\resetLocal()$ or $revive()$ in every $\MI$-segment $R'$ of $R$ is significantly less than $|R'| = \MI$.
Hence, since $x\ll \MI$, the number of steps in which a processor calls $\resetLocal()$ or $revive()$ in $R$ is also significantly less than $|R|$.
Therefore, by Lemma~\ref{lem:LE} the number of states in $R$ in which Requirement\reqs\ does not hold is significantly less than $|R|$, and thus (by Definition~\ref{def:practSelfStab}) Algorithm~\ref{alg:SSVC} is practically-self-stabilizing.
%
%
\end{corollary}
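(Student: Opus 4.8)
The plan is to bound, for every $\pinf$-scale subexecution $R'\seg R$, the number $f_{R'}$ of states in which Requirement~\ref{req:2act} fails, by pinning down exactly which algorithmic actions can introduce such a failure and then counting how often they happen. First I would observe that the only actions of Algorithm~\ref{alg:SSVC} that can make a state deviate from the abstract task are (a) a call to $\resetLocal()$, which may zero a nonzero $local.curr.m$ and hence lose events (Remark~\ref{rem:restartBreaksReqs}), and (b) a second call to $revive()$ by the same processor between two states, which destroys the common pivot item needed to compare the two vector clock pairs (Remark~\ref{rem:twoReviveBreakReqs}). Conversely, Lemma~\ref{lem:LE}(I) says any subexecution with no $\resetLocal()$ step and at most one $revive()$ per processor is legal. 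So it suffices to exhibit bounds $B_{restart}(R')$ and $B_{revive}(R')$ on the number of $\resetLocal()$ and $revive()$ steps in $R'$ with $B_{restart}(R')\ll|R'|$ and $B_{revive}(R')\ll|R'|$; then Lemma~\ref{lem:LE}(II) and the pigeonhole principle give a legal segment $R^*\seg R'$ with $|R^*|\nll|R'|$, and since the maximal such $R^*$ has length at least $|R'|/(B_{restart}(R')+B_{revive}(R'))$, the number of deviating states is $\bigO(B_{restart}(R')+B_{revive}(R'))\ll|R'|$.

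Most of the scaffolding for the two bounds is already available. The local invariant $\localInvariants(i)$ is preserved by $revive()$, $increment()$, $merge()$ and $\resetLocal()$ (Lemmas~\ref{lem:revive}--\ref{lem:restart}, Corollary~\ref{cor:reviveMergeIncrementAndPivot}), so uncontrolled corruption of $local_i$ cannot propagate. The pair-evolution graph $\pgraph(R')$ records every function application, and Lemma~\ref{rem:causesOfRestart} shows that among the functions a processor can call only $\resetLocal()$ and $revive()$ can change the \emph{static part} of a pair; with Definition~\ref{def:cPcauses} this means a $\resetLocal()$ step is ever caused only by stale starting-state information, an earlier $\resetLocal()$, or an earlier $revive()$. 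Finally, since each pair carries two labels and $revive()$ mints fresh labels, I would re-run the convergence analysis of Dolev et al. with enlarged $storedLabels_i$ queues: Corollaries~\ref{cor:dolevLabelAdoptions}--\ref{cor:labelingSchemeConvergence} extend their Lemmas~4.3, 4.4 and Theorem~4.2 and show the labeling algorithm stays practically-self-stabilizing once the queue lengths are set to $L=\bigO(\capacity N^3)$, the bound on the total number of labels ever present (Claim~\ref{cl:newLabelBound}).

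With these in hand, Lemma~\ref{lem:boundRestartLocal} delivers both bounds inside any $R'$ with $|R'|\le\MI$. For $revive()$: a single pair value near exhaustion can cause at most $N$ concurrent revives, there are at most $N+\msg$ pairs in the starting state and fewer than $\MI$ increments, so $B_{revive}(R')\le N+N^2+N^3\capacity$ (Claim~\ref{cl:reviveBound}). For $\resetLocal()$: there is at most one restart per processor from line~\ref{alg:SSVC:removeStaleInfo} (Corollary~\ref{cor:reviveMergeIncrementAndPivot}), at most $\bigO(\capacity N^2)$ restarts from the stabilization of the link-layer protocol and of the token-passing mechanism (Claims~\ref{cl:linkStabilization},~\ref{cl:mechanismBound}), and, for each of the $\bigO(N+\msg+L+B_{revive}(R'))$ distinct pair static parts that can appear, at most $2N^2L$ restarts caused by receiving that static part (Claim~\ref{cl:pairRecyclingBound}); summing gives $B_{restart}(R')\in\bigO(N^8\capacity)$ (Claim~\ref{cl:receiveRestartsBound}).

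I expect Claim~\ref{cl:pairRecyclingBound} to be the main obstacle, so I would invest the bulk of the work there. Its Part~I must use the token-passing mechanism (lines~\ref{alg:SSVC:doForeverEnd},~\ref{alg:SSVC:storeArriving},~\ref{alg:SSVC:exhArrCheck}) to argue that, once that mechanism has stabilized, a fixed static part held by $p_j$ forces $p_i$ to call $\resetLocal_i()$ at most twice before $p_j$ must move $local_j$ to a different static part --- a delicate case analysis on which label $p_i$ adopts after its reset (the old local label, the arriving label, or a fresh label cancelling both), leaning on the fact that $\getLabel()$ never returns a cancelled label. Its Part~II must bound how often a processor can \emph{recycle} a static part of the form $\la\la\ell_x,zrs,zrs\ra,\la\ell_x,zrs,zrs\ra\ra$, charging each recycling to one of the $\le L$ labels via a cancellation event. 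Once these are established, Corollary~\ref{cor:practicallyStabilizing} closes the theorem: writing $|R| = x\cdot\MI$ with $x\ll\MI$ and applying the per-$\MI$-segment bounds above, the number of $\resetLocal()$ and $revive()$ steps in $R$ is $\ll|R|$, so by Lemma~\ref{lem:LE} the number of states where Requirement~\ref{req:2act} fails is $\ll|R'|$ in every $\pinf$-scale $R'$ --- which is exactly Definition~\ref{def:practSelfStab}.
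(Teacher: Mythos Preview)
Your proposal is correct and follows essentially the same architecture as the paper. Note that in the paper this corollary carries no separate proof: the entire argument is stated inline in the corollary text, namely decompose the $\pinf$-scale $R$ into $x\ll\MI$ many $\MI$-length segments, apply Lemma~\ref{lem:boundRestartLocal} to each segment to bound the $\resetLocal()$/$revive()$ steps, and invoke Lemma~\ref{lem:LE} and Definition~\ref{def:practSelfStab}. Your final paragraph reproduces exactly this, and the preceding paragraphs accurately summarize the supporting machinery (Lemmas~\ref{lem:revive}--\ref{lem:restart}, Corollary~\ref{cor:reviveMergeIncrementAndPivot}, Lemma~\ref{rem:causesOfRestart}, Claims~\ref{cl:reviveBound}--\ref{cl:receiveRestartsBound}) that the paper develops en route to Lemma~\ref{lem:boundRestartLocal}; in particular your identification of Claim~\ref{cl:pairRecyclingBound} as the crux, with its Part~I token-passing case analysis and Part~II label-recycling charging argument, matches the paper's treatment.
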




\remove{

\subsubsection{Old proof versions}

In the [[following,]] we prove that the number of times that a processor calls $\resetLocal_i()$, [[i.e.,]] $\Phi_R$ (Section~\ref{s:invariants}), during an $\pinf$-scale execution is temporary.
[[Lemma~\ref{lem:noStaleAfterOneIteration} shows]] that for any processor $p_i$ that has completed at least one iteration of its do-forever loop, the local invariants hold, [[i.e.,]] $\varphi_i := \mirroredLocal_i() \land \lblOrdrd_i(local_i)$ holds, and for any message $m_i$ $=$ $encapsulate_i(local_i)$ (Section~\ref{s:interfaceDolev}) that $p_i$ sends, $\chi_i := \legitArriving_i(m_i, local_i.curr.\ell) \land \pairInvariants_i(local_i)$ holds (however $\existsOverlap_i(local_i,arriving_j)$ might not always hold).
[[Lemma~\ref{lem:boundedDeviations} shows]] that for any $\pinf$-scale execution $R$, the number of [[system]] states in which the invariants do not hold, [[i.e.,]] $\Phi_R$, is temporary.
Therefore, by combining lemmas~\ref{lem:ReqEventuallyHold} and~\ref{lem:boundedDeviations}, we have that $f_R$ is temporary,  hence Theorem~\ref{thm:reqHold} holds (Corollary~\ref{cor:thmHolds}).

\begin{lemma}
\label{lem:noStaleAfterOneIteration}
For any processor $p_i\in P$ that has completed one iteration of its do-forever loop (lines~\ref{alg:SSVC:doForeverStart}--\ref{alg:SSVC:doForeverEnd}), $\mirroredLocal_i() \land \lblOrdrd_i(local_i)$ holds.
Also, for each message $m_i = encapsulate_i(local_i)$ that $p_i$ sends after completing one iteration (line~\ref{alg:SSVC:doForeverEnd}), $\legitArriving_i(m_i, local_i.curr.\ell) \land \pairInvariants_i(local_i)$ holds.
\end{lemma}

\EMS{Iosif, when you say that a predicate holds, you should refer to a particular system state. When you say that a particular function runs, you should refer to particular step, say, $a_x$. I am saying that because, and this is the important part, we need to show that $a_x$ runs eventually and not another step $a_y$ is running only. } 
\begin{proof}

\noindent \textbf{The invariant $\mirroredLocal_i() \land \lblOrdrd_i(local_i)$.~~}
Let $\varphi_i \equiv \mirroredLocal_i() \land \lblOrdrd_i(local_i)$ for brevity.
If $\varphi_i$ does not hold at the first iteration of the do-forever loop (lines~\ref{alg:SSVC:doForeverStart}--\ref{alg:SSVC:doForeverEnd}), then $p_i$ calls $\resetLocal_i()$.
Hence, by Lemma~\ref{lem:restart}, $\varphi_i$ holds after the execution of line~\ref{alg:SSVC:removeStaleInfo}.
Note that only $merge_i()$, $revive_i()$, $increment_i()$, and $\resetLocal_i()$ are changing the value of $local_i$.
%
In every step in which the invariants do not hold, $p_i$ calls $\resetLocal_i()$ (lines~\ref{alg:SSVC:removeStaleInfo} and~\ref{alg:SSVC:receiveResetPairs}), and by Lemma~\ref{lem:restart}, $\varphi_i$ holds for the return value of $\resetLocal_i()$.
Otherwise, if the invariants hold and $p_i$ calls any of the functions $increment_i()$, $merge_i()$, or $revive_i()$, then $\varphi_i$ holds for their return value due to Corollary~\ref{cor:reviveMergeIncrementAndPivot}.
%
Thus, $\varphi_i$ holds for every state that follows the first step in which $p_i$ completes one iteration of its do-forever loop.

\noindent \textbf{The invariant $\legitArriving_i(m_i, local_i.curr.\ell) \land \pairInvariants_i(local_i)$ holds.~~}
Let $m_i = encapsulate_i(local_i)$ be a message that $p_i$ sends to all other processors at line~\ref{alg:SSVC:doForeverEnd} (after it has executed lines~\ref{alg:SSVC:doForeverStart}--\ref{alg:SSVC:doForeverRevive}).
Since $\lblOrdrd_i(local_i)$ holds, $(local_i.prev.\ell \preceq_{lb} local_i.curr.\ell)$ also holds. \EMS{Iosif, please mention here what happen if restartLocal is executed in case $\lblOrdrd_i(local_i)$ does not hold. Give the name of the Lemma here.}
Also, $\neg exhausted_i(local_i)$ holds due to the execution of line~\ref{alg:SSVC:doForeverRevive}. \EMS{Iosif, please mention here what happen if revive is executed in case $exhausted_i(local_i)$ does not hold. Give the name of the Lemma here.}
Moreover, $\legitArriving_i(m_i, local_i.curr.\ell)$ holds, since $\mirroredLocal_i()$ \EMS{Iosif, I dont know why this is true. If the next part of the sentence explains this, then I did not get it so clearly from the first read. Please add the conditions for $\legitArriving_i(m_i, local_i.curr.\ell)$ while mentioning where it appear. Then, say that you are going to show that and then bring the explanation that you have next.} holds, hence $\getLabel_i() = local_i.curr.\ell$ and thus $local_i.curr.\ell$ is the maximal label (known by $p_i$) that the labeling algorithm adds to $m_i$, which is then sent to all other processors.
Hence $\legitArriving_i(m_i, local_i.curr.\ell) \land \pairInvariants(local_i)$ holds, and the proof is complete.
\end{proof}


\remove{

\EMS{Iosif, The comments below were given without me able to understand what the text is about. Obviously, we need to praise this as a lemma and write a proof. You can safely ignore all comment for this example, and just discuss with me the lemma and the proof.}

\EMS{Iosif, please motivate briefly why we need an example here. This is uncommon for proof and we need to make it clear that we dont have proof by example. What is the conclusion that this example brings? Where and how do we use this conclusion? Also, why having the entire example in italic?}

\begin{example}[Example of pivot creation with two processors]
\label{eg:pivotCreation}
Let $R$ be an $\pinf$-scale execution, $P = \{p_1,p_2\}$, and assume that in $R$, all odd steps are taken by $p_1$ and all even steps are taken by $p_2$.
Also, let \EMS{what is $\ell_{cr_i}$ and what is $\ell_{pr_i}$? Can you mention this and say where it is defined? Or maybe you define it here, then say what does cr and pr stands for. Is it current and previous?} $\ell_{cr_i} = local_i.curr.\ell$ and $\ell_{pr_i} = local_i.prev.\ell$,  $i\in \{1,2\}$, in the starting configuration.
Moreover, assume that initially $\existsOverlap_i(local_1,local_2)$ does not hold, the communication channels are empty, and the variables of the labeling algorithm store only the two labels in $local_i$, $i\in \{1,2\}$.

\EMS{Iosif, you say here, `the first step' but you refer to two steps. Does it make sense?}
In the first step in which $p_1$ and $p_2$ receive each others $local$, they will both call $\resetLocal_i()$,\EMS{Say here which line number.} since $\existsOverlap_i(local_1,local_2)$ does not hold. 
Hence, for $i\in \{1,2\}$, $local_i$ will be replaced with $\la y_i,y_i\ra$, where $y_i = \la\ell_{n_i},zrs, zrs\ra$ \EMS{Iosif, what is $\ell_{n_i}$? is it something new? Please explain. Why is it greater? Who creates it?} and $\ell \lb \ell_{n_i}$, for each $\ell\in\{\ell_{cr_1}, \ell_{cr_2}, \ell_{pr_i}\}$.
In the next step where $p_i$ receives a message from $p_j$, for $i,j\in\{1,2\}$ and $i\neq j$, $p_i$ will be able to create \EMS{Iosif, what if it does not need to create and it can just adopt? Are we better saying `will be able to have, either via label adoption or creation'} a label $\ell_{m_i}$ that is larger than every label in $\{\ell_{cr_1}, \ell_{cr_2}, \ell_{pr_1}, \ell_{pr_2},\ell_{n_1},\ell_{n_2}\}$. 
This holds, since if $\ell_{pr_i}$ cancels \EMS{Iosif, this is the first time you mention that something is canceled. Where this comes from? I was not able to follow the text until the end of the example. } $\ell_{n_j}$, for $i,j \in\{1,2\}$ and $i\neq j$, $\ell_{pr_i}$ is sent to $p_j$ as the canceling label of $lastSent_i$ in the message from $p_i$ to $p_j$ after $\ell_{n_j}$'s reception from $p_i$, otherwise $\ell_{m_i} = \ell_{n_i}$, for $i\in \{1,2\}$.
Thus, if $\existsOverlap_i(local_i,local_j)$ does not hold after the state where both $\ell_{m_1}$ and $\ell_{m_2}$ are created,  $p_1$ and $p_2$ call $\resetLocal_i()$.
Since $\ell_{m_1}.\creator = 1$ and $\ell_{m_2}.\creator = 2$,  $\ell_{m_1}\lb \ell_{m_2}$.
Therefore, if another $\resetLocal_i()$ occurs, both $p_1$ and $p_2$ will set $local_i = \la y^*, y^*\ra$, where $\la \ell_{m_2}, zrs, zrs\ra$, hence $\existsOverlap_i(local_1,local_2)$ will henceforth hold (until a wrap-around event occurs).
\end{example}

} 

\begin{lemma}
\label{lem:stabNoWrapArounds}
Let $R$ be an $\pinf$-scale execution during which
(i) no processor takes a step that calls $revive()$ (lines~\ref{alg:SSVC:reviveBegin}--\ref{alg:SSVC:reviveEnd}) and 
(ii) the abstract task of the labeling algorithm is fulfilled (Section~\ref{s:DolevAbstractTask}), i.e., all active processors in $R$ have the same locally perceived maximal label.
Then, the number of deviations from the vector clock abstract task is at most temporary.
\end{lemma}

\begin{proof}
Note that, by assumptions (i) and (ii) of the lemma, the number of deviations from the (abstract) vector clock task is exactly the number of steps that includes a call to $restartLocal()$. Thus, we bound the number of steps that includes a call to $restartLocal()$.

We look at the non-trivial case where at least one processor calls $\resetLocal()$ during $R$. Note that this case can cause 
%
all other processors to call $\resetLocal()$, since all offsets might be different.

Let $\ell^*$ be the label that is perceived as the maximal by all active processors during $R$, i.e., $\ell^*$ is the globally perceived maximal label (Section~\ref{s:DolevAbstractTask}).
Due to the lemma assumptions (i) and (ii), no processors creates a new label during $R$.
Thus, whenever a processor $p_i$ calls $restartLocal_i()$ the output pair is $\la \la \ell^*, zrs , zrs  \ra, \la  \ell^*, zrs , zrs\ra \ra$, where $zrs$ is an $N$-sized vector of zeros.

We show that the number of steps that include a call to $restartLocal()$ is $N + \capacity (N-1)N$.
Starting from $R$'s first system state, there can be at most $N$ steps during $R$ that include a call to $restartLocal()$ during the execution of line~\ref{alg:SSVC:removeStaleInfo} in the do-forever loop. Each processor $p_i \in P$ can call $\resetLocal_i()$ in line~\ref{alg:SSVC:removeStaleInfo} at most once during $R$, since $\ell^*$ is the maximal label during $R$ among all active processors.
Thus every other call of $\resetLocal()$ is only due to an arriving message (line~\ref{alg:SSVC:receiveResetPairs}) and there could be at most $\capacity (N-1)$ steps that in which $p_i\in P$ calls $restartLocal_i()$ in line~\ref{alg:SSVC:receiveResetPairs} when processing an arriving pair, where $\capacity$ is the capacity of a communication channel. When considering all the processors and all the messages, we get at most $N + \capacity (N-1)N$ steps that include a call to $restartLocal()$.

Thus, out of the $N + \capacity (N-1) N$ maximum number of steps in which a processor calls $\resetLocal()$ only one per active processor is enough for including that processor in $P_{conv}(R)$.
Hence, the number of deviations from the abstract task of vector clocks is temporary.
Let $c \in R$ be a system state and $P_{conv}(c)$ be the set of processors that take a step that includes a call to $\resetLocal()$ at least once during $R$'s prefix that ends at $c$.
Note that for every two processors $p_i, p_j \in P_{conv}(c)$, the condition $\existsOverlap_i(local_i, arriving_j)$ holds for every message $m_j = \la\bullet, arriving_j\ra$ that $p_j$ sends to $p_i$ after $c$.
This is true, since for every processor $p_i$ that has called $\resetLocal()$ at least once, $local_i.curr.\ell = local_i.prev.\ell = \ell^*$ and $local_i.curr.o = local_i.prev.o = local_i.prev.m = zrs$ holds. 
\end{proof}

The following lemma shows that if $R$ is an $\pinf$-scale execution, then $\Phi_R$ is temporary.
\remove{
Intuitively, the proof of Lemma~\ref{lem:boundedDeviations} bases on the fact that in a fair execution $R$ (i.e., $P(R) = P$), the recovery period would be temporary. \EMS{Iosif, This sounds like a very dangerous sentence that can get the paper rejected. Why do we need to talk about fair executions. Do we need this sentence?}

\EMS{Iosif, the lemma statement perhaps make sense --- but there is no proof and I cannot get what you are trying to say starting from the last two parts of the text below.}
}

\begin{proof}[old proof]
We prove each of the lemma's claims. 
\begin{proof}[Proof of (i)]
Let $a_x\in R$ be a step in which a processor $p_i$ calls $\revive()$. 
Even if $R$ is a legal execution until step $a_x$, it is possible that there are two pairs for which cases (b) and (c) hold in Figure~\ref{fig:mergePairs}. \EMS{Can we also give a formal reference rather than just Figure~\ref{fig:mergePairs}. }
Let $v_1$ and $v_2$ be these two pairs.
Consider the case where (b) holds, $v_1$ and $v_2$ match only in label and offset in their $prev$ items, and there is a processor $p_i$ that is active in $R$ for which $local_i$ equals, say, $v_1$ and $p_i$ calls $\revive_i()$ in step $a_x$.
Denote with $v'_1$ the output of $\revive_i(v_1)$.
Then, $\existsOverlap_i(v'_1, v_2)$ does not hold (for every processor), since [[$v'_1.prev = v_1.curr$,]] which does not match in label and offset with neither $v_2.prev$ nor with $v_2.curr$ (by the definition of case (b)\EMS{Is it possible to have a formal reference?}).
Hence, it is possible that in a step [[$a_{x'}$]] of a processor $p_j$ (also for the case of $j=i$) that follows $a_x$ in $R$, such that $v'_1$ and $v_2$ are the input in $\existsOverlap_j(v'_1, v_2)$ in line~\ref{alg:SSVC:receiveResetPairs}\EMS{Whis step? Is it $a_{x'}$?}, and since this predicate is false, $p_j$ calls $\resetLocal_j()$\EMS{When, which step? or maybe just say in a step that follows $a_{x'}$ in $R$?}.
\end{proof}

\begin{proof}[Proof of (ii)]
Let $a_x\in R$ be a step in which $p_i$ calls $\newLabel()$.
This implies that there exist incomparable labels in the system \EMS{in $c$?} and hence incomparable pairs.
Thus, it is possible that for a processor $p_i$ that is active in [[$R$, it holds that]] $\existsOverlap_i(local_i, arriving_j)$ [[(line~\ref{alg:SSVC:receiveResetPairs})]] does not hold in a step that follows \EMS{immediately?} state $c$ (due to the label incomparability), and hence $p_i$ calls $\resetLocal_i()$\EMS{When? In which step?}.
\end{proof}

\begin{proof}[Proof of (iii)]
Let [[$a_x\in R$]] be a step in which a processor $p_i$ calls the $\revive()$ function.
Since by its definition (lines~\ref{alg:SSVC:reviveBegin}--\ref{alg:SSVC:reviveEnd}), [[the call to $\revive()$ in $a_x$]] cancels the labels of $local_i$ and creates a new label that is larger than all the labels that $p_i$ is aware of, comparability of labels is not guaranteed after this step, since another (possibly incomparable) label that was created by $p_i$ may reside[[s]] in the state of an active processor.
\end{proof}

\begin{proof}[Proof of (iv)]
Let $p_i$ be a processor that is active in $R$.
We examine each case of this [[claim]] as they appear in Figure~\ref{fig:deviations} [[(dotted]] arrows).

\noindent \textbf{The case of self-loops in Figure~\ref{fig:deviations}.~~}

\noindent \textit{$\bullet$ Deviation from the labeling task.~~}
First note that deviations of the labeling algorithm, i.e. states in which incomparable labels exist  already imply that the labeling algorithm's abstract task does not hold, so this case is trivial (self-loop on the deviation from the labeling algorithm abstract task box). \EMS{Please write that simpler. ``Note that deviations of the [[abstract task of the labeling scheme refers to a system state in which there are incomparable labels [[exist @@{Where? At the state of processor $p_i$ ?}@@. This violation of the  \EMS{Talk about that it has to stop with the once $p_i$ corrects it... or maybe you means something else. }}.

\noindent \textit{$\bullet$ Calling $\revive()$.~~}
The fact that $p_i$ calls $\revive()$ in a state $c\in R$, cannot cause a subsequent call to $\revive_i()$.
This holds, since $p_i$ can call the $\revive_i()$ function only after a call to $increment_i()$ (lines~\ref{alg:SSVC:incrementStart}--\ref{alg:SSVC:checkIncrExh}) or to $merge_i()$ (lines~\ref{alg:SSVC:merge}--\ref{alg:SSVC:mergeReturn}). \EMS{Then what? Please give the exact point.}

\noindent \textit{$\bullet$ Calling $\resetLocal()$.~~}
The fact that $p_i$ calls $\resetLocal_i()$ in a step [[$a_x\in R$]] cannot imply a call of $\resetLocal_i()$ in $p_i$'s subsequent step in $R$. \EMS{`Cannot imply' does not mean that it does not happen. So how come there cannot be a self-loop here?}
This holds since calls to $\resetLocal_i()$ occur only due to lines~\ref{alg:SSVC:removeStaleInfo} and~\ref{alg:SSVC:receiveResetPairs}.
\EMS{Then what? Please give the exact point.}


\noindent \textbf{A call to $\resetLocal_i()$ does not cause a call to $\revive_i()$.~~}
Let $p_i$ be a processor that calls  $\resetLocal_i()$ in a step $a_x \in R$.
A call to $\resetLocal_i()$ cannot cause directly a call to $\revive_i()$, since the latter is only called in the functions $increment_i()$ (lines~\ref{alg:SSVC:incrementStart}--\ref{alg:SSVC:checkIncrExh}) or to $merge_i()$ (lines~\ref{alg:SSVC:merge}--\ref{alg:SSVC:mergeReturn}).
Also, since any events that where recorded by $local_i$ before step $a_x$ are reset to zero (by the definition of $\resetLocal()$), $\resetLocal_i()$ only postpones any subsequent call to $\revive_i()$.

\noindent \textbf{A deviation of the labeling algorithm's abstract task does not cause a call to $\revive_i()$.~~}
Note that by the definition of vector clock pairs, the label does not affect the values of main and offset (Section~\ref{s:pair}). 
On the contrary, it is the values of main and offset of a pair that can cause a change in the label, i.e., a pair exhaustion and a call to the $\revive()$ function.

\noindent \textbf{A call to $\resetLocal_i()$ cannot cause a deviation of the abstract task of the labeling scheme.~~}
Let $p_i$ be a processor that calls $\resetLocal_i()$ in a step $a_x\in R$.
By its definition, $\resetLocal_i()$ (line~\ref{alg:SSVC:resetLocal}), does not cancel any label, hence it does not change the state of the labeling algorithm (actually it only uses the current maximal label known by $p_i$).
Therefore, $\resetLocal_i()$ cannot not cause a deviation from the labeling algorithm's abstract task.
\end{proof}

Hence the proof is complete.
\end{proof}

\EMS{STEPPED HERE.}

\begin{lemma}
\label{lem:boundedDeviations}
Let $R_{inf}$ be an arbitrary infinite execution of Algorithm~\ref{alg:SSVC} and $R\seg R_{inf}$ be an $\pinf$-scale subexecution of $R_{inf}$. 
[[The]] number of steps in $R$ in which a processor $p_i$ calls $\resetLocal_i()$ is temporary.
\end{lemma}

\begin{proof}
Let $R_{inf}$ be an arbitrary and infinite execution of Algorithm~\ref{alg:SSVC} and $R$ be a segment of [[$R_{inf}$,]] such that $|R| = \MI$. [[Let $p_i \in P$ be a processor.]]
We focus on the three cases that correspond to the three [[solid]] arrows of Figure~\ref{fig:deviations} and give bounds on the number of times they occur in $R$.

\noindent \textbf{Case (a) a call to $revive_i()$ in a state $a_x\in R$ can cause a call to $\resetLocal_i()$ in a step that follows $a_x$.~~} To bound the number of calls to $\resetLocal_i()$ by a processor $p_i$ that are caused by a [[preceding]] call to $\revive_i()$ by $p_i$, we first bound the number of steps in which any processor calls $\revive_i()$ in $R$.
Since $|R| = \MI$ the maximum number of increments that can occur in $R$ is $\MI$.
Hence, there can be at most $N$ calls to $\revive_j()$ by any processor $p_j$ in $R$ due to exhausting the pair due to $\MI$ calls to $increment_j()$, since for a single vector clock exhaustion at most all processors can concurrently wrap around.
The remaining pair exhaustions can be only due to arbitrary values that resided in the system in the starting system state. 
There can be $N$ such vector clocks that resided in the states of the processors, and $N^2\cdot\capacity$ ones that resided in the communication channels.
Since for each of these $N + N^2\cdot\capacity$ pairs, at most $N$ exhaustions can occur, there can be $N^2 + N^3\cdot\capacity$ exhaustions due to pairs that come from the arbitrary starting state.
In total, there can be $N + N^2 + N^3\cdot\capacity$ pair exhaustions that can occur in $R$, hence at most that many calls to $\revive()$ in $R$.\\

\noindent \textbf{Case (b) a call to $revive_i()$ in a step $a_x\in R$ can cause a deviation in the Dolev et al.~\cite{DBLP:journals/corr/DolevGMS15} labeling algorithm abstract task in a state $c$ that follows $a_x$.~~} Extending the arguments of case (a), each of the $N + N^2 + N^3\cdot\capacity$ calls to the $\revive()$ function causes, by the definition of the $\revive()$ function (lines~\ref{alg:SSVC:reviveBegin}--\ref{alg:SSVC:reviveEnd}), the creation of $N + N^2 + N^3\cdot\capacity$ labels.
Each of these label creations can cause a deviation from the labeling algorithm abstract task. \\

\noindent \textbf{Case (c) a deviation in the Dolev et al.~\cite{DBLP:journals/corr/DolevGMS15} labeling algorithm abstract task in a state $c\in R$ can cause a call to $\resetLocal_i()$ in a step that follows $c$.~~} In this case, we give a (polynomial) bound on the number of labels that exist during $R$, which implies that the deviations from the abstract task of the labeling algorithm are bounded, and hence the number of steps that a processor $p_i$ calls $\resetLocal_i()$ due to these deviations is bounded. 
Recall that from corollaries~\ref{cor:dolevLabelAdoptions}, \ref{cor:dolevLabelCreations} and~\ref{cor:labelingSchemeConvergence}, that the labeling algorithm of Dolev et al.~\cite{DBLP:journals/corr/DolevGMS15} remains practically-stabilizing even if we increase the maximum number of labels that exist in the system (since the algorithm's stabilization depends on the bound's existence rather than the actual bound).

Note that there are $N + N^2 + N^3\cdot\capacity$ more labels creations due to calls to the $\revive()$ function.
Thus, by corollaries~\ref{cor:dolevLabelAdoptions} and \ref{cor:dolevLabelCreations} there can be at most X \IS{X = extended queue size from 6.1 + $N + N^2 + N^3\cdot\capacity$} deviations from the labeling algorithm's abstract task.
Each of these deviations can cause a subsequent call to $\resetLocal()$, since label incomparability implies pair incomparability.
That is, in line~\ref{alg:SSVC:receiveResetPairs} the condition $\labelCheck_i(local_i, arriving_j)$ is false, hence processor $p_i$ calls $\resetLocal_i()$.
Therefore there can be at most $X$ calls to $\resetLocal()$ that are due to deviations from the labeling algorithm's abstract task.\\

Hence, by the bounds in cases (a) and (c) there can be at most $N + N^2 + N^3\cdot\capacity + f(X)$ calls to $\resetLocal()$ in $R$, which is a temporary number. 
\end{proof}

\hrule
Notes from whiteboard:\\
\remove{
We remark that a call to $\revive()$ (lines~\ref{alg:SSVC:reviveBegin}--\ref{alg:SSVC:reviveEnd}) can cause a deviation in the labeling algorithm of Dolev et al.~\cite{DBLP:journals/corr/DolevGMS15}, since by its definition introduces a new (local) maximal label, as well as a call to $\resetLocal()$ (line~\ref{alg:SSVC:resetLocal}) due to incomparability of labels with an incoming pair.
Similarly, a deviation in from the abstract task of the Dolev et al. labeling algorithm can cause call a $\resetLocal()$ due to incomparability of labels.
We show that any other causal relation among a deviation in the Dolev et al. labeling algorithm, a call to $\resetLocal()$, and a call to $\revive()$ is not possible:

\noindent (i) a call of $\resetLocal()$ can cause a deviation in Dolev, since it uses the existing maximal label (no label creation or cancelation by the VC alg)

\noindent (ii) a $\resetLocal()$ cannot cause a call to $\revive()$, since main and offset are set to zero (in fact it slows down the system from reaching the next $\revive()$).

\noindent (iii) a deviation in Dolev cannot cause an \EMS{early} call to $\revive()$, since it does not affect main or offset (just the label)

\EMS{Need to explain why $\resetLocal()$ does not cause an early call to $\revive()$ --- at best, it just pushes it further to the future.}
\EMS{Need to explain why $\resetLocal()$ does not cause Dolev to break --- because it uses the locally preserved local maximal label.}
}

Let $\pi = N + \capacity N (N-1)$ and $\beta(x)$ be a function that takes the number of labels that can exist in the system, $x$, and outputs the number of deviations in the Dolev abstract task, $\beta(x)$. \EMS{Iosif, $\beta(x)$ is a little different --- it is a variation of Dolev with queue in the size that can deal with x label before they are taking out of the queue. Moreover, we need to explain why it does not matter whether they appear in the starting configuration or produced during $R$ --- the arguments here are the ones that appear in Dolev.}
 
Then  the following hold
\begin{itemize}
\item    we can have at most $(1+\pi)N$ calls to $\revive()$ during $R$:  

\subitem*        1 since we can count at most once to $\MI$, 
\subitem*        $\pi$ since each pair from the starting configuration can cause a call to $revive()$ without actually counting until $\MI$
\subitem*        and since each of those can have an effect to all processors concurrently, we multiply by $N$
\subitem*        each of those calls can cause a $\resetLocal()$

\item    the above bound must be added to the labels in the starting configuration to obtain the bound on the number of deviations in the dolev abstract task, i.e.: $\beta(\pi + (1+\pi)N)$.
\item    Thus, we can have at most $(1+\pi)N + \beta(\pi + (1+\pi)N) \cdot N$ calls to $\resetLocal()$ during $R$, since 

\subitem*        $(1+\pi)N$ are due to calls to $\revive()$ directly
\subitem*        $\beta(\pi + (1+\pi)N)$ are due to the deviations from the abstract task in Dolev, and each one of them can cause at most $N$ calls to $\resetLocal()$

\subitem*\EMS{We must not forget the number of calls to $\resetLocal()$ whenever there are no deviations from Dolev or calls to $revive()$, i.e., when there is no pivot. This number Y is multiplied by what you wrote above and that number Y appears in Lemma 6.11, right?}
\end{itemize}

Hence, the number of calls to $\resetLocal()$ is temporary during $R$.
Since any $\pinf$-scale execution is a temporary sequence of $\MI$-size executions, and the number of deviations from the VC abstract task is temporary in each of them, we have that number of deviations in any $\pinf$-scale execution is temporary, hence Algorithm~\ref{alg:SSVC} is practically stabilizing.

\remove{

\hrule
First, we assume that no processor calls the $\revive()$ function during $R$.
By Corollary~\ref{cor:labelingSchemeConvergence} the labeling algorithm of Dolev et al.~\cite{DBLP:journals/corr/DolevGMS15} is practically-stabilizing. 
Thus, there is a temporary (since the maximum label is created after at most $\bigO(N^3)$ label creations) number of states in $R$, in which the abstract task of the labeling algorithm (Section~\ref{s:DolevAbstractTask}) is not fulfilled.
Hence, the abstract task of the labeling algorithm is fulfilled during the subexections between each such deviation. [[@@ Do we need to explain this here? It distracts the reader. @@]] 
Then, by applying Lemma~\ref{lem:stabNoWrapArounds} in the intervals in which the abstract task of the labeling algorithm is fulfilled, we have that there is a temporary number of deviations of the vector clock abstract task ($\bigO(N^2)$ calls of $\resetLocal()$ for each such interval), hence a temporary number of deviations overall (at most $\bigO(N^6)$ calls of $\resetLocal()$).

Even though $\revive()$ is not a deviation from the vector clock abstract task, it does violate the requirements of the labeling algorithm's abstract task, since a processor that calls $\revive()$ aims at creating a label that is larger than any other processor in the system, while the other active processors have a different (and possibly smaller) label that they consider to be maximum.
%
%
Therefore, we complete the proof by showing that the number of steps in which a processor calls the function $\revive()$ during $R$ is also temporary.

Since $R$ is an $\pinf$-scale execution, it is a temporary sequence of subexecutions of $\MI$-size, i.e., there exists a temporary number $\gamma$, such that $|R| = \gamma\cdot \MI$ holds.
Thus, there can be at most $\gamma\cdot \MI$ vector clock increments. This implies that, as long as both abstract tasks of vector clock labeling are fulfilled throughout $R$, at most $N \cdot \lceil \gamma \rceil$ wrap-arounds (and hence calls of $\revive()$) can occur in $R$.

Since $R_{inf}$, and hence $R$, is arbitrary, there can be at most $\capacity(N-1)N + N$ wrap-arounds (and hence steps with calls to $\revive()$) during $R$ that occur due to stale information in the communication channels (at most $\capacity(N-1)N$) or in the processors' states (at most $N$), hence a temporary number of them.
Thus, there is a temporary number of deviations from the abstract task of the labeling algorithm due to the at most $\bigO(N^2)$ calls to $\revive()$ in $R$.
Therefore, there is a temporary number of deviations from the abstract tasks of both the labeling and the vector clock algorithms during $R$.

[[@@ Check the relation between increment and steps in the algorithm and system settings. Also, give a reference in the proof when we say `there can be at most $\gamma\cdot \MI$ vector clock increments'. @@@]]

\newpage

~\\\hrule\hrule

\noindent \textbf{After [[the first iteration of the do-forever loop, 
any call of $\resetLocal_i()$ by $p_i\in P(R')$ is due to $\labelCheck_i(local_i, arriving_j)$ that does not hold.~~}]]
Recall that [[$p_i$ calls $\resetLocal_i()$]] either if $\mirroredLocal_i() \land \lblOrdrd_i(local)$ does not hold (line~\ref{alg:SSVC:removeStaleInfo}), or if $\labelCheck_i(local_i, arriving_j)$ does not hold (line~\ref{alg:SSVC:receiveResetPairs}), for $p_i\in P(R')$ and $p_j\in P$, such that $p_i$ receives a message from $p_j$ in $R'$.
By Lemma~\ref{lem:noStaleAfterOneIteration}, each processor can call $\resetLocal_i()$ in line~\ref{alg:SSVC:removeStaleInfo} only in its first iteration of the do-forever loop of Algorithm~\ref{alg:SSVC} (lines~\ref{alg:SSVC:doForeverStart}--\ref{alg:SSVC:doForeverEnd}), hence a temporary number of times.
Thus, any call of $\resetLocal_i()$ by $p_i\in P(R')$ after the first iteration of the do-forever loop, is because $\labelCheck_i(local_i, arriving_j)$ does not hold (line~\ref{alg:SSVC:receiveResetPairs}).

\noindent \textbf{The [[partitioning of $R' = R_1\circ R_2 \circ \ldots \circ R_x$.~~}]]
By the definition of $\pinf$-scale, there exists a temporary integer $x\geq 1$, such that $|R'| \leq x\cdot \MI$ (Section~\ref{s:systemSettings}), and let $x$ be the minimum such number.
Let $R' = R_1\circ R_2 \circ \ldots \circ R_x$, such that $|R_k| = \MI$ for $k\in \{1,\ldots, x-1\}$ and $|R_x|\leq \MI$.
We show that $\Phi_{R_k} = \bigO(N^4)$, for $k\in \{1,\ldots, x\}$, hence $\Phi_{R} = \Sigma_{k=1}^N \Phi_{R_k}$ is temporary.

\noindent \textbf{The [[bound on the number of labels in $L_i$.~~}]]
Recall that by Corollary~\ref{cor:dolevLabelCreations}, $|L_i| \leq 4N^2 + 4N\msg -4N - 2\msg$ holds, [[i.e.,]] $p_i$ can create at most $4N^2 + 4N\msg -4N - 2\msg$ labels in the [[absence]] of wrap-around events and before a maximum label is created (not necessarily by [[$p_i$). Thus,]] the labeling algorithm \EMS{Iosif, is it Dolev's algorithm, please tell the reader so.} is practically-stabilizing (Corollary~\ref{cor:labelingSchemeConvergence}).
The above holds, since Algorithm~\ref{alg:SSVC} \EMS{Iosif, shall we add the word `might'} changes the state of the labeling algorithm only when a wrap around event occurs, by calling $revive_i()$ at lines~\ref{alg:SSVC:checkIncrExh}, \ref{alg:SSVC:doForeverRevive} and~\ref{alg:SSVC:reviveCall} (since $revive_i()$ cancels the labels in $local_i$).
Also, note that the bound of $|L_i|$ in Corollary~\ref{cor:dolevLabelCreations} [[accommodates]] for the two labels of a pair.

\EMS{Iosif, starting from here, I dont understand what you are trying to show.}

\noindent \textbf{The [[case of $P = \oft(R_k)$.~~}]]
If $P = \oft(R_k)$, then within a number of steps \EMS{Iosif, when you write like this, it sounds completely like a synchronous system.} which is a temporary factor of $\Sigma_{i= 1}^N (2|L_i| + 1) = \bigO(N^4)$, all pairs from the starting system state have been received by all processors \EMS{Iosif, is it true that all of them? Could some of them left the system before that happened? In this case, we should say that they were either received or left the system.} and $\existsOverlap_i(local_i, local_j)$ holds for every $p_i,p_j\in P$, as in Example~\ref{eg:pivotCreation}. \EMS{Iosif, is this clear to the reader that we dont have a proof by example?} \EMS{Iosif, I am lost. What are we showing in this paragraph?}
This holds, since in the worst case, every label in $|L_i|$ might wrap around sooner than in $\MI$ system steps, due to an arriving message from the starting system state (hence the factor of 2) and since $|R_k| \leq \MI$, there can be at most one more wrap around event per processor (hence the addition of 1).
Therefore, for each $p_i,p_j\in P$, we extend the queues to $|storedLabels_i[j]| = 2|B_{i,j}| + 1$, where $B_{i,i} = 4N^2 + 4N\msg -4N - 2\msg$ (Corollary~\ref{cor:dolevLabelCreations}) and $B_{i,j} = 2N+2\msg$ for $j\neq i$ (Corollary~\ref{cor:dolevLabelAdoptions}). Hence, $R_k = R_{k,l}\circ R_{k,r}$, where $R_{k,l}$ is temporary and $R_{k,r}\in\LE$, [[i.e.,]] $\Phi_{R_k}$ is temporary.

\noindent \textbf{The [[case of $\oft(R_k)\subset P$.~~}
In this case,]] there will be again at most $\Sigma_{i= 1}^N (2|L_i| + 1)$ label creations until a maximum label is created, which allows Algorithm~\ref{alg:SSVC} to create pairs for which $\existsOverlap_i(local_i, local_j)$ holds (as in Example~\ref{eg:pivotCreation}), however not all of these steps might be taken during $R_k$, since we know that the number of steps that processors in $\noft(R_k)$ take is at most useful. \EMS{Iosif, I got lost also here.}
Hence, the proof is complete.


}



%
%

By combining lemmas~\ref{lem:ReqEventuallyHold} and~\ref{lem:boundedDeviations}, we get [[Corollary~\ref{cor:thmHolds}.]]
\begin{corollary}[The proof of Theorem~\ref{thm:reqHold}]
\label{cor:thmHolds}
Let $R$ be an $\pinf$-scale execution of Algorithm~\ref{alg:SSVC}.
Since the number of states in which the invariants do not hold is temporary (Lemma~\ref{lem:boundedDeviations}), and if the invariants hold for a state $c\in R$ then the requirements also hold for $c$ (Lemma~\ref{lem:ReqEventuallyHold}), we have that $f_{R}$ is temporary.
\IS{add that the number of calls to $\resetLocal()$ is temporary}
\end{corollary}

} 


\section{Conclusion}
Self-stabilization often requires, within a bounded recovery period, the complete absence of stale information (that is due to transient faults). 
This paper studies stabilization criteria that are less restrictive than self-stabilization.
The design criteria that we consider allow  recovery after the occurrence of transient faults (without considering fair execution) and still tolerate crash failures, which we do not model as transient faults. 
We show the composition of two practically-self-stabilizing systems (Section~\ref{s:interfaceDolev}) and present an elegant technique for dealing with concurrent overflow events (Section~\ref{s:pair}). 
We believe that the proposed algorithm \modified{(Section~\ref{s:algorithms})}{(Section~\ref{s:pair})} and its techniques can be the basis of other practically-self-stabilizing algorithms.
\bibliographystyle{plain}
\bibliography{stabilizingCRDTbib}



\end{document}